\newtheorem{theorem}{Theorem}
\newtheorem{lemma}{Lemma}
\newtheorem{assumption}{Assumption}
\newtheorem{definition}{Definition}
\newtheorem{remark}{Remark}
\newcommand{\bb}{\mathbb}
\newcommand{\tb}{\textbf}
\newcommand{\mni}{\frac{M_i}{N_i}}
\newcommand{\blind}{1}
\begin{document}

\def\spacingset#1{\renewcommand{\baselinestretch}%
{#1}\small\normalsize} \spacingset{1}

\if1\blind
{ \title{\bf \large Causal Inference under Interference: Regression Adjustment and Optimality}
\author[1]{Xinyuan Fan}
\author[2]{Chenlei Leng}  
\author[1]{Weichi Wu}

\affil[1]{Department of Statistics and Data Science, Tsinghua University}
\affil[2]{Department of Statistics, University of Warwick}
\date{}
  \maketitle

} \fi

\if0\blind
{
  \bigskip
  \bigskip
  \bigskip
  \begin{center}
    {\large Causal Inference under Interference: Regression Adjustment and Optimality}
\end{center}
  \medskip
} \fi

%\bigskip
\begin{abstract}
In randomized controlled trials without interference, regression adjustment is widely used to enhance the efficiency of treatment effect estimation. This paper extends this efficiency principle to settings with network interference, where a unit's response may depend on the treatments assigned to its neighbors in a network. We make three key contributions: (1) we establish a central limit theorem for a linear regression-adjusted estimator and prove its optimality in achieving the smallest asymptotic variance within a class of linear adjustments; (2) we develop a novel, consistent estimator for the asymptotic variance of this linear estimator; and (3) we propose a nonparametric estimator that integrates kernel smoothing and trimming techniques, demonstrating its asymptotic normality and its optimality in minimizing asymptotic variance within a broader class of nonlinear adjustments. Extensive simulations validate the superior performance of our estimators, and a real-world data application illustrates their practical utility. Our findings underscore the power of regression-based methods and reveal the potential of kernel-and-trimming-based approaches for further enhancing efficiency under network interference.
\end{abstract}

\noindent%
{\it Keywords:}  Average Treatment Effect; Randomized Controlled Trials; Network Interference; Regression Adjustment; Central Limit Theorem
 
\spacingset{1.0}

\section{Introduction}
\label{sec:intro}
Consider a randomized controlled trial (RCT) where we observe data \(\{Y_i, W_i, \mathbf{z}_i\}_{i=1}^n\). For the \(i\)th unit, \(Y_i \in \mathbb{R}\) denotes the response variable, \(W_i \in \{0, 1\}\) is a binary treatment indicator with \(W_i \sim \text{Bernoulli}(\pi)\) for some \(\pi \in (0, 1)\), and \(\mathbf{z}_i \in \mathbb{R}^p\) is a \(p\)-dimensional vector of covariates. Our objective is on how to best estimate the average treatment effect (ATE), which quantifies the expected difference in responses had each unit been observed under both treatment and control conditions.

The challenge and relevance of this paper arise from the fact that the units in the study interfere with each other through a network structure. Consequently, a unit's response may be influenced not only by its own treatment assignment but also by the treatments assigned to its neighbors in the network. This setup deviates from the classical Stable Unit Treatment Value Assumption (SUTVA), which assumes that the potential outcomes for any unit are unaffected by the treatments assigned to other units. To account for this, we adopt the Neyman-Rubin causal model \citep{imbens2015causal}, representing the outcome for unit \(i\) as \(Y_i(W_1, \dots, W_n)\), emphasizing its dependence on both its own treatment and the treatments of others. 

For simplicity, we define the shorthand \(Y_i(x, W_{-i}) := Y_i(W_i = x, W_1, \dots, W_{i-1}, W_{i+1}, \dots, W_n)\), where \(W_{-i}\) denotes the treatment assignments of all units except for the \(i\)-th. The causal estimand of interest is the average treatment effect (ATE), also known as the direct effect of the treatment on the outcome \citep{savje2021average}, given by:
\begin{equation}\label{eq:ate}
	\tau_n = \frac{1}{n} \sum_{i=1}^n E[ Y_i(1, W_{-i}) - Y_i(0, W_{-i})],
\end{equation}
where the expectation is taken over the random treatment assignments. If the limit exists as \(n \to \infty\), and the network is treated as random, the ATE is defined as:
\[
\tau = \lim_{n \to \infty} \tau_n,
\]
which is the focus of the paper.

\subsection{Literature review}
A substantial body of literature addresses the estimation of the ATE under the no-interference assumption, where \(Y_i(W_1, \dots, W_n) \equiv Y_i(W_i)\). A widely employed method in this context is the regression-adjusted difference-in-means estimator:
\begin{equation}\label{eq:ra-dim}
	\hat{\tau}_n = \frac{1}{n_1} \sum_{W_i=1} \big[Y_i - \hat{\mu}_1(\mathbf{z}_i)\big] 
	- \frac{1}{n_0} \sum_{W_i=0} \big[Y_i - \hat{\mu}_0(\mathbf{z}_i)\big],
\end{equation}
where \(n_1 = \sum_{i=1}^n W_i\) and \(n_0 = \sum_{i=1}^n (1 - W_i)\) denote the numbers of treated and control units, respectively. The terms \(\hat{\mu}_k(\mathbf{z}_i)\), for \(k = 0, 1\), represent regression-adjusted  {residuals}  
obtained from separate regressions of \(Y_i\) on covariates \(\mathbf{z}_i\) within the treated (\(W_i = 1\)) and control (\(W_i = 0\)) groups. In the absence of regression adjustment (\(\hat{\mu}_0 = \hat{\mu}_1 = 0\)), this estimator simplifies to the standard difference-in-means estimator denoted  henceforth as $\hat\tau_{dim}$.

\cite{lin2013agnostic} first demonstrated that linear regression adjustment, conducted separately for each group, can yield significant variance reduction. \cite{li2017general} further established the optimality of this approach for causal estimators. Subsequent studies, including \citet{liu2020regression, su2021model, zhao2022reconciling,lu2024adjusting}, and \citet{liu2024randomization}, have extended and refined these methods, highlighting their robustness and efficacy in practical applications.

Given the success of regression adjustment in settings without interference, a natural question arises:
\begin{center}
	\textit{Can regression adjustment be effectively applied to estimate the ATE in the presence of network interference, and if so, how?}
\end{center}

This question has gained increasing importance alongside a rapidly growing body of literature on causal inference under interference, spanning fields such as economics, social science, and epidemiology \citep{duflo2013truth, cai2015social, paluck2016changing, sato2019peer, bragancca2022extension}. Several studies have developed methods for estimating the ATE without incorporating covariates \citep{aronow2017estimating, athey2018exact, savje2021average, chin2018central, gao2023causal, leung2022causal}, often under the assumption of deterministic networks with randomness arising solely from treatment assignments. 
In a related direction, \cite{li2022random} explored the interplay between  {treatment assignments}  
and a graphon  {network} model but did not address regression adjustment. Their work established that the difference-in-means estimator remains asymptotically normal under suitable conditions with network interference, providing a foundation for further methodological developments. 
Compared to these approaches, our contribution seeks to extend this line of work by integrating regression adjustment in the presence of network interference.  While some prior works have considered regression adjustment, their limitations are notable. For instance, \cite{aronow2017estimating} provided no theoretical guarantees, and \cite{gao2023causal} failed to demonstrate efficiency gains over the standard difference-in-means estimator despite employing a more complex framework. {Additionally, our Theorems~\ref{th: Consistency of hat tau}, \ref{th:main}, and \ref{th:5} apply to both dense and sparse graphs, whereas \citet{leung2022causal} and \citet{gao2023causal} are restricted to the sparse graph setting.}

\subsection{Our contributions}
The primary aim of this paper is to address a critical gap in the literature by establishing the optimality of two regression-adjusted causal estimators, within linear-regression and nonparametric regression adjustments respectively, directly responding to the question posed. Our contributions advance the study of treatment effect estimation in the presence of network interference, and are outlined as follows:

\begin{enumerate}
	\item \textbf{Optimal Variance Reduction in Linear Adjustments.}  
	Building on the framework of \cite{li2022random}, we model the underlying network using a graphon structure. Within this framework, we analyze linear regression adjustment, establish the asymptotic normality of our estimator, and demonstrate its optimality in terms of asymptotic variance reduction compared to the difference-in-means estimator within the class of linear regression adjustments. Our results extend the findings of \cite{lin2013agnostic} and \cite{li2017general}, bridging the gap from non-interfering settings to those with network interference. For further details, refer to Theorem~\ref{th:main}.
	\item \textbf{Justified Variance Estimation.}  
	We introduce a novel, consistent estimator for the asymptotic variance of our estimator, utilizing a low-rank graphon model for the network. This enhancement refines the heuristic variance estimator from \cite{li2022random}, providing a theoretically grounded alternative. See Theorem~\ref{th:3} for details. 
	\item \textbf{Advances in Nonparametric Adjustment.}  
	We extend the results established for linear regression adjustments to include nonparametric adjustments, establishing the optimality of our estimator's variance reduction within a broader class of adjustments, as formally defined in (\ref{def:class of conditional expectation}). See Theorems~\ref{th:6} and \ref{th:5}. Our nonparametric estimator innovatively combines kernel regression and trimming techniques, achieving $\sqrt{n}$-asymptotic normality for any fixed-dimensional covariates, thereby overcoming the ``curse of dimensionality".    
	\item   	\textbf{Novelty in Theory}  In this paper, we make significant theoretical contributions in three key areas.
	\begin{itemize}
		\item First, we introduce a plug-in estimator for low-rank graphon networks (see equation (\ref{def:hat b})) and demonstrate that the appropriately scaled weighted sum of node neighbors can uniformly approximate the corresponding graphon function (see Lemma~\ref{lem:11}). This provides a novel theoretical foundation for using adjacency matrices as plug-in estimators in low-rank graphon networks, marking a critical advancement in understanding their asymptotic behavior.
		\item 
		Second, to establish the asymptotic normality of the nonparametric adjustment (\ref{def:hat tau np}), we leverage U-statistics and their approximation techniques, addressing both the independent components and dependencies induced by the random network structure. This approach delivers the necessary theoretical guarantees and opens the door to extending these results to other types of random networks with minimal adjustments.
		\item 
		Third, in the analysis of the nonparametric adjustment estimator, we introduce an innovative projection technique to handle components associated with the exposure graph. This method complements the classical projection approach for U-statistics and serves as a crucial tool in the analysis. For a more detailed discussion, see Lemma~\ref{lem4th6:6} and Lemma~\ref{lem4th6:9}. Furthermore, this technique lays the groundwork for the further exploration of other nonparametric estimators.
	\end{itemize}
	\item \textbf{Assurance in Numerical Study.} We have developed explicit forms of the asymptotic variances of our regression-adjusted estimators.  Leveraging this, we conduct extensive numerical simulations to validate our approach. Remarkably, we show that our nonparametric estimator delivers highly satisfactory results, even for covariates with dimensions as large as $10$. This performance is particularly noteworthy, considering the inherent challenges of nonparametric smoothing at such dimensions.
\end{enumerate} 

\noindent
\textbf{Notations}.  For a real number \(a\), \(\lceil a \rceil\) denotes the smallest integer that is greater than or equal to \(a\). For two positive real numbers \(a\) and \(b\), let \(a \vee b = \max(a, b)\) and \(a \wedge b = \min(a, b)\).  {For a sequence of real vectors $x_1,\cdots,x_n$, let $\bar{x}=\sum_{i=1}^n x_i/n$.} For a \(p\)-dimensional vector \(\mathbf{x} = (x_1, \ldots, x_p)^\top\), its \(\ell_q\)-norm is defined as \(|\mathbf{x}|_q = \left( \sum_{i=1}^p |x_i|^q \right)^{1/q}\), and the maximum norm is \(|\mathbf{x}|_\infty = \max_i |x_i|\). {We denote $\mathbf{x}^{(-1)}=(x_2, \ldots, x_p)^\top$ as the vector leaving out the first element of $\mathbf{x}$.} For a nonzero scalar \(h\), we define \(\mathbf{x}/h = (x_1/h, \ldots, x_p/h)^\top\). For a matrix $\bb{X}$, write $\bb{X}\succeq 0$ if $\bb{X}$ is positive semidefinite. For a random variable \(x\), its \(\mathcal{L}_q\)-norm is \(\|x\|_{\mathcal{L}_q} = \left( E |x|^q \right)^{1/q}\). {For a random vector $\mathbf{z}$, its covariate matrix is given by $cov(\mathbf{z})=\bb{E}(\mathbf{z}\mathbf{z}^\top)-\bb{E}(\mathbf{z})\bb{E}(\mathbf{z}^\top)$.} For a differentiable bivariate function \(f_i(x, y)\), we use the shorthand notation \(f_i^{(0)}(x, y) = f_i(x, y)\) and \(f_i^{(k)}(x, y) = \frac{\partial^k}{\partial y^k} f_i(x, y)\) for \(k \geq 1\). For a multivariate function \(f(x_1, \ldots, x_p)\), its gradient is denoted by \(\nabla f = \left( \frac{\partial f}{\partial x_1}, \ldots, \frac{\partial f}{\partial x_p} \right)^\top\). For sequences \(a_n\) and \(b_n\), we write \(a_n \gtrsim b_n\) (or \(a_n \lesssim b_n\)) if there exist constants \(C\) and \(N\) such that \(a_n \geq C b_n\) (or \(a_n \leq C b_n\)) for all \(n \geq N\). Lastly, \(a_n \lesssim_p b_n\) means \(a_n = O_p(b_n)\).

\noindent
\textbf{Organization}. 
The remainder of this paper is organized as follows. Section~\ref{sec:Regression-based estimators} introduces the linear regression-adjusted estimator, establishing its asymptotic properties and demonstrating its variance optimality. In Section~\ref{sec:3.1}, we present a consistent estimator for its asymptotic variance. Section~\ref{sec:Local constant estimator} introduces a nonparametric regression-adjusted estimator and provides its properties. Section~\ref{sec:Numerical Evaluation2} presents numerical simulations, while  Section~\ref{sec:A naturalistic simulation with contact data} presents a naturalistic simulation mimicking a real-world network dataset to estimate vaccine effectiveness.  Section~\ref{sec:Conclusion} summarizes the paper. All proofs and technical details are provided in the  supplementary material.

\section{Regression-based Estimators}
\label{sec:Regression-based estimators}
We build on the random graph framework introduced by \cite{li2022random}, extending it to incorporate regression adjustment. Specifically, we assume that units interfere through a network, represented without loss of generality as an undirected random graph  \(G_n = (V_n, E_n)\). Here, \(V_n = \{1, \dots, n\}\) denotes the set of vertices (units), and \(E_n\) is the set of edges such that \(E_{n,ij} = E_{n,ji} = 1\) if units \(i\) and \(j\) are connected, and \(E_{n,ij} = 0\) otherwise. In general, under unrestricted interference, the argument \(Y_i(W_1, \dots, W_n)\) can take up to \(2^n\) possible values. Consequently, ATE as defined in \eqref{eq:ate} is not directly estimable due to the exponential growth in complexity. To overcome this challenge, we adopt a simplifying assumption commonly used in the literature (e.g., \citet{li2022random,park2024minimum, viviano2024policy}): a unit's potential outcome depends only on its own treatment status and the fraction of its treated neighbors \citep{hudgens2008toward}. This assumption significantly reduces the dimensionality of the problem while retaining a realistic representation of network interference. This leads to the following assumption:
\begin{assumption}[Anonymous Interference]  
	\label{assumption:anonymous interference}  
	The potential outcomes satisfy  
	\[
	Y_i = f_i(W_i, M_i/N_i),
	\]
	where \( N_i = \sum_{j=1}^n E_{ij} \) is the number of neighbors of unit \( i \), and \( M_i = \sum_{j=1}^n E_{ij}W_j \) is the number of treated neighbors. The function \( f_i \in \mathcal{F} \) may also depend on covariates \( \mathbf{z}_i \), and the pairs \( (f_i, \mathbf{z}_i) \) are assumed to be i.i.d. from a distribution on \( \mathcal{F} \times P_z \).  
\end{assumption}  

Allowing \( f_i \) to be random accounts for factors such as measurement errors. With this assumption,
\[
\tau_n = \frac{1}{n} \sum_{i=1}^n E[ f_i(1, M_i/N_i) - f_i(0, M_i/N_i) ].
\]
The second argument of \( f_i \) represents the fraction of treated neighbors, which, under suitable conditions, satisfies \( M_i/N_i \approx \pi \).
To ensure a well-defined limit as \( n \to \infty \), we model the network using a graphon framework. The graphon model provides a flexible and scalable representation of complex networks and is fundamental for characterizing exchangeable network structures \citep{lovasz2012large}.  

\begin{definition}[Graphon Model]
	\label{def:Graphon model}
	A graph \(G_n = (V_n, E_n)\) is said to follow a graphon model if the edges are generated as 
	\[
	E_{n,ij} \sim \text{Bernoulli}(\rho_n h(U_i, U_j) \wedge 1),
	\]
	where \(U_i \sim U(0,1)\) are independent and identically distributed latent variables, \(\rho_n > 0\) is a sparsity parameter, and \(h: [0,1]^2 \to [0, \infty)\) is a symmetric, measurable function referred to as the graphon.
\end{definition}

We impose the following assumption on the graphon model:

\begin{assumption}[Random Graph]
	\label{assump:random_graph}
	The graph \(G_n = (V_n, E_n)\) follows an undirected graphon model as defined in Definition~\ref{def:Graphon model}, with sparsity parameter \(\rho_n\) and graphon \(h(\cdot, \cdot)\). We assume the following:  
	\begin{enumerate}
		\item There exist constants \(c_l > 0\) and \(c_u > 0\) such that:  
		\[
		\inf_{x \in [0,1]} \int_0^1 h(x, y) \, dy \geq c_l \quad \text{and} \quad \sup_{x, y \in [0,1]} h(x, y) \leq c_u.
		\]
		\item The sparsity parameter satisfies \(\rho_n \sup_{x, y \in [0,1]} h(x, y) \leq 1\).  
		\item As \(n \to \infty\), the sparsity grows such that \(\sqrt{n} \rho_n \to \infty\).  
	\end{enumerate}
\end{assumption}

This assumption places structural constraints on the random graph. Notably, our framework accommodates both dense graphs (\(\rho_n \gtrsim 1\)) and sparse graphs (\(\rho_n \to 0\)). The condition \(\sqrt{n} \rho_n \to \infty\) aligns with that in \cite{li2022random}, ensuring that a sufficiently large number of subjects can influence the \(i\)-th subject, which is critical for meaningful inference in the presence of network interference. 
Finally, we impose a moment condition on \(f_i\) and \(\mathbf{z}_i\). Write $\mathbf{x}_i=(1, \mathbf{z}_i^\top)^\top$.

\begin{assumption}[Moment Conditions]
	\label{assumption:moment_conditions}
	The following conditions hold: $E(\mathbf{x}\mathbf{x}^\top) \succeq 0$; \(E|\mathbf{x}_1|_\infty^4 < \infty\); and 
	\[
	m_2 = \max_{
		k=0,1,2,3, \, w=0,1
	} \left\| \sup_{u \in [0,1]} |\mathbf{x}_1 f_1^{(k)}(w,u)|_\infty \right\|_{\mathcal{L}_2} < \infty.
	\]
\end{assumption}

This assumption imposes constraints on the moments of the outcome function \(f_i\), and is less stringent than the corresponding assumption in \cite{li2022random} that requires\\
\(
\sup_{k=0,1,2,3, \, w=0,1, \, u \in [0,1], \, f \in \mathcal{F}} |f^{(k)}(w, u)| \leq B
\)
for some finite constant \(B\). To see why, consider a scenario where \(f_i(w, u) = f_0(w, u) + \epsilon_i\), with \(f_0(w, u)\) being a bounded function and \(\epsilon_i\) representing an i.i.d. zero-mean error term that is unbounded. In this case, the boundedness condition in \cite{li2022random} would fail due to the unbounded nature of \(\epsilon_i\). In contrast, our moment conditions are satisfied, as they impose weaker constraints, accommodating unbounded noise or measurement errors.

Our linear regression-adjusted estimator is given by \eqref{eq:ra-dim}, where \(\hat\mu_k(\mathbf{z}_i), k = 0, 1\), is obtained via least-squares regression of \(Y_i\) on \(\mathbf{z}_i\), including an additional intercept for the treated (\(W_i = 1\)) and control (\(W_i = 0\)) groups, respectively.  
Formally, let \(\bb{X} = (\mathbf{x}_1, \dots, \mathbf{x}_n)^\top\) denote the matrix of regressors, \(\mathbf{Y} = (Y_1, \dots, Y_n)^\top\) the vector of outcomes, and \(\bb{W} = \text{diag}(W_1, \dots, W_n)\) the diagonal matrix of treatment indicators. The least-squares coefficients for the treated and control groups are given by
\[
\hat{\beta}_1 = (\bb{X}^\top \bb{W} \bb{X})^{-1} \bb{X}^\top \bb{W} \mathbf{Y}, \quad \hat{\beta}_0 = (\bb{X}^\top (\bb{I} - \bb{W}) \bb{X})^{-1} \bb{X}^\top (\bb{I} - \bb{W}) \mathbf{Y},
\]
where \(\bb{I}\) is the \(n \times n\) identity matrix. 
The fitted values \(\bb{X} \hat{\beta}_1\) and \(\bb{X} \hat{\beta}_0\) represent the expected outcomes \(E[Y_i(1, W_{-i})]\) and \(E[Y_i(0, W_{-i})]\), respectively. Our estimator is then defined as
\begin{equation}
	\label{def:regression estimator}
	\hat{\tau} = \frac{1}{n} \mathbbm{1}^\top \bb{X} (\hat{\beta}_1 - \hat{\beta}_0),
\end{equation} 
where \(\mathbbm{1}\) is the \(n\)-dimensional vector of ones. 
Equivalently, \(\hat{\tau}\) can be interpreted as the coefficient of \(W_i\) in an ordinary least squares regression of \(Y_i\) on \(W_i\),  {\(\mathbf{z}_i-\bar{\mathbf{z}}\), and \(W_i \times (\mathbf{z}_i-\bar{\mathbf{z}})\)}, as described in \cite{lin2013agnostic}. When no confusion arises, we will simply refer to $\hat{\tau}$ as regression-adjusted estimator.

We now present the results on the consistency and asymptotic normality of the regression estimator \(\hat{\tau}\). Let the population regression coefficients be defined as  
\[
\beta_1 = \bb{M}_{xx}^{-1} E(\mathbf{x}_1 f_1(1, \pi)) \quad \text{(treatment group)}, \quad \beta_0 = \bb{M}_{xx}^{-1} E(\mathbf{x}_1 f_1(0, \pi)) \quad \text{(control group)},
\]  
where \(\bb{M}_{xx} = E(\mathbf{x}_1 \mathbf{x}_1^\top)\) is the second moment matrix of the covariates.

\begin{theorem}[Consistency of \(\hat{\tau}\)]  
	\label{th: Consistency of hat tau}  
	Under Assumptions \ref{assumption:anonymous interference}, \ref{assump:random_graph}, and \ref{assumption:moment_conditions}, the following results hold as \(n \to \infty\):  
	\[
	\tau_n \to \tau = E(f_i(1, \pi) - f_i(0, \pi)), \quad \hat{\beta}_1 \overset{p}{\to} \beta_1, \quad \hat{\beta}_0 \overset{p}{\to} \beta_0, \quad \text{and} \quad \hat{\tau} \overset{p}{\to} \tau.
	\]  
\end{theorem}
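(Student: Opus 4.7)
The proof rests on the observation that under Assumption~\ref{assump:random_graph}, the fraction of treated neighbors concentrates at $\pi$ uniformly over $i$. Conditional on the network and the latent variables $(U_j)$, each $M_i$ is a sum of $N_i$ independent Bernoulli$(\pi)$ variables, so a conditional Bernstein inequality gives $|M_i/N_i - \pi| \lesssim_p \sqrt{\log n / N_i}$. Combined with $N_i \gtrsim_p n\rho_n$ (a consequence of Assumption~\ref{assump:random_graph}(1)--(2)) and $\sqrt{n}\rho_n \to \infty$ (part (3)), a union bound then yields $\max_i |M_i/N_i - \pi| = o_p(1)$. This uniform control is the workhorse of the proof.

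For the first claim, apply a second-order Taylor expansion
\[
f_i(w, M_i/N_i) = f_i(w,\pi) + f_i^{(1)}(w,\pi)(M_i/N_i - \pi) + \tfrac{1}{2} f_i^{(2)}(w, \xi_i)(M_i/N_i - \pi)^2
\]
and combine it with the $\mathcal{L}_2$ bound on $\sup_u |f_i^{(k)}(w,u)|$ from Assumption~\ref{assumption:moment_conditions} and the $L^2$-convergence $E(M_i/N_i - \pi)^2 \to 0$; this yields $E f_i(w, M_i/N_i) \to E f_i(w,\pi)$ for $w \in \{0,1\}$, hence $\tau_n \to E(f_1(1,\pi) - f_1(0,\pi)) = \tau$.

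For $\hat\beta_1$, write $\hat\beta_1 = \bigl(n^{-1}\sum_i W_i \mathbf{x}_i \mathbf{x}_i^\top\bigr)^{-1}\bigl(n^{-1}\sum_i W_i \mathbf{x}_i Y_i\bigr)$. The first factor converges in probability to $\pi\bb{M}_{xx}$ by the standard i.i.d.\ LLN, since $(W_i, \mathbf{x}_i)$ are i.i.d.\ and $E|\mathbf{x}_1|_\infty^2 < \infty$. For the second factor, decompose
\[
n^{-1}\sum_i W_i \mathbf{x}_i f_i(1, M_i/N_i) = n^{-1}\sum_i W_i \mathbf{x}_i f_i(1,\pi) + n^{-1}\sum_i W_i \mathbf{x}_i\bigl[f_i(1, M_i/N_i) - f_i(1,\pi)\bigr].
\]
The leading sum is over i.i.d.\ vectors with finite second moment (ensured by $m_2 < \infty$), and converges to $\pi E(\mathbf{x}_1 f_1(1,\pi))$. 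The residual sum is bounded in max norm by $\bigl(\max_i |M_i/N_i - \pi|\bigr) \cdot n^{-1}\sum_i |\mathbf{x}_i|_\infty \sup_u |f_i^{(1)}(1,u)|$, where the first factor is $o_p(1)$ by the uniform concentration and the second is $O_p(1)$ by the LLN and Assumption~\ref{assumption:moment_conditions}. Thus $\hat\beta_1 \overset{p}{\to} \bb{M}_{xx}^{-1} E(\mathbf{x}_1 f_1(1,\pi)) = \beta_1$, and the same argument with $1 - W_i$ in place of $W_i$ gives $\hat\beta_0 \overset{p}{\to} \beta_0$.

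Finally, from \eqref{def:regression estimator} we have $\hat\tau = \bar{\mathbf{x}}^\top(\hat\beta_1 - \hat\beta_0)$, and $\bar{\mathbf{x}} \overset{p}{\to} E(\mathbf{x}_1)$ by the LLN. The continuous mapping theorem combined with the OLS identity $E(\mathbf{x}_1)^\top \beta_k = E f_1(k,\pi)$ (valid because $\mathbf{x}_1$ contains an intercept) yields $\hat\tau \overset{p}{\to} E f_1(1,\pi) - E f_1(0,\pi) = \tau$. \textbf{The main obstacle} is the uniform concentration $\max_i |M_i/N_i - \pi| = o_p(1)$: although the $M_i/N_i$ are dependent across $i$ through overlapping neighborhoods, the conditional distribution of $M_i$ given $N_i$ is Binomial$(N_i, \pi)$, so a conditional Bernstein bound plus a union bound suffices, provided one first establishes the high-probability lower bound $\min_i N_i \gtrsim n\rho_n$ from the graphon structure -- this degree bound is where most of the graph-theoretic effort lies.
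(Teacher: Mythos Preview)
Your proposal is correct and follows essentially the same approach as the paper: Taylor-expand $f_i(w,M_i/N_i)$ around $\pi$, control the remainder via the uniform bound $\max_i|M_i/N_i-\pi|=o_p(1)$ (which the paper obtains by citing Lemma~15 of \citet{li2022random} for $\min_i N_i\gtrsim_p n\rho_n$, exactly the degree bound you flag as the main obstacle), apply the i.i.d.\ LLN to the leading terms, and finish with the intercept identity $E(\mathbf{x}_1)^\top\beta_k=Ef_1(k,\pi)$. The only cosmetic difference is that the paper routes the residual through the conditional mean $g(W_i,\mathbf{z}_i)=E(f_i(W_i,\pi)\mid\mathbf{z}_i,W_i)$ before invoking the LLN, whereas you decompose directly into $f_i(1,\pi)$ plus increment; both are equivalent.
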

\begin{theorem}[Asymptotic normality of \(\hat{\tau}\)]  
	\label{th:main}  
	Under the same assumptions as in Theorem \ref{th: Consistency of hat tau}, the following result holds:  
	\[
	\sqrt{n}(\hat{\tau} - \tau) \overset{d}{\to} N(0, V_{reg}),
	\]  
	where the asymptotic variance \(V_{reg}\) is given by  {
		\begin{align}  
			\label{def:Vreg}  
			V_{reg} &= \frac{1}{\pi} E\big(f_1(1, \pi) - \mathbf{x}_1^\top \beta_1\big)^2  
			+ \frac{1}{1-\pi} E\big(f_1(0, \pi) - \mathbf{x}_1^\top \beta_0\big)^2 \notag \\  
			&\quad + (\beta_1^{(-1)} - \beta_0^{(-1)})^\top 
			cov(\tb{z}_1)
			(\beta_1^{(-1)} - \beta_0^{(-1)})  
			+ b \pi(1-\pi)\big(E f_1^{(1)}(1, \pi) - E f_1^{(1)}(0, \pi)\big)^2,  
		\end{align}  
	}
	with  
	\begin{align}  
		\label{def:b}  
		b = E\left[E\left(\frac{h(U_i, U_j)}{E\big(h(U_i, U_j) \mid U_j\big)} \bigg| U_i\right)\right]^2.  
	\end{align}  
	
	Let \(V_{dim}\) denote the asymptotic variance of the difference-in-means estimator (obtained by taking \(\mathbf{x}_i = 1\) as the regressor). The relationship between \(V_{dim}\) and \(V_{reg}\) is  
	\begin{align}  
		\label{Vdim Vreg}  
		V_{dim} - V_{reg} = \frac{1}{\pi(1-\pi)}\big(E(\mathbf{z}_1 \tilde{f}) - E\mathbf{z}_1 E\tilde{f}\big)^\top  
		\big(E(\mathbf{z}_1 \mathbf{z}_1^\top)\big)^{-1}  
		\big(E(\mathbf{z}_1 \tilde{f}) - E\mathbf{z}_1 E\tilde{f}\big) \ge  0,  
	\end{align}  
	where \(\tilde{f} = (1-\pi)f_1(1, \pi) + \pi f_1(0, \pi)\).  
\end{theorem}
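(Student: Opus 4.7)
The plan is to linearize $\hat{\tau}-\tau$ in two stages, handling the network interference and the regression adjustment separately. First, I would Taylor-expand the outcome around the population interference level $\pi$:
\[
Y_i = f_i(W_i,\pi) + f_i^{(1)}(W_i,\pi)\bigl(M_i/N_i - \pi\bigr) + \tfrac{1}{2}f_i^{(2)}(W_i,\tilde\pi_i)\bigl(M_i/N_i - \pi\bigr)^2.
\]
A Bernstein-type bound gives $M_i/N_i - \pi = O_p(1/\sqrt{n\rho_n})$ (its conditional variance given the graph is $\pi(1-\pi)/N_i$), which, combined with Assumption~\ref{assump:random_graph} and the moment control from Assumption~\ref{assumption:moment_conditions}, shows the quadratic remainder contributes $o_p(1/\sqrt n)$ to $\sqrt{n}(\hat\tau - \tau)$. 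Second, using Lin's (2013) OLS characterization $\hat\tau = (\bar Y_1 - \bar Y_0) - (\bar{\mathbf{z}}_1 - \bar{\mathbf{z}}_0)^\top[\tfrac{n_0}{n}\hat\gamma_1 + \tfrac{n_1}{n}\hat\gamma_0]$ (with $\hat\gamma_k$ the within-group OLS slope on $\mathbf{z}_i-\bar{\mathbf{z}}$), and invoking the consistency $\hat\gamma_k \to \beta_k^{(-1)}$ from Theorem~\ref{th: Consistency of hat tau} together with $\frac{1}{n}\bb{X}^\top\bb{W}\bb{X}\to\pi \bb{M}_{xx}$ in probability, I arrive at the influence-function decomposition $\sqrt{n}(\hat{\tau}-\tau) = \frac{1}{\sqrt n}\sum_{i=1}^n \psi_i + C_n + o_p(1)$, where $\psi_i = \frac{W_i}{\pi}(f_i(1,\pi) - \mathbf{x}_i^\top\beta_1) - \frac{1-W_i}{1-\pi}(f_i(0,\pi) - \mathbf{x}_i^\top\beta_0) + \mathbf{x}_i^\top(\beta_1 - \beta_0) - \tau$ is the classical Lin influence function whose variance accounts for the first three terms of $V_{reg}$ by direct computation, and $C_n = \frac{1}{\sqrt n}\sum_i[\frac{W_i}{\pi}f_i^{(1)}(1,\pi) - \frac{1-W_i}{1-\pi}f_i^{(1)}(0,\pi)](M_i/N_i - \pi)$ captures the network-induced fluctuation.

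The principal obstacle is the analysis of $C_n$: the summands $(M_i/N_i - \pi)$ are dependent through shared neighbors, so a direct CLT does not apply. My strategy is (i) to replace $N_i$ uniformly by its conditional mean $(n-1)\rho_n\int h(U_i,u)\,du$ via a concentration bound for graphon-generated degrees, with error controlled by $\sqrt n\rho_n \to \infty$; (ii) to compute $E[C_n^2]$ as a double sum indexed by pairs $(i,j)$ whose leading contribution comes from common neighbors $k\sim i$, $k\sim j$ using $E[(W_k-\pi)(W_l-\pi)\mid G] = \pi(1-\pi)\mathbbm{1}(k=l)$, with terms involving $W_i$ itself vanishing by the independence of $W_i$ from $\{W_j:j\sim i\}$; (iii) to recognize the surviving expression, after averaging over the latents $\{U_i\}$, as a second-order U-statistic whose LLN limit equals $b\pi(1-\pi)(Ef_1^{(1)}(1,\pi) - Ef_1^{(1)}(0,\pi))^2$ with $b$ as in \eqref{def:b}; and (iv) to verify the joint Gaussian limit by a martingale CLT conditional on the graph sigma-field, exploiting the fact that $C_n$ is asymptotically uncorrelated with $\frac{1}{\sqrt n}\sum_i\psi_i$ (since $M_i/N_i$ does not involve $W_i$), with tightness ensured by Assumption~\ref{assumption:moment_conditions}.

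Finally, for the variance inequality $V_{dim} - V_{reg}\ge 0$, I substitute $\mathbf{x}_i\equiv 1$ into \eqref{def:Vreg} to recover $V_{dim}$: the $b$-term is unchanged and the quadratic-form term vanishes. Applying the best-linear-predictor decomposition $\operatorname{Var}(f_1(k,\pi)) = E(f_1(k,\pi) - \mathbf{x}_1^\top\beta_k)^2 + (\beta_k^{(-1)})^\top cov(\mathbf{z}_1) \beta_k^{(-1)}$ for $k=0,1$ converts the first two terms of $V_{dim}$ into those of $V_{reg}$ plus a quadratic form, and the algebraic identity
\[
\tfrac{1-\pi}{\pi}u^\top\Sigma u + \tfrac{\pi}{1-\pi}v^\top\Sigma v + 2u^\top\Sigma v = \bigl(\sqrt{\tfrac{1-\pi}{\pi}}u + \sqrt{\tfrac{\pi}{1-\pi}}v\bigr)^\top\Sigma\bigl(\sqrt{\tfrac{1-\pi}{\pi}}u + \sqrt{\tfrac{\pi}{1-\pi}}v\bigr)
\]
with $\Sigma = cov(\mathbf{z}_1)$, $u=\beta_1^{(-1)}$, $v=\beta_0^{(-1)}$ exhibits $V_{dim} - V_{reg}$ as a positive semidefinite quadratic form in $(1-\pi)\beta_1^{(-1)} + \pi\beta_0^{(-1)}$. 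Rewriting this vector via Frisch--Waugh--Lovell using $\Sigma\beta_k^{(-1)} = \operatorname{Cov}(\mathbf{z}_1,f_1(k,\pi))$ and combining with $\tilde f = (1-\pi)f_1(1,\pi) + \pi f_1(0,\pi)$ delivers the stated identity \eqref{Vdim Vreg}, with nonnegativity immediate from the quadratic-form structure.
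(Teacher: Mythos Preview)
Your approach takes a genuinely different route from the paper's. The paper does \emph{not} split the influence function into an ``i.i.d.\ part'' $\psi_i$ and a ``network part'' $C_n$. Instead it works at the level of the triple $(\hat\beta_1-\beta_1,\,\hat\beta_0-\beta_0,\,\bar{\mathbf x}-E\mathbf x_1)$, applies the Cram\'er--Wold device, and for each linear combination rewrites the resulting statistic as $n^{-1}\sum_i\bigl[\pi^{-1}W_iq_i(1,M_i/N_i)-(1-\pi)^{-1}(1-W_i)q_i(0,M_i/N_i)\bigr]$ for a carefully constructed $q_i(\cdot,\cdot)$ that absorbs the covariates. This is exactly the form treated by Theorem~4 of \citet{li2022random}, which is invoked as a black box to deliver the joint CLT; the $b$-term falls out of that theorem's variance formula. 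What this buys is brevity: the H\'ajek-type projection of $M_i/N_i-\pi$, the graphon computation yielding $b$, and the conditional CLT are all delegated to the cited result. Your route is more transparent about where each term of $V_{reg}$ originates, but steps (i)--(iv) are in effect a re-derivation of the relevant piece of the Li--Wager CLT for this specific $q_i$.

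There is one concrete gap. Your justification that $C_n$ and $n^{-1/2}\sum_i\psi_i$ are asymptotically uncorrelated ``since $M_i/N_i$ does not involve $W_i$'' only handles the diagonal pairing $i=j$. For $i\ne j$, the $j$-th summand of $C_n$ contains $(W_i-\pi)E_{ij}/N_j$ through $M_j/N_j-\pi$, and this \emph{does} correlate with $\psi_i$. The off-diagonal cross-covariance still vanishes, but for a different reason: one computes $E[\psi_i(W_i-\pi)]=(1-\pi)E[f_i(1,\pi)-\mathbf x_i^\top\beta_1]+\pi E[f_i(0,\pi)-\mathbf x_i^\top\beta_0]=0$ by the population OLS normal equations $E[\mathbf x_i(f_i(k,\pi)-\mathbf x_i^\top\beta_k)]=0$. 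You will need to invoke this orthogonality explicitly; it is the same mechanism the paper uses to obtain $E(R_iQ_i)=0$ in its variance calculation.

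Your Step-2 algebra for $V_{dim}-V_{reg}$ is essentially the paper's, organized differently: the paper also uses the Pythagorean identity $\operatorname{Var}(f_1(k,\pi))=E(f_1(k,\pi)-\mathbf x_1^\top\beta_k)^2+\operatorname{Var}(\mathbf x_1^\top\beta_k)$ and then completes the square in $(1-\pi)\beta_1+\pi\beta_0$ before simplifying via the block structure of $\mathbb M_{xx}$.
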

Theorem \ref{th:main} highlights several key insights. First, regardless of the choice of \(\mathbf{z}_i\), the regression estimator \(\hat{\tau}\) is always asymptotically normal with rate \(\sqrt{n}\). Furthermore, when there are no covariates, i.e., \(\mathbf{x}_i = 1\), this result is consistent with Theorem 4 in \cite{li2022random}. 
Second, \(V_{reg}\) is always less than or equal to \(V_{dim}\), with the inequality being strict unless the covariance between \(\mathbf{z}_1\) and \((1-\pi)f_1(1,\pi) + \pi f_1(0,\pi)\) is zero. This suggests that the regression adjustment leads to variance reduction whenever the covariates are informative. Similar conclusions have been drawn in other contexts (\cite{lin2013agnostic, gao2023causal}), but, to our knowledge, this is the first such result under the random graph asymptotic setup. 
Third, when \(\mathbf{x}_i\) is variable, the third term in equation (\ref{def:Vreg}), namely \(b \pi (1-\pi) (E f_1^{(1)} (1,\pi) - E f_1^{(1)} (0,\pi))^2\), remains constant and equals zero when there is no network interference. This term can be viewed as an ``irreducible component'' arising from network interference, while the first two terms contribute to variance reduction. 
In Section \ref{sec:Local constant estimator}, we extend this analysis to the local constant estimator, which exhibits a similar structure.

Our regression estimator \(\hat{\tau}\) is optimal for estimating the average treatment effect in the asymptotic sense, as it minimizes the variance among linear regression adjustment estimators. To demonstrate this, consider the class of regression adjustment estimators defined as:
\begin{equation}
	\label{def:class of regression adjustment}
	\hat{\tau}(\alpha_1, \alpha_0) = \sum_{i=1}^n \left( \frac{W_i(Y_i - \alpha_1^\top (\mathbf{z}_i - \bar{z}))}{\sum_{j=1}^n W_j} - \frac{(1-W_i)(Y_i - \alpha_0^\top (\mathbf{z}_i - \bar{z}))}{\sum_{j=1}^n (1-W_j)} \right),
\end{equation}
where \(\alpha_1, \alpha_0\) are two vectors {and $\bar{z}=\sum_{i=1}^n \tb{z}_i/n$.} Clearly, the estimator \(\hat{\tau}\) corresponds to \(\hat{\tau}(\hat{\beta}_1^{(-1)}, \hat{\beta}_0^{(-1)})\).  

We now present the following result:

\begin{theorem}
	\label{th:4}
	Under the assumptions in Theorem \ref{th: Consistency of hat tau}, for any fixed vectors \(\alpha_1\) and  \(\alpha_0\), we have:
	\[
	\sqrt{n}(\hat{\tau}(\alpha_1, \alpha_0) - \tau) \overset{d}{\to} N\left(0, \tilde{V}(\alpha_1, \alpha_0)\right),
	\]
	where {
		\begin{align}
			\label{eqn:th3 2}
			\tilde{V}(\alpha_1, \alpha_0) - V_{reg} = \frac{1}{\pi(1-\pi)} u(\alpha_1, \alpha_0)^\top 
			cov(\tb{z}_1)
			u(\alpha_1, \alpha_0) \ge  0,
		\end{align}
		and \(u(\alpha_1, \alpha_0) = (1 - \pi)(\alpha_1  - \beta_1^{(-1)}) + \pi(\alpha_0 - \beta_0^{(-1)})\).
	}
\end{theorem}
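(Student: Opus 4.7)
The plan is to reduce Theorem~\ref{th:4} to Theorem~\ref{th:main} in the difference-in-means (DIM) case ($\mathbf{x}_i = 1$), applied to a carefully chosen modified outcome. Let $\mu_z = E[\mathbf{z}_1]$ and define
\begin{align*}
g_i(w, u) = f_i(w, u) - \alpha_w^\top(\mathbf{z}_i - \mu_z) + (w - (1-\pi))(\alpha_1 - \alpha_0)^\top(\mathbf{z}_i - \mu_z),
\end{align*}
with $Y_i^g = g_i(W_i, M_i/N_i)$. The factor $(w - (1-\pi))$ is engineered so that, at $u = \pi$, both $g_i(1,\pi)$ and $g_i(0,\pi)$ share the single effective slope $\tilde\alpha = (1-\pi)\alpha_1 + \pi\alpha_0$ on $\mathbf{z}_i - \mu_z$; the induced ATE equals $\tau$ because $g_1(1,\pi) - g_1(0,\pi) = f_1(1,\pi) - f_1(0,\pi)$; and $g_i^{(1)}(w,\pi) = f_i^{(1)}(w,\pi)$. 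The pairs $(g_i, \mathbf{z}_i)$ are i.i.d.\ and inherit Assumptions~\ref{assumption:anonymous interference}--\ref{assumption:moment_conditions} from the finite-moment hypotheses on $\mathbf{z}_i$ and the boundedness of $\alpha_w$.

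Next I will show that the DIM of $Y^g$ coincides with $\hat\tau(\alpha_1, \alpha_0)$ up to a negligible term. Direct expansion gives
\begin{align*}
\hat\tau^g_{dim} - \hat\tau(\alpha_1, \alpha_0) = (\pi - n_1/n)(\alpha_1 - \alpha_0)^\top(\bar z^{(1)} - \bar z^{(0)}),
\end{align*}
where $\bar z^{(k)}$ is the mean covariate in group $k$ and $\hat\tau^g_{dim}$ denotes the DIM of $Y^g$. Since $|\pi - n_1/n| = O_p(n^{-1/2})$ and $|\bar z^{(1)} - \bar z^{(0)}| = O_p(n^{-1/2})$ by the random Bernoulli treatment assignment, this difference is $o_p(n^{-1/2})$. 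Thus $\sqrt n(\hat\tau(\alpha_1, \alpha_0) - \tau) = \sqrt n(\hat\tau^g_{dim} - \tau) + o_p(1)$, and Theorem~\ref{th:main} with $\mathbf{x}_i = 1$ applied to $Y^g$ yields $\sqrt n(\hat\tau(\alpha_1, \alpha_0) - \tau) \overset{d}{\to} N(0, V_{dim}^g)$, with
\begin{align*}
V_{dim}^g = \frac{\text{Var}(g_1(1,\pi))}{\pi} + \frac{\text{Var}(g_1(0,\pi))}{1-\pi} + b\pi(1-\pi)\bigl(E f_1^{(1)}(1,\pi) - E f_1^{(1)}(0,\pi)\bigr)^2.
\end{align*}

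Finally, I will simplify $V_{dim}^g$ into the form (\ref{eqn:th3 2}). Using the OLS orthogonality $E[(f_1(w,\pi) - \mathbf{x}_1^\top\beta_w)(\mathbf{z}_1 - \mu_z)] = 0$ with residuals $\xi_w = f_1(w,\pi) - \mathbf{x}_1^\top\beta_w$, one obtains $\text{Var}(g_1(w,\pi)) = E[\xi_w^2] + (\beta_w^{(-1)} - \tilde\alpha)^\top \text{cov}(\mathbf{z}_1)(\beta_w^{(-1)} - \tilde\alpha)$. Writing $v_w = \alpha_w - \beta_w^{(-1)}$, $d = \beta_1^{(-1)} - \beta_0^{(-1)}$, and $u = u(\alpha_1, \alpha_0) = (1-\pi)v_1 + \pi v_0$, a short calculation gives $\beta_1^{(-1)} - \tilde\alpha = \pi d - u$ and $\beta_0^{(-1)} - \tilde\alpha = -(1-\pi)d - u$, and then the elementary algebraic identity
\begin{align*}
\frac{(\pi d - u)^\top \text{cov}(\mathbf{z}_1)(\pi d - u)}{\pi} + \frac{((1-\pi)d + u)^\top \text{cov}(\mathbf{z}_1)((1-\pi)d + u)}{1-\pi} = d^\top \text{cov}(\mathbf{z}_1) d + \frac{u^\top \text{cov}(\mathbf{z}_1) u}{\pi(1-\pi)}
\end{align*}
combined with the unchanged network contribution produces $V_{dim}^g = V_{reg} + \frac{u^\top \text{cov}(\mathbf{z}_1) u}{\pi(1-\pi)}$, which is exactly (\ref{eqn:th3 2}) and is nonnegative since $\text{cov}(\mathbf{z}_1) \succeq 0$. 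The principal obstacle is identifying the right modification $g_i$: the $(w - (1-\pi))$ correction is chosen precisely so that (a) the $\bar z$-centered estimator $\hat\tau(\alpha_1, \alpha_0)$ and the population-centered DIM of $Y^g$ agree to leading order, and (b) a single effective slope $\tilde\alpha$ appears at $u = \pi$, allowing Theorem~\ref{th:main} to be applied directly; once this is in place, the remaining algebra is routine.
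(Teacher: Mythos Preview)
Your proposal is correct and takes a genuinely different route from the paper's own proof. The paper re-runs the Cram\'er--Wold/CLT machinery of Theorem~\ref{th:main} directly on $\hat\tau(\alpha_1,\alpha_0)$: it writes $q_i(w,t)$ for the modified score, invokes the analogue of equation~(\ref{eqn:23}), reads off the full expression for $\tilde V(\alpha_1,\alpha_0)$, and then performs a somewhat lengthy algebraic reduction of $\tilde V(\alpha_1,\alpha_0)-V_{reg}$ using the orthogonality relations (\ref{pf2:aeqn:4}) and the identities $Ef_i(w,\pi)=E\mathbf{x}_1^\top\beta_w$. Your approach instead observes that $\hat\tau(\alpha_1,\alpha_0)$ is, up to an $O_p(n^{-1})$ term, the difference-in-means estimator applied to the transformed outcome $Y_i^g=Y_i-\tilde\alpha^\top(\mathbf{z}_i-\mu_z)$ with $\tilde\alpha=(1-\pi)\alpha_1+\pi\alpha_0$, and then invokes Theorem~\ref{th:main} (in the $\mathbf{x}_i=1$ case) as a black box. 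This is more economical: the CLT is not re-derived, Assumptions~\ref{assumption:anonymous interference}--\ref{assumption:moment_conditions} transfer to $(g_i,\mathbf{z}_i)$ with minimal checking, and the variance comparison collapses to the clean quadratic identity you display. The paper's route has the minor advantage of yielding the full closed-form $\tilde V(\alpha_1,\alpha_0)$ along the way (not just the difference from $V_{reg}$), but your argument recovers that as well from $V_{dim}^g$. Both the coupling step $\hat\tau^g_{dim}-\hat\tau(\alpha_1,\alpha_0)=(\pi-n_1/n)(\alpha_1-\alpha_0)^\top(\bar z^{(1)}-\bar z^{(0)})=O_p(n^{-1})$ and the final algebra are correct as written.
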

This reveals several important results. For any given $\alpha_1$ and $\alpha_0$, the regression adjustment estimator $\hat{\tau}(\alpha_1, \alpha_0)$ is asymptotically normal, with an explicit expression for its asymptotic variance. Furthermore, when $u(\alpha_1, \alpha_0) = 0$, a typical case being $\alpha_1 = \beta_1^{(-1)}$ and $\alpha_0 = \beta_0^{(-1)}$, $\tilde{V}(\alpha_1, \alpha_0)$ reaches its minimum, which equals $V_{reg}$, the asymptotic variance of the regression-adjusted estimator defined in \eqref{def:Vreg} . It is worth noting that $V_{reg}$ is the asymptotic variance of $\hat{\tau}(\hat{\beta}_1^{(-1)}, \hat{\beta}_0^{(-1)})$, which itself does not belong to the class $\hat{\tau}(\alpha_1, \alpha_0)$ because $\hat{\beta}_1$ and $\hat{\beta}_0$ depend on $Y$, $\mathbf{z}$, and $W$.  Theorem~\ref{th:4} shows that we can treat $\hat{\beta}_1$ and $\hat{\beta}_0$ as if they were the true $\beta_1$ and $\beta_0$ without affecting the asymptotic variance. Our results are consistent with similar findings in the absence of network interference \citep{li2017general}.

\section{Variance Estimation}
\label{sec:3.1}

In the absence of covariates, \citet{li2022random} proposed a variance estimator for their ATE estimator, designed to provide an upper bound on the true variance. In this section, we address this conservative strategy by utilizing the widely-adopted graphon model framework to facilitate variance estimation in the more complex setting of network interference with regression adjustment. A similar approach was previously employed by \citet{li2022random} in the context of their PC balancing estimator, though their focus was on estimating a different causal estimand, namely the indirect effect. 

The asymptotic variance \( V_{reg} \) of \( \hat{\tau} \), as derived in Theorem~\ref{th:main}, can be decomposed into four components, which jointly capture information from both the outcome function \( f_i \) and the graphon structure. Key parameters such as \( \beta_0 \), and \( \beta_1 \) can be consistently estimated as 
\( \hat{\beta}_0 \), and \( \hat{\beta}_1 \), respectively. As a result, most terms in the variance decomposition can be consistently estimated using their empirical counterparts. The main exceptions are \( b \), \( E[f_1^{(1)}(1, \pi)] \), and \( E[f_1^{(1)}(0, \pi)] \), which require additional techniques for accurate estimation.
{\begin{remark}
		In the absence of network interference, \citet{lin2013agnostic} introduced a robust sandwich standard error estimator. In our setting, the most challenging component of the asymptotic variance is the final term in Equation~\eqref{def:Vreg}, which reflects the influence of the network structure. While a sandwich estimator could be used to approximate the first three terms in \eqref{def:Vreg}, we choose to omit this approach for simplicity and leave its exploration for future research.
	\end{remark}
}

Assume that the graphon \( h(x, y) \) has a low rank \( r \), such that it can be expressed as  
\begin{equation}
	\label{eqn:graphon rank}
	h(x, y) = \sum_{k=1}^r \lambda_k \psi_k(x) \psi_k(y),
\end{equation}
where \( \lambda_k \) are eigenvalues satisfying \(|\lambda_1| \geq |\lambda_2| \geq \cdots \geq |\lambda_r| > 0\), and \( \psi_k \) are the corresponding eigenfunctions. The eigenfunctions \(\psi_k\) satisfy the orthonormality conditions:  
\[
\int_0^1 \psi_k^2(x) \, dx = 1, \quad \text{and} \quad \int_0^1 \int_0^1 \psi_k(x) \psi_l(y) \, dy = 0 \quad \text{for } k \neq l.
\]
The low-rank assumption is a widely accepted and well-justified modeling approach for random networks, with the stochastic block model being a notable example. This assumption is also made in \citet{li2022random}. 
To approximate the eigenfunctions \( \psi_k \) for \( k = 1, \dots, r \), we perform a singular value decomposition (SVD) on the adjacency matrix \( E_n \). Denote the leading \( r \) eigenpairs of \( E_n \) as \( (\hat{\lambda}_k, \hat{\psi}_k) \), ordered such that \( |\hat{\lambda}_1| \geq |\hat{\lambda}_2| \geq \cdots \geq |\hat{\lambda}_r| \).

To estimate \( b \) as defined in (\ref{def:b}), we treat the adjacency matrix \( E_n \) as a perturbed version of the connection probability matrix and use the following plug-in estimator:  
\begin{equation}
	\label{def:hat b}
	\hat{b} = \frac{1}{n} \sum_{i=1}^n \left( \sum_{j=1}^n \frac{E_{ij}}{\sum_{k=1}^n E_{jk}} \right)^2.
\end{equation}
In Lemma~\ref{lem:1}, we prove that under suitable assumptions, \(\hat{b} \overset{p}{\to} b\).  This proof leverages a novel technical result, specifically Lemma~\ref{lem:11}, which addresses low-rank graphon networks. In particular, we demonstrate that an appropriately scaled weighted sum of a node's neighbors can uniformly approximate the corresponding graphon function. For instance, Lemma~\ref{lem:11} establishes that  
\[
\sup_{i=1,\cdots,n}\left| 
\frac{N_i}{n\rho_n} - \int_0^1 h(U_i, y) \, dy
\right| = o_p(1).
\]

To estimate \( E f_1^{(1)} (1,\pi) \) and \( E f_1^{(1)} (0,\pi) \), we adapt the PC balancing estimator. Specifically, we estimate \( E f_1^{(1)} (1,\pi) \) as: 
\begin{align}
	\label{eqn:dev1}
	\hat{E} f_1^{(1)} (1,\pi) = \frac{1}{n \pi} \sum_{i=1}^n W_i Y_i \left( \frac{M_i}{ {\pi}} - \frac{N_i - M_i}{1 -  {\pi}} + \sum_{k=1}^r \hat{a}_k \hat{\psi}_{k i} \right),
\end{align}
where  \( \hat{a}_k \) are coefficients determined by the constraint: 
\begin{align*}
	\sum_{i=1}^n \hat{\psi}_{l i} \left( \frac{M_i}{ {\pi}} - \frac{N_i - M_i}{1 -  {\pi}} + \sum_{k=1}^r \hat{a}_k \hat{\psi}_{k i} \right) = 0,
\end{align*}
for all \( l = 1, \dots, r \).  
Similarly, the estimator for \( E f_1^{(1)} (0,\pi) \) is given by: 
\begin{align}
	\label{eqn:dev2}
	\hat{E} f_1^{(1)} (0,\pi) = \frac{1}{n (1 -  {\pi})} \sum_{i=1}^n (1 - W_i) Y_i \left( \frac{M_i}{ {\pi}} - \frac{N_i - M_i}{1 -  {\pi}} + \sum_{k=1}^r \hat{a}_k \hat{\psi}_{k i} \right).
\end{align}

Putting everything together, the asymptotic variance of \(\hat{\tau}\) is consistently estimated by  
\begin{align}
	\label{eqn:4}
	\hat{V}_{reg} = & \frac{1}{ \pi} \frac{\sum_{i=1}^n W_i (Y_i - \mathbf{x}_i^\top \hat{\beta}_1)^2}{\sum_{i=1}^n W_i} 
	+ \frac{1}{1-\hat\pi} \frac{\sum_{i=1}^n (1 - W_i)(Y_i - \mathbf{x}_i^\top \hat{\beta}_0)^2}{\sum_{i=1}^n (1 - W_i)} \notag\\
	&{ + (\hat{\beta}_1^{(-1)} - \hat{\beta}_0^{(-1)})^\top \left( \frac{1}{n} \sum_{i=1}^n \mathbf{z}_i \mathbf{z}_i^\top - \left(\frac{1}{n} \sum_{i=1}^n \mathbf{z}_i \right) \left( \frac{1}{n} \sum_{i=1}^n \mathbf{z}_i^\top \right) \right) (\hat{\beta}_1^{(-1)} - \hat{\beta}_0^{(-1)})
	} \notag\\
	& + \hat{b}  \pi (1 -  \pi) \left( \hat{E} f_1^{(1)}(1,  \pi) - \hat{E} f_1^{(1)}(0,  \pi) \right)^2.
\end{align}

The following assumption is required to ensure the consistency of the proposed estimators.

\begin{assumption}
	\label{ass:main2}
	The sparsity parameter \(\rho_n\) satisfies \(\rho_n \to 0\) as \(n \to \infty\). The graphon \(h(x, y)\) is represented in the form (\ref{eqn:graphon rank}) with rank \(r\). Additionally, there exists a constant \(M_1 > 0\) such that 
	\[
	\sup_{x \in [0,1]} \max_{k = 1, \dots, r} |\psi_k(x)| \leq M_1.
	\]
\end{assumption}

The sparsity condition above aligns with the structure of most real-world networks, which are typically sparse. With this assumption, we establish the following result:
\begin{theorem}
	\label{th:3}
	Under Assumptions \ref{assumption:anonymous interference}, \ref{assump:random_graph}, \ref{assumption:moment_conditions}, and \ref{ass:main2}, the estimator \(\hat{V}_{reg}\) satisfies \(\hat{V}_{reg} \overset{p}{\to} V_{reg}\).
\end{theorem}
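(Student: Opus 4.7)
\textbf{Proof proposal for Theorem~\ref{th:3}.}

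The plan is to match the four summands of $\hat V_{reg}$ in \eqref{eqn:4} against the four summands of $V_{reg}$ in \eqref{def:Vreg} and prove convergence in probability term by term. The first three summands are of classical Slutsky type: using $\hat\beta_1 \stackrel{p}{\to}\beta_1$, $\hat\beta_0 \stackrel{p}{\to}\beta_0$ from Theorem~\ref{th: Consistency of hat tau} together with standard LLN/CMT arguments, the main subtlety is that $Y_i=f_i(W_i,M_i/N_i)$ rather than $f_i(W_i,\pi)$. So for term one I would expand $(Y_i-\mathbf{x}_i^\top\hat\beta_1)^2 = (f_i(1,M_i/N_i)-\mathbf{x}_i^\top\beta_1)^2 + \text{cross terms}$, Taylor-expand $f_i(1,M_i/N_i)$ around $\pi$, and kill the remainder via the uniform bound $\sup_i |M_i/N_i-\pi|=o_p(1)$ obtained by combining Hoeffding's inequality conditional on $G_n$ with Lemma~\ref{lem:11}'s uniform control $\sup_i |N_i/(n\rho_n)-\int h(U_i,y)dy|=o_p(1)$; the required second moments come from Assumption~\ref{assumption:moment_conditions}. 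Term two is entirely symmetric. Term three is immediate: $n^{-1}\sum \mathbf{z}_i\mathbf{z}_i^\top \stackrel{p}{\to} E(\mathbf{z}_1\mathbf{z}_1^\top)$ and $\bar{\mathbf{z}}\stackrel{p}{\to}E\mathbf{z}_1$ by the weak LLN (Assumption~\ref{assumption:moment_conditions} ensures existence of second moments), and $\hat\beta_k^{(-1)} \stackrel{p}{\to}\beta_k^{(-1)}$ from Theorem~\ref{th: Consistency of hat tau}; an application of the continuous mapping theorem to the quadratic form finishes the job.

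The fourth summand is the real work. It splits into two independent pieces: (i) $\hat b\stackrel{p}{\to}b$, which I would cite from Lemma~\ref{lem:1} (whose proof rests on the uniform graphon approximation Lemma~\ref{lem:11}); and (ii) $\hat E f_1^{(1)}(w,\pi)\stackrel{p}{\to}Ef_1^{(1)}(w,\pi)$ for $w\in\{0,1\}$. For (ii), the algebraic identity
\[
\frac{M_i}{\pi}-\frac{N_i-M_i}{1-\pi}=\frac{M_i-\pi N_i}{\pi(1-\pi)}=\frac{1}{\pi(1-\pi)}\sum_{j=1}^n E_{ij}(W_j-\pi)
\]
exposes the structure. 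Taylor expanding $Y_i=f_i(W_i,\pi)+f_i^{(1)}(W_i,\pi)(M_i/N_i-\pi)+\tfrac12 f_i^{(2)}(W_i,\tilde\pi_i)(M_i/N_i-\pi)^2$ and substituting into \eqref{eqn:dev1}, the cross term between $f_i^{(1)}(1,\pi)(M_i/N_i-\pi)$ and $(M_i-\pi N_i)/(\pi(1-\pi))$ gives $W_i f_i^{(1)}(1,\pi)(M_i-\pi N_i)^2/(N_i\pi(1-\pi))$; conditioning on $G_n$ yields $E[(M_i-\pi N_i)^2 \mid G_n]=N_i\pi(1-\pi)$, so by the LLN this average converges to $Ef_1^{(1)}(1,\pi)$ after normalising by $n\pi$. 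The quadratic remainder is $O_p((n\rho_n)^{-1})=o_p(1)$ under Assumption~\ref{ass:main2} by the conditional Hoeffding bound on $M_i-\pi N_i$ together with $N_i\asymp n\rho_n$.

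The remaining terms involving $f_i(W_i,\pi)$ times the balancing factor are the genuine obstacle, because $f_i(W_i,\pi)\cdot(M_i-\pi N_i)$ has non-negligible conditional mean whenever $f_i(1,\pi)$ is correlated (through $U_i$) with the graphon row $h(U_i,\cdot)$. This is precisely what the PC-balancing coefficients $\hat a_k$ are designed to annihilate. Writing the orthonormal constraint as $\hat a_l=-\hat\psi_l^\top (M-\pi N)/[\pi(1-\pi)]$ (after verifying $\hat\psi_k^\top\hat\psi_l=\delta_{kl}$), the bracketed factor in \eqref{eqn:dev1} becomes the residual of $(M-\pi N)/[\pi(1-\pi)]$ after projecting onto the span of $\hat\psi_1,\dots,\hat\psi_r$. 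Under the low-rank representation \eqref{eqn:graphon rank}, the conditional mean of this residual, against $f_i(1,\pi)=g(U_i)+\epsilon_i$ for a graphon-measurable $g$, vanishes up to $o_p(1)$ once one shows that the scaled leading eigenvectors of the sparse adjacency matrix $E_n$ uniformly approximate $\psi_1,\dots,\psi_r$; this consistency of the empirical spectrum is exactly the content I would invoke from Lemma~\ref{lem:11} (or the companion graphon-spectral lemmas used to prove Lemma~\ref{lem:1}), combined with Davis--Kahan-type perturbation bounds and Assumption~\ref{ass:main2}'s eigenfunction boundedness. The symmetric argument handles \eqref{eqn:dev2}, and plugging the four consistent pieces into \eqref{eqn:4} via the continuous mapping theorem delivers $\hat V_{reg}\stackrel{p}{\to} V_{reg}$. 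The hard part throughout is controlling the coupling between the random assignments $W_j$ and the random network $E_n$ in the PC-balanced projection; all other steps reduce to moment bookkeeping already enabled by Assumptions~\ref{assumption:anonymous interference}--\ref{ass:main2}.
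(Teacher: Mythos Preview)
Your overall architecture matches the paper's proof exactly: both decompose $\hat V_{reg}$ into the four summands of \eqref{eqn:4} and prove convergence term by term, citing Theorem~\ref{th: Consistency of hat tau} for $\hat\beta_w\to\beta_w$, Lemma~\ref{lem:1} for $\hat b\to b$, and handling the empirical moments of $\mathbf{z}_i$ by the LLN plus continuous mapping. Your treatment of the residual-sum-of-squares terms is precisely the content of Lemma~\ref{lem:2} (Taylor expansion of $f_i(1,M_i/N_i)$ around $\pi$, uniform smallness of $M_i/N_i-\pi$, moment conditions).

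The one substantive difference is how the derivative estimators $\hat E f_1^{(1)}(w,\pi)$ are handled. The paper's Lemma~\ref{lem:3} does this in one line: it observes that $\hat E f_1^{(1)}(1,\pi)$ is exactly the PC balancing estimator of \citet{li2022random} applied to the modified outcome $\check f_i(W_i,u)=W_i f_i(W_i,u)/\pi$, and then invokes their Theorem~6 directly. You instead sketch the underlying mechanics from scratch---the algebraic identity for the balancing factor, the decomposition of $Y_i$ by Taylor expansion, the conditional variance computation $E[(M_i-\pi N_i)^2\mid G_n]=N_i\pi(1-\pi)$ extracting $f_i^{(1)}$, and the role of the $\hat a_k$ as projection coefficients annihilating the bias from $f_i(1,\pi)$. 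This is correct in outline and pedagogically useful, but be aware that the eigenvector consistency you need (closeness of $\hat\psi_k$ to $\psi_k(U_\cdot)$ in a suitable norm) is not delivered by Lemma~\ref{lem:11}, which only controls weighted row sums of $E_n$; you would need a separate spectral-concentration plus Davis--Kahan argument, which is precisely what is packaged inside Theorem~6 of \citet{li2022random}. The paper's reduction buys you that machinery for free.
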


By combining the above theorem and Theorem~\ref{th:main}, we derive a \(1 - \alpha\) asymptotic confidence interval for \(\tau\) as
\[
\left[\hat{\tau} - z_{\alpha/2} \sqrt{\hat{V}_{reg}/n}, \hat{\tau} + z_{\alpha/2} \sqrt{\hat{V}_{reg}/n}\right],
\]
where \(z_{\alpha/2}\) is the \(\alpha/2\)-quantile of the standard normal distribution. This estimator should be compared with the results in \citet{li2022random}, which suggest using \(8\hat{\tau}^2\) as a conservative bound for \(b(E[f_1^{(1)}(1, \pi)] - E[f_1^{(1)}(0, \pi)])^2\). However, our calculations and simulation below show that the coverage probability of the confidence interval derived using this conservative bound may fall short of the nominal level, meaning it may not be as conservative as originally suggested.

\subsection{Empirical validation}
We assess the accuracy of our asymptotic variance estimates from Theorems~\ref{th:main} and \ref{th:3} through simulations and compare our results with the conservative confidence interval method of \cite{li2022random}. In our setup, we define the graphon \(h(x,y) = x^2 + y^2 + xy + 0.1\) and set \(\rho_n = n^{-0.25}\). The outcome function is specified as
\[
f_i\left(W_i, \frac{M_i}{N_i}\right) = W_i \left( -2 \left( 1 - \frac{M_i}{N_i} \right)^2 -2 \mathbf{z}_i \left(\frac{M_i}{N_i}\right)^2 + \frac{1}{2} \xi_i \right) + \mathbf{z}_i^2,
\]
where \(\xi_i \overset{i.i.d.}{\sim} N(0,1)\) and \(\mathbf{z}_i \overset{i.i.d.}{\sim} U(-2,1)\). Note that \(h\) is a rank-3 graphon, as \(h(x,y) = 10(x^2 + 0.1)(y^2 + 0.1) - 10x^2y^2 + xy\). 

To evaluate the performance of our methods, we run simulations for all combinations of \( n = 100, 300, 500 \) and \( \pi = 0.5, 0.6, 0.7\). Each simulation is repeated 1000 times, and we compute the frequency at which the confidence intervals capture the true value of \( \tau \). The results, summarized in Table~\ref{tab:sim1}, show that our confidence intervals are generally close to the nominal level. In contrast, the method of \cite{li2022random} does not provide reliable coverage guarantees.
\begin{table}[!htbp]
	\centering
	\small
	\caption{Coverage rates of the nominal 95\% confidence intervals.}
	\label{tab:sim1}
	\begin{tabular}{c|ccccccc}
		\hline
		& \multicolumn{3}{c}{This paper} & & \multicolumn{3}{c}{\cite{li2022random}} \\ 
		\cline{2-4} \cline{6-8} $n \backslash \pi$ & 0.5 & 0.6 & 0.7& & 0.5 & 0.6 & 0.7 \\ 
		\hline 
		100 & 0.901 & 0.935 & 0.964 & &0.837 & 0.853 & 0.880 \\
		300 & 0.914 & 0.941 & 0.960 && 0.848 & 0.856 & 0.892 \\ 
		500 & 0.925 & 0.943 & 0.963 && 0.832 & 0.852 & 0.897 \\ \hline \end{tabular}
\end{table}

\section{Nonparametric Adjustment}
\label{sec:Local constant estimator}
Thus far, we have explored the estimation of \( \tau \) using linear-regression adjustment. Given the formulation in \eqref{eq:ra-dim}, a natural question arises: Is it truly necessary to restrict \( \mu_1 \) and \( \mu_0 \) to linear forms?  
This prompts a deeper inquiry:  
\begin{center}
	\textit{Does an ``optimal" regression adjustment estimator exist?}  
\end{center}  
To investigate this, we expand our scope beyond the linear-regression adjustment of \( \hat{\tau}(\alpha_1, \alpha_0) \) in Section~\ref{sec:Regression-based estimators} and consider a broader class of estimators:
\begin{align}
	\label{def:class of conditional expectation}
	\hat{\tau}(g_1, g_0) = \sum_{i=1}^n \left( 
	\frac{W_i \left( Y_i - g_1(\mathbf{z}_i) + \bar{g}_1 \right)}{\sum_{j=1}^n W_j} 
	- 
	\frac{(1-W_i) \left( Y_i - g_0(\mathbf{z}_i) + \bar{g}_0 \right)}{\sum_{j=1}^n (1-W_j)}
	\right),
\end{align}
where $g_1$ and $g_0$ are arbitrary functions, $\bar{g}_1 = \frac{1}{n} \sum_{i=1}^n g_1(\mathbf{z}_i)$, and $\bar{g}_0 = \frac{1}{n} \sum_{i=1}^n g_0(\mathbf{z}_i)$. Notably, for any bounded functions $g_1$ and $g_0$, $\hat{\tau}(g_1, g_0)$ is asymptotically normal, with a well-defined expression for its asymptotic variance. 
We formalize this result in the following theorem.
\begin{theorem}
	\label{th:6}
	Let Assumptions \ref{assumption:anonymous interference}, \ref{assump:random_graph}, and \ref{assumption:moment_conditions} hold, and $g_1$ and $g_0$ be  bounded functions. Then, the estimator $\hat{\tau}(g_1, g_0)$ satisfies:
	\[
	\sqrt{n} \left( \hat{\tau}(g_1, g_0) - \tau \right) \overset{d}{\to} N \left( 0, \tilde{V}(g_1, g_0) \right),
	\]
	where the asymptotic variance $\tilde{V}(g_1, g_0)$ is given by:
	\begin{align}
		\label{def:V g1 g0}
		\tilde{V}(g_1, g_0) &= \mathrm{Var}(f_i(1, \pi) - f_i(0, \pi)) \notag \\
		&\quad + \frac{1}{\pi(1-\pi)} \mathbb{E} \left[ 
		(1-\pi) \left( f_i(1, \pi) - \mathbb{E}[f_i(1, \pi) \mid \mathbf{z}_i] \right) + 
		\pi \left( f_i(0, \pi) - \mathbb{E}[f_i(0, \pi) \mid \mathbf{z}_i] \right) 
		\right]^2 \notag \\
		&\quad + b \pi(1-\pi) \left( \mathbb{E}[f_1^{(1)}(1, \pi)] - \mathbb{E}[f_1^{(1)}(0, \pi)] \right)^2 \notag \\
		&\quad + \frac{1}{\pi(1-\pi)} \mathbb{E} \left\{
		(1-\pi) \left( g_1(\mathbf{z}_i) - \mathbb{E}[g_1(\mathbf{z}_i)] - \mathbb{E}[f_i(1, \pi) \mid \mathbf{z}_i] + \mathbb{E}[f_i(1, \pi)] \right) 
		\right. \notag \\
		&\quad \left. + \pi \left( g_0(\mathbf{z}_i) - \mathbb{E}[g_0(\mathbf{z}_i)] - \mathbb{E}[f_i(0, \pi) \mid \mathbf{z}_i] + \mathbb{E}[f_i(0, \pi)] \right)
		\right\}^2,
	\end{align}
	with $b = \mathbb{E} \left[ \mathbb{E} \left( \frac{h(U_i, U_j)}{\mathbb{E}[h(U_i, U_j) \mid U_j]} \bigg| U_i \right) \right]^2$.
\end{theorem}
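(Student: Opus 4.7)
The plan is to establish the CLT by decomposing $\sqrt{n}(\hat{\tau}(g_1,g_0)-\tau)$ into an ``independent'' piece on which the deterministic adjustment functions act and a ``network-interference'' piece driven by the fraction of treated neighbors. First I would rewrite
\begin{align*}
\hat{\tau}(g_1,g_0)=\frac{1}{n_1}\sum_{W_i=1}Y_i-\frac{1}{n_0}\sum_{W_i=0}Y_i-\Delta_1(g_1)+\Delta_0(g_0),
\end{align*}
where $n_k=\#\{i:W_i=k\}$ and $\Delta_k(g_k)=n_k^{-1}\sum_{W_i=k}g_k(\mathbf{z}_i)-\bar{g}_k$ quantifies the residual imbalance between the arm-$k$ covariate mean and the pooled mean. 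I would then Taylor expand
\begin{align*}
f_i(W_i,M_i/N_i)=f_i(W_i,\pi)+f_i^{(1)}(W_i,\pi)(M_i/N_i-\pi)+R_i,
\end{align*}
where the remainder $R_i$ is quadratic in $M_i/N_i-\pi$. Since $M_i/N_i-\pi=O_p((n\rho_n)^{-1/2})$ and $\sqrt{n}\rho_n\to\infty$, Assumption~\ref{assumption:moment_conditions} yields $n^{-1/2}\sum_i R_i=o_p(1)$, so only the zero- and first-order terms contribute to the limit.

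After combining the Taylor expansion with the approximation $1/n_k=1/(n\pi_k)+o_p(n^{-1/2})$, I obtain the decomposition $\sqrt{n}(\hat{\tau}(g_1,g_0)-\tau)=S_{\mathrm{indep}}+S_{\mathrm{net}}+o_p(1)$, where
\begin{align*}
S_{\mathrm{indep}}=\frac{1}{\sqrt{n}}\sum_{i=1}^n\Bigl[\tfrac{W_i}{\pi}\bigl(f_i(1,\pi)-g_1(\mathbf{z}_i)\bigr)-\tfrac{1-W_i}{1-\pi}\bigl(f_i(0,\pi)-g_0(\mathbf{z}_i)\bigr)+g_1(\mathbf{z}_i)-g_0(\mathbf{z}_i)-\tau\Bigr]
\end{align*}
and
\begin{align*}
S_{\mathrm{net}}=\frac{1}{\sqrt{n}}\sum_{i=1}^n\Bigl(\tfrac{W_i}{\pi}f_i^{(1)}(1,\pi)-\tfrac{1-W_i}{1-\pi}f_i^{(1)}(0,\pi)\Bigr)(M_i/N_i-\pi).
\end{align*}
The summands of $S_{\mathrm{indep}}$ are i.i.d.\ in $i$ with mean zero (by Assumption~\ref{assumption:anonymous interference} and independence of $W_i$), so a classical Lindeberg CLT yields asymptotic normality. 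Decomposing $W_i=\pi+(W_i-\pi)$ in each summand and taking variance gives $\operatorname{Var}(f_i(1,\pi)-f_i(0,\pi))$ plus $(\pi(1-\pi))^{-1}E[(1-\pi)(f_i(1,\pi)-g_1(\mathbf{z}_i))+\pi(f_i(0,\pi)-g_0(\mathbf{z}_i))]^2$. Splitting the latter around the oracle $E[f_i(w,\pi)\mid\mathbf{z}_i]$ via the tower property, and noting that the oracle residual $f_i(w,\pi)-E[f_i(w,\pi)\mid\mathbf{z}_i]$ is orthogonal (in $\mathcal{L}_2$) to any function of $\mathbf{z}_i$, produces exactly the second and fourth terms of~\eqref{def:V g1 g0}.

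The main obstacle is $S_{\mathrm{net}}$. Writing $a_{ij}=E_{ij}/N_i$ so that $M_i/N_i-\pi=\sum_j a_{ij}(W_j-\pi)$, I would swap the order of summation,
\begin{align*}
S_{\mathrm{net}}=\frac{1}{\sqrt{n}}\sum_{j=1}^n(W_j-\pi)\sum_{i=1}^n c_i a_{ij},\qquad c_i:=\tfrac{W_i}{\pi}f_i^{(1)}(1,\pi)-\tfrac{1-W_i}{1-\pi}f_i^{(1)}(0,\pi),
\end{align*}
and condition on the latent variables $\{U_i\}$ and on $\{W_i\}$. Lemma~\ref{lem:11} shows that $N_i/(n\rho_n)$ is uniformly close to $\int_0^1 h(U_i,y)\,dy$, so $a_{ij}\approx h(U_i,U_j)/(n\int h(U_i,y)\,dy)$ on a high-probability set. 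Under this approximation a law of large numbers across $i$ shows that, for each $j$, $\sum_i c_i a_{ij}$ converges to $\bigl(Ef_1^{(1)}(1,\pi)-Ef_1^{(1)}(0,\pi)\bigr)\int h(x,U_j)/g(x)\,dx$, where $g(x)=\int h(x,y)\,dy$; a conditional Lindeberg CLT across $j$ together with the symmetry identity $\int[\int h(x,y)/g(x)\,dx]^2\,dy=E[E(h(U_i,U_j)/E[h(U_i,U_j)\mid U_j]\mid U_i)^2]$ then yields $S_{\mathrm{net}}\overset{d}{\to}N(0,b\pi(1-\pi)(Ef_1^{(1)}(1,\pi)-Ef_1^{(1)}(0,\pi))^2)$, matching the third term of~\eqref{def:V g1 g0}. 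Asymptotic independence of $S_{\mathrm{indep}}$ and $S_{\mathrm{net}}$ follows because their cross-covariance factors through products of zero-mean perturbations on disjoint index sets (the $W_j-\pi$ with $j\neq i$ in $S_{\mathrm{net}}$ are independent of everything in the $i$th summand of $S_{\mathrm{indep}}$), so the Cram\'er--Wold device delivers the joint CLT with variance $\tilde{V}(g_1,g_0)$.

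The step I expect to be most delicate is the joint handling of the dependence on $W$ appearing in both $c_i$ (through $W_i$) and the driver $W_j-\pi$, together with the randomness of the weights $a_{ij}$ through $N_i$. I would apply a Hoeffding/H\'ajek-style projection of the resulting bilinear form in $(W_j-\pi)$, analogous to the projection technique the authors highlight for Theorem~\ref{th:5} (but much simpler here because $g_0,g_1$ are deterministic and no kernel smoothing or trimming is needed); the projection isolates the diagonal $i=j$ contribution (of size $O_p(n^{-1/2})$ individually and absorbable into $S_{\mathrm{indep}}$) from the off-diagonal bilinear form that produces the Gaussian limit, while the error from replacing random $a_{ij}$ by its graphon analogue is controlled uniformly in $i$ via Lemma~\ref{lem:11}.
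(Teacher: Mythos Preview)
Your overall strategy is sound and is essentially a ``first-principles'' reconstruction of what the paper obtains in one stroke. The paper's proof of Theorem~\ref{th:6} simply says to follow Theorem~\ref{th:4}; there, all of the work (Taylor expansion, handling of $M_i/N_i-\pi$, and the joint CLT) is outsourced to Theorem~4 of \citet{li2022random} by packaging both $f_i$ and the adjustment into a single function $q_i(W_i,t)$ and reading off the limiting variance $\sigma_0^2+\pi(1-\pi)E(R_i+Q_i)^2$. Your decomposition into $S_{\mathrm{indep}}+S_{\mathrm{net}}$ is exactly the content of that result, unpacked; the variance algebra you describe (splitting around $E[f_i(w,\pi)\mid\mathbf z_i]$ and using $\mathcal L_2$-orthogonality) is the right way to arrive at the four terms of~\eqref{def:V g1 g0}, and the asymptotic ``independence'' of the two pieces is precisely the fact $E(R_iQ_i)=0$ in the paper's framework, which holds because $Q_i$ is a function of $U_i$ alone and $R_i$ is centered.

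There is, however, one concrete gap you should fix. You invoke Lemma~\ref{lem:11} to control $N_i/(n\rho_n)$ and then the weights $a_{ij}$, but Lemma~\ref{lem:11} is stated under Assumption~\ref{ass:main2} (the low-rank graphon), which Theorem~\ref{th:6} does \emph{not} assume. Moreover, the pointwise heuristic ``$a_{ij}\approx h(U_i,U_j)/(n\int h(U_i,y)\,dy)$'' cannot hold since $E_{ij}$ is Bernoulli; only aggregated quantities like $\sum_j E_{ij}/s(U_j)$ concentrate. The correct route at this step is either (i) to cite Theorem~4 of \citet{li2022random} directly for the CLT of $S_{\mathrm{net}}$ (which needs only Assumptions~\ref{assumption:anonymous interference}--\ref{assumption:moment_conditions}), as the paper does, or (ii) to redo the projection argument for $\sum_j(W_j-\pi)\sum_i c_i a_{ij}$ using only the degree bounds $\min_i N_i\gtrsim_p n\rho_n$, $\max_i N_i\lesssim_p n\rho_n$ from Lemma~15 of \citet{li2022random}, which do not require low rank. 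Once that is repaired, your argument goes through.
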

In this decomposition, the first three terms remain unchanged regardless of \( g_1 \) and \( g_0 \) and cannot be further reduced. However, the final term disappears when we choose  
\[
g_1(\mathbf{z}_i) = E(f_i(1,\pi) | \mathbf{z}_i), \quad g_0(\mathbf{z}_i) = E(f_i(0,\pi) | \mathbf{z}_i),
\]  
that is, when \( g_1 \) and \( g_0 \) are set as the conditional expectations of \( Y_i \) given \( \mathbf{z}_i \) in each group. This insight naturally leads to our next discussion on nonparametric regression adjustment.  
\begin{remark}
	When the outcome function is linear in the covariates \( \mathbf{z}_i \), the regression-adjusted estimator \( \hat{\tau} \) achieves optimality within the class defined in (\ref{def:class of conditional expectation}). Specifically, consider \( \tilde{f} \) as defined in (\ref{Vdim Vreg}). The variance of the regression-adjusted estimator satisfies  
	\[
	V_{reg} = V_{dim} - \frac{1}{\pi(1-\pi)} \left[ E(\mathbf{x}_1 \tilde{f})^\top \mathbf{M}_{xx}^{-1} E(\mathbf{x}_1 \tilde{f}) - (E\tilde{f})^2 \right] 
	\geq V_{dim} - \frac{1}{\pi(1-\pi)} \text{Var}(E(\tilde{f}|\mathbf{z}_1)),
	\]
	where the equality follows from (\ref{eqn:pf1}) in the proof of Theorem~\ref{th:main}, and the inequality results from Lemma 2 in \cite{lavergne2008cauchy}. Equality holds when  
	\[
	E\left( (1-\pi)f_1(1,\pi) + \pi f_1(0,\pi) \mid \mathbf{z}_1 \right)
	\]
	is a linear function of \( \mathbf{z}_1 \). Moreover, straightforward calculations show that \( V_{dim} - \frac{1}{\pi(1-\pi)} \text{Var}(E(\tilde{f}|\mathbf{z}_1)) \) is equal to the first three terms on the right-hand side of (\ref{def:V g1 g0}), confirming the result.
\end{remark}
Existing literature (e.g., \cite{beemer2018ensemble, shi2019adapting, naimi2023challenges}) has successfully explored machine learning approaches for adjustments beyond linear regression, demonstrating their effectiveness in empirical studies. Additionally, related theoretical advancements (e.g., \cite{chernozhukov2017double, fan2022estimation}) have addressed these adjustments in settings without network interference. However, to the best of our knowledge, the application of these methods in the context of network interference remains unexplored.

Recall, with some abuse of notation, a classic setup where one has i.i.d. samples $(\mathbf{z}_1, Y_1), \ldots, (\mathbf{z}_n, Y_n)$. A simple estimator of the mean function $m(\mathbf{z}) = E(Y|\mathbf{z})$ is the familiar local constant estimator defined as:
\[
\hat{m}(\mathbf{z}) = 
\frac{
	\sum_{j=1}^n K\left(\frac{\mathbf{z}_j - \mathbf{z}}{\tilde{h}}\right) Y_j
}{
	\sum_{j=1}^n K\left(\frac{\mathbf{z}_j - \mathbf{z}}{\tilde{h}}\right)
},
\]
where $K(\cdot)$ is a kernel function, and $\tilde{h}$ is a bandwidth parameter. Under standard regularity conditions, it holds that $\sup_{\mathbf{z}} |m(\mathbf{z}) - \hat{m}(\mathbf{z})| = o_p(1)$, ensuring that the local constant estimator can consistently estimate $m(\mathbf{z})$. This fact motivates the estimation of  $E(f_i(1, M_i/N_i) | \mathbf{z}_i)$ and $E(f_i(0, M_i/N_i) | \mathbf{z}_i)$,  two functions closely related to the last term in \eqref{def:V g1 g0}. 
Formally, define the local constant estimators for units in the treatment and control groups, respectively, as: 
\[
\hat m_1(\mathbf{z}) = \frac{
	\sum_{j=1}^n K\left(\frac{\mathbf{z}_j - \mathbf{z}}{\tilde h}\right)Y_j W_j
}{
	\sum_{j=1}^n K\left(\frac{\mathbf{z}_j - \mathbf{z}}{\tilde h}\right)W_j
}, \quad
\hat m_0(\mathbf{z}) = \frac{
	\sum_{j=1}^n K\left(\frac{\mathbf{z}_j - \mathbf{z}}{\tilde h}\right)Y_j (1 - W_j)
}{
	\sum_{j=1}^n K\left(\frac{\mathbf{z}_j - \mathbf{z}}{\tilde h}\right)(1 - W_j)
}.
\]
These estimators allow us to approximate \(f_i(1, M_i/N_i)\) and \(f_i(0, M_i/N_i)\), with one of them representing a counterfactual, using \(\hat m_1(\mathbf{z}_i)\) and \(\hat m_0(\mathbf{z}_i)\), respectively. The ATE \(\tau\) is estimated as:  
\begin{align}
	\label{def:hat tau np}
	\hat \tau_{np} = \frac{1}{n} \sum_{i=1}^n \left(
	\hat m_1(\mathbf{z}_i) - \hat m_0(\mathbf{z}_i)
	\right) I\left(\tilde p_1(\mathbf{z}_i) > \tilde b, \tilde p_2(\mathbf{z}_i) > \tilde b, \hat p(\mathbf{z}_i) > 1.01\tilde b \right),
\end{align}
where:  
\[
\hat p(\mathbf{z}) = \frac{1}{n \tilde h^p} \sum_{j=1}^n K\left(\frac{\mathbf{z}_j - \mathbf{z}}{\tilde h}\right), \quad 
\tilde p_1(\mathbf{z}) = \frac{1}{n \tilde h^p \hat \pi} \sum_{j=1}^n K\left(\frac{\mathbf{z} - \mathbf{z}_j}{\tilde h}\right)W_j, \]
\[	\tilde p_2(\mathbf{z}) = \frac{1}{n \tilde h^p (1 - \hat \pi)} \sum_{j=1}^n K\left(\frac{\mathbf{z} - \mathbf{z}_j}{\tilde h}\right)(1 - W_j),\]
represent the density estimates of the covariates, and \(\tilde b \to 0\) is a trimming parameter employed to ensure that the denominators in the density estimates are not too small, thereby avoiding numerical instability.  Similar trimming techniques can be found in the literature, such as \cite{hardle1989investigating} and \cite{banerjee2007method}. The constant \(1.01\) used in the indicator function is purely a technical choice and can be replaced by any constant greater than \(1\). 

The nonparametric estimator \( \hat{\tau}_{np} \) is essentially a simple average. Remarkably, even though the convergence of \( \hat{m}_1(\mathbf{z}) \) and \( \hat{m}_0(\mathbf{z}) \) slows as the dimensionality \( p \) increases, \( \hat{\tau}_{np} \) still achieves the standard parametric convergence rate of \( \sqrt{n} \). This demonstrates that our nonparametric estimator effectively \textit{overcomes the curse of dimensionality}.

While achieving the \( n^{-1/2} \) rate for estimating $\tau$ is expected in standard kernel smoothing with i.i.d. data, the presence of complicated dependence due to network interference in our setup makes theoretical development significantly more challenging. Specifically, beyond the standard terms such as \( \delta_{11} \) and \( \delta_{21} \) in Lemma~\ref{lem4th6:5} and Lemma~\ref{lem4th6:8}, which correspond to the leading order terms in Taylor expansions, additional non-negligible terms, including \( \delta_{12} \) in Lemma~\ref{lem4th6:6} and \( \delta_{22} \) in Lemma~\ref{lem4th6:9}, emerge as counterparts to the first-order terms in Taylor expansions. These additional terms arise due to the inherent randomness introduced by the network structure. To obtain a precise approximation of such terms, we introduce a novel method inspired by the projection framework of U-statistics. We exclude the randomness induced by the network structure during the projection step, and subsequently leverage the graphon formulation of the network to control residual terms (such as $\tilde r_{n1},\tilde r_{n2},\tilde r_{n3}$ in the proof of Lemma~\ref{lem4th6:6}). This result is not only a key contribution to our work but can also be of independent interest.

The above claim can be formalized when the dimensionality of the covariates is fixed.	In this context, we outline the regularity assumptions, which are standard in nonparametric methods.
\begin{assumption}
	\label{ass:main3}
	\begin{enumerate}
		\item The kernel function \(K(u)\) satisfies \(K(u) = 0\) for \(u \in \{u : |u|_\infty \geq 1\}\), \(\sup_{u \in \mathbb{R}^p} |K(u)| < \infty\), \(\int_{\mathbb{R}^p} K(u) \, du = 1\) and is symmetric \(q\)-th order. 
		That is, 
		\begin{align*}
			\int_{\mathbb{R}^p} u_1^{l_1} u_2^{l_2} \cdots u_p^{l_p} K(u) \, du &= 0, \quad \forall \, l_1 + l_2 + \cdots + l_p < q, \\
			\int_{\mathbb{R}^p} u_1^{l_1} u_2^{l_2} \cdots u_p^{l_p} K(u) \, du &\neq 0, \quad \exists \, l_1 + l_2 + \cdots + l_p = q.
		\end{align*}
		Furthermore, for some constant \(L < \infty\), \(|K(u) - K(u')| < L |u - u'|_\infty\) for all \(u, u' \in \mathbb{R}^p\).
		
		\item The covariate \(\mathbf{z}_i\) has a marginal density \(p(z)\) such that \(\sup_z p(z) < \infty\), \(\sup_z |z|_\infty^p p(z) < \infty\), and \(E|z|_\infty^{2p} < \infty\). Additionally, within the support of \(\mathbf{z}\), the \(q\)-th derivative of \(p(\mathbf{z})\) is uniformly continuous, and \(\nabla p(\mathbf{z})\) is bounded.
		
		\item Both \(E(f_1(1, \pi) | \mathbf{z}_1 = u)\) and \(E(f_1(0, \pi) | \mathbf{z}_1 = u)\) are continuous with respect to \(u\).
		
		\item The bandwidth \(\tilde h\) satisfies \(\tilde h = o(\tilde b)\), \(\tilde b = o(1)\), \(n \tilde h^{2p} \tilde b^4 / \log^2 n \to \infty\), and \(n \tilde h^{4q} / \tilde b^4 \to 0\).
		\item It holds
		\begin{align}
			\max_{w = 0, 1, \, k = 0, 1, 2} \sup_{\mathbf{z}_1} E\big(\sup_{y \in [0, 1]} |f_1^{(k)}(w, y)|^2 \big| w, \mathbf{z}_1\big) < \infty.
		\end{align}
		\item  
		$  E|S_1(\mathbf{z})| I(p(\mathbf{z}) < 1.02\tilde{b}) = o(n^{-1/2}),$  
		where \(S_1(u) = E(f_1(1, \pi) - f_1(0, \pi) | \mathbf{z}_1 = u)\).
	\end{enumerate}
\end{assumption}
For the kernel \(K(\cdot)\), we adopt a product kernel defined as \(K(u_1, \cdots, u_p) = \prod_{j=1}^p \tilde K(u_j)\), where \(\tilde K(\cdot)\) is a one-dimensional kernel function. In our numerical study, we utilize kernels of different orders (i.e., \(q = 2, 4, 6\)) to address varying dimensions, as outlined in (\ref{eqn:choice of kernel}). Condition 4 defines a feasible region, enabling a rule-of-thumb parameter selection as described in Remark~\ref{rem:Parameter choice}. 
Finally, note that the moment condition is less restrictive than boundedness. For instance, consider \(f_i(w, u) = f_0(w, u) + \epsilon_i\), where \(f_0(w, u)\) is a bounded function and \(\epsilon_i\) is an i.i.d. unbounded error. In this case, \(f_i(w, u)\) is unbounded, yet our condition remains satisfied.
\begin{remark}
	In Condition 6 of Assumption \ref{ass:main3}, we impose certain constraints on the relationship between \(S_1(\mathbf{z})\), \(p(\mathbf{z})\), and \(\tilde{b}\). Similar conditions have been encountered in the literature, such as Assumption 8 in \cite{hardle1989investigating}. These conditions are generally mild. For instance, if \(p(\mathbf{z})\) is bounded away from zero over the support of \(\mathbf{z}\), the condition is automatically satisfied as \(\tilde{b} \to 0\). Furthermore, if \(z\) follows a univariate normal distribution and \(S_1(z) = \exp(-cz^2)\) for some \(c > 0\), a sufficient condition for Assumption~\ref{ass:main3}(6) is \(\tilde b = O(n^{-1/(2+2c)})\). Lastly, the constant 1.02 is primarily technical. Specifically, the constants 1.01 in (\ref{def:hat tau np}) and 1.02 in Assumption~\ref{ass:main3}(6) can be replaced by any real numbers \(c_1\) and \(c_2\), provided that \(1 < c_1 < c_2\).
\end{remark}

\begin{remark}[Rule-of-thumb parameter choice]
	\label{rem:Parameter choice}
	Based on Assumption~\ref{ass:main3}, we first choose \(q\) such that \(2q - 6 < p < 2q\), and then select the bandwidths \(\tilde{h} = C_1 n^{-1/a_1}\) and \(\tilde{b} = C_2 n^{-1/a_2}\), where \(a_1, a_2 > 0\) satisfy the conditions \(p + 2q < a_1 < 4q\) and \(a_2 > \frac{4a_1}{4q - a_1}\), with constants \(C_1\) and \(C_2\).
	
	In the simulation studies presented in Section~\ref{sec:Numerical Evaluation2}, we set \(a_1 = 0.5p + 3q\) and \(a_2 = \frac{3p + 18q}{q - 0.5p}\). For the constants \(C_1\) and \(C_2\), we choose \(C_1 = 1 + 0.5p\), and \(C_2\) as the \(\alpha\)-quantile of \( \hat{p}(\mathbf{z}_i)\), where \(\alpha\) is set to 0.01 or 0.05 depending on the scenario.
	
	It is worth noting that \(\hat{\tau}_{np}\) is robust to  
	\(\tilde{h}\), since \(\hat{\tau}_{np} = \hat{\tau}_{dim}\) when \(\tilde{h} = \infty\) and $\tilde h=0$. Additionally, \(\hat{\tau}_{np}\) is robust to the choice of \(\alpha\).
\end{remark}

The main result of this section is stated in Theorem~\ref{th:5}.
\begin{theorem}[Asymptotic normality of \(\hat{\tau}_{np}\)]
	\label{th:5}
	Under Assumptions \ref{assumption:anonymous interference}, \ref{assump:random_graph}, \ref{assumption:moment_conditions}, and Assumption \ref{ass:main3}, we have
	\[
	\sqrt{n}(\hat{\tau}_{np} - \tau) \overset{d}{\to} N(0, V_{np}),
	\]
	where
	\begin{align*}
		V_{np} &= \text{Var}(f_i(1, \pi) - f_i(0, \pi)) \\
		&\quad + \frac{1}{\pi(1 - \pi)} \mathbb{E}\left[ \left( (1 - \pi)\left(f_i(1, \pi) - \mathbb{E}(f_i(1, \pi) | \mathbf{z}_i)\right) + \pi \left(f_i(0, \pi) - \mathbb{E}(f_i(0, \pi) | \mathbf{z}_i)\right) \right)^2 \right] \\
		&\quad + b\pi(1 - \pi)\left( \mathbb{E}(f_1^{(1)}(1, \pi)) - \mathbb{E}(f_1^{(1)}(0, \pi)) \right)^2,
	\end{align*}
	and \(b = \mathbb{E}\left[\mathbb{E}\left(\frac{h(U_i, U_j)}{\mathbb{E}(h(U_i, U_j) | U_j)} \mid U_i \right)\right]^2\).
\end{theorem}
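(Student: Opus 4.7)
The plan is to establish the central limit theorem by decomposing $\hat\tau_{np}-\tau$ into an i.i.d.-like leading sum whose variance is exactly $V_{np}$, plus remainders of order $o_p(n^{-1/2})$. The anchor is the observation that $V_{np}$ equals $\tilde V(g_1^*,g_0^*)$ in Theorem~\ref{th:6} with $g_w^*(\mathbf z)=E(f_1(w,\pi)\mid\mathbf z_1=\mathbf z)$, so all variance contributions can be matched to recognizable pieces from the framework already established. Throughout, Taylor-expanding $f_j(W_j,M_j/N_j)$ around $\pi$ is the starting point, and Lemma~\ref{lem:11} provides the uniform graphon control $\sup_j|M_j/N_j-\pi|=O_p\bigl(\sqrt{\log n/(n\rho_n)}\bigr)$ that tames the quadratic remainder.

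First I would center the trimmed sum. Writing $T_i=I(\tilde p_1(\mathbf z_i)>\tilde b,\tilde p_2(\mathbf z_i)>\tilde b,\hat p(\mathbf z_i)>1.01\tilde b)$, I decompose
\[
\hat\tau_{np}-\tau=\frac{1}{n}\sum_i[g_1^*(\mathbf z_i)-g_0^*(\mathbf z_i)-\tau]T_i+\frac{1}{n}\sum_i[(\hat m_1-g_1^*)(\mathbf z_i)-(\hat m_0-g_0^*)(\mathbf z_i)]T_i+R_n,
\]
where $R_n=\tau\cdot n^{-1}\sum_i(T_i-1)$ is $o_p(n^{-1/2})$ by Assumption~\ref{ass:main3}(6) (since the trimming region differs from the support by a set on which $S_1$ integrates below $n^{-1/2}$). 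The first sum is an i.i.d.\ average contributing $\mathrm{Var}(g_1^*(\mathbf z_1)-g_0^*(\mathbf z_1))$ to the variance and $\tau$ to the mean. The real work is in the second sum.

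I would expand $\hat m_w(\mathbf z_i)-g_w^*(\mathbf z_i)$ via the standard Nadaraya--Watson identity $(\hat m_w-g_w^*)(\mathbf z_i)=\{\widehat N_w(\mathbf z_i)-g_w^*(\mathbf z_i)\tilde p_w(\mathbf z_i)\}/\tilde p_w(\mathbf z_i)$. Since $\tilde p_w$ is bounded below by $\tilde b$ on the trimming set, the denominators can be replaced by their population limits at cost $o_p(n^{-1/2})$ under the bandwidth conditions. The smoothing bias is controlled by the $q$th-order kernel together with $n\tilde h^{4q}/\tilde b^4\to 0$. Substituting the Taylor expansion $f_j(W_j,M_j/N_j)=f_j(W_j,\pi)+f_j^{(1)}(W_j,\pi)(M_j/N_j-\pi)+\tfrac12 f_j^{(2)}(W_j,\xi_j)(M_j/N_j-\pi)^2$ into $\widehat N_w$, the quadratic piece is $O_p((n\rho_n)^{-1})=o_p(n^{-1/2})$, and the remaining contribution splits into a \emph{standard} kernel term
\[
\delta_{11}=\frac{1}{n^2\tilde h^p}\sum_{i,j} K\!\Bigl(\tfrac{\mathbf z_j-\mathbf z_i}{\tilde h}\Bigr)\frac{W_j[f_j(1,\pi)-g_1^*(\mathbf z_j)]}{\tilde p_1(\mathbf z_i)}T_i \quad\text{(Lemma~\ref{lem4th6:5})},
\]
and a \emph{network} kernel term $\delta_{12}$ obtained by replacing $f_j(1,\pi)-g_1^*(\mathbf z_j)$ with $f_j^{(1)}(1,\pi)(M_j/N_j-\pi)$ (Lemma~\ref{lem4th6:6}); the same decomposition for $w=0$ yields $\delta_{21},\delta_{22}$ (Lemmas~\ref{lem4th6:8}, \ref{lem4th6:9}). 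For $\delta_{11}$ and $\delta_{21}$, the classical Hoeffding projection applied to a second-order V-statistic in $(\mathbf z_i,\mathbf z_j,W_j,f_j(W_j,\pi))$ integrates out $\mathbf z_i$ using $T_i$ and reduces $\delta_{11}$ to $n^{-1}\sum_j W_j\pi^{-1}[f_j(1,\pi)-g_1^*(\mathbf z_j)]+o_p(n^{-1/2})$, so $\delta_{11}-\delta_{21}$ delivers the $\frac{1}{\pi(1-\pi)}E[\cdots]^2$ piece of $V_{np}$.

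The main obstacle is $\delta_{12}$ and $\delta_{22}$, because $M_j/N_j-\pi$ couples all units through the shared exposure graph and a naive projection over $j$ leaves graph-dependent residuals that are not $o_p(n^{-1/2})$. The remedy is the two-stage projection indicated in the introduction. First, condition on $E_n$ and project over the treatment vector, which rewrites $M_j/N_j-\pi=N_j^{-1}\sum_k E_{jk}(W_k-\pi)$ so that after swapping summation $\delta_{12}$ becomes $n^{-1}\sum_k(W_k-\pi)\cdot\Psi_{n,k}(E_n,\mathbf z,\mathbf z_k)$ for an explicit weight $\Psi_{n,k}$. Second, handle the graph-induced dependence inside $\Psi_{n,k}$ by conditioning on the latent $(U_i)_{i\le n}$ and using Lemma~\ref{lem:11} to replace scaled weighted neighbor sums by their graphon integrals uniformly, which reduces $\Psi_{n,k}$ to a deterministic function of $U_k$ up to $o_p(1)$; the residuals $\tilde r_{n1},\tilde r_{n2},\tilde r_{n3}$ in Lemma~\ref{lem4th6:6} are controlled by this step. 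Combining $\delta_{12}$ and $\delta_{22}$, the limiting variance from this piece evaluates to $b\pi(1-\pi)(Ef_1^{(1)}(1,\pi)-Ef_1^{(1)}(0,\pi))^2$ with $b$ as in (\ref{def:b}). Finally, the three blocks (the outer $g_1^*-g_0^*$ average, the $\delta_{11}-\delta_{21}$ pair, and the $\delta_{12}-\delta_{22}$ pair) are asymptotically uncorrelated after conditioning--their cross covariances vanish at rate $o_p(n^{-1/2})$ via the same projections--so the Lindeberg--Feller CLT applied to each summand and Slutsky's theorem for the $\tilde p_w$ denominators combine the variances additively to give exactly $V_{np}$.
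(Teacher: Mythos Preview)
Your high-level strategy---center at $g_w^*(\mathbf z)=E(f_1(w,\pi)\mid\mathbf z_1=\mathbf z)$, linearize the Nadaraya--Watson ratio, and project the standard and network pieces separately---is a valid alternative that lands on the same linearization the paper derives (your blocks (a)$+$(b)$+$(c) coincide with the paper's expansion~(\ref{eqn:251}) after algebra). However, your decomposition is \emph{not} the paper's, so the lemma citations do not attach to your objects. The paper's first split is in the \emph{denominator}: it writes $1/\hat p_k=2/p-\hat p_k/p^2+(\hat p_k-p)^2/(\hat p_k p^2)$, giving $\bar\delta=\delta_1-\delta_2+\delta_3$, and only then Taylor-expands $f_j$ to obtain $\delta_{k\ell}$. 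In particular the paper's $\delta_{21},\delta_{22}$ (Lemmas~\ref{lem4th6:8},~\ref{lem4th6:9}) are \emph{three-index} sums $\sum_{i,j,k}$ coming from the $\hat p_k/p^2$ correction and contribute nontrivial leading terms, whereas your $\delta_{21},\delta_{22}$ are simply the $w=0$ copies of your two-index $\delta_{11},\delta_{12}$. Lemmas~\ref{lem4th6:5}--\ref{lem4th6:9} and the residuals $\tilde r_{n\ell}$ you invoke therefore do not apply to your constructions; you would need fresh (though analogous) U-statistic projections.

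One step to make explicit: ``the denominators can be replaced by their population limits at cost $o_p(n^{-1/2})$'' is correct \emph{only} because you first subtracted $g_w^*(\mathbf z_i)\tilde p_w(\mathbf z_i)$ in the numerator, which already absorbs the first-order denominator fluctuation; the remaining error $A_i\,(1/\tilde p_w-1/p)$ is then genuinely second order and controlled by Assumption~\ref{ass:main3}(4). In the paper's route that first-order piece is \emph{not} discarded---it is exactly $\delta_2$, and Lemma~\ref{lem4th6:3} shows it supplies the $E(f_i(1,\pi)\mid\mathbf z_i)W_i/\pi-E(f_i(0,\pi)\mid\mathbf z_i)(1-W_i)/(1-\pi)$ terms. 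State this clearly: a reader who parses your sentence as $N_w/\tilde p_w\approx N_w/p$ (without the prior centering) would lose an $O_p(n^{-1/2})$ contribution. Finally, your ordering of the network projection (swap to the treatment-source index first, then apply Lemma~\ref{lem:11} to $\Psi_{n,k}$) differs from the paper's (project the kernel part onto $\mathbf z_i$ first via $r_n$, then simplify $\sum_j(M_j/N_j-\pi)$); your ordering is plausible but requires its own residual bounds rather than the paper's $\tilde r_{n1},\tilde r_{n2},\tilde r_{n3}$.
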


Theorem~\ref{th:5} establishes that \(\hat{\tau}_{np}\) is asymptotically normal, with a rate of convergence of \(\sqrt{n}\). More importantly, its asymptotic variance consists of three terms, all of which are identical to those in \eqref{def:V g1 g0} and are independent of \(g_1\) and \(g_0\). This result demonstrates that \(\hat{\tau}_{np}\) is asymptotically optimal in terms of variance. In particular, \(\hat{\tau}_{np}\) outperforms \(\hat{\tau}\) by having a smaller asymptotic variance.

To consistently estimate \( V_{np} \) in practice, we propose an iterative approach based on polynomial regression. As demonstrated in Theorem~\ref{th:6}, the asymptotic variances of polynomial regressions are always at least as large as \( V_{np} \). This allows us to progressively increase the degree of the polynomial regression to estimate \( V_{np} \) more accurately. We begin with a constant regressor \( \mathbf{x}_i = 1 \), which corresponds to the difference-in-means estimator, and then iteratively add higher-order terms from the basis function set \( \{\phi_i\}_{i=1}^\infty \) until the estimated variance stabilizes. A common choice for the basis functions is \( \phi_i(z) = z^i \), which performs well in our numerical experiments. Alternatively, Legendre polynomials can be used for enhanced computational stability.

\subsection{Numerical validation}
\label{sec:Numerical Evaluation2}
We conduct numerical simulations to assess the performance of \( \hat{\tau}_{np} \) as predicted by Theorem~\ref{th:5}. In particular, we explore the sensitivity of the results to the tuning parameters \( \tilde{h} \) and \( \alpha \), as discussed in Remark~\ref{rem:Parameter choice}. Additionally, we evaluate how \( \hat{\tau}_{np} \) performs with varying dimensionalities of \( \mathbf{z} \), and investigate the accuracy of the estimation of \( V_{np} \).

To evaluate the results presented in Theorem~\ref{th:5}, we consider the following scenario. Let
\[
h(x, y) = x^2 + y^2 + xy + 0.1, \quad \rho_n = n^{-0.25}, \quad \pi = 0.7,
\]
and define \( f_i(W_i, \theta_i) \) as
\[
f_i\left(W_i, \theta_i\right) = W_i\left(\theta_i - \frac{1}{2} + \frac{\sum_{j=1}^p z_{i,j}}{\sqrt{3p - 4 + 2^{2 - p}}} + \frac{1}{2} \xi_i \right) + \frac{1}{2\sqrt{p}} \sum_{j=1}^p \exp(z_{i,j}),
\]
where
\[
\theta_i = \begin{cases}
	\pi, & \text{without network interference}, \\
	\frac{M_i}{N_i}, & \text{with network interference}.
\end{cases}
\]
Furthermore, \( \xi_i \sim \mathcal{N}(0, 1) \) i.i.d., and \( \mathbf{z}_i = (z_{i,1}, \dots, z_{i,p}) \sim \mathcal{N}(0, \Sigma) \) i.i.d., with \( \Sigma_{i,j} = 0.5^{|i-j|} \) for \( 1 \leq i,j \leq p \). For this scenario, straightforward calculations show that for any dimension \( p \), we have \( \tau = 0.2 \) and
\[
V_{np} \approx \begin{cases}
	1.357, & \text{without network interference}\\
	1.616, & \text{with network interference}
\end{cases}.
\]
In this section, we employ Epanechnikov kernels of different orders depending on \( p \). Specifically, we use the following kernel functions:
\begin{align}
	\label{eqn:choice of kernel}
	K(z_1, \dots, z_p) = \begin{cases}
		\prod_{j=1}^p \frac{3}{4}(1 - z_j^2)I(|z_j| \leq 1), & 1 \leq p \leq 3 \\
		\prod_{j=1}^p \frac{45}{32}(1 - z_j^2)\left( 1 - \frac{7}{3} z_j^2 \right)I(|z_j| \leq 1),& 4 \leq p \leq 7 \\
		\prod_{j=1}^p \frac{525}{256}(1 - z_j^2)\left( 1 - 6 z_j^2 + \frac{33}{5} z_j^4 \right)I(|z_j| \leq 1), & 8 \leq p \leq 10
	\end{cases}.
\end{align}
For each configuration described below, we repeat the simulation 1000 times.
\subsubsection{Finite sample performance}
We set \( n = 1000 \) and \( p = 1\) or \(5 \). For the choice of \( \tilde{h} \), we follow Remark~\ref{rem:Parameter choice} and select \( \tilde{h} = 0.518\) for \( p = 1 \) and \( \tilde{h} = 2.173 \) for \( p = 5 \). Additionally, we set \( \alpha = 0.01 \) which is discussed in Remark \ref{rem:Parameter choice}. 
A sensitivity analysis is conducted in the next section, demonstrating the robustness of the results to changes in the parameter values. 

Figure~\ref{fig:sim2} shows the distributions of \( \hat{\tau}_{np} \) for \( p = 1 \) and \( p = 5 \). We observe that the sample distributions closely match the theoretical distribution derived in Theorem~\ref{th:5} (shown in red). {Notably, our \(\hat\tau_{np}\) exhibits good performance regardless of whether network interference is present or absent.} Moreover, the sample distribution for \( p = 5 \) fits well with the limiting distribution, indicating that our estimator, \( \hat{\tau}_{np} \), can mitigate the ``curse of dimensionality" to some extent when estimating the average treatment effect \( \tau \). Further numerical validations are provided in Section~\ref{sec:Performance under multi-dimensional covariates}.

\begin{figure}[ht]
	\centering
	\includegraphics[width=14cm]{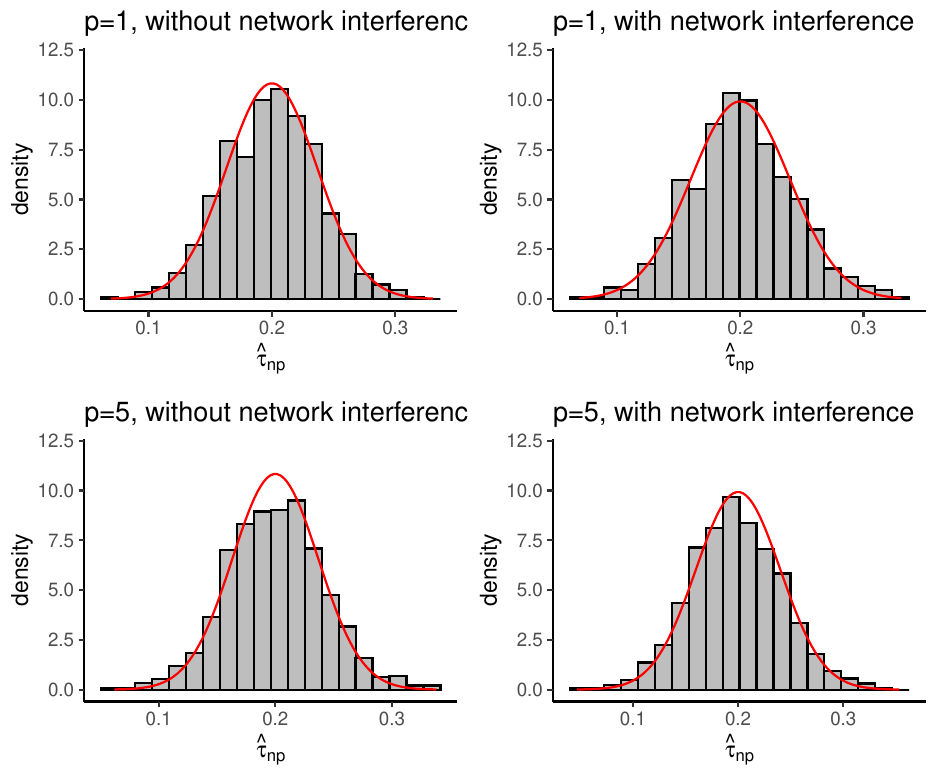}
	\caption{The histograms of $\hat\tau_{np}$ for $p=1$ and $p=5$, with the red curves representing the limiting Gaussian distributions according to Theorem \ref{th:5}.}
	\label{fig:sim2}
\end{figure}

To demonstrate the performance gain of  $\hat\tau_{np}$ over $\hat\tau$, we plot in Figure \ref{fig:sim3} their respective MSEs with network interference, when $p$ is set as $5$ and $\alpha$ as $0.05$. We see that clearly nonparametric adjustment improves substantially over linear regression-adjusted estimator for all the sample sizes we have examined.

\begin{figure}[tbp]
	\centering 
	\includegraphics[width=0.5\textwidth]{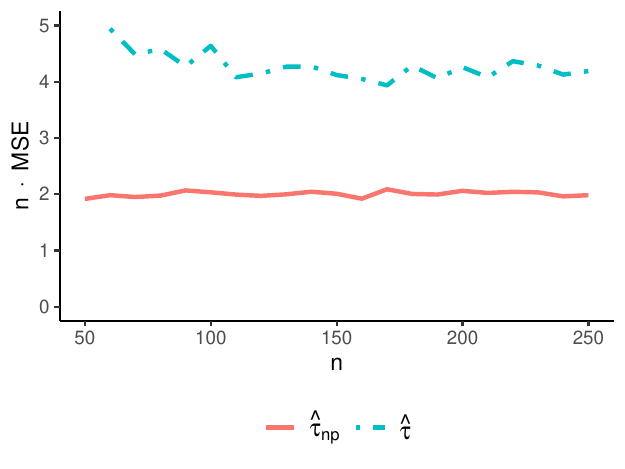}
		\caption{ 
			A comparison of MSEs between $\hat\tau_{np}$ (the estimator after nonparametric regression adjustment) and $\hat\tau$ (the estimator after linear regression adjustment).
		}
		\label{fig:sim3}
	\end{figure}

	\subsection{Robustness to parameter choices}
	
	In this section, we demonstrate that the results are robust to the choice of parameters. Specifically, we compute the sample MSE of \( \hat{\tau}_{np} \) for various combinations of \( (\tilde{h}, \alpha) \). For \( p = 1 \), we consider \( \tilde{h} = 0.2, 0.4, \dots, 1.0 \) and \( \alpha = 0.01, 0.05 \). For \( p = 5 \), we use \( \tilde{h} = 1.8, 2.0, \dots, 2.6 \) and \( \alpha = 0.01, 0.05 \). We set $n=1000$. Since this paper primarily focuses on scenarios with network interference, we only present simulations for this case.
	
	Table~\ref{tab:sim3} summarizes the values of the sample MSE scaled by \( n \). The results indicate that the sample MSE is not significantly influenced by the choice of \( \tilde{h} \) and \( \alpha \). More importantly, these MSEs are strictly better than the MSE of \( \hat{\tau}  \) which is $3.95$ approximately. 
	\begin{table}[ht]
		\centering
		\small
		\caption{Values of the sample MSE multiplied by $n$ for different choices of $(\tilde h,\alpha)$.}
		\label{tab:sim3}
		\begin{tabular}{cc|ccccc|cccc}
			\hline
			\multicolumn{2}{c|}{\multirow{2}{*}{$p=1$}} & \multicolumn{2}{c}{$\alpha$} &  & \multicolumn{2}{c|}{\multirow{2}{*}{$p=5$}} & \multicolumn{2}{c}{$\alpha$} \\ \cline{3-4} \cline{8-9} 
			\multicolumn{2}{c|}{} & 0.01 & 0.05 &  & \multicolumn{2}{c|}{} & 0.01 & 0.05 \\ \cline{1-4} \cline{6-9} 
			\multirow{5}{*}{$\tilde h$} & 0.2 & 1.906 & 2.165 &  & \multirow{5}{*}{$\tilde h$} & 1.8 & 1.949 & 2.207 \\
			& 0.4 & 1.896 & 1.845 &  & & 2.0 & 1.858 & 1.973 \\
			& 0.6 & 1.740 & 1.826 &  & & 2.2 & 2.057 & 1.907 \\
			& 0.8 & 1.755 & 1.974 &  & & 2.4 & 2.015 & 2.204 \\
			& 1.0 & 2.095 & 1.875 &  & & 2.6 & 2.035 & 2.009 \\ \hline
		\end{tabular}
		
	\end{table}
	
	\subsubsection{Performance under multi-dimensional covariates}
	\label{sec:Performance under multi-dimensional covariates}
	
	In this section, we evaluate the performance of \( \hat{\tau}_{np} \) for scenarios with network interference as the dimension of the covariates, \( p \), varies over a broad range. Specifically, we set \( p = 1, 2, \dots, 10 \), fix \( n = 1000 \) and \( \alpha = 0.01 \). The results are summarized in Table~\ref{tab:sim4}. 
	
	As expected, the sample MSE increases with \( p \). However, the convergence rates consistently remain at \( \sqrt{n} \), in contrast to the slower rates typically associated with the ``curse of dimensionality.'' This suggests that the local constant estimator \( \hat{\tau}_{np} \) is robust to increases in dimensionality and demonstrates superior performance compared to the regression estimator \( \hat{\tau} \), even for relatively large \( p \).
	
	\begin{table}[htbp]
		\centering
		\small
		\caption{Performance as the dimension of covariates increases.}
		\label{tab:sim4}
		\begin{tabular}{cccccc c ccccc}
			\hline
			\( p \) & \( \tilde{h} \) & Mean & Variance & MSE & & \( p \) & \( \tilde{h} \) & Mean & Variance & MSE \\ \hline
			1   & 0.518      & 0.200 & \( 1.735/n \) & \( 1.735/n \) & & 6   & 2.523      & 0.199 & \( 2.103/n \) & \( 2.104/n \) \\
			2   & 0.745      & 0.200 & \( 1.794/n \) & \( 1.794/n \) & & 7   & 2.881      & 0.204 & \( 2.248/n \) & \( 2.267/n \) \\
			3   & 0.995      & 0.200 & \( 1.957/n \) & \( 1.958/n \) & & 8   & 3.652      & 0.191 & \( 1.902/n \) & \( 1.982/n \) \\
			4   & 1.831      & 0.199 & \( 1.922/n \) & \( 1.923/n \) & & 9   & 4.046      & 0.192 & \( 2.092/n \) & \( 2.144/n \) \\
			5   & 2.173      & 0.198 & \( 2.016/n \) & \( 2.019/n \) & & 10  & 4.443      & 0.194 & \( 2.120/n \) & \( 2.147/n \) \\ \hline
		\end{tabular}
	\end{table}

	\subsubsection{Variance estimation performance}
	
	To assess the variance estimation  performance, we consider sample sizes \(n = 100, 300, 500\) and dimensions \(p = 1, 5\), evaluating whether the confidence intervals adequately cover the true value of \(\tau\). For \(p = 1\) and a nominal coverage of 95\%, the coverage rates are 96.5\%, 97.1\%, and 97.1\% for \(n = 100\), 300, and 500, respectively. When \(p = 5\), the coverage rates are 92.1\%, 97.3\%, and 98.3\% for the same sample sizes. {The results align with the findings in Table~\ref{tab:sim1}. Additionally, the coverage probabilities are slightly conservative, primarily due to the cautious nature of our variance estimation method outlined in Section~\ref{sec:Local constant estimator}.}

	\section{A naturalistic simulation with contact data}
	\label{sec:A naturalistic simulation with contact data}
	In this section, we use real networks to illustrate how our results can be applied and to demonstrate how the structure of the network impacts statistical inference. The data were collected by the SocioPatterns project on {\url{http://www.sociopatterns.org}} using active RFID devices (\cite{gemmetto2014mitigation, stehle2011high}). Specifically, on October 1st, 2009, from 8:40 to 17:18, contact data were recorded for 236 individuals, including 10 classes of students and one group of teachers. 
	
	Previous analyses of this dataset revealed variations in the contact patterns of students, with prominent changes occurring around 12:00 and 14:00 (see, for example, Figure 6 in \cite{stehle2011high}). Based on these findings, we divide the data into two time periods: morning (before 12:00) and midday (12:00--14:00). For each period, we construct a contact network. Specifically, an undirected edge \((i, j)\) is drawn in the graph \(G = (V, E)\) if individuals \(i\) and \(j\) were recorded by the RFID device to have contacted each other at least three times (i.e., totaling at least one minute) during the respective time period. 
	
	We visualize the two resulting graphs in Figures~\ref{fig:natural1}, where the nodes are arranged to highlight interactions within and between classes. From these visualizations, we observe distinct patterns: in the morning, interactions are predominantly within classes, whereas at midday, interactions between classes become more prominent. These two networks represent typical real-world scenarios--social interactions in structured spaces (e.g., classrooms) and open spaces (e.g., playgrounds). Therefore, conducting experiments on these networks provides valuable insights.
	\begin{figure}[htbp]
		\centering 
		\subfigure{
			\includegraphics[width=8cm]{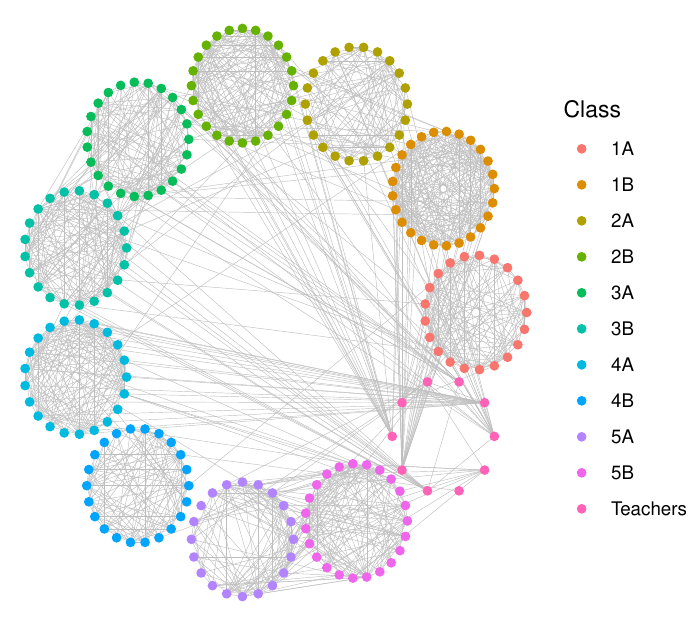}
		}\subfigure{
			\includegraphics[width=8cm]{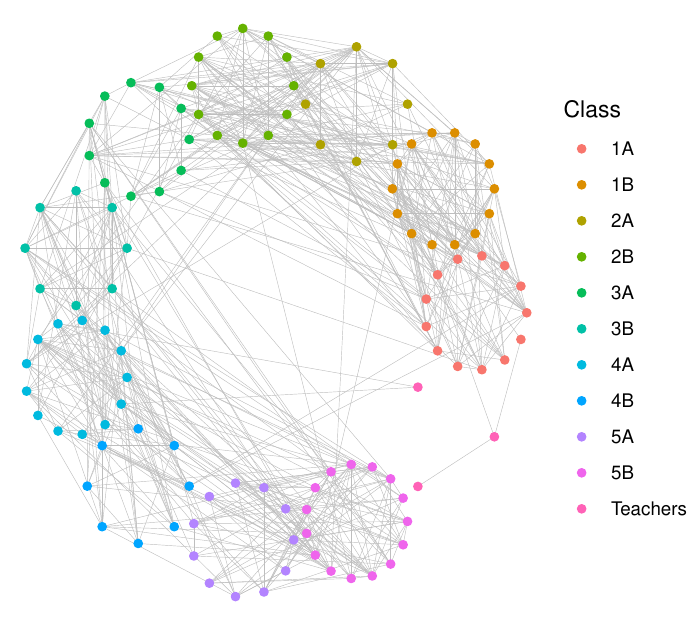}
		} 
		\caption{Aggregated contact networks for students during the morning (left) and midday (right) periods. The morning network reflects interactions primarily within structured environments, such as classrooms, while the midday network captures interactions occurring in more open and informal spaces, such as playgrounds.}
		\label{fig:natural1}
	\end{figure}
	In this section, we simulate experiments using these networks to estimate the effectiveness of epidemic vaccines, considering the impact of network effects. Individual infection rates are likely confounded by these network effects, as they depend not only on an individual's vaccination status but also on whether the individuals they interact with have been vaccinated. Many studies have modeled epidemics on networks \citep{donnat2024one, pare2020modeling, pare2018analysis}.
	
	We extract three key factors for the simulation: (i) the individual's vaccination status (i.e., treatment assignment), (ii) the proportion of vaccinated contacts (i.e., network interference), and (iii) personal vulnerability (i.e., covariate). Specifically, we simulate experiments where the treatment assignment \(W_i\) follows a Bernoulli distribution with mean \(\pi = 0.2\), independently and identically for each individual. The outcome function for individual \(i\) is defined as
	\[
	Y_i = f_i(W_i, M_i/N_i) = \frac{2}{1 + \exp(-z_i^*)} \left(1 - \frac{2}{5} W_i \right) \left(1 - \sqrt\frac{M_i}{N_i}\right)
	\]
	
	where \(z_i^* \overset{i.i.d.}{\sim} N(0, 2)\), \(N_i = \sum_{j=1}^n E_{ij}\), and \(M_i = \sum_{j=1}^n E_{ij} W_j\). 
	
	A few comments are in order:
	\begin{itemize}
		\item The outcome \(Y_i\) can represent a specific medical indicator. In real-world applications, an individual is typically considered infected if their value exceeds a given threshold.
		\item Since \(Y_i\) should be negatively dependent on the proportion of vaccinated neighbors, we use a simple relationship to reflect this qualitative dependence.
		\item Personal vulnerability is influenced by factors such as age \citep{marmor2023assessing}, but for simplicity, we model it as a single random variable, \(z_i^*\). In practice, personal vulnerability \(z_i^*\) may need to be estimated via specific procedures. We assume the available covariate \(z_i\) is a perturbed version of \(z_i^*\). Specifically, we set \(z_i = z_i^* v_i\), where \(v_i \overset{i.i.d.}{\sim} \text{Uniform}(0.9, 1.1)\).
	\end{itemize}
	
	We apply three estimators to estimate the average treatment effect: the difference-in-means estimator, the regression estimator \(\hat\tau\), and the nonparametric estimator \(\hat\tau_{np}\). In this experiment, the true value of the treatment effect is \(\tau \approx -0.221\). For the parameter choices, guided by Remark~\ref{rem:Parameter choice}, we set \(\tilde h = 0.43\) and \(\tilde b = 0.81\) for the morning network, and \(\tilde h = 0.48\) and \(\tilde b = 0.83\) for the midday network. Additionally, we choose \(r = 10\) to account for the ten classes and set \(\alpha = 0.01\).
	
	The results, along with the corresponding confidence intervals, are presented in Table~\ref{tab:natural1}. From the table, we observe that all methods yield similar point estimates, and none of the confidence intervals includes zero. This indicates a significant reduction in infection risk due to vaccination in this experiment. Among the methods, the confidence interval for the difference-in-means estimator is the widest, while those for the regression estimator and the local constant estimator are comparatively narrower.
	
	\begin{table}[htbp]
		\centering
		\small
		\caption{Estimation of average treatment effect and 95\% confidence intervals.}
		\label{tab:natural1}
		\begin{tabular}{ccccc}
			\hline
			Network                  & Method                 & Estimate & Standard error & 95\% confidence interval \\ \hline
			\multirow{3}{*}{Morning} & 
			Difference-in-means & $-0.217$   & 0.0498         & $[-0.315,-0.119]$      \\
			& Regression   & $-0.243$   & 0.0338          & $[-0.310,-0.177]$      \\
			& Nonparametric        & $-0.255$   & 0.0334         & $[-0.320,-0.189]$      \\ \hline
			\multirow{3}{*}{Midday}  &
			Difference-in-means & $-0.234$   & 0.0721        & $[-0.375,-0.092]$      \\
			& Regression             & $-0.256$  & 0.0536         & $[-0.361,-0.151]$      \\
			& Nonparametric         & $-0.243$   & 0.0528        & $[-0.347,-0.140]$      \\ \hline
		\end{tabular}
		
	\end{table}

	In the subsequent analysis, based on 1000 repetitions, we demonstrate that the regression estimator leads to a variance reduction of approximately  28.3\% for the morning network and 22.0\% for the midday network compared to the difference-in-means estimator.
	
	To assess the coverage rate of the confidence intervals, we repeat the experiment 1000 times and compute the average of the point estimates, standard errors, and the coverage rates of the confidence intervals. The results are summarized in Table \ref{tab:natural2}. As shown, the coverage rates of the confidence intervals are close to $0.95$, and the mean point estimates are near the true value of $-0.221$. This demonstrates that, under real-world network structures, our methods can be reliably used for statistical inference on the average treatment effects under network interference. Furthermore, it is seen that the two regression estimators leads to variance reduction, sometimes substantially.
	
	Additionally, we observe that the confidence intervals for $\tau$ differ in length between the morning and midday networks due to their distinct structural characteristics. We also present the coverage rates in Table~\ref{tab:natural2} when the network structure is ignored, {i.e., by setting the term $b\pi(1-\pi)(E f_1^{(1)} (1,\pi)-E f_1^{(1)} (0,\pi))^2$ to $0$.}  The results clearly indicate that neglecting the specific structure of the network when analyzing real-world data can lead to substantial errors in statistical inference.
	
	\begin{table}[htbp]
		\centering
		\small
		\caption{Average point estimates, standard errors (SE), and coverage rates of confidence intervals (CI). CI-II refers to the confidence interval calculated without accounting for network interference.}
		\label{tab:natural2}
		\begin{tabular}{cccccc}
			\hline
			Network                  & Method        & Mean & SE & CI & CI-II \\ \hline
			\multirow{3}{*}{Morning} & Difference-in-means  & $-0.231$ & 0.0584 & 0.944 & 0.912 \\
			& Linear regression & $-0.232$ & 0.0416 & 0.923 & 0.853 \\
			& Nonparametric & $-0.225$ & 0.0403 & 0.945 & 0.862 \\ \hline
			\multirow{3}{*}{Midday}  & Difference-in-means  & $-0.229$ & 0.0941 & 0.962 & 0.917\\
			& Regression & $-0.228$ & 0.0734 & 0.954 & 0.839 \\
			& Nonparametric & $-0.216$ & 0.0718 & 0.971  & 0.839 \\ \hline
		\end{tabular}
	\end{table}

	\section{Conclusion}
	\label{sec:Conclusion}
	
	In this paper, we propose two estimators---(linear) regression-based and nonparametric adjustment methods---to estimate the average treatment effect in the context of graphon-type network interference. Both estimators exhibit optimality properties, ensuring robust and efficient treatment effect estimation. Additionally, we introduce a variance estimation method that accounts for both the randomness of treatment assignments and the network structure, providing a more precise measure of uncertainty.
	
	Several promising avenues for future research remain. First, investigating other forms of network dependence, extending beyond the fraction of treated neighbors, would provide valuable insights. Second, exploring the properties of the proposed estimators, \(\hat{\tau}\) and \(\hat{\tau}_{np}\), in the context of random networks beyond the graphon model could deepen our understanding. Third, examining the behavior of these estimators in dynamic network settings, where network structures evolve over time, could offer new challenges and opportunities for method development. Lastly, it would be valuable to explore the performance of the estimators under different types of network interference, such as temporal or spatial correlations between nodes. 
	
	\if1\blind
	\section*{Acknowledgments}
	We gratefully acknowledge the suggestions from Hanzhong Liu and Yuhao Wang of  Tsinghua University, which helped improve this document. Weichi Wu is supported by NSFC No.12271287. 
	\fi
	
	\bibliographystyle{apalike}
	\bibliography{Bibliography-MM-MC}

\newpage
\begin{center}
{\large\bf SUPPLEMENTARY MATERIAL}
\end{center}
\section*{Proofs}
\label{sec:Proofs}

\begin{proof}[Proof of Theorem~\ref{th: Consistency of hat tau}]
	Recall that $\textbf{x}_i=(1,\mathbf{z}_i^\top)^\top$ are i.i.d. random vectors, and $W_i$'s are i.i.d. Bernoulli random variables that are independent of $\mathbf{x}_i$'s. We have 
 \begin{align}
 \label{pf1:eqn1}
     \frac{1}{n}\sum_{i=1}^n W_i\mathbf{x}_i\mathbf{x}_i^\top \overset{a.s.}{\to}\pi \bb{M}_{xx}.
 \end{align} Let $g(W_i,\mathbf{z}_i)=E(f_i(W_i,\pi)|\mathbf{z}_i,W_i)$ and $g(\bb{W},\mathbf{z})=(g(W_1,\mathbf{z}_1),\cdots,g(W_n,\mathbf{z}_n))^\top$. By Assumption~\ref{assumption:moment_conditions}, we have $E(|\mathbf{x}_1g(1,\mathbf{z}_1)|)<\infty$. Then we have 
 \begin{align}
     \label{pf1:eqn2}
     \frac{1}{n}\sum_{i=1}^n W_i\mathbf{x}_ig(W_i,\mathbf{z}_i)\overset{a.s.}{\to} \pi E(\mathbf{x}_1g(1,\mathbf{z}_1)).
 \end{align}  Consider the term $\frac{1}{n}\bb{X}^\top \bb{W}(\mathbf{Y}-g(\bb{W},\mathbf{z}))$, its $j${th} $(1\le j\le p+1)$ dimension is
	\begin{align}
		\label{eqn:1}
		I_j&=\frac{1}{n}\sum_{i=1}^nW_ix_{ij}\left(f_i\left(1,\frac{M_i}{N_i}\right)-E(f_i(1,\pi)|\mathbf{z}_i)\right)\notag\\
		&=\frac{1}{n}\sum_{i=1}^nW_ix_{ij}\left(f_i\left(1,\pi\right)-E(f_i(1,\pi)|\mathbf{z}_i)\right)+o_p(1)
		=o_p(1)
	\end{align}
where the second equation follows from the Taylor expansion, the independence of \( W_i \) from \((f_i, \mathbf{z}_i)\), the moment condition in Assumption~\ref{assumption:moment_conditions}, and the bound \(\min_i N_i \gtrsim_p n\rho_n\) which is established in Lemma 15 of \cite{li2022random}. The third equation then follows from the law of large numbers. Therefore, \begin{align}
	    \label{pf1:eqn3}
     \frac{1}{n}\bb{X}^\top \bb{W}\mathbf{Y}=
      \frac{1}{n}\bb{X}^\top \bb{W}g(\bb{W},\mathbf{z})+o_p(1).
	\end{align}
 By (\ref{pf1:eqn1}), (\ref{pf1:eqn2}), and (\ref{pf1:eqn3}), we have
	\begin{equation}
		\hat \beta_1=(\bb{X}^\top \bb{W}\bb{X})^{-1}\bb{X}^\top \bb{W}\mathbf{Y}
		=\bb{M}_{xx}^{-1}E(\mathbf{x}_1g(1,\mathbf{z}_1))+o_p(1)=\beta_1+o_p(1).
	\end{equation}
	Similarly, we have 
	\begin{equation}
		\hat \beta_0=\beta_0+o_p(1).
	\end{equation}
	Since 
 \begin{align}
     \label{pf2:aeqn:6}
     \frac{1}{n}\mathbf{1}^\top\bb{X}\overset{a.s.}{\to}  {E}\mathbf{x}_1^\top,
 \end{align} we have \begin{equation}
		\hat\tau=(E\mathbf{x}_1)^\top \bb{M}_{xx}^{-1}E\left(\mathbf{x}_1(f_1(1,\pi)-f_1(0,\pi))\right)+o_p(1).
	\end{equation}
	Recall that $\mathbf{x}_1=(1,\mathbf{z}_1^\top)^\top$, $\bb{M}_{xx}=\left(\begin{matrix}
		1 & E\mathbf{z}_1^\top\\
		E\mathbf{z}_1 &  E(\mathbf{z}_1 \mathbf{z}_1^\top)
	\end{matrix}\right)$, we then have
	\begin{equation}
		\label{eqn:3}
		\bb{M}_{xx}^{-1}=\left(\begin{matrix}
			1+E\mathbf{z}_1^\top\left(E(\mathbf{z}_1\mathbf{z}_1^\top)\right)^{-1}E\mathbf{z}_1 & -E\mathbf{z}_1^\top \left(E(\mathbf{z}_1\mathbf{z}_1^\top)\right)^{-1}\\
			-\left(E(\mathbf{z}_1\mathbf{z}_1^\top)\right)^{-1}E\mathbf{z}_1 &  \left(E(\mathbf{z}_1\mathbf{z}_1^\top)\right)^{-1}
		\end{matrix}\right).
	\end{equation}
	As a result, we have 
 \begin{align}
     \label{pf2:aeqn:7}
     E\mathbf{x}_1^\top \bb{M}_{xx}^{-1}=(1,0,\cdots,0).
 \end{align}
 Under our Assumption~\ref{assumption:anonymous interference}, \ref{assump:random_graph} and \ref{assumption:moment_conditions}, we have that $\tau=E(f_1(1,\pi)-f_1(0,\pi))$, therefore, $\hat \tau\overset{p}{\to}\tau$. 
\end{proof}

\begin{proof}[Proof of Theorem~\ref{th:main}]
	Let $g(W_i,\mathbf{z}_i)=E(f_i(W_i,\pi)|\mathbf{z}_i,W_i)$, and $g(\bb{W},\mathbf{z})=(g(W_1,\mathbf{z}_1),\cdots,g(W_n,\mathbf{z}_n))^\top$. 
	
\textbf{Step 1 (asymptotic normality of $\hat\tau$).} In this step, we use the Cramer-Wold device to show that $\hat\tau$ defined in (\ref{def:regression estimator}) is asymptotic normal. We first provide a sketch of this step. Recall that by the related definitions and the proof of Theorem~\ref{th: Consistency of hat tau},
\begin{align*}
    \hat\tau=\frac{1}{n}\mathbf{1}^\top \bb{X} (
\hat\beta_1
-
\hat\beta_0
 ), 
\tau=(E\mathbf{x}_1)^\top ( \beta_1
- \beta_0
 ).
\end{align*}
 
Then
	\begin{align}
		\label{eqn:2}
		\sqrt{n}(\hat\tau-\tau)=\sqrt{n}\left(\begin{matrix}
			\frac{1}{n}\sum_{i=1}^n \mathbf{x}_i^\top & -\frac{1}{n}\sum_{i=1}^n \mathbf{x}_i^\top & 
			(\beta_1-\beta_0)^\top
		\end{matrix}\right)
		\left(\begin{matrix}
			\hat\beta_1-\beta_1 \\
			\hat\beta_0-\beta_0\\
			\frac{1}{n}\sum_{i=1}^n (\mathbf{x}_i-E\mathbf{x}_i)
		\end{matrix}\right).
	\end{align}
	 In \tb{Step 1(a)}, we prove through the Cramer-Wold device  that the vector 
 \begin{equation*}
 \frac{1}{\sqrt{n}}
     \begin{pmatrix}
         \bb{X}^\top \bb{W}(\mathbf{Y}- \bb{X}\beta_1)\\
          \bb{X}^\top (\bb{I}-\bb{W})(\mathbf{Y}- \bb{X}\beta_0)\\
           \sum_{i=1}^n (\mathbf{x}_i-E\mathbf{x}_i)
     \end{pmatrix}
 \end{equation*}
  is asymptotically normal. In \tb{Step 1(b)}, we prove that  \begin{align*}
  \sqrt{n}
      \left(\begin{matrix}
			\hat\beta_1-\beta_1 \\
			\hat\beta_0-\beta_0\\
			\frac{1}{n}\sum_{i=1}^n (\mathbf{x}_i-E\mathbf{x}_i)
		\end{matrix}\right)
  \end{align*} is asymptotically normal by combining the result in \tb{Step 1(a)} and the Slutsky's theorem. In \tb{Step 1(c)}, we give the asymptotic normality of $\hat\tau$. 

 \tb{Step 1(a).} We observe that 
 \begin{equation*}\frac{1}{n}
     \begin{pmatrix}
         \bb{X}^\top \bb{W}(\mathbf{Y}- \bb{X}\beta_1)\\
          \bb{X}^\top (\bb{I}-\bb{W})(\mathbf{Y}- \bb{X}\beta_0)\\
          \sum_{i=1}^n (\mathbf{x}_i-E\mathbf{x}_i)
     \end{pmatrix}=
     \left(\begin{matrix}
			\frac{1}{n}\sum_{i=1}^n W_i\mathbf{x}_i(f_i(1,\mni)-\mathbf{x}_i^\top\beta_1)\\
			\frac{1}{n}\sum_{i=1}^n (1-W_i)\mathbf{x}_i(f_i(0,\mni)-\mathbf{x}_i^\top\beta_0)\\
			\frac{1}{n}\sum_{i=1}^n (\mathbf{x}_i-E\mathbf{x}_i)
		\end{matrix}\right).
 \end{equation*}

 For arbitrary but fixed real vectors $\mathbf{a}_1,\mathbf{a}_2,\mathbf{a}_3$,  consider the term 
	\begin{align*}
		J=
		\left(\begin{matrix}
			\mathbf{a}_1^\top & -\mathbf{a}_2^\top & 
			\mathbf{a}_3^\top
		\end{matrix}\right)
		\left(\begin{matrix}
			\frac{1}{n}\sum_{i=1}^n W_i\mathbf{x}_i(f_i(1,\mni)-\mathbf{x}_i^\top\beta_1)\\
			\frac{1}{n}\sum_{i=1}^n (1-W_i)\mathbf{x}_i(f_i(0,\mni)-\mathbf{x}_i^\top\beta_0)\\
			\frac{1}{n}\sum_{i=1}^n (\mathbf{x}_i-E\mathbf{x}_i)
		\end{matrix}\right).
	\end{align*}
	Straightforward calculations show that \begin{align*}
		J=\frac{1}{n}\sum_{i=1}^n \left[
		\frac{W_i}{\pi}q_i\left(1,\mni\right)-\frac{1-W_i}{1-\pi}q_i\left(0,\mni\right)-a_3^\top E\mathbf{x}_1
		\right],
	\end{align*}
	where
	\begin{align*}
		q_i\left(W_i,t\right)&=\left[
		\pi a_1^\top \mathbf{x}_iW_i+(1-\pi)a_2^\top \mathbf{x}_i(1-W_i)
		\right](f_i(W_i,t)-\mathbf{x}_i^\top \beta_{W_i})\\
		&+(\pi W_i-(1-\pi)(1-W_i))a_3^\top \mathbf{x}_i.
	\end{align*}
Note that 
 \begin{align}
 \label{pf2:aeqn:1}
     E(\mathbf{x}_i(g(1,\mathbf{z}_i)-\mathbf{x}_i^\top \beta_1))=&
      E(\mathbf{x}_i(E(f_i(1,\pi)|\mathbf{z}_i)-\mathbf{x}_i^\top \bb{M}_{xx}^{-1}E(\mathbf{x}_if_i(1,\pi))))\notag\\
      =&E(\mathbf{x}_if_i(1,\pi))-E(\mathbf{x}_if_i(1,\pi))=\mathbf{0}.
 \end{align}
 Similarly, 
 \begin{align}
 \label{pf2:aeqn:2}
     E(\mathbf{x}_i(g(0,\mathbf{z}_i)-\mathbf{x}_i^\top \beta_0))=\mathbf{0}.
 \end{align}
	Then we derive that 
	\begin{align}
 \label{pf2:aeqn:3}
		&\frac{1}{n}\sum_{i=1}^n E\left[
		q_i\left(1,\mni \right)-q_i\left(0,\mni \right)
		\right] \notag\\
		=& 
		\frac{1}{n} \sum_{i=1}^n 
		E\left(\pi a_1^\top \mathbf{x}_i(g(1,\mathbf{z}_i)-\mathbf{x}_i^\top \beta_1+f_i\left(1,\mni\right)-g(1,\mathbf{z}_i))\right)\notag\\
		&-	\frac{1}{n} \sum_{i=1}^n 
		E\left((1-\pi) a_2^\top \mathbf{x}_i(g(0,\mathbf{z}_i)-\mathbf{x}_i^\top \beta_0+f_i\left(0,\mni\right)-g(0,\mathbf{z}_i))\right)
		+
		a_3^\top E\mathbf{x}_1
		\notag\\
		=&	\frac{1}{n} \sum_{i=1}^n 
		E\left(\pi a_1^\top \mathbf{x}_i\left(f_i\left(1,\mni \right)-E\left(f_i(1,\pi)|\mathbf{z}_i\right)\right)\right)\notag\\
		&-\frac{1}{n} \sum_{i=1}^n 
		E\left((1-\pi) a_2^\top \mathbf{x}_i\left(f_i\left(0,\mni \right)-E\left(f_i(0,\pi)|\mathbf{z}_i\right)\right)\right)+
		a_3^\top E\mathbf{x}_1\notag\\
		=& a_3^\top E\mathbf{x}_1+O\left(\frac{1}{n\rho_n}\right),
	\end{align}
	where the second equality is due to results in (\ref{pf2:aeqn:1}), (\ref{pf2:aeqn:2}), and we derive the last equality using a similar method as in (\ref{eqn:1}). Moreover, under Assumption~\ref{assumption:anonymous interference}, \ref{assump:random_graph}, \ref{assumption:moment_conditions}, and (\ref{pf2:aeqn:3}), it is straightforward to verify that Theorem 4 (particularly their equation (24)) in \citet{li2022random} still holds (note that our conditions for $q_i$ here are slightly weaker than their conditions for $f_i$). Then we have that
	\begin{align}
		\label{eqn:23}
		\sqrt{n}J\overset{d}{\to} N(0,\sigma_0^2+\pi(1-\pi)E(R_i+Q_i)^2),
	\end{align}
	where 
	\begin{align*}
 \sigma_0^2&=Var(q_i(1,\pi)-q_i(0,\pi)),\\
		R_i&=q_i(1,\pi)/\pi+q_i(0,\pi)/(1-\pi),\\
		Q_i&= E\left(\frac{h(U_i,U_j)}{E(h(U_i,U_j)|U_j)}\bigg|U_i
		\right)
  E\left(\frac{\partial}{\partial t}q_i(1,t)\bigg|_{t=\pi}
   -\frac{\partial}{\partial t}q_i(0,t)\bigg|_{t=\pi}\right).
	\end{align*}
	Let $
		P_{i1}=f_i(1,\pi)-\mathbf{x}_i^\top\beta_1, 
		P_{i2}=f_i(0,\pi)-\mathbf{x}_i^\top\beta_0, 
		P_{i1}'=f_i^{(1)}(1,\pi),
		P_{i2}'=f_i^{(1)}(0,\pi)$. 
	Then we have
	\begin{align*}
		q_i(1,\pi)&=\pi a_1^\top \mathbf{x}_iP_{i1}+\pi a_3^\top \mathbf{x}_i, 
		q_i(0,\pi)=(1-\pi) a_2^\top \mathbf{x}_iP_{i2}-(1-\pi) a_3^\top \mathbf{x}_i,\\	
		\frac{\partial}{\partial t}q_i(1,t)\bigg|_{t=\pi}&=\pi a_1^\top \mathbf{x}_iP_{i1}', 
		\frac{\partial}{\partial t}q_i(0,t)\bigg|_{t=\pi}=(1-\pi) a_2^\top \mathbf{x}_iP_{i2}'.
	\end{align*}
	Then $EQ_i=\pi a_1^\top E(\mathbf{x}_iP_{i1}')-(1-\pi) a_2^\top E(\mathbf{x}_iP_{i2}')$, $R_i=a_1^\top \mathbf{x}_iP_{i1}+a_2^\top \mathbf{x}_iP_{i2}$,    \\ $EQ_i^2=b\left(\pi a_1^\top E(\mathbf{x}_iP_{i1}')-(1-\pi) a_2^\top E(\mathbf{x}_iP_{i2}')\right)^2$ where $b$ is defined in (\ref{def:b}). Note that similar to (\ref{pf2:aeqn:1}), (\ref{pf2:aeqn:2}), we have \begin{align}
	    \label{pf2:aeqn:4}
     E(\mathbf{x}_iP_{i1})=E(\mathbf{x}_iP_{i2})=\mathbf{0}.
	\end{align} 
 By (\ref{pf2:aeqn:4}) and the fact that $Q_i$ is independent of $R_i$, we have $E(R_iQ_i)=0$. Then 
	\begin{align*}
		\sigma_0^2+\pi(1-\pi)E(R_i+Q_i)^2=&
		Var(
		\pi a_1^\top \mathbf{x}_iP_{i1}
		-(1-\pi) a_2^\top \mathbf{x}_iP_{i2}+ a_3^\top \mathbf{x}_i
		)\\
		+& \pi(1-\pi) E\left(
		a_1^\top \mathbf{x}_iP_{i1}+a_2^\top \mathbf{x}_iP_{i2}
		\right)^2 \\
		+& \pi(1-\pi) b\left(\pi a_1^\top E(\mathbf{x}_iP_{i1}')-(1-\pi) a_2^\top E(\mathbf{x}_iP_{i2}')\right)^2 .
	\end{align*}

By the Cramer-Wold device, we have
	\begin{equation}
 \label{pf2:aeqn:5}
		\sqrt{n} 	\left(\begin{matrix}
			\frac{1}{n}\sum_{i=1}^n W_i\mathbf{x}_i(f_i(1,\mni)-\mathbf{x}_i^\top\beta_1)\\
			\frac{1}{n}\sum_{i=1}^n (1-W_i)\mathbf{x}_i(f_i(0,\mni)-\mathbf{x}_i^\top\beta_0)\\
			\frac{1}{n}\sum_{i=1}^n (\mathbf{x}_i-E\mathbf{x}_i)
		\end{matrix}\right)
		\overset{d}{\to} N(0,\Sigma),
	\end{equation}
	where $\Sigma_{11}=\pi E(P_{11}^2\mathbf{x}_1\mathbf{x}_1^\top)
	+ b\pi^3(1-\pi)E(P_{11}'^2\mathbf{x}_1\mathbf{x}_1^\top)$, 
	$\Sigma_{22}=(1-\pi) E(P_{12}^2\mathbf{x}_1\mathbf{x}_1^\top)
	+ b\pi(1-\pi)^3E(P_{12}'^2\mathbf{x}_1\mathbf{x}_1^\top)$, 
	$\Sigma_{33}=\bb{M}_{xx}-(E\mathbf{x}_1)(E\mathbf{x}_1)^\top$, 
	$\Sigma_{12}=b\pi^2(1-\pi)^2E(\mathbf{x}_1P_{11}')E(\mathbf{x}_1^\top P_{12}')$, 
	$\Sigma_{13}=\pi E(P_{11}\mathbf{x}_1\mathbf{x}_1^\top)$, and 
	$\Sigma_{23}=(1-\pi) E(P_{12}\mathbf{x}_1\mathbf{x}_1^\top)$.

 \tb{Step 1(b).} We note that
  \begin{align*}
      \left(\begin{matrix}
			\hat\beta_1-\beta_1 \\
			\hat\beta_0-\beta_0\\
			\frac{1}{n}\sum_{i=1}^n (\mathbf{x}_i-E\mathbf{x}_i)
		\end{matrix}\right)
  = 
      \left(\begin{matrix}
			\bb{X}^\top\bb{W}\bb{X} & & \\
   & \bb{X}^\top(\bb{I}-\bb{W})\bb{X}&\\
   & & n\bb{I} 
		\end{matrix}\right)^{-1}
    \begin{pmatrix}
         \bb{X}^\top \bb{W}(\mathbf{Y}- \bb{X}\beta_1)\\
          \bb{X}^\top (\bb{I}-\bb{W})(\mathbf{Y}- \bb{X}\beta_0)\\
           \sum_{i=1}^n (\mathbf{x}_i-E\mathbf{x}_i)
     \end{pmatrix}.
  \end{align*}
By (\ref{pf1:eqn1}), $\bb{X}^\top\bb{W}\bb{X}/n\overset{a.s.}{\to}\pi\bb{M}_{xx}$. Similarly, $\bb{X}^\top(\bb{I}-\bb{W})\bb{X}/n\overset{a.s.}{\to}(1-\pi)\bb{M}_{xx}$. Then, applying (\ref{pf2:aeqn:5}) and Slutsky's theorem, we obtain
	\begin{equation}
		\sqrt{n} 	\left(\begin{matrix}
			\hat\beta_1-\beta_1\\
			\hat\beta_0-\beta_0\\
			\frac{1}{n}\sum_{i=1}^n (\mathbf{x}_i-E\mathbf{x}_i)
		\end{matrix}\right)
		\overset{d}{\to} N(0,\tilde\Sigma),
	\end{equation}
	where 
	$\tilde \Sigma_{11}=\frac{1}{\pi^2}\bb{M}_{xx}^{-1}\Sigma_{11}\bb{M}_{xx}^{-1}$, $\tilde \Sigma_{22}=\frac{1}{(1-\pi)^2}\bb{M}_{xx}^{-1}\Sigma_{22}\bb{M}_{xx}^{-1}$, $\tilde \Sigma_{33}=\Sigma_{33}$, $\tilde \Sigma_{12}=\frac{1}{\pi(1-\pi)}\bb{M}_{xx}^{-1}\Sigma_{12}\bb{M}_{xx}^{-1}$, $\tilde \Sigma_{13}=\frac{1}{\pi}\bb{M}_{xx}^{-1}\Sigma_{13}$, and 
	$\tilde \Sigma_{23}=\frac{1}{1-\pi}\bb{M}_{xx}^{-1}\Sigma_{23}$.

 \tb{Step 1(c).} 
 Recall (\ref{eqn:2}), (\ref{pf2:aeqn:6}), and the Slutsky's theorem, we have  $	\sqrt{n}(\hat\tau-\tau)\overset{d}{\to} N\left(0,V_{reg}\right)$, where 
	\begin{align}
		\label{eqn:24}
		V_{reg}&=(E\mathbf{x}_1)^\top \tilde\Sigma_{11}(E\mathbf{x}_1)
		+(E\mathbf{x}_1)^\top \tilde \Sigma_{22} (E\mathbf{x}_1)
		+(\beta_1-\beta_0)^\top \tilde \Sigma_{33} (\beta_1-\beta_0)\notag\\
			&- 2(E\mathbf{x}_1)^\top \tilde\Sigma_{12} (E\mathbf{x}_1)
		+2(E\mathbf{x}_1)^\top \tilde\Sigma_{13}(\beta_1-\beta_0)
		-2(E\mathbf{x}_1)^\top \tilde\Sigma_{23}(\beta_1-\beta_0)\notag\\
		&=\frac{1}{\pi}EP_{11}^2
		+\frac{1}{1-\pi}EP_{12}^2
		+(\beta_1-\beta_0)^\top (\bb{M}_{xx}-(E\mathbf{x}_1)(E\mathbf{x}_1)^\top)(\beta_1-\beta_0)\notag\\
		&+b\pi(1-\pi)(E(P_{11}'-P_{12}'))^2.
	\end{align}
	The second equality above is due to the fact that $E(\mathbf{x}_iP_{i1})=E(\mathbf{x}_iP_{i2})=\mathbf{0}$ in (\ref{pf2:aeqn:4}) and $E\mathbf{x}_1^\top \bb{M}_{xx}^{-1}=(1,0,\cdots,0)$ in (\ref{pf2:aeqn:7}).  Then \eqref{def:Vreg}  follows by noting that \eqref{eqn:24} and \eqref{def:Vreg} are identical.
	
	\textbf{Step 2 ($V_{dim}-V_{reg}\ge 0$).} In this step, we prove that \begin{equation}
 \label{pf2:aeqn:8}
		V_{dim}-V_{reg}
		=\frac{1}{\pi(1-\pi)}\left(E(\mathbf{z}_1\tilde f)-E\mathbf{z}_1E\tilde f\right)^\top 
		\left(E(\mathbf{z}_1\mathbf{z}_1^\top)\right)^{-1}
		\left(E(\mathbf{z}_1\tilde f)-E\mathbf{z}_1E\tilde f\right)
		\ge 0,
	\end{equation}
 where $\tilde f=(1-\pi)f_1(1,\pi)+\pi f_1(0,\pi)$. 
 
 We recall that $V_{dim}$ is the asymptotic variance when $\mathbf{x}_i=1$.  
 By (\ref{eqn:24}) and noticing that the term $b\pi(1-\pi)(E(P_{11}'-P_{12}'))^2$ in (\ref{eqn:24}) remains unchanged in $V_{reg}$ and $V_{dim}$, we derive that 
\begin{align}\label{pf2:aeqn:11}
    V_{dim}-V_{reg}&=\frac{1}{\pi}\left[
    E(f_1(1,\pi)-Ef_1(1,\pi))^2-
    E\left(f_1(1,\pi)-\mathbf{x}_i^\top\beta_1 \right)^2
    \right]\notag\\
    &
    +\frac{1}{1-\pi}\left[
    E(f_1(0,\pi)-Ef_1(0,\pi))^2-
    E\left(f_1(0,\pi)-\mathbf{x}_i^\top\beta_0\right)^2
    \right]\notag\\
    &
    -(\beta_1-\beta_0)^\top(\bb{M}_{xx}-E\mathbf{x}_1(E\mathbf{x}_1)^\top)(\beta_1-\beta_0).
\end{align}

Since $\mathbf{x}_1=(1,\mathbf{z}_i^\top)^\top$, and $E(\mathbf{x}_iP_{i1})=\mathbf{0}$ in (\ref{pf2:aeqn:4}), we have 
	\begin{equation*}
		E\left((f_i(1,\pi)-\mathbf{x}_i^\top \beta_1)(\mathbf{x}_i^\top\beta_1-Ef_i(1,\pi))\right)=E(P_{i1}\mathbf{x}_i^\top )\beta_1-Ef_1(1,\pi)E(P_{i1})=0.
	\end{equation*}
	Therefore, 
	\begin{align}
 \label{pf2:aeqn:9}
		E\left(f_i(1,\pi)-Ef_i(1,\pi)\right)^2&=E\left(f_i(1,\pi)-\mathbf{x}_i^\top\beta_1\right)^2+E\left(\mathbf{x}_i^\top\beta_1-Ef_i(1,\pi)\right)^2.
	\end{align}
	Similarly, 
 \begin{align}
 \label{pf2:aeqn:10}
		E\left(f_i(0,\pi)-Ef_i(0,\pi)\right)^2&=E\left(f_i(0,\pi)-\mathbf{x}_i^\top\beta_0\right)^2+E\left(\mathbf{x}_i^\top\beta_0-Ef_i(0,\pi)\right)^2.
  \end{align}
Moreover, by the definitions of $\beta_1$ and $\beta_0$, we have \begin{align}
    \label{pf2:aeqn:12}
    E\left(\mathbf{x}_i^\top\beta_1-Ef_i(1,\pi)\right)=
    E\left(\mathbf{x}_i^\top\beta_0-Ef_i(0,\pi)\right)=0.
\end{align}

By (\ref{pf2:aeqn:9}), (\ref{pf2:aeqn:10}), and (\ref{pf2:aeqn:12}), equation (\ref{pf2:aeqn:11}) reduces to 
	\begin{align}
		\label{eqn:pf1}
		V_{dim}-V_{reg}=&\frac{1}{\pi}E\left(\mathbf{x}_i^\top\beta_1-Ef_i(1,\pi)\right)^2+\frac{1}{1-\pi}E\left(\mathbf{x}_i^\top\beta_0-Ef_i(0,\pi)\right)^2\notag\\
		&-(\beta_1-\beta_0)^\top(\bb{M}_{xx}-E\mathbf{x}_1(E\mathbf{x}_1)^\top)(\beta_1-\beta_0)\notag\\
		=&\frac{1}{\pi}\left(\beta_1^\top \bb{M}_{xx}\beta_1-(Ef_1(1,\pi))^2\right)
		+\frac{1}{1-\pi}\left(\beta_0^\top \bb{M}_{xx}\beta_0-(Ef_i(0,\pi))^2\right)\notag\\&
		-\beta_1^\top \bb{M}_{xx}\beta_1
		-\beta_0^\top \bb{M}_{xx}\beta_0
		+2\beta_1^\top \bb{M}_{xx}\beta_0
		+(E(f_1(1,\pi)-f_1(0,\pi)))^2\notag\\
		=&\frac{1-\pi}{\pi}\beta_1^\top \bb{M}_{xx}\beta_1+
		\frac{\pi}{1-\pi}\beta_0^\top \bb{M}_{xx}\beta_0+
		2\beta_1^\top \bb{M}_{xx}\beta_0\notag\\
		&-\frac{1-\pi}{\pi}(Ef_1(1,\pi))^2
		-\frac{\pi}{1-\pi}(Ef_1(0,\pi))^2
		-2(Ef_1(1,\pi))(Ef_1(0,\pi))\notag\\
		=&\frac{1}{\pi(1-\pi)}\left[((1-\pi)\beta_1+\pi\beta_0)^\top \bb{M}_{xx}((1-\pi)\beta_1+\pi\beta_0)\right.\notag\\
		&\left.	-((1-\pi)Ef_1(1,\pi)+\pi Ef_1(0,\pi))^2\right]\notag\\
		=&\frac{1}{\pi(1-\pi)}\left[E(\mathbf{x}_1\tilde f)^\top \bb{M}_{xx}^{-1}E(\mathbf{x}_1\tilde f)-(E\tilde f)^2\right],
	\end{align}
	where $\tilde f=(1-\pi)f_1(1,\pi)+\pi f_1(0,\pi)$. Recall that $\mathbf{x}_1=(1,\mathbf{z}_1^\top)^\top$, and that $\bb{M}_{xx}=\left(\begin{matrix}
		1 & E\mathbf{z}_1^\top\\
		E\mathbf{z}_1 &  E(\mathbf{z}_1 \mathbf{z}_1^\top)
	\end{matrix}\right)$, with its inverse given by 
	\begin{equation*}
		\bb{M}_{xx}^{-1}=\left(\begin{matrix}
			1+E\mathbf{z}_1^\top\left(E(\mathbf{z}_1\mathbf{z}_1^\top)\right)^{-1}E\mathbf{z}_1 & -E\mathbf{z}_1^\top \left(E(\mathbf{z}_1\mathbf{z}_1^\top)\right)^{-1}\\
			-\left(E(\mathbf{z}_1\mathbf{z}_1^\top)\right)^{-1}E\mathbf{z}_1 &  \left(E(\mathbf{z}_1\mathbf{z}_1^\top)\right)^{-1}
		\end{matrix}\right).
	\end{equation*}
	It follows that
	\begin{align*}
		&E(\mathbf{x}_1\tilde f)^\top \bb{M}_{xx}^{-1}E(\mathbf{x}_1\tilde f)-(E\tilde f)^2\\
		&=(E\tilde f)^2 E\mathbf{z}_1^\top (E(\mathbf{z}_1\mathbf{z}_1^\top))^{-1} E\mathbf{z}_1+ E(\mathbf{z}_1^\top\tilde f)(E(\mathbf{z}_1\mathbf{z}_1^\top))^{-1} E(\mathbf{z}_1\tilde f)-2E\tilde f E\mathbf{z}_1^\top 
		(E(\mathbf{z}_1\mathbf{z}_1^\top))^{-1}
		E(\mathbf{z}_1\tilde f)\\
		&=\left(E(\mathbf{z}_1\tilde f)-E\mathbf{z}_1E\tilde f\right)^\top
		(E(\mathbf{z}_1\mathbf{z}_1^\top))^{-1}
		\left(E(\mathbf{z}_1\tilde f)-E\mathbf{z}_1E\tilde f\right)\ge 0.
	\end{align*}
	Thus, (\ref{pf2:aeqn:8}) follows. 
\end{proof}

\begin{proof}[Proof of Theorem~\ref{th:4}]
The outline of this proof is similar to that of Theorem~\ref{th:main}. Let $n_1=\sum_{i=1}^n W_i, n_0=\sum_{i=1}^n (1-W_i)$. Recall the definition of \(\hat{\tau}(\alpha_1, \alpha_0)\) in (\ref{def:class of regression adjustment}). In \tb{Step 1}, we establish the asymptotic normality of \(\hat{\tau}(\alpha_1, \alpha_0)\). In \tb{Step 2}, we clarify the distinction between the asymptotic variance \(\tilde{V}(\alpha_1, \alpha_0)\) of \(\hat{\tau}(\alpha_1, \alpha_0)\) and \(V_{reg}\), as defined in Theorem~\ref{th:main}.

Let \begin{equation*}
	\tilde{\tau}(\alpha_1, \alpha_0) = \sum_{i=1}^n \left( \frac{W_i(Y_i - \alpha_1^\top (\mathbf{x}_i - \bar{x}))}{\sum_{j=1}^n W_j} - \frac{(1-W_i)(Y_i - \alpha_0^\top (\mathbf{x}_i - \bar{x}))}{\sum_{j=1}^n (1-W_j)} \right).
\end{equation*}
Note that $\tilde \tau(\alpha_1,\alpha_0)=\hat\tau(\alpha_1^{(-1)},\alpha_0^{(-1)})$. Then Theorem~\ref{th:3} reduces to:

\textit{For any fixed $\alpha_1,\alpha_0$, $
	\sqrt{n}(\tilde{\tau}(\alpha_1, \alpha_0) - \tau) \overset{d}{\to} N\left(0, \tilde{V}(\alpha_1, \alpha_0)\right),
$
	where 
	\[
		\tilde{V}(\alpha_1, \alpha_0) - V_{reg} = \frac{1}{\pi(1-\pi)} u(\alpha_1, \alpha_0)^\top \left( \bb{M}_{xx} - E\mathbf{x}_1 E\mathbf{x}_1^\top \right) 
		u(\alpha_1, \alpha_0) \ge  0,\]
		and \(u(\alpha_1, \alpha_0) = (1 - \pi)(\alpha_1  - \beta_1 ) + \pi(\alpha_0 - \beta_0 )\).
	}

\textbf{Step 1 (asymptotic normality of $\tilde \tau(\alpha_1,\alpha_0)$).} 

Notice that
 \begin{align*}
		\sqrt{n}(\tilde\tau(\alpha_1,\alpha_0)-\tau)=\sqrt{n}\left(\begin{matrix}
			\frac{n}{n_1}  & -\frac{n}{n_0} & 
			(\alpha_1-\alpha_0)^\top
		\end{matrix}\right)
		\left(\begin{matrix}
		\frac{1}{n}\sum_{i=1}^n W_i(Y_i-(\mathbf{x}_i^\top-E\mathbf{x}_1^\top)\alpha_1-Ef_i(1,\pi)) \\
			\frac{1}{n}\sum_{i=1}^n (1-W_i)(Y_i-(\mathbf{x}_i^\top-E\mathbf{x}_1^\top)\alpha_0-Ef_i(0,\pi)) \\
			\frac{1}{n}\sum_{i=1}^n (\mathbf{x}_i-E\mathbf{x}_1)
		\end{matrix}\right).
	\end{align*}
	We use the Cramer-Wold device to give the proof. For arbitrary but fixed real numbers $a_1,a_2$ and vector $a_3$, we first consider the term 
	\begin{align*}
		J=
		\left(\begin{matrix}
			a_1  & -a_2  & 
			a_3^\top
		\end{matrix}\right)
		\left(\begin{matrix}
		\frac{1}{n}\sum_{i=1}^n W_i(Y_i-(\mathbf{x}_i^\top-E\mathbf{x}_1^\top)\alpha_1-Ef_i(1,\pi)) \\
		\frac{1}{n}\sum_{i=1}^n (1-W_i)(Y_i-(\mathbf{x}_i^\top-E\mathbf{x}_1^\top)\alpha_0-Ef_i(0,\pi)) \\
		\frac{1}{n}\sum_{i=1}^n (\mathbf{x}_i-E\mathbf{x}_1)
	\end{matrix}\right).
	\end{align*}
 Note that $Y_i=f_i(W_i,M_i/N_i)$. 
	Straightforward calculations show that \begin{align*}
		J=\frac{1}{n}\sum_{i=1}^n \left[
		\frac{W_i}{\pi}q_i\left(1,\mni\right)-\frac{1-W_i}{1-\pi}q_i\left(0,\mni\right)-a_3^\top E\mathbf{x}_1
		\right],
	\end{align*}
	where
	\begin{align*}
		q_i\left(1,t\right)&=
		\pi a_1 (f_i(1,t)-Ef_i(1,\pi)-(\mathbf{x}_i^\top -E\mathbf{x}_i^\top )\alpha_1)+\pi a_3^\top \mathbf{x}_i,\\
		q_i\left(0,t\right)&=
		(1-\pi) a_2 (f_i(0,t)-Ef_i(0,\pi)-(\mathbf{x}_i^\top -E\mathbf{x}_i^\top )\alpha_0)-(1-\pi) a_3^\top \mathbf{x}_i.
	\end{align*}
	Using arguments similar to those in (\ref{pf2:aeqn:3}), we obtain that
	\begin{align*}
		\frac{1}{n}\sum_{i=1}^n E\left[
		q_i(1,\mni)-q_i(0,\mni)
		\right]=a_3^\top E\mathbf{x}_i+o(1/\sqrt{n}). 
	\end{align*}
 Then similar to (\ref{eqn:23}), we have  
\begin{align*}
	\sqrt{n}J\overset{d}{\to} N(0,\sigma_0^2+\pi(1-\pi)E(R_i+Q_i)^2),
\end{align*}
where 
\begin{align*}
	\sigma_0^2&=Var(q_i(1,\pi)-q_i(0,\pi)),\\
	R_i&=q_i(1,\pi)/\pi+q_i(0,\pi)/(1-\pi),\\
	Q_i&=E\left(\frac{h(U_i,U_j)}{E(h(U_i,U_j)|U_j)}\bigg|U_i
		\right)
  E\left(\frac{\partial}{\partial t}q_i(1,t)\bigg|_{t=\pi}
   -\frac{\partial}{\partial t}q_i(0,t)\bigg|_{t=\pi}\right).
\end{align*} 
Then by the Cramer-Wold device, 
\begin{align*}
	\sqrt{n}\left(\begin{matrix}
		\frac{1}{n}\sum_{i=1}^n W_i(Y_i-(\mathbf{x}_i^\top-E\mathbf{x}_1^\top)\alpha_1-Ef_i(1,\pi)) \\
		\frac{1}{n}\sum_{i=1}^n (1-W_i)(Y_i-(\mathbf{x}_i^\top-E\mathbf{x}_1^\top)\alpha_0-Ef_i(0,\pi)) \\
		\frac{1}{n}\sum_{i=1}^n (\mathbf{x}_i-E\mathbf{x}_1)
	\end{matrix}\right)
\end{align*} is asymptotically normal. As a result, $	\sqrt{n}(\tilde\tau(\alpha_1,\alpha_0)-\tau)$ is also asymptotically normal. By carefully deriving the asymptotic variance $\tilde V(\alpha_1,\alpha_0)$ which is similar to the derivation from (\ref{eqn:23}) to (\ref{eqn:24}), we have that
\begin{align*} 
	\tilde V(\alpha_1,\alpha_0)=&\frac{1}{\pi} E\left(f_i(1,\pi)-Ef_i(1,\pi)-(\mathbf{x}_i^\top-E\mathbf{x}_i^\top)\alpha_1\right)^2\\
	+&\frac{1}{1-\pi} E\left(f_i(0,\pi)-Ef_i(0,\pi)-(\mathbf{x}_i^\top-E\mathbf{x}_i^\top)\alpha_0\right)^2\\
	+&(\alpha_1-\alpha_0)^\top(\bb{M}_{xx}-(E\mathbf{x}_1)(E\mathbf{x}_1)^\top)(\alpha_1-\alpha_0)
	\\+&b\pi(1-\pi)(E(f_i'(1,\pi)-f_i'(0,\pi)))^2
	\\+&2(\alpha_1-\alpha_0)^\top E(\mathbf{x}_i(
	(f_i(1,\pi)-Ef_i(1,\pi)-(\mathbf{x}_i^\top-E\mathbf{x}_i^\top)\alpha_1)-\\
	&(f_i(0,\pi)-Ef_i(0,\pi)-(\mathbf{x}_i^\top-E\mathbf{x}_i^\top)\alpha_0)
	)),
\end{align*}
with $b$ defined in (\ref{def:b}). 

\textbf{Step 2 ($\tilde V(\alpha_1,\alpha_0)-V_{reg}\ge 0$).} In this step, we prove (\ref{eqn:th3 2}).  
By (\ref{pf2:aeqn:4}), 
\begin{align}\label{pf3:aeqn:3}
    E\left\{[f_i(0,\pi)- \mathbf{x}_i^\top \beta_0]x_i^\top (\beta_0-\alpha_0)\right\}=0,\notag\\
     E\left\{[f_i(1,\pi)- \mathbf{x}_i^\top \beta_1]x_i^\top (\beta_1-\alpha_1)\right\}=0.
\end{align}
Consequently,  we have
\begin{align*}
	&E\left(f_i(0,\pi)-Ef_i(0,\pi)-(\mathbf{x}_i^\top-E\mathbf{x}_i^\top)\alpha_0\right)^2=E\left(f_i(0,\pi)-\mathbf{x}_i^\top\alpha_0\right)^2
-(Ef_i(0,\pi)-E\mathbf{x}_i^\top\alpha_0)^2\\
&=E\left(f_i(0,\pi)-\mathbf{x}_i^\top\beta_0\right)^2+
E\left(\mathbf{x}_i^\top\beta_0-\mathbf{x}_i^\top\alpha_0\right)^2
-(Ef_i(0,\pi)-E\mathbf{x}_i^\top\alpha_0)^2,\\
&E\left(f_i(1,\pi)-Ef_i(1,\pi)-(\mathbf{x}_i^\top-E\mathbf{x}_i^\top)\alpha_1\right)^2=E\left(f_i(1,\pi)-\mathbf{x}_i^\top\alpha_1\right)^2-(Ef_i(1,\pi)-E\mathbf{x}_i^\top\alpha_1)^2\\&=E\left(f_i(1,\pi)-\mathbf{x}_i^\top\beta_1\right)^2+
E\left(\mathbf{x}_i^\top\beta_1-\mathbf{x}_i^\top\alpha_1\right)^2
-(Ef_i(1,\pi)-E\mathbf{x}_i^\top\alpha_1)^2.
\end{align*}
Therefore,
\begin{align}\label{pf3:aeqn:5}
  \tilde V(\alpha_1,\alpha_0)-V_{reg}=&
\frac{1}{\pi} E\left(\mathbf{x}_i^\top\beta_1-\mathbf{x}_i^\top\alpha_1\right)^2
-\frac{1}{\pi} (Ef_i(1,\pi)-E\mathbf{x}_i^\top\alpha_1)^2\notag\\
+& \frac{1}{1-\pi} E\left(\mathbf{x}_i^\top\beta_0-\mathbf{x}_i^\top\alpha_0\right)^2
-\frac{1}{1-\pi} (Ef_i(0,\pi)-E\mathbf{x}_i^\top\alpha_0)^2\notag\\
+& (\alpha_1-\alpha_0)^\top(\bb{M}_{xx}-(E\mathbf{x}_1)(E\mathbf{x}_1)^\top)(\alpha_1-\alpha_0)\notag\\
-& (\beta_1-\beta_0)^\top(\bb{M}_{xx}-(E\mathbf{x}_1)(E\mathbf{x}_1)^\top)(\beta_1-\beta_0)\notag\\
+& 2(\alpha_1-\alpha_0)^\top E(\mathbf{x}_i(
(f_i(1,\pi)-Ef_i(1,\pi)-(\mathbf{x}_i^\top-E\mathbf{x}_i^\top)\alpha_1)-\notag\\
&(f_i(0,\pi)-Ef_i(0,\pi)-(\mathbf{x}_i^\top-E\mathbf{x}_i^\top)\alpha_0)
)).
\end{align}
By noticing that  $Ef_i(1,\pi)=E\mathbf{x}_1^\top \beta_1, Ef_i(0,\pi)=E\mathbf{x}_1^\top \beta_0$, we have
\begin{align}
\label{pf3:aeqn:6}
E\left(\mathbf{x}_i^\top\beta_1-\mathbf{x}_i^\top\alpha_1\right)^2
-  (Ef_i(1,\pi)-E\mathbf{x}_i^\top\alpha_1)^2 =&var(\mathbf{x}_i^\top (\beta_1-\alpha_1)),\notag\\
E\left(\mathbf{x}_i^\top\beta_0-\mathbf{x}_i^\top\alpha_0\right)^2
-  (Ef_i(0,\pi)-E\mathbf{x}_i^\top\alpha_0)^2 =&var(\mathbf{x}_i^\top (\beta_0-\alpha_0)).
\end{align}

Moreover, by the definitions of $\beta_1, \beta_0$ and the fact   that  $Ef_i(1,\pi)=E\mathbf{x}_1^\top \beta_1, Ef_i(0,\pi)=E\mathbf{x}_1^\top \beta_0$, we derive that
\begin{align}
    \label{pf3:aeqn:7}
   E(\mathbf{x}_i
(f_i(1,\pi)-Ef_i(1,\pi)-(\mathbf{x}_i^\top-E\mathbf{x}_i^\top)\alpha_1))&=
\bb{M}_{xx}\left[
\beta_1-\alpha_1-\bb{M}_{xx}^{-1}E\mathbf{x}_iE\mathbf{x}_i^\top(\beta_1- \alpha_1)
\right]\notag\\
&=\bb{M}_{xx}(
\beta_1-\alpha_1)- E\mathbf{x}_iE\mathbf{x}_i^\top(\beta_1- \alpha_1), \notag\\
 E(\mathbf{x}_i
(f_i(0,\pi)-Ef_i(0,\pi)-(\mathbf{x}_i^\top-E\mathbf{x}_i^\top)\alpha_0))&= \bb{M}_{xx}(
\beta_0-\alpha_0)- E\mathbf{x}_iE\mathbf{x}_i^\top(\beta_0- \alpha_0).
\end{align}
By (\ref{pf3:aeqn:6}) and (\ref{pf3:aeqn:7}), we get that (\ref{pf3:aeqn:5}) simplifies to
\begin{align*}
	 \tilde V(\alpha_1,\alpha_0)-V_{reg}
	 =\frac{1}{\pi}&var(\mathbf{x}_i^\top (\beta_1-\alpha_1))+
	 \frac{1}{1-\pi}var(\mathbf{x}_i^\top (\beta_0-\alpha_0))\\
	  +& (\alpha_0-\alpha_1)^\top \bb{M}_{xx} (\alpha_0-\alpha_1)
	  -(\alpha_0-\alpha_1)^\top E\mathbf{x}_1E\mathbf{x}_1 ^\top  (\alpha_0-\alpha_1)\\
	  -& (\beta_0-\beta_1)^\top \bb{M}_{xx} (\beta_0-\beta_1)
	  +(\beta_0-\beta_1)^\top E\mathbf{x}_1E\mathbf{x}_1 ^\top  (\beta_0-\beta_1)\\
	  -& 2(\alpha_0-\alpha_1)^\top \bb{M}_{xx} (\beta_1-\alpha_1)
	  +2(\alpha_0-\alpha_1)^\top E\mathbf{x}_1E\mathbf{x}_1 ^\top  (\beta_1-\alpha_1)\\
	  +& 2(\alpha_0-\alpha_1)^\top \bb{M}_{xx} (\beta_0-\alpha_0)
	  -2(\alpha_0-\alpha_1)^\top E\mathbf{x}_1E\mathbf{x}_1 ^\top  (\beta_0-\alpha_0)\\
	  = \frac{1}{\pi}&(\beta_1-\alpha_1)^\top (\bb{M}_{xx}-E\mathbf{x}_1E\mathbf{x}_1 ^\top ) (\beta_1-\alpha_1)\\
	  +& \frac{1}{1-\pi}(\beta_0-\alpha_0)^\top (\bb{M}_{xx}-E\mathbf{x}_1E\mathbf{x}_1 ^\top ) (\beta_0-\alpha_0)\\
	  -&(\beta_1-\alpha_1-\beta_0+\alpha_0)^\top (\bb{M}_{xx}-E\mathbf{x}_1E\mathbf{x}_1 ^\top ) (\beta_1-\alpha_1-\beta_0+\alpha_0)\\
	  =  	\frac{1}{\pi}&\frac{1}{1-\pi}
	 ((1-\pi)(\alpha_1-\beta_1)+\pi(\alpha_0-\beta_0))^\top
	  \left(\bb{M}_{xx}-E\mathbf{x}_1E\mathbf{x}_1^\top\right)\\&
	 ((1-\pi)(\alpha_1-\beta_1)+\pi(\alpha_0-\beta_0))\ge 0.
\end{align*}
Then the proof is complete.
\end{proof}

\begin{proof}[Proof of Theorem \ref{th:3}]

By Theorem~\ref{th: Consistency of hat tau} we have that $\hat\beta_1\overset{p}{\to}\beta_1$ and $\hat\beta_0\overset{p}{\to}\beta_0$. By Lemma~\ref{lem:1}, we have  $\hat b\overset{p}{\to} b$. By Lemma~\ref{lem:2}, we obtain that 
	\begin{align*}
		\frac{\sum_{i=1}^n (1-W_i)(Y_i-\mathbf{x}_i^\top \hat\beta_0)^2}{\sum_{i=1}^n (1-W_i)}&\overset{p}{\to} E(f_1(0,\pi)-\mathbf{x}_1^\top\beta_0)^2,\\
			\frac{\sum_{i=1}^n W_i(Y_i-\mathbf{x}_i^\top \hat\beta_1)^2}{\sum_{i=1}^n W_i}&\overset{p}{\to} E(f_1(1,\pi)-\mathbf{x}_1^\top\beta_1)^2.
	\end{align*}
By Lemma~\ref{lem:3}, we have that $	\hat{E}f_1^{(1)}(1,\pi)\overset{p}{\to} Ef_1^{(1)}(1,\pi)$ and $	\hat{E}f_1^{(1)}(0,\pi)\overset{p}{\to} Ef_1^{(1)}(0,\pi)$. Moreover, it's obvious that $
   \sum_{i=1}^n \mathbf{x}_i/n\overset{p}{\to}E\mathbf{x}_1.$ 
Then the consistency of $\hat V_{reg}$ in (\ref{eqn:4}) follows by combining all the above results.
\end{proof}

\begin{proof}[Proof of Theorem \ref{th:6}]
The proof can be obtained by following the steps in proof of Theorem \ref{th:4}. We omit it as there is no further difficulty.
\end{proof}

\begin{proof}[Proof of Theorem \ref{th:5}]

Recall that 
	\begin{align*} 
		\hat\tau_{np}=&\frac{1}{n}\sum_{i=1}^n \left(
		\frac{\frac{1}{n   \tilde h^p}\sum_{j=1}^n K\left(\frac{\mathbf{z}_i-\mathbf{z}_j}{\tilde h}\right)Y_j\frac{W_j}{\pi}}{\hat p_1(\mathbf{z}_i)
		}-
		\frac{\frac{1}{n\tilde   h^p}\sum_{j=1}^n K\left(\frac{\mathbf{z}_i-\mathbf{z}_j}{\tilde h}\right)Y_j\frac{1-W_j}{1-\pi}}{
			\hat p_2(\mathbf{z}_i)
		}
		\right)\\&
		I\left(	\tilde  p_1(\mathbf{z}_i)>\tilde b, \tilde p_2(\mathbf{z}_i)>\tilde b,\hat p(\mathbf{z}_i)>1.01\tilde b\right),
\end{align*}
	where
	\begin{align*}
		\hat p(\mathbf{z})&= \frac{1}{n   \tilde h^p}\sum_{j=1}^n	K\left(\frac{\mathbf{z}-\mathbf{z}_j}{\tilde h}\right),\\
		\hat p_1(\mathbf{z})&= \frac{1}{n   \tilde h^p\pi}\sum_{j=1}^n	K\left(\frac{\mathbf{z}-\mathbf{z}_j}{\tilde h}\right)W_j, 	
		\tilde p_1(\mathbf{z})= \frac{1}{n \tilde h^p\hat\pi}\sum_{j=1}^n	K\left(\frac{\mathbf{z}-\mathbf{z}_j}{\tilde h}\right)W_j,\\
		\hat p_2(\mathbf{z})&= \frac{1}{n \tilde h^p(1-\pi)}\sum_{j=1}^n	K\left(\frac{\mathbf{z}-\mathbf{z}_j}{\tilde h}\right)(1-W_j), 
			\tilde p_2(\mathbf{z})= \frac{1}{n \tilde h^p(1-\hat\pi)}\sum_{j=1}^n	K\left(\frac{\mathbf{z}-\mathbf{z}_j}{\tilde h}\right)(1-W_j). 
	\end{align*}
	
	We provide a sketch of the proof. In \tb{Step 1}, we define a related value $\bar{\delta}$ in (\ref{pf6:eqn1}), which differs from $\hat\tau_{np}$ in terms of the indicator function. Intuitively, \(\bar{\delta}\) is close to \(\hat{\tau}_{np}\), since all \(\tilde{p}_1(\mathbf{z}), \tilde{p}_2(\mathbf{z}), \hat{p}(\mathbf{z})\) estimate \(p(\mathbf{z})\) in the sense of Lemma~\ref{lem4th6:1}. We prove in this step that
	\begin{align*}
		\sqrt{n}(\bar{\delta}-E\bar{\delta})&=
		\frac{1}{\sqrt{n}}\sum_{i=1}^n\left(
		f_i(W_i,\pi)\left(\frac{W_i}{\pi}-\frac{1-W_i}{1-\pi}\right)+
		E(f_i(1,\pi)-f_i(0,\pi)|\mathbf{z}_i)
		\right.\\&-\left.
		\left(
		E(f_i(1,\pi)|\mathbf{z}_i)
		\frac{W_i}{\pi}-
		E(f_i(0,\pi)|\mathbf{z}_i)\frac{1-W_i}{1-\pi}
		\right)-\tau
		\right.\\&+\left.
		(W_i-\pi)
		\sum_{j:j\neq i}\frac{E_{ij}}{N_j}
		E\left(
		f_j^{(1)}\left(1,\pi\right)-	f_j^{(1)}\left(0,\pi\right)
		\right)
		\right). 
	\end{align*}
	 In \tb{Step 2}, we prove that $E\bar{\delta}=\tau+o(n^{-\frac{1}{2}})$.  In \tb{Step 3}, we show that $\hat \tau_{np}-\bar{\delta}=o_p(n^{-\frac{1}{2}})$. Finally, in \tb{Step 4}, we establish the asymptotic normality of $	\sqrt{n}(\bar{\delta}-E\bar{\delta})$. Then the proof is complete by combining these four steps. 
	
	\tb{Step 1.}  
	Let \begin{align}
		\label{pf6:eqn1}
		\bar \delta=\frac{1}{n}\sum_{i=1}^n \left(
		\frac{\frac{1}{n \tilde h^p}\sum_{j=1}^n K\left(\frac{\mathbf{z}_i-\mathbf{z}_j}{\tilde h}\right)Y_j\left(\frac{W_j}{\pi}\right)}{\hat p_1(\mathbf{z}_i)
		}-
		\frac{\frac{1}{n \tilde h^p}\sum_{j=1}^n K\left(\frac{\mathbf{z}_i-\mathbf{z}_j}{\tilde h}\right)Y_j\left(\frac{1-W_j}{1-\pi}\right)}{
			\hat p_2(\mathbf{z}_i)
		}
		\right)
		I\left(	p(\mathbf{z}_i)>\tilde b\right).
	\end{align}
	Notice that \[\frac{1}{\hat p_k(\mathbf{z}_i)}=\frac{2}{p(\mathbf{z}_i)}-\frac{\hat p_k(\mathbf{z}_i)}{p^2(\mathbf{z}_i)}+\frac{(\hat p_k(\mathbf{z}_i)-p(\mathbf{z}_i))^2}{\hat p_k(\mathbf{z}_i)p^2(\mathbf{z}_i)}\] for $k=1,2$. Accordingly, we decompose $\bar \delta$ as:
	\begin{align}
		\label{eqn:21}
		\bar\delta=\delta_1-\delta_2+\delta_3,
	\end{align}
	where
	\begin{align}
		\label{pf6:def:delta1}
		\delta_1&=\frac{2}{n^2 \tilde h^p}\sum_{i=1}^n\sum_{j=1}^n \frac{1}{p(\mathbf{z}_i)} K\left(\frac{\mathbf{z}_i-\mathbf{z}_j}{\tilde h}\right)Y_j\left(\frac{W_j}{\pi}-\frac{1-W_j}{1-\pi}\right)	I\left(	p(\mathbf{z}_i)>\tilde b\right),\\
			\label{pf6:def:delta2}
		\delta_2&=\frac{1}{n^2 \tilde h^p}\sum_{i=1}^n\sum_{j=1}^n \frac{1}{p^2(\mathbf{z}_i)} K\left(\frac{\mathbf{z}_i-\mathbf{z}_j}{\tilde h}\right)Y_j\left(\hat p_1(\mathbf{z}_i)\frac{W_j}{\pi}-\hat p_2(\mathbf{z}_i)\frac{1-W_j}{1-\pi}\right)	I\left(	p(\mathbf{z}_i)>\tilde b\right),\\
			\label{pf6:def:delta3}
		\delta_3&=\frac{1}{n^2 \tilde h^p}\sum_{i=1}^n \left(
		\frac{(\hat p_1(\mathbf{z}_i)-p(\mathbf{z}_i))^2}{\hat p_1(\mathbf{z}_i)p^2(\mathbf{z}_i)}
		\sum_{j=1}^n K\left(\frac{\mathbf{z}_i-\mathbf{z}_j}{\tilde h}\right)Y_j\frac{W_j}{\pi}\right.\notag\\ 
		&\left.-
		\frac{(\hat p_2(\mathbf{z}_i)-p(\mathbf{z}_i))^2}{\hat p_2(\mathbf{z}_i)p^2(\mathbf{z}_i)}\sum_{j=1}^n K\left(\frac{\mathbf{z}_i-\mathbf{z}_j}{\tilde h}\right)Y_j\frac{1-W_j}{1-\pi}
		\right)
		I\left(	p(\mathbf{z}_i)>\tilde b\right).
	\end{align}
	
	By Lemma~\ref{lem4th6:2}, we have that 
	\begin{align}
		\label{pf6:delta1}
		\sqrt{n}(\delta_1-E\delta_1)&=\frac{2}{\sqrt{n}}\sum_{i=1}^n\left(
	f_i\left(W_i,\pi\right)\left(\frac{W_i}{\pi}-\frac{1-W_i}{1-\pi}\right)+E\left(f_i\left(W_i,\pi\right)\left(\frac{W_i}{\pi}-\frac{1-W_i}{1-\pi}\right)\bigg|\mathbf{z}_i\right)\right.\notag\\
	&\left.-2\tau+
	(W_i-\pi)
	\sum_{j:j\neq i}\frac{E_{ij}}{N_j}
	E\left(
	f_j^{(1)}\left(W_j,\pi\right)\left(\frac{W_j}{\pi}-\frac{1-W_j}{1-\pi}\right)
	\right)
	\right)+
	o_p(1).
\end{align}
By Lemma~\ref{lem4th6:3}, we have
\begin{align}
	\label{pf6:delta2}
 \sqrt{n}(\delta_{2}-E\delta_{2})
		=&
		\frac{1}{\sqrt{n}}\sum_{i=1}^n \left(	E(f_i(1,\pi)-f_i(0,\pi)|\mathbf{z}_i)+
		f_i(W_i,\pi)\left(\frac{W_i}{\pi}-\frac{1-W_i}{1-\pi}\right)	\right.\notag\\+&\left.
		E(f_i(1,\pi)|\mathbf{z}_i)
		\frac{W_i}{\pi}-	E(f_i(0,\pi)|\mathbf{z}_i)\frac{1-W_i}{1-\pi}-3\tau
		\right.\notag\\+&\left.
		(W_i-\pi)
		\sum_{j:j\neq i}\frac{E_{ij}}{N_j}
		E\left(
		f_j^{(1)}\left(W_j,\pi\right)
		\left(\frac{W_j}{\pi}-\frac{(1-W_j)}{(1-\pi)}\right)
		\right)
		\right)+o_p(1).
\end{align}
By Lemma~\ref{lem4th6:4}, we have that \begin{align}
	\label{pf6:delta3}
	\delta_3=o_p(n^{-\frac{1}{2}}).
\end{align}
Then by combining (\ref{pf6:delta1}), (\ref{pf6:delta2}) and (\ref{pf6:delta3}), we have
\begin{align}
	\label{eqn:251}
		\sqrt{n}(\bar{\delta}-E\bar{\delta})&=
		\frac{1}{\sqrt{n}}\sum_{i=1}^n\left(
		f_i(W_i,\pi)\left(\frac{W_i}{\pi}-\frac{1-W_i}{1-\pi}\right)+
		E(f_i(1,\pi)-f_i(0,\pi)|\mathbf{z}_i)
		\right.\notag\\&-\left.
		\left(
		E(f_i(1,\pi)|\mathbf{z}_i)
		\frac{W_i}{\pi}-
		E(f_i(0,\pi)|\mathbf{z}_i)\frac{1-W_i}{1-\pi}
		\right)-\tau
		\right.\notag\\&+\left.
		(W_i-\pi)
		\sum_{j:j\neq i}\frac{E_{ij}}{N_j}
		E\left(
		f_j^{(1)}\left(1,\pi\right)-	f_j^{(1)}\left(0,\pi\right)
		\right)
		\right)+o_p(1). 
\end{align}

\tb{Step 2.} In this step, we show that   $E\bar{\delta}=\tau+o(n^{-\frac{1}{2}})$. Recall the decomposition that $	\bar\delta=\delta_1-\delta_2+\delta_3$ in (\ref{eqn:21}). Furthermore, in the proof of  Lemma~\ref{lem4th6:4}, we show that $E\bar{\delta_3}=o(n^{-\frac{1}{2}})$. Therefore, it suffices to show that $E(\delta_1-\delta_2)=\tau+o(n^{-\frac{1}{2}})$. 

For $\delta_1$, we decompose it as $\delta_1=\delta_{11}+\delta_{12}+\delta_{13}$ in the proof of  Lemma~\ref{lem4th6:2}. Moreover, we show that $E\delta_{13}=o(n^{-\frac{1}{2}})$  in the proof of  Lemma~\ref{lem4th6:2}.  {We prove  that $E\delta_{11}=2\tau+o(n^{-\frac{1}{2}})$ in Lemma~\ref{lem:new1}, and $E\delta_{12}=o(n^{-\frac{1}{2}})$  in Lemma~\ref{lem:new3}. } 
Therefore, $E\delta_{1}=2\tau+o(n^{-\frac{1}{2}})$.

For $\delta_2$, we decompose it as $\delta_2=\delta_{21}+\delta_{22}+\delta_{23}$ in the proof of  Lemma~\ref{lem4th6:3}.  Moreover, we show that $E\delta_{23}=o(n^{-\frac{1}{2}})$  in the proof of  Lemma~\ref{lem4th6:3}. {We prove  that $E\delta_{21}=\tau+o(n^{-\frac{1}{2}})$ in Lemma~\ref{lem:new2}, and $E\delta_{22}=o(n^{-\frac{1}{2}})$  in Lemma~\ref{lem:new3}. }  
Therefore, $E\delta_{2}=\tau+o(n^{-\frac{1}{2}})$. 
Then we have $E(\delta_1-\delta_2)=\tau+o(n^{-\frac{1}{2}})$, and consequently, $E\bar{\delta}=\tau+o(n^{-\frac{1}{2}})$. 

\tb{Step 3.} In this step, we   prove that $\hat \tau_{np}-\bar{\delta}=o_p(n^{-\frac{1}{2}})$.  
Let $c_N=c_f\left(\sqrt{\frac{\log(n)}{n \tilde h^p}}+\tilde h^q\right)$ where $c_f$ is a large enough constant. Define $\bar{\delta}_f$ as the estimator trimming with respect to the bound $\tilde b-c_N$, i.e., 
\begin{align*}
\bar \delta_f=\frac{1}{n}\sum_{i=1}^n \left(
\frac{\frac{1}{n \tilde h^p}\sum_{j=1}^n K\left(\frac{\mathbf{z}_i-\mathbf{z}_j}{\tilde h}\right)Y_j\frac{W_j}{\pi}}{\hat p_1(\mathbf{z}_i)
}-
\frac{\frac{1}{n \tilde h^p}\sum_{j=1}^n K\left(\frac{\mathbf{z}_i-\mathbf{z}_j}{\tilde h}\right)Y_j\frac{1-W_j}{1-\pi}}{
\hat p_2(\mathbf{z}_i)
}
\right)
I\left(	p(\mathbf{z}_i)>\tilde b-c_N\right).
\end{align*}
Since $\sqrt{\frac{\log(n)}{n \tilde h^p}}+\tilde h^q=o(\tilde b)$ by Assumption~\ref{ass:main3},  $\bar\delta_f$ also satisfies (\ref{eqn:251}),  and $\bar\delta_f$ and $\bar\delta$ are asymptotically equivalent, i.e., $\sqrt{n}(\bar\delta_f-\bar\delta)=o_p(1)$. 
We note that by Lemma~\ref{lem4th6:1}, the fact that $\sqrt{\frac{\log(n)}{n \tilde h^p}}+\tilde h^q=o(\tilde b)$ and that $c_f$ is large enough, we have  
when $n$ is large enough, 
\begin{align*}
&	I(p(\mathbf{z}_i)>\tilde b-c_N)-I(\hat p(\mathbf{z}_i)>1.01\tilde b,
	\tilde p_1(\mathbf{z}_i)>\tilde b,\tilde p_2(\mathbf{z}_i)>\tilde b)
	\\&=I(p(\mathbf{z}_i)>\tilde b-c_N, 
	\hat p(\mathbf{z}_i)<1.01\tilde b
	)\\&=
	I(p(\mathbf{z}_i)>\tilde b-c_N, 
	\hat p(\mathbf{z}_i)<1.01\tilde b, p(\mathbf{z}_i)<1.01\tilde b+c_N
	).
\end{align*} Therefore, when $n$ is large enough, 
\begin{align}
	\label{eqn:252}
\sqrt{n}(\bar{\delta}_f-\hat \tau_{np})&= 
\frac{1}{\sqrt n}\sum_{i=1}^n \left(
\frac{\frac{1}{n \tilde h^p}\sum_{j=1}^n K\left(\frac{\mathbf{z}_i-\mathbf{z}_j}{\tilde h}\right)Y_j\frac{W_j}{\pi}}{\hat p_1(\mathbf{z}_i)
}-
\frac{\frac{1}{n \tilde h^p}\sum_{j=1}^n K\left(\frac{\mathbf{z}_i-\mathbf{z}_j}{\tilde h}\right)Y_j\frac{1-W_j}{1-\pi}}{
\hat p_2(\mathbf{z}_i)
}
\right)\notag\\&
	I(p(\mathbf{z}_i)>\tilde b-c_N, 
\hat p(\mathbf{z}_i)<1.01\tilde b, p(\mathbf{z}_i)<1.01\tilde b+c_N
).
\end{align} 
For the right-hand-side of (\ref{eqn:252}), we follow \tb{Step 1} and \tb{Step 2} of this proof to show that $\sqrt{n}(\bar{\delta}_f-\hat \tau_{np})=o_p(1)$. We specifically outline the key differences below:
\begin{enumerate}
	\item $	I(p(\mathbf{z}_i)>\tilde b-c_N, 
	\hat p(\mathbf{z}_i)<1.01\tilde b, p(\mathbf{z}_i)<1.01\tilde b+c_N
	)/p(\mathbf{z}_i)\le 1/(\tilde b-c_N)\lesssim 1/\tilde b$.
	\item $I(p(\mathbf{z}_i)>\tilde b-c_N, 
	\hat p(\mathbf{z}_i)<1.01\tilde b, p(\mathbf{z}_i)<1.01\tilde b+c_N
	)\in\{0,1\}$, which is bounded. As a result, the dominated covergence theorems in Lemma~\ref{lem4th6:2}, \ref{lem4th6:6}, \ref{lem4th6:8}, \ref{lem4th6:9}   are still applicable. 
	\item $I(p(\mathbf{z}_i)>\tilde b-c_N, 
	\hat p(\mathbf{z}_i)<1.01\tilde b, p(\mathbf{z}_i)<1.01\tilde b+c_N
	)\to 0$ as $n\to\infty$. As a result,  the dominated covergence theorems in Lemma~\ref{lem4th6:2}, \ref{lem4th6:6}, \ref{lem4th6:8}, \ref{lem4th6:9}  
	lead to the desired result. 
	\item  {Assumption~\ref{ass:main3}(6) which helps to show that the expectation of $\sqrt{n}(\bar{\delta}_f-\hat \tau_{np})$ is $o(1)$.}
\end{enumerate}
 Therefore, we have $\sqrt{n}(\bar{\delta}-\hat \tau_{np})=o_p(1)$.

\tb{Step 4.} We now prove that (\ref{eqn:251}) is asymptotically normal. Note that in the summation  terms, 
$f_i(W_i,\pi)\left(\frac{W_i}{\pi}-\frac{1-W_i}{1-\pi}\right)+
	E(f_i(1,\pi)-f_i(0,\pi)|\mathbf{z}_i)
-
	\left(
	E(f_i(1,\pi)|\mathbf{z}_i)
	\frac{W_i}{\pi}-
	E(f_i(0,\pi)|\mathbf{z}_i)\frac{1-W_i}{1-\pi}
	\right)-\tau$ are i.i.d., and $(W_i-\pi)
	\sum_{j\neq i}\frac{E_{ij}}{N_j}
	E\left(
	f_j^{(1)}\left(1,\pi\right)-	f_j^{(1)}\left(0,\pi\right)
	\right)$ are non-i.i.d. due to the random graph setting.  
By following the proof of Theorem 4 in \citet{li2022random}, we obtain that  
\begin{align*}
%	\label{eqn:252}
	\frac{1}{\sqrt{n}}&\sum_{i=1}^n\left(
	f_i(W_i,\pi)\left(\frac{W_i}{\pi}-\frac{1-W_i}{1-\pi}\right)+
	E(f_i(1,\pi)-f_i(0,\pi)|\mathbf{z}_i)
	\right.\notag\\&-\left.
	\left(
	E(f_i(1,\pi)|\mathbf{z}_i)
	\frac{W_i}{\pi}-
	E(f_i(0,\pi)|\mathbf{z}_i)\frac{1-W_i}{1-\pi}
	\right)-\tau
	\right.\notag\\&+\left.
	(W_i-\pi)
	\sum_{j\neq i}\frac{E_{ij}}{N_j}
	E\left(
	f_j^{(1)}\left(1,\pi\right)-	f_j^{(1)}\left(0,\pi\right)
	\right)
	\right)\notag\\&=
\frac{1}{\sqrt n}\sum_{i=1}^n\left(
R_i-\tau+Q_i(W_i-\pi)
\right)+o_p(1),
\end{align*}
where
\begin{align*}
R_i&=(W_i-\pi)\left(
\frac{f_i(W_i,\pi)}{\pi(1-\pi)}
-\left(
\frac{E(f_i(1,\pi)|\mathbf{z}_i)}{\pi}+
\frac{E(f_i(0,\pi)|\mathbf{z}_i)}{1-\pi}
\right)
\right),\\
Q_i&=E\left(\frac{h(U_i,U_j)}{E(h(U_i,U_j)|U_j)}\bigg|U_i\right)
E\left(
f_1^{(1)}\left(1,\pi\right)-	f_1^{(1)}\left(0,\pi\right)
\right),
\end{align*}
and $h(\cdot)$ is the graphon defined in Assumption~\ref{assump:random_graph}. Note that $R_i-\tau+Q_i(W_i-\pi)$ are i.i.d., thus, the asymptotic distribution is $N(0,V_{np})$ with
\begin{align*}
V_{np}&=E\left(R_i-\tau+Q_i(W_i-\pi)\right)^2\\
&=E(R_i-\tau)^2\\&+
E\left(E\left(\frac{h(U_i,U_j)}{E(h(U_i,U_j)|U_j)}\bigg|U_i\right)\right)^2
\pi(1-\pi)(E f_1^{(1)} (1,\pi)-E f_1^{(1)} (0,\pi))^2\\
&=Var(f_i(1,\pi)-f_i(0,\pi))\\
&+\frac{1}{\pi(1-\pi)}E\left[
(1-\pi)\left(f_i(1,\pi)-E(f_i(1,\pi)|\mathbf{z}_i)
\right)+
\pi\left(f_i(0,\pi)-E(f_i(0,\pi)|\mathbf{z}_i)
\right)
\right]^2\\
&+	E\left(E\left(\frac{h(U_i,U_j)}{E(h(U_i,U_j)|U_j)}\bigg|U_i\right)\right)^2\pi(1-\pi)(E f_1^{(1)} (1,\pi)-E f_1^{(1)} (0,\pi))^2.
\end{align*}
Then the proof is complete. 
\end{proof}

\section*{Lemmas}
\begin{lemma}
    \label{lem:1}
    Under Assumption~\ref{assumption:anonymous interference}, \ref{assump:random_graph}, \ref{assumption:moment_conditions} and  \ref{ass:main2}, we have $\hat b\overset{p}{\to}b$ where $\hat b$ is defined in (\ref{def:hat b}) and $b$ is defined in (\ref{def:b}).
\end{lemma}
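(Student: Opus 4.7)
The plan is to reduce $\hat b$ to a law-of-large-numbers statement about the i.i.d.\ sequence $\phi(U_i)^2$, through a two-step approximation of the quantity $T_i:=\sum_{j=1}^n E_{ij}/N_j$ that enters $\hat b = n^{-1}\sum_i T_i^2$. Write $g(u)=\int_0^1 h(u,y)\,dy$ and $\phi(u)=E[h(u,U_1)/g(U_1)]$, so that $b=E\phi(U_1)^2$ by definition. The goal is to show that $T_i$ is close to $\phi(U_i)$ in a mean-square sense uniformly enough that $\hat b$ has the same probability limit as $n^{-1}\sum_i\phi(U_i)^2$.

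\textbf{Step 1 (degree approximation).} Lemma~\ref{lem:11} yields $\sup_j |N_j/(n\rho_n)-g(U_j)|=o_p(1)$, while Assumption~\ref{assump:random_graph}(1) gives $c_l\le g(U_j)\le c_u$, so elementary algebra delivers $\sup_j |1/N_j - 1/(n\rho_n g(U_j))|=o_p((n\rho_n)^{-1})$. Setting $\tilde T_i:=(n\rho_n)^{-1}\sum_j E_{ij}/g(U_j)$,
\[
\sup_i |T_i-\tilde T_i|\ \le\ \max_i N_i\cdot \sup_j\!\left|\frac{1}{N_j}-\frac{1}{n\rho_n g(U_j)}\right|\ =\ O_p(n\rho_n)\cdot o_p((n\rho_n)^{-1})\ =\ o_p(1),
\]
where the bound $\max_i N_i=O_p(n\rho_n)$ is again a consequence of Lemma~\ref{lem:11} together with $g\le c_u$.

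\textbf{Step 2 (conditional LLN for $\tilde T_i$).} Conditionally on $U_{1:n}:=(U_1,\ldots,U_n)$, the summands $\{E_{ij}/g(U_j):j\neq i\}$ are independent, so
\[
E[\tilde T_i\mid U_{1:n}]=\frac{1}{n}\sum_{j\neq i}\frac{h(U_i,U_j)}{g(U_j)},\qquad \mathrm{Var}(\tilde T_i\mid U_{1:n})\le \frac{c_u}{n\rho_n\, c_l^2}=o(1),
\]
using $\sqrt n\rho_n\to\infty$ for the variance. Conditional on $U_i$, the terms $h(U_i,U_j)/g(U_j)$ are i.i.d., bounded, with mean $\phi(U_i)$, so $E[\tilde T_i\mid U_{1:n}]\to\phi(U_i)$ almost surely by the strong law of large numbers. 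The bias–variance decomposition and dominated convergence then give $E[(\tilde T_1-\phi(U_1))^2]\to 0$, and by exchangeability of the pairs $(\tilde T_i,\phi(U_i))$ this coincides with $E[n^{-1}\sum_i(\tilde T_i-\phi(U_i))^2]$, so Markov's inequality delivers $n^{-1}\sum_i(\tilde T_i-\phi(U_i))^2=o_p(1)$.

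\textbf{Step 3 (assembly).} Both $\phi(U_i)$ and, with probability tending to one, $T_i$ and $\tilde T_i$ are uniformly bounded by $c_u/c_l$ (for the latter two, via Step~1 and $\max_i N_i\le c_u n\rho_n(1+o_p(1))$). Hence $|T_i^2-\phi(U_i)^2|\le C|T_i-\phi(U_i)|$ uniformly, and combining Steps~1--2 with the Cauchy--Schwarz inequality yields $\hat b - n^{-1}\sum_i\phi(U_i)^2=o_p(1)$; the usual strong law of large numbers then gives $n^{-1}\sum_i\phi(U_i)^2\to E\phi(U_1)^2=b$. The main obstacle is precisely Step~1: the Bernoulli $E_{ij}$ in the numerator and the degree $N_j$ in the denominator are correlated, and without the \emph{uniform} approximation supplied by Lemma~\ref{lem:11} (itself based on Assumption~\ref{ass:main2}), one cannot safely replace $1/N_j$ by the deterministic surrogate $1/(n\rho_n g(U_j))$. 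Everything downstream is a standard conditional-variance argument plus a law of large numbers.
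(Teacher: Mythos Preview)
Your proof is correct and follows the same overall architecture as the paper's: replace $T_i=\sum_j E_{ij}/N_j$ by the surrogate $\tilde T_i=(n\rho_n)^{-1}\sum_j E_{ij}/g(U_j)$ via the uniform degree approximation of Lemma~\ref{lem:11}, show $\tilde T_i\approx\phi(U_i)$, then finish with a law of large numbers for $\phi(U_i)^2$.

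The only substantive difference is in how you handle the second approximation $\tilde T_i\approx\phi(U_i)$. The paper applies Lemma~\ref{lem:11} a second time, now with $s(x)=g(x)=\int_0^1 h(x,z)\,dz$, obtaining the \emph{uniform} bound $\sup_i|\tilde T_i-\phi(U_i)|=O_p(\sqrt{\log n/(n\rho_n)})$; this directly gives $\sup_i|T_i^2-\phi(U_i)^2|=o_p(1)$. You instead use an exchangeability-plus-conditional-variance argument to obtain only $n^{-1}\sum_i(\tilde T_i-\phi(U_i))^2=o_p(1)$, which is weaker but entirely sufficient here. Your route is slightly more elementary for this step (it does not need the low-rank decomposition in the second stage), though it costs you the explicit rate and you still rely on Lemma~\ref{lem:11}---and hence Assumption~\ref{ass:main2}---in Step~1. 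Either approach closes the lemma.
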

\begin{proof}
Recall that 
\[\hat b=\frac{1}{n}\sum_{i=1}^n \left(
 \sum_{j=1}^n \frac{E_{ij}}{\sum_{k=1}^n E_{jk}}
\right)^2, 
b=\int_0^1 \left(
\int_0^1 \frac{  h(x,y)}{\int_0^1   h(x,z)dz}dx
 \right)^2dy.\]

Let $N_i=\sum_{j=1}^n E_{ij}$ be the degree of the $i$-th node. By taking $s(x)=1$ in Lemma~\ref{lem:11}, we have
\begin{align}
    \label{lem1:eqn1}
\sup_{i=1,\cdots,n}\left|
\frac{N_i}{n\rho_n} - \int_0^1  {h(U_i,y)} dy
        \right|=O_p\left(\sqrt{\frac{\log(n)}{n\rho_n}}\right).
\end{align}
By taking $s(x)=\int_0^1 h(x,z)dz$ in Lemma~\ref{lem:11}, we have
\begin{align}
    \label{lem1:eqn2}
\sup_{i=1,\cdots,n}\frac{1}{n\rho_n}\left|\sum_{j=1}^n \frac{E_{ij}}{\int_0^1 h(U_j,z)dz} -n\rho_n\int_0^1  \frac{h(U_i,y)}{\int_0^1 h(y,z)dz} dy \right|=O_p\left(\sqrt{\frac{\log(n)}{n\rho_n}}\right).
\end{align}

Then by (\ref{lem1:eqn1}),(\ref{lem1:eqn2}) and the fact that $\min_i N_i\gtrsim_p n\rho_n, \max_i N_i\lesssim_p n\rho_n$ from Lemma 15 in \cite{li2022random}, we have 
\begin{align}
\label{lem1:eqn3}
   & \sup_i\left|\sum_{j=1}^n \frac{E_{ij}}{N_j}-\int_0^1 \frac{h(x,U_i)}{\int_0^1 h(x,z)dz}dx\right|\notag\\
   &\le \sup_i\left|\sum_{j=1}^n E_{ij} \left(\frac{1}{N_j}-\frac{1}{n\rho_n\int_0^1 h(U_j,z)dz}\right)\right|
   +
   \sup_i\left|\sum_{j=1}^n \frac{E_{ij} }{n\rho_n\int_0^1 h(U_j,z)dz}-
   \int_0^1 \frac{h(x,U_i)}{\int_0^1 h(x,z)dz}dx
   \right|\notag\\
   &=O_p \left(n\rho_n\left(\frac{\sqrt{n\rho_n\log(n)}}{\tilde c_l(n\rho_n)^2}\right)\right)
   +
   O_p \left(\sqrt{\frac{\log(n)}{n\rho_n}}\right)= O_p \left(\sqrt{\frac{\log(n)}{n\rho_n}}\right).
\end{align}
Note that \begin{align}
    \label{lem1:eqn4}
    \sup_i\left|\int_0^1 \frac{h(x,U_i)}{\int_0^1 h(x,z)dz}dx\right|\le 
    \frac{\tilde c_u}{\tilde c_l}<\infty.
\end{align}
Therefore, by combining (\ref{lem1:eqn3}) and (\ref{lem1:eqn4}), we have
\[
\sup_i\left|\left(\sum_{j=1}^n \frac{E_{ij}}{N_j}\right)^2-\left(\int_0^1 \frac{h(x,U_i)}{\int_0^1 h(x,z)dz}dx\right)^2\right|=
O_p \left(\sqrt{\frac{\log(n)}{n\rho_n}} + {\frac{\log(n)}{n\rho_n}}\right).
\]
As a result, 
\begin{align}
	\label{lem1:aeqn:9}
\left|\frac{1}{n}\sum_{i=1}^n\left(\sum_{j=1}^n \frac{E_{ij}}{N_j}\right)^2-
\frac{1}{n}\sum_{i=1}^n\left(\int_0^1 \frac{h(x,U_i)}{\int_0^1 h(x,z)dz}dx\right)^2\right|=
O_p \left(\sqrt{\frac{\log(n)}{n\rho_n}} + {\frac{\log(n)}{n\rho_n}}\right).
\end{align}
By the law of large numbers, we have 
\begin{align}
	\label{lem1:aeqn:10}
	\left|\frac{1}{n}\sum_{i=1}^n\left(\int_0^1 \frac{h(x,U_i)}{\int_0^1 h(x,z)dz}dx\right)^2
	-
	\int_0^1\left(\int_0^1 \frac{h(x,y)}{\int_0^1 h(x,z)dz}dx\right)^2dy
	\right|=o_p(1).
\end{align}
Then the result follows by combining (\ref{lem1:aeqn:9}), (\ref{lem1:aeqn:10}) and  the assumption that $\sqrt{n}\rho_n\to\infty$.

\end{proof}

\begin{lemma}
	\label{lem:2}
	 Under Assumption~\ref{assumption:anonymous interference}, \ref{assump:random_graph}, \ref{assumption:moment_conditions} and  \ref{ass:main2}, we have that 
	\begin{align*}
		&\frac{\sum_{i=1}^n W_i(Y_i-\mathbf{x}_i^\top \hat\beta_1)^2}{\sum_{i=1}^n W_i}\overset{p}{\to} E(f_1(1,\pi)-\mathbf{x}_1^\top\beta_1)^2,\\
		&\frac{\sum_{i=1}^n (1-W_i)(Y_i-\mathbf{x}_i^\top \hat\beta_0)^2}{\sum_{i=1}^n (1-W_i)}\overset{p}{\to} E(f_1(0,\pi)-\mathbf{x}_1^\top\beta_0)^2.
	\end{align*}
\end{lemma}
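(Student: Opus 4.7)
The plan is to prove only the first convergence; the control version is symmetric. For treated units ($W_i=1$) we have $Y_i=f_i(1,M_i/N_i)$, which motivates the decomposition
\[
Y_i-\mathbf{x}_i^\top\hat\beta_1=A_i+B_i+C_i,\quad A_i=f_i(1,M_i/N_i)-f_i(1,\pi),\ B_i=f_i(1,\pi)-\mathbf{x}_i^\top\beta_1,\ C_i=\mathbf{x}_i^\top(\beta_1-\hat\beta_1).
\]
Expanding the square and using $\sum_{i=1}^n W_i=n_1$ gives six averages $\frac{1}{n_1}\sum W_i(\cdot)$ over $A_i^2$, $B_i^2$, $C_i^2$ and the three cross-products. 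The target expectation is the limit of the $B_i^2$ average, so the strategy is to show the $B_i^2$ term converges to $E(f_1(1,\pi)-\mathbf{x}_1^\top\beta_1)^2$ while every other term is $o_p(1)$, and then conclude with Cauchy--Schwarz for the cross terms.

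First, since $W_i$ is Bernoulli$(\pi)$ independent of $(f_i,\mathbf{z}_i)$, the variables $\{W_i B_i^2\}$ are i.i.d.\ with finite mean by Assumption~\ref{assumption:moment_conditions}, so by the law of large numbers and $n_1/n\to\pi$ we have $\frac{1}{n_1}\sum W_i B_i^2\overset{p}{\to}E(f_1(1,\pi)-\mathbf{x}_1^\top\beta_1)^2$. Second, $\frac{1}{n_1}\sum W_i C_i^2=(\beta_1-\hat\beta_1)^\top\bigl(\frac{1}{n_1}\sum W_i\mathbf{x}_i\mathbf{x}_i^\top\bigr)(\beta_1-\hat\beta_1)$, and this is $o_p(1)$ because $\hat\beta_1\overset{p}{\to}\beta_1$ by Theorem~\ref{th: Consistency of hat tau} and the sample second moment matrix converges in probability to $\bb{M}_{xx}$ by \eqref{pf1:eqn1}.

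The main obstacle is controlling $\frac{1}{n_1}\sum W_i A_i^2$, because $A_i$ mixes the random network, the other $W_j$'s and $f_i$. The plan is a second-order Taylor expansion $A_i=f_i^{(1)}(1,\pi)(M_i/N_i-\pi)+\tfrac12 f_i^{(2)}(1,\xi_i)(M_i/N_i-\pi)^2$ for some $\xi_i$ between $M_i/N_i$ and $\pi$, so that $A_i^2\le 2\sup_{u}|f_i^{(1)}(1,u)|^2(M_i/N_i-\pi)^2+\tfrac12\sup_u|f_i^{(2)}(1,u)|^2(M_i/N_i-\pi)^4$. Since $\sup_u|f_1^{(k)}(1,u)|$ is in $\mathcal{L}_2$ by Assumption~\ref{assumption:moment_conditions} and is independent of the network and of $\{W_j\}_{j\neq i}$, I will take marginal expectations: conditional on $E_n$, $M_i$ is $\text{Binomial}(N_i,\pi)$ so $E[(M_i/N_i-\pi)^2\mid E_n]=\pi(1-\pi)/N_i$, and together with Lemma 15 of \citet{li2022random} giving $\min_i N_i\gtrsim_p n\rho_n$ this yields $E A_i^2=O(1/(n\rho_n))=o(1)$ under Assumption~\ref{assump:random_graph}. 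Hence $\frac{1}{n}\sum_i A_i^2\to 0$ in $L^1$ and a fortiori in probability, which forces $\frac{1}{n_1}\sum W_i A_i^2=o_p(1)$ via $n_1\gtrsim_p n$.

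Finally, the three cross-term averages are handled by Cauchy--Schwarz: each is bounded by the geometric mean of two of the already-controlled diagonal averages, so all are $o_p(1)$. Collecting the terms gives the stated limit for the treated group, and rerunning the argument with $(1-W_i)$, $\beta_0$ and the control expectation delivers the second claim. The only nonstandard ingredient is the concentration of $M_i/N_i$ around $\pi$ combined with the $\mathcal{L}_2$ derivative moment condition, which together defeat the dependence induced by the random exposure network.
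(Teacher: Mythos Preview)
Your proof is correct and follows essentially the same route as the paper: both split the residual into the network-perturbation piece $A_i=f_i(1,M_i/N_i)-f_i(1,\pi)$, the target piece $B_i=f_i(1,\pi)-\mathbf{x}_i^\top\beta_1$, and the estimation error $C_i=\mathbf{x}_i^\top(\beta_1-\hat\beta_1)$, handle $B_i^2$ by the law of large numbers and $C_i^2$ via Theorem~\ref{th: Consistency of hat tau} together with \eqref{pf1:eqn1}, and kill the rest using concentration of $M_i/N_i$ around $\pi$. The only tactical difference is that you bound $\tfrac{1}{n_1}\sum W_iA_i^2$ in $L^1$ via $E[(M_i/N_i-\pi)^2\mid E_n]=\pi(1-\pi)/N_i$ and Markov, whereas the paper factors out the uniform in-probability bound $\sup_i|M_i/N_i-\pi|=o_p(1)$ and then averages; one small imprecision in your write-up is that the $L^1$ route needs the expectation bound $E[1/N_i]=O(1/(n\rho_n))$ (which the paper records from Lemma~15 and the proof of Lemma~17 in \citet{li2022random}), not merely the in-probability statement $\min_iN_i\gtrsim_p n\rho_n$ you cite.
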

\begin{proof}
	We only prove \[\frac{\sum_{i=1}^n W_i(Y_i-\mathbf{x}_i^\top \hat\beta_1)^2}{\sum_{i=1}^n W_i}\overset{p}{\to} E(f_1(1,\pi)-\mathbf{x}_1^\top\beta_1)^2,\] as the second one follows similarly.
	
	Since $W_i(f_i(1,\pi)-\mathbf{x}_i^\top\beta_1)^2, i=1,\cdots,n$ are i.i.d., and $E(f_1(1,\pi)-\mathbf{x}_1^\top\beta_1)^2<\infty$ by Assumption~\ref{assumption:moment_conditions}, we have \[
	\frac{1}{n}\sum_{i=1}^n W_i(f_i(1,\pi)-\mathbf{x}_i^\top\beta_1)^2=\pi E(f_1(1,\pi)-\mathbf{x}_1^\top\beta_1)^2+o_p(1),
	\]
	it suffices to show that 
	\begin{align}
		\label{lem2:eqn1}
		\frac{1}{n}\sum_{i=1}^n W_i
		\left[
		(f_i\left(1,\mni\right)-\mathbf{x}_i^\top\hat\beta_1)^2
		-(f_i(1,\pi)-\mathbf{x}_i^\top\beta_1)^2
		\right]=o_p(1).
		\end{align}
		Note that by Taylor expansion,  Assumption~\ref{assumption:anonymous interference}, \ref{assump:random_graph}, \ref{assumption:moment_conditions}, and $\min_i N_i\gtrsim_p n\rho_n$, we have
		\begin{align*}
		\left|
			f_i\left(1,\mni\right)-\mathbf{x}_i^\top\hat\beta_1-
			f_i(1,\pi)+\mathbf{x}_i^\top\beta_1 	\right|
		&	\le \left|\mni-\pi\right| \sup_{u\in [0,1]}|f_i^{(1)}(1,u)|+\mathbf{x}_i^\top|\hat\beta_1-\beta_1|,\\
		\sup_i \left|\mni-\pi\right| \sup_{u\in [0,1]}|f_i^{(1)}(1,u)|&=o_p(1).
		\end{align*}
		As a result,
		\begin{align*}
				&\frac{1}{n}\sum_{i=1}^n W_i
			\left[
			(f_i\left(1,\mni\right)-\mathbf{x}_i^\top\hat\beta_1)^2
			-(f_i(1,\pi)-\mathbf{x}_i^\top\beta_1)^2
			\right]\\
			&\le \frac{1}{n}\sum_{i=1}^n W_i
			\left[
			\left|
			f_i\left(1,\mni\right)-\mathbf{x}_i^\top\hat\beta_1-
			f_i(1,\pi)+\mathbf{x}_i^\top\beta_1 	\right|^2\right.\\
			&\left.+
			2	\left|
			f_i(1,\pi)-\mathbf{x}_i^\top\beta_1 	\right|
			\left|
			f_i\left(1,\mni\right)-\mathbf{x}_i^\top\hat\beta_1-
			f_i(1,\pi)+\mathbf{x}_i^\top\beta_1 	\right|
			\right]\\
			&\le \frac{1}{n}\sum_{i=1}^n W_i
			\left[
	\left( \sup_i	\left|\mni-\pi\right| \sup_{u\in [0,1]}|f_i^{(1)}(1,u)|+\mathbf{x}_i^\top|\hat\beta_1-\beta_1|\right)^2\right.\\
			&\left.+
			2	\left|
			f_i(1,\pi)-\mathbf{x}_i^\top\beta_1 	\right|
		\left(\sup_i\left|\mni-\pi\right| \sup_{u\in [0,1]}|f_i^{(1)}(1,u)|+\mathbf{x}_i^\top|\hat\beta_1-\beta_1|\right)
			\right]\\
			&=\frac{1}{n}\sum_{i=1}^n W_i
			|\hat\beta_1-\beta_1|^\top \mathbf{x}_i\mathbf{x}_i^\top|\hat\beta_1-\beta_1|+o_p(1)\\
			&+\frac{1}{n}\sum_{i=1}^n W_i 	\left|
			f_i(1,\pi)-\mathbf{x}_i^\top\beta_1 	\right| (o_p(1)+\mathbf{x}_i^\top|\hat\beta_1-\beta_1|),
		\end{align*}
		where the $o_p(1)$ term above is independent of the index $i$. 
		Then it suffices to show that 
		\begin{align}
			\label{lem2:eqn2}
			\frac{1}{n}\sum_{i=1}^n W_i
			|\hat\beta_1-\beta_1|^\top \mathbf{x}_i\mathbf{x}_i^\top|\hat\beta_1-\beta_1|&=o_p(1),\\
			\label{lem2:eqn3}
			\frac{1}{n}\sum_{i=1}^n W_i 	\left|
			f_i(1,\pi)-\mathbf{x}_i^\top\beta_1 	\right|&=O_p(1),\\
				\label{lem2:eqn4}
			\frac{1}{n}\sum_{i=1}^n W_i 	\left|
			f_i(1,\pi)-\mathbf{x}_i^\top\beta_1 	\right|\mathbf{x}_i^\top|\hat\beta_1-\beta_1|&=o_p(1).
		\end{align}
		For (\ref{lem2:eqn2}), we first  recall that $
		\sum_{i=1}^n W_i\mathbf{x}_i\mathbf{x}_i^\top/n \overset{a.s.}{\to}\pi \bb{M}_{xx}$ by (\ref{pf1:eqn1}), and  $\hat\beta_1=\beta_1+o_p(1)$ by Theorem~\ref{th: Consistency of hat tau}. Then we derive that 
		\begin{align*}
				\frac{1}{n}\sum_{i=1}^n W_i
			|\hat\beta_1-\beta_1|^\top \mathbf{x}_i\mathbf{x}_i^\top|\hat\beta_1-\beta_1|&=trace\left[
				\frac{1}{n}\sum_{i=1}^n W_i
		 \mathbf{x}_i\mathbf{x}_i^\top|\hat\beta_1-\beta_1|	|\hat\beta_1-\beta_1|^\top
			\right]\\
		&	=trace(\pi\bb{M}_{xx}|\hat\beta_1-\beta_1|	|\hat\beta_1-\beta_1|^\top)+o_p(1)=o_p(1),
		\end{align*} 
	where the second equality follows from the continuous mapping theorem, noting that \( \mathbb{M}_{xx} \) is deterministic by definition, while the third equality holds due to the fixed dimension of the covariates \( \mathbf{x}_i \).
		
		For (\ref{lem2:eqn3}), noting that \( \mathbb{E} \left| f_i(1,\pi) - \mathbf{x}_i^\top \beta_1 \right| < \infty \), the result follows directly from the law of large numbers.
		
			For  (\ref{lem2:eqn4}), since $\hat\beta_1=\beta_1+o_p(1)$, it suffices to show that $ \sum_{i=1}^n W_i 	\left|
			f_i(1,\pi)-\mathbf{x}_i^\top\beta_1 	\right|\mathbf{x}_i /n=O_p(1)$, which follows from the law of large numbers and the fact that $E\left|
			f_i(1,\pi)-\mathbf{x}_i^\top\beta_1 	\right||\mathbf{x}_i|<\infty$. Thus, the proof is complete. 
\end{proof}

 \begin{lemma}
 	\label{lem:3}
 	 Under Assumption~\ref{assumption:anonymous interference}, \ref{assump:random_graph}, \ref{assumption:moment_conditions} and  \ref{ass:main2}, we have that $	\hat{E}f_1^{(1)}(1,\pi)\overset{p}{\to} Ef_1^{(1)}(1,\pi)$ and $	\hat{E}f_1^{(1)}(0,\pi)\overset{p}{\to} Ef_1^{(1)}(0,\pi)$	where $\hat{E}f_1^{(1)}(1,\pi)$ and $\hat{E}f_1^{(1)}(0,\pi)$ are defined in (\ref{eqn:dev1}) and (\ref{eqn:dev2}) respectively. 
 \end{lemma}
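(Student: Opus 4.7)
The plan is to prove both limits together; I focus on $\hat{E}f_1^{(1)}(1,\pi)$, since the argument for the control group follows by interchanging $W_i\leftrightarrow 1-W_i$ and $\pi\leftrightarrow 1-\pi$. The key identity $\frac{M_i}{\pi}-\frac{N_i-M_i}{1-\pi}=\frac{N_i}{\pi(1-\pi)}(M_i/N_i-\pi)$ exposes the estimator's structure. Because the graph has no self-loops, $M_i$ is independent of $W_i$ given the network, with conditional mean $N_i\pi$ and conditional variance $N_i\pi(1-\pi)$. On $\{W_i=1\}$ I perform a second-order Taylor expansion of $Y_i=f_i(1,M_i/N_i)$ around $\pi$, using Assumption~\ref{assumption:moment_conditions} together with $\min_iN_i\gtrsim_p n\rho_n$ from Lemma~15 of \cite{li2022random} to control the remainder uniformly in $i$.

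Substituting splits $\hat{E}f_1^{(1)}(1,\pi)$ into four pieces. Piece (A) is the quadratic-in-noise term
\[
\frac{1}{n\pi^2(1-\pi)}\sum_{i=1}^n W_i f_i^{(1)}(1,\pi)\frac{(M_i-N_i\pi)^2}{N_i};
\]
conditioning on the graph and $\{\mathbf z_i\}$ and using $E[(M_i-N_i\pi)^2\mid G_n]=N_i\pi(1-\pi)$ gives conditional mean $Ef_1^{(1)}(1,\pi)$, and a law-of-large-numbers argument yields (A)$\overset{p}{\to}Ef_1^{(1)}(1,\pi)$. Piece (B) is the linear-in-noise contribution involving $W_if_i(1,\pi)(M_i-N_i\pi)$; piece (C) is the PC correction $\frac{1}{n\pi}\sum_iW_iY_i\sum_{k=1}^r\hat a_k\hat\psi_{ki}$; piece (D) collects Taylor remainders that are $O_p(|M_i/N_i-\pi|^2)=O_p((n\rho_n)^{-1})$, which multiplied by the $O_p(\sqrt{n\rho_n})$-sized weights still give $o_p(1)$ on average.

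The crux is showing (B)$+$(C)$=o_p(1)$. Term (B) is conditionally mean zero given the graph but carries a non-vanishing structured component through the graphon. The PC balancing equations are designed precisely to annihilate the projection of $\frac{M_i-N_i\pi}{\pi(1-\pi)}$ onto the leading $r$ eigenvectors $\hat\psi_k$ of $E_n$. Under the low-rank graphon Assumption~\ref{ass:main2}, Lemma~\ref{lem:11} together with standard eigenvector perturbation bounds shows that the rescaled $\hat\psi_k$ approximate the true eigenfunctions $\psi_k(U_i)$ uniformly, so (C) subtracts off the structured part of (B). Decomposing $W_if_i(1,\pi)=\pi E(f_i(1,\pi)\mid\mathbf z_i)+\text{(mean-zero residual)}$, the remaining cross terms are sums of products of weakly dependent mean-zero quantities and can be bounded using the moment condition of Assumption~\ref{assumption:moment_conditions} and the sparsity regime $\sqrt{n}\rho_n\to\infty$.

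The main obstacle will be the quantitative control of the balancing coefficients $\hat a_k$ and of the eigenvector approximation error. I plan to address this by writing out the $r\times r$ linear system that defines $\{\hat a_k\}$ explicitly, bounding its Gram matrix from below via the orthonormality of $\{\psi_k\}$ and Lemma~\ref{lem:11}, and bounding its right-hand side in probability using Assumption~\ref{assumption:moment_conditions} and Assumption~\ref{ass:main2}. Assembling (A)--(D) then delivers $\hat{E}f_1^{(1)}(1,\pi)\overset{p}{\to}Ef_1^{(1)}(1,\pi)$, and the symmetric argument gives the corresponding limit for $\hat{E}f_1^{(1)}(0,\pi)$.
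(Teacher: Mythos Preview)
Your approach is correct in outline but takes a markedly different and far more laborious route than the paper. The paper's proof is essentially one line: define $\check f_i(W_i,t)=W_if_i(W_i,t)/\pi$, observe that $\hat E f_1^{(1)}(1,\pi)$ is exactly the PC balancing estimator of \cite{li2022random} applied to the modified outcome $\check f_i$, and invoke their Theorem~6 to conclude
\[
\hat E f_1^{(1)}(1,\pi)\ \overset{p}{\longrightarrow}\ E\bigl[\pi\check f_i^{(1)}(1,\pi)+(1-\pi)\check f_i^{(1)}(0,\pi)\bigr]=E f_1^{(1)}(1,\pi).
\]
What you are proposing---the Taylor expansion, the (A)--(D) split, and especially the cancellation argument for (B)$+$(C) via eigenvector perturbation and control of the $\hat a_k$---is essentially a re-derivation of that Theorem~6 in this special case. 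The ``main obstacle'' you flag (quantitative control of $\hat a_k$ and the eigenvector approximation) is precisely the technical core of Li et~al.'s proof, so you would be reproducing their work rather than building on it.

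Two small points of caution if you proceed your way. First, in piece~(C) you bundle \emph{all} of $W_iY_i\cdot\sum_k\hat a_k\hat\psi_{ki}$, which contains not only the cancellation for (B) but also first- and second-order Taylor pieces times the PC weights; you should separate those out and show each is $o_p(1)$ (they are, of order $O_p(\sqrt{\rho_n})$, using $\hat a_k\hat\psi_{ki}=O_p(\rho_n\sqrt{n})$ and $M_i/N_i-\pi=O_p((n\rho_n)^{-1/2})$). Second, your LLN for (A) is not over i.i.d.\ summands because the $M_i$'s are coupled through shared $W_k$'s; the covariance bookkeeping is routine but should be stated. Both points are handled implicitly once you invoke Theorem~6 as the paper does.
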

 \begin{proof}
 		We only prove
 		\begin{align}
 			\label{lem3:eqn1}
 			\hat{E}f_1^{(1)}(1,\pi)\overset{p}{\to} Ef_1^{(1)}(1,\pi),
 		\end{align}  as the second one follows similarly. 
 		
 		Let $\check{f}_i(W_i,M_i/N_i)=W_if_i(W_i,M_i/N_i)/\pi$, then  under Assumption~\ref{assumption:anonymous interference}, \ref{assump:random_graph}, \ref{assumption:moment_conditions} and  \ref{ass:main2},  Theorem 6 in \cite{li2022random} still holds (note that our assumption is slightly weaker than theirs). Therefore, 
 		\begin{align*}
 			\hat{E}f_1^{(1)}(1,\pi)\overset{p}{\to}E(\pi \check{f}_i^{(1)}(1,\pi)+(1-\pi)\check{f}_i^{(1)}(0,\pi))=E(f_1^{(1)}(1,\pi)).
 		\end{align*}
 	
 \end{proof}
 
\begin{lemma}
    \label{lem:11}
   Under Assumption~\ref{assumption:anonymous interference}, \ref{assump:random_graph}, \ref{assumption:moment_conditions} and  \ref{ass:main2}, for any function $s(x),x\in[0,1],$ satisfying $\inf_{x\in[0,1]}s(x)\ge \tilde c_l>0$, we have
    \begin{align*}
         \sup_{i=1,\cdots,n}\frac{1}{\rho_n(n-1)}\left|
\sum_{j:j\neq i}\frac{E_{ij}}{s(U_j)}-n\rho_n\int_0^1 \frac{h(U_i,y)}{s(y)}dy
        \right|=O_p\left(\sqrt{\frac{\log(n)}{n\rho_n}}\right).
    \end{align*}
\end{lemma}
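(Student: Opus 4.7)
The plan is to split the deviation into a stochastic part (randomness of edges given the latent positions) and a sampling part (randomness of the $U_j$'s), and to bound each uniformly in $i$. By Assumption~\ref{assump:random_graph}(2) we have $\rho_n h\le 1$, so the truncation in the Bernoulli probability is inactive and $E(E_{ij}\mid U_1,\ldots,U_n)=\rho_n h(U_i,U_j)$. Accordingly I would write
\[
\sum_{j:j\neq i}\frac{E_{ij}}{s(U_j)}-n\rho_n\int_0^1\frac{h(U_i,y)}{s(y)}\,dy=T_1(i)+T_2(i),
\]
where $T_1(i)=\sum_{j:j\neq i}(E_{ij}-\rho_n h(U_i,U_j))/s(U_j)$ captures the edge noise and $T_2(i)=\rho_n\sum_{j:j\neq i}h(U_i,U_j)/s(U_j)-n\rho_n\int_0^1 h(U_i,y)/s(y)\,dy$ captures the sampling error.

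First I would bound $T_1(i)$ by Bernstein's inequality conditional on $U_1,\ldots,U_n$. Each summand is zero-mean, bounded in absolute value by $2/\tilde c_l$, and has conditional variance at most $\rho_n c_u/\tilde c_l^2$, so the sum of conditional variances is at most $n\rho_n c_u/\tilde c_l^2$. Choosing $t=C\sqrt{n\rho_n\log n}$, Bernstein yields $P(|T_1(i)|>t\mid U_1,\ldots,U_n)\le 2\exp(-c\,C^2\log n)$ for all large $n$, since $\sqrt n\rho_n\to\infty$ forces $n\rho_n$ to eventually dominate $\log n$ so that the sub-Gaussian term in the Bernstein denominator prevails. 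A union bound over $i=1,\ldots,n$ and a large enough $C$ then give $\sup_i|T_1(i)|=O_p(\sqrt{n\rho_n\log n})$, equivalently $\sup_i|T_1(i)|/(\rho_n(n-1))=O_p(\sqrt{\log n/(n\rho_n)})$, which matches the target rate.

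The hard part is controlling $T_2(i)$ uniformly over $i$: since $U_i$ ranges over a continuum, without further structure on $h$ one would need a chaining or covering argument. Here I would exploit Assumption~\ref{ass:main2}. Inserting the eigen-expansion $h(x,y)=\sum_{k=1}^{r}\lambda_k\psi_k(x)\psi_k(y)$ decouples the dependence on $U_i$ from the sum over $j$:
\[
T_2(i)=\rho_n\sum_{k=1}^{r}\lambda_k\psi_k(U_i)\left(\sum_{j=1}^{n}\frac{\psi_k(U_j)}{s(U_j)}-n\int_0^1\frac{\psi_k(y)}{s(y)}\,dy-\frac{\psi_k(U_i)}{s(U_i)}\right).
\]
For each fixed $k$, the centered sum in parentheses is a sum of $n$ i.i.d.\ variables bounded by $M_1/\tilde c_l$, hence $O_p(\sqrt n)$ by Hoeffding (and crucially does not depend on $i$, so no union bound is incurred), while the subtracted term is uniformly bounded by $M_1/\tilde c_l$. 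Using $|\lambda_k\psi_k(U_i)|\le|\lambda_k|M_1$ and summing over the finite set $k=1,\ldots,r$, I would conclude $\sup_i|T_2(i)|=O_p(\rho_n\sqrt n)$, so $\sup_i|T_2(i)|/(\rho_n(n-1))=O_p(n^{-1/2})$, which is dominated by $O_p(\sqrt{\log n/(n\rho_n)})$ since $\rho_n\to 0$. Combining the bounds on $T_1$ and $T_2$ delivers the stated rate.
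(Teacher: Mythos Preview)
Your proposal is correct and follows essentially the same two-part decomposition as the paper: the sampling-error piece $T_2(i)$ is handled exactly as in the paper's Step 1, exploiting the finite-rank expansion of $h$ to factor out the dependence on $U_i$ and reduce to $r$ i.i.d.\ sums, and the edge-noise piece $T_1(i)$ is handled as in the paper's Step 2 via a conditional Bernstein inequality. The only cosmetic difference is that you finish the uniform bound on $T_1$ with a direct union bound over $i$, whereas the paper integrates the tail of the weighted maximum $\max_i\sqrt{n\rho_n}|\Psi_i|/\sqrt{1+\log i}$; both routes deliver the same $O_p(\sqrt{\log n/(n\rho_n)})$ rate.
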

\begin{proof}
    Since $E(E_{ij}|U_i,U_j)=\rho_nh(U_i,U_j)=\rho_n \sum_{k=1}^r \lambda_k\psi_k(U_i)\psi_k(U_j)$, we write that 
    \begin{align*}
       &\frac{1}{\rho_n(n-1)} \left|
\sum_{j:j\neq i}\frac{E_{ij}}{s(U_j)}-n\rho_n\int_0^1 \frac{h(U_i,y)}{s(y)}dy
        \right|\\
        &\le 
        \frac{1}{n-1}
 \left|
\sum_{j:j\neq i}\sum_{k=1}^r \lambda_k\psi_k(U_i)\frac{\psi_k(U_j)}{s(U_j)}-n\sum_{k=1}^r \lambda_k\psi_k(U_i)\int_0^1 \frac{\psi_k(y)}{s(y)}dy
        \right|\\
        &+
        \frac{1}{\rho_n(n-1)} \left|
\sum_{j:j\neq i}\frac{E_{ij}}{s(U_j)}-\rho_n\sum_{j:j\neq i}\sum_{k=1}^r \lambda_k\psi_k(U_i)\frac{\psi_k(U_j)}{s(U_j)}
        \right|=\mathbf{I}_i + \mathbf{II}_i.
    \end{align*}
In \tb{Step 1}, we prove that $\sup_i\mathbf{I}_i=O_p(\sqrt{1/n})$. In \tb{Step 2}, we prove that $\sup_i\mathbf{II}_i=O_p(\sqrt{\log(n)/(n\rho_n)})$. 

\tb{Step 1.} Note that $\sup_{x\in[0,1]}\max_{k=1,\cdots,r}|\psi_k(x)|\le M_1, \inf_{x\in[0,1]}s(x)\ge \tilde c_l>0$. From these conditions, we derive that
\begin{align*}
    \sup_{i=1,\cdots,n}\mathbf{I}_i&\le 
     \sup_{i=1,\cdots,n}\frac{M_1}{n-1}
\sum_{k=1}^r \lambda_k \left|
\sum_{j:j\neq i}  \frac{\psi_k(U_j)}{s(U_j)}-  n\int_0^1 \frac{\psi_k(y)}{s(y)}dy
        \right|\\
        &\le 
         \frac{M_1}{n-1}
\sum_{k=1}^r \lambda_k \left|
\sum_{j=1}^n  \frac{\psi_k(U_j)}{s(U_j)}-  n\int_0^1 \frac{\psi_k(y)}{s(y)}dy
        \right|+
        \sup_{i=1,\cdots,n}\frac{M_1}{n-1}
\sum_{k=1}^r \lambda_k \left|
\frac{\psi_k(U_i)}{s(U_i)}
        \right|\\
        &\le  \frac{M_1}{n-1}
\sum_{k=1}^r \lambda_k \left|
\sum_{j=1}^n  \frac{\psi_k(U_j)}{s(U_j)}-  n\int_0^1 \frac{\psi_k(y)}{s(y)}dy
        \right|+
         \frac{M_1}{n-1}
\sum_{k=1}^r \lambda_k \frac{M_1}{\tilde c_l}.
\end{align*}
Since 
\[var\left(\frac{\psi_k(U_j)}{s(U_j)}\right)\le E\left(\frac{\psi_k(U_j)}{s(U_j)}\right)^2\le \frac{M_1^2}{\tilde c_l^2},\]
$\{U_j\}$ are i.i.d. and $r$ is bounded, we have 
\[\frac{1}{n-1} \left|
\sum_{j=1}^n  \frac{\psi_k(U_j)}{s(U_j)}- n \int_0^1 \frac{\psi_k(y)}{s(y)}dy
        \right|=O_p(n^{-1/2}).\]
Therefore, $\sup_i\mathbf{I}_i=O_p(\sqrt{1/n})$.

\tb{Step 2.} It suffices to show that
\begin{align*}\label{pfeqn:01}
        \sup_{i=1\cdots,n}&\left|
        \frac{1}{\rho_n(n-1)}\sum_{j:j\neq i}\left(\frac{I\left(U_{ij}\le \rho_n\sum_{k=1}^r\lambda_k\psi_k(U_i)\psi_k(U_j) 
        \right)}{s(U_j)}
        -\rho_n
\sum_{k=1}^r\lambda_k\psi_k(U_i)\frac{\psi_k(U_j) }{s(U_j)}
        \right)
        \right|\\&=O_p(\sqrt{\log(n)/(n\rho_n)}),
    \end{align*}
    where $(U_{ij},i\le j)$ are i.i.d. uniformly distributed random variables on $[0,1]$, and $U_{ji}=U_{ij}$ for $i>j$.  
    Let \begin{align*}
\Psi_i= & 
        \frac{1}{\rho_n(n-1)}\sum_{j:j\neq i}\left(\frac{I\left(U_{ij}\le \rho_n\sum_{k=1}^r\lambda_k\psi_k(U_i)\psi_k(U_j) 
        \right)}{s(U_j)}
        -\rho_n
\sum_{k=1}^r\lambda_k\psi_k(U_i)\frac{\psi_k(U_j) }{s(U_j)}
        \right)
         .
 \end{align*}
 By Bernstein's inequality for bounded variables, we have for any $t>0$, 
\[
    P\left(
    \sqrt{n\rho_n}|\Psi_i|>t|U_1,\cdots,U_n
    \right)\le 2\exp\left(-\frac{ct^2}{1+t/\sqrt{n\rho_n}}\right),
\]
where $c>0$ is an absolute constant. Then \[ P\left(
    \sqrt{n\rho_n}|\Psi_i|>t\right)= {E}\left(  {P}\left(
    \sqrt{n\rho_n}|\Psi_i|>t|U_1,\cdots,U_n
    \right)\right)\le 2\exp\left(-\frac{ct^2}{1+t/\sqrt{n\rho_n}}\right).\] 
  By the assumption that $\sqrt n\rho_n\to\infty$, when $n$ is large enough, we have $(1+\log(n))/(n\rho_n)\le 4$. Then we derive that when $n$ is large enough, 
    \begin{align*}
        E\max_{i=1,\cdots,n} \frac{\sqrt{n\rho_n}|\Psi_i|}{\sqrt{1+\log(i)}}&\le t_0+\int_{t_0}^\infty P\left(\max_{i=1,\cdots,n} \frac{\sqrt{n\rho_n}|\Psi_i|}{\sqrt{1+\log(i)}}\ge t\right)dt\\
        &\le t_0+\int_{t_0}^\infty \sum_{i=1}^n P\left( \frac{\sqrt{n\rho_n}|\Psi_i|}{\sqrt{1+\log(i)}}\ge t\right)dt\\
         &\le t_0+\int_{t_0}^\infty \sum_{i=1}^n 
         2\exp\left(-\frac{c(1+\log(i))t^2}{1+\sqrt{1+\log(i)}\frac{t}{\sqrt{n\rho_n}}}\right)
         dt\\
           &\le t_0+
           \int_{t_0}^\infty \sum_{i=1}^n 
         2\exp\left(-\frac{ct^2}{1+\sqrt{1+\log(i)}\frac{t}{\sqrt{n\rho_n}}}\right)
        i^{-\frac{ct^2}{1+2t}}
         dt\\
          &\le t_0+
           \int_{t_0}^\infty \sum_{i=1}^n 
         2\exp\left(-\frac{ct^2}{1+2t}\right)
        i^{-4}
         dt\\
         &\le t_0+
           \int_{t_0}^\infty 
         2\exp\left(-\frac{ct^2}{1+2t}\right)dt
        \sum_{i=1}^\infty i^{-4}
         <C,
    \end{align*}
    where $t_0>0$ satisfies $ct_0^2=4(1+2t_0)$, and $C$ depends on $c$ only. 
    As a result, \[\bb{E}\max_{i=1\cdots,n}|\Psi_i|=O(\sqrt{\log(n)}/\sqrt{n\rho_n}),\] which indicates that $\max_{i=1\cdots,n}|\Psi_i|=O_p(\sqrt{\log(n)/(n\rho_n)})$.

\end{proof}

\begin{lemma}
	\label{lem4th6:2}
	Under the assumptions for Theorem~\ref{th:5}, we have 	
	\begin{align*} 
		\sqrt{n}(\delta_1-E\delta_1)&=\frac{2}{\sqrt{n}}\sum_{i=1}^n\left(
		f_i\left(W_i,\pi\right)\left(\frac{W_i}{\pi}-\frac{1-W_i}{1-\pi}\right)+E\left(f_i\left(W_i,\pi\right)\left(\frac{W_i}{\pi}-\frac{1-W_i}{1-\pi}\right)\bigg|\mathbf{z}_i\right)\right.\\
		&\left.-2\tau+
		(W_i-\pi)
		\sum_{j\neq i}\frac{E_{ij}}{N_j}
		E\left(
		f_j^{(1)}\left(W_j,\pi\right)\left(\frac{W_j}{\pi}-\frac{1-W_j}{1-\pi}\right)
		\right)
		\right)+
		o_p(1),
	\end{align*}
	 where $\delta_1$ is defined in (\ref{pf6:def:delta1}).  
\end{lemma}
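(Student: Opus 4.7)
The plan is to Taylor-expand $Y_j=f_j(W_j,M_j/N_j)$ around $M_j/N_j=\pi$ to second order, namely
$Y_j=f_j(W_j,\pi)+f_j^{(1)}(W_j,\pi)(M_j/N_j-\pi)+\tfrac12 f_j^{(2)}(W_j,\tilde\theta_j)(M_j/N_j-\pi)^2$
for an intermediate $\tilde\theta_j$, and substitute into \eqref{pf6:def:delta1} to produce a decomposition $\delta_1=\delta_{11}+\delta_{12}+\delta_{13}$ corresponding to the three Taylor pieces. Setting $a_j:=W_j/\pi-(1-W_j)/(1-\pi)$, one has $E[f_j(W_j,\pi)a_j]=Ef_1(1,\pi)-Ef_1(0,\pi)=\tau$; the target identity then splits as (i) $\sqrt n(\delta_{11}-E\delta_{11})$ produces the two i.i.d.\ summands together with the $-2\tau$ centering, (ii) $\sqrt n(\delta_{12}-E\delta_{12})$ produces the network summand, and (iii) $\sqrt n(\delta_{13}-E\delta_{13})=o_p(1)$.

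For (i), I would view $\delta_{11}/2=n^{-2}\sum_{i,j}T_n(Z_i,Z_j)$ as a (non-symmetric) V-statistic in the i.i.d.\ triples $Z_i=(\mathbf{z}_i,W_i,f_i)$, with $T_n(Z_i,Z_j)=\tilde h^{-p}K((\mathbf{z}_i-\mathbf{z}_j)/\tilde h)f_j(W_j,\pi)a_j I(p(\mathbf{z}_i)>\tilde b)/p(\mathbf{z}_i)$, and apply a Hoeffding projection onto each argument. The two conditional expectations $E[T_n(Z_i,Z_j)\mid Z_i]$ and $E[T_n(Z_i,Z_j)\mid Z_j]$ are evaluated by the change of variables $u=(\mathbf{z}_j-\mathbf{z}_i)/\tilde h$; the $q$-th order symmetric kernel (Assumption~\ref{ass:main3}(1)) together with the smoothness of $p$ (Assumption~\ref{ass:main3}(2)) kills intermediate Taylor terms up to $O(\tilde h^q)$, leaving $I(p(\mathbf{z}_i)>\tilde b)E[f_i(W_i,\pi)a_i\mid\mathbf{z}_i]$ and $f_i(W_i,\pi)a_i\cdot\int K(u)I(p(\mathbf{z}_i+\tilde h u)>\tilde b)du$ respectively. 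The trimming indicators can be removed at cost $o(n^{-1/2})$ via Assumption~\ref{ass:main3}(6), and the degenerate second-order Hoeffding remainder has variance $O((n^2\tilde h^p\tilde b^2)^{-1})=o(n^{-1})$ under Assumption~\ref{ass:main3}(4). Subtracting $E\delta_{11}=2\tau+o(n^{-1/2})$ supplies the $-2\tau$ centering.

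For (ii), I expand $M_j/N_j-\pi=\sum_{k\neq j}E_{jk}(W_k-\pi)/N_j$, turning $\delta_{12}$ into a triple sum over $(i,j,k)$, and carry out the novel projection in two stages: first, the inner average $(n\tilde h^p)^{-1}\sum_i K((\mathbf{z}_i-\mathbf{z}_j)/\tilde h)I(p(\mathbf{z}_i)>\tilde b)/p(\mathbf{z}_i)$ concentrates at $1$ uniformly in $j$ by standard kernel density arguments; second, the remaining double sum over $(j,k)$ is projected onto the index $k$. Because $W_k$ is independent of $\{(f_j,W_j,\mathbf{z}_j):j\neq k\}$ and of the edges $\{E_{k'j'}:k'\neq k\}$, the $k$-th summand collapses to $(W_k-\pi)\sum_{j\neq k}(E_{jk}/N_j)E[f_j^{(1)}(W_j,\pi)a_j]$ plus residuals of the type $\tilde r_{n1},\tilde r_{n2},\tilde r_{n3}$ in Lemma~\ref{lem4th6:6} that stem from the coupling between $N_j$ and the edges incident to $k$; these are absorbed using $\min_j N_j\gtrsim_p n\rho_n$, $\sqrt n\rho_n\to\infty$, and Assumption~\ref{ass:main3}(5). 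Relabeling $k\leftrightarrow i$ gives the network term. For (iii), Assumption~\ref{ass:main3}(5) together with $\sup_j E(M_j/N_j-\pi)^2=O((n\rho_n)^{-1})$ yields $E|\delta_{13}|=O((n\rho_n)^{-1})$, hence $\sqrt n|\delta_{13}|=O_p((\sqrt n\rho_n)^{-1})=o_p(1)$ by Assumption~\ref{assump:random_graph}(3). The principal obstacle is stage two of (ii): the kernel weights and the graph factors $E_{jk}/N_j$ share indices in a way that prevents a direct U-statistic projection, so the new projection technique is essential for isolating the i.i.d.\ $W_k-\pi$ contribution while keeping the graphon-coupled residuals at $o_p(1)$.
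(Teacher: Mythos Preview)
Your proposal is correct and takes essentially the same route as the paper: Taylor-expand $Y_j$ to split $\delta_1=\delta_{11}+\delta_{12}+\delta_{13}$, then treat $\delta_{11}$ by a U-statistic Hoeffding projection (the paper's Lemma~\ref{lem4th6:5}), $\delta_{12}$ by the network projection you reference (Lemma~\ref{lem4th6:6}, with residuals $\tilde r_{n1},\tilde r_{n2},\tilde r_{n3}$), and $\delta_{13}$ by a direct $L^1$ bound. One small correction: removing the trimming indicator from the \emph{centered} quantity $\sqrt n(\delta_{11}-E\delta_{11})$ requires only dominated convergence on the variance, not Assumption~\ref{ass:main3}(6); that assumption controls the bias $E\delta_{11}-2\tau$, which the paper handles separately in Lemma~\ref{lem:new1} as part of Step~2 of Theorem~\ref{th:5} rather than here.
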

\begin{proof}
	Recall that 
		\begin{align*} 
		\delta_1&=\frac{2}{n^2 \tilde h^p}\sum_{i=1}^n\sum_{j=1}^n \frac{1}{p(\mathbf{z}_i)} K\left(\frac{\mathbf{z}_i-\mathbf{z}_j}{\tilde h}\right)Y_j\left(\frac{W_j}{\pi}-\frac{1-W_j}{1-\pi}\right)	I\left(	p(\mathbf{z}_i)>\tilde b\right).
	\end{align*}
	By Taylor expansion, we have
		\begin{align*}
		Y_i&=f_i\left(W_i,\frac{M_i}{N_i}\right)
		=f_i\left(W_i,\pi\right)+f_i^{(1)}\left(W_i,\pi\right)\left(\frac{M_i}{N_i}-\pi\right)+\frac{1}{2}f_i^{(2)}\left(W_i,\pi_i^*\right)\left(\frac{M_i}{N_i}-\pi\right)^2,
	\end{align*}
	where $\pi_i^*\in [\frac{M_i}{N_i}\wedge\pi,\frac{M_i}{N_i}\vee \pi]$. Therefore, we decompose $ \delta_1$ as
	\begin{align*} 
		\delta_1=\delta_{11}+\delta_{12}+\delta_{13}
	\end{align*}
	where
	\begin{align} 
		\delta_{11}&=\frac{2}{n^2 \tilde h^p}\sum_{i=1}^n\sum_{j=1}^n \frac{1}{p(\mathbf{z}_i)} K\left(\frac{\mathbf{z}_i-\mathbf{z}_j}{\tilde h}\right)f_j\left(W_j,\pi\right)\left(\frac{W_j}{\pi}-\frac{1-W_j}{1-\pi}\right)	I\left(	p(\mathbf{z}_i)>\tilde b\right),\notag\\
			 	\delta_{12}&=\frac{2}{n^2 \tilde h^p}\sum_{i=1}^n\sum_{j=1}^n \frac{1}{p(\mathbf{z}_i)} K\left(\frac{\mathbf{z}_i-\mathbf{z}_j}{\tilde h}\right)f_j^{(1)}\left(W_j,\pi\right)\left(\frac{M_j}{N_j}-\pi\right)\left(\frac{W_j}{\pi}-\frac{1-W_j}{1-\pi}\right)	I\left(	p(\mathbf{z}_i)>\tilde b\right),\notag\\
			\label{lem4th6:2:delta13}
		\delta_{13}&=\frac{2}{n^2 \tilde h^p}\sum_{i=1}^n\sum_{j=1}^n \frac{1}{p(\mathbf{z}_i)} K\left(\frac{\mathbf{z}_i-\mathbf{z}_j}{\tilde h}\right)\frac{1}{2}f_j^{(2)}\left(W_j,\pi_j^*\right)\left(\frac{M_j}{N_j}-\pi\right)^2\left(\frac{W_j}{\pi}-\frac{1-W_j}{1-\pi}\right)	I\left(	p(\mathbf{z}_i)>\tilde b\right).
	\end{align}
	
	The proof is divided into three parts. In Lemma~\ref{lem4th6:5}, we prove that
	\begin{align}
		\label{lem4th6:delta11} 
			\sqrt{n}(\delta_{11}-E\delta_{11})&=
			\frac{2}{\sqrt{n}}\sum_{i=1}^n\left(
			f_i\left(W_i,\pi\right)\left(\frac{W_i}{\pi}-\frac{1-W_i}{1-\pi}\right)+E\left(f_i\left(1,\pi\right)-f_i\left(0,\pi\right) |\mathbf{z}_i\right)-2\tau
			\right)+o_p(1). 
	\end{align} 
	In Lemma~\ref{lem4th6:6}, we prove that 
	\begin{align}
			\label{lem4th6:delta12} 
		\sqrt{n}(\delta_{12}-E\delta_{12})=	\frac{2}{\sqrt n}\sum_{j=1}^n
			\left(\frac{M_j}{N_j}-\pi\right)
			E\left(
			f_j^{(1)}\left(W_j,\pi\right)\left(\frac{W_j}{\pi}-\frac{1-W_j}{1-\pi}\right)
			\right)+o_p(1).
	\end{align}
	In the following part of this proof, we prove that 
	\begin{align}
		\label{lem4th6:delta13} 
		 \delta_{13}=o_p(n^{-1/2}).
	\end{align}
	Then the results follows by combining (\ref{lem4th6:delta11}), (\ref{lem4th6:delta12}), and (\ref{lem4th6:delta13}).
	
	\tb{Proof of (\ref{lem4th6:delta13}).} Note that \begin{enumerate}[label=(\alph*)]
		\item By Assumption~\ref{ass:main3}, $	\max_{w=0,1}\sup_{\mathbf{z}_1}E(\sup_{y\in [0,1]}|f_1^{(2)}(w,y)|^2|w,\mathbf{z}_1)
		<\infty$. As a result, \[	\max_{w=0,1}\sup_{\mathbf{z}_1}E(\sup_{y\in [0,1]}|f_1^{(2)}(w,y)||w,\mathbf{z}_1)
		<\infty.\]
		\item The terms  $\left(\frac{W_j}{\pi}-\frac{1-W_j}{1-\pi}\right)	I\left(	p(\mathbf{z}_i)>\tilde b\right)$ in (\ref{lem4th6:2:delta13}) are bounded by $1/(\pi(1-\pi))$. 
		\item $\max_j\left(\frac{M_j}{N_j}-\pi\right)^2=O_p(n^{-1}\rho_n^{-1})=o_p(n^{-1/2}), \max_j E\left(\frac{M_j}{N_j}-\pi\right)^2=o(n^{-1/2})$ by Lemma 15 in \cite{li2022random} and Assumption~\ref{assump:random_graph}.
	\end{enumerate}
	Then we derive that
\begin{align*}
E|\delta_{13}|&\lesssim 
E\left|\frac{2}{n^2 \tilde h^p}\sum_{i=1}^n\sum_{j=1}^n \frac{1}{p(\mathbf{z}_i)} K\left(\frac{\mathbf{z}_i-\mathbf{z}_j}{\tilde h}\right)\frac{1}{2}f_j^{(2)}\left(W_j,\pi_j^*\right)\left(\frac{M_j}{N_j}-\pi\right)^2\right|\\
		&\le \frac{2}{n^2 \tilde h^p}\sum_{i=1}^n\sum_{j=1}^n E\left[\frac{1}{p(\mathbf{z}_i)} \left|K\left(\frac{\mathbf{z}_i-\mathbf{z}_j}{\tilde h}\right)\frac{1}{2}f_j^{(2)}\left(W_j,\pi_j^*\right)\right|\left(\frac{M_j}{N_j}-\pi\right)^2\right]\\
        &\le  \frac{2}{n^2 \tilde h^p}\sum_{i=1}^n\sum_{j=1}^n  E\left[\frac{1}{p(\mathbf{z}_i)} \left|K\left(\frac{\mathbf{z}_i-\mathbf{z}_j}{\tilde h}\right)\frac{1}{2}
        \sup_{v\in(0,1)}|f_j^{(2)}\left(W_j,v\right)|\right|  \left(\frac{M_j}{N_j}-\pi\right)^2 \right] \\
        &=
		\frac{2}{n^2 \tilde h^p}\sum_{i=1}^n\sum_{j=1}^n 
		 E\left[
			\frac{1}{p(\mathbf{z}_i)} |K\left(\frac{\mathbf{z}_i-\mathbf{z}_j}{\tilde h}\right)|\frac{1}{2}E(  \sup_{v\in(0,1)}|f_j^{(2)}\left(W_j,v\right)|| \mathbf{z}_j)
		\right]E\left[\left(\frac{M_j}{N_j}-\pi\right)^2 \right] \\ 
		&=	
			\frac{1}{n^2 \tilde h^p}\sum_{i=1}^n\sum_{j=1}^n \int_{\mathbb{R}^p}
\int_{\mathbb{R}^p} |K\left(\frac{\mathbf{z}_i-\mathbf{z}_j}{\tilde h}\right)|p(\mathbf{z}_j)d\mathbf{z}_id\mathbf{z}_j E\left[\left(\frac{M_j}{N_j}-\pi\right)^2\right]\\
&=		\frac{1}{n^2 }\sum_{i=1}^n\sum_{j=1}^n \int_{\mathbb{R}^p}
\int_{\mathbb{R}^p}|K\left(u\right)|p(\mathbf{z}_j)dud\mathbf{z}_jE\left[\left(\frac{M_j}{N_j}-\pi\right)^2\right] =o(n^{-1/2}),
	\end{align*}
where the last equality follow from the fact that $K(u)$ are bounded, and $K(u)=0$ for all $u\in\{u:|u|_\infty\ge 1\}$. Then (\ref{lem4th6:delta13}) holds.
\end{proof}

\begin{lemma}
	\label{lem4th6:3}
		Under the assumptions for Theorem~\ref{th:5}, we have  
	\begin{align*} 
		\sqrt{n}(\delta_{2}-E\delta_{2})
		=&
		\frac{1}{\sqrt{n}}\sum_{i=1}^n \left(	E(f_i(1,\pi)-f_i(0,\pi)|\mathbf{z}_i)+
		f_i(W_i,\pi)\left(\frac{W_i}{\pi}-\frac{1-W_i}{1-\pi}\right)	\right.\notag\\+&\left.
		E(f_i(1,\pi)|\mathbf{z}_i)
		\frac{W_i}{\pi}-	E(f_i(0,\pi)|\mathbf{z}_i)\frac{1-W_i}{1-\pi}-3\tau
		\right.\notag\\+&\left.
		(W_i-\pi)
		\sum_{j\neq i}\frac{E_{ij}}{N_j}
		E\left(
		f_j^{(1)}\left(W_j,\pi\right)
		\left(\frac{W_j}{\pi}-\frac{(1-W_j)}{(1-\pi)}\right)
		\right)
		\right)+o_p(1),
	\end{align*}
	where $\delta_2$ is defined in (\ref{pf6:def:delta2}).  
\end{lemma}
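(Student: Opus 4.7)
The plan is to mirror the approach used for $\delta_1$ in Lemma~\ref{lem4th6:2}, performing a Taylor expansion of $Y_j = f_j(W_j, M_j/N_j)$ around $\pi$:
\[
Y_j = f_j(W_j, \pi) + f_j^{(1)}(W_j, \pi)\Big(\tfrac{M_j}{N_j} - \pi\Big) + \tfrac{1}{2} f_j^{(2)}(W_j, \pi_j^*)\Big(\tfrac{M_j}{N_j} - \pi\Big)^2,
\]
which decomposes $\delta_2 = \delta_{21} + \delta_{22} + \delta_{23}$ according to the three Taylor pieces. The key structural difference from the $\delta_1$ analysis is that $\hat p_1(\mathbf{z}_i)$ and $\hat p_2(\mathbf{z}_i)$ are themselves kernel-weighted sums over the sample, so after expansion each $\delta_{2k}$ becomes a triple-indexed sum over $(i,j,k)$ rather than a double sum—effectively a third-order U-statistic analog.

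First I would dispose of the quadratic remainder $\delta_{23}$ by copying the calculation for $\delta_{13}$ in Lemma~\ref{lem4th6:2}. Using the $\mathcal{L}_2$ moment condition on $\sup_{y\in[0,1]} |f_j^{(2)}(w,y)|$ from Assumption~\ref{ass:main3}(5), boundedness of $K$ and the indicator $I(p(\mathbf{z}_i) > \tilde b)$, together with $E(M_j/N_j - \pi)^2 = O((n\rho_n)^{-1}) = o(n^{-1/2})$ under Assumption~\ref{assump:random_graph}, taking expectations of each summand and changing variables $u = (\mathbf{z}_i-\mathbf{z}_j)/\tilde h$ yields $E|\delta_{23}| = o(n^{-1/2})$. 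A similar bound on $E(\delta_{23}-E\delta_{23})^2$, exploiting the bounded kernel support and the independence of $(f_i,\mathbf{z}_i,W_i)$ triples, upgrades this to $\delta_{23} - E\delta_{23} = o_p(n^{-1/2})$.

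Second, $\delta_{21}$ is treated as a triple-index, degenerate-in-the-bandwidth U-statistic. Expanding $\hat p_k(\mathbf{z}_i)$ explicitly gives a summand containing $K\big(\tfrac{\mathbf{z}_i-\mathbf{z}_j}{\tilde h}\big)K\big(\tfrac{\mathbf{z}_i-\mathbf{z}_k}{\tilde h}\big) f_j(W_j,\pi)\big(\tfrac{W_jW_k}{\pi^2}-\tfrac{(1-W_j)(1-W_k)}{(1-\pi)^2}\big)/p^2(\mathbf{z}_i)$. I would use a Hoeffding-type decomposition, keeping only the three first-order projections (which integrate out two of the three indices) and showing the higher-order components are $o_p(n^{-1/2})$ via Assumption~\ref{ass:main3}(4). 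Each projection integrates through the kernel by change of variables, producing conditional expectations such as $E(f_1(1,\pi)\mid \mathbf{z}_1)$, $E(f_1(0,\pi)\mid \mathbf{z}_1)$, and their unconditional counterparts, which together with $W_i$-dependent weights reproduce the first four terms in the statement (in particular yielding the $-3\tau$ constant after centering). Bandwidth conditions $n\tilde h^{2p}\tilde b^4/\log^2 n \to \infty$ and $n\tilde h^{4q}/\tilde b^4 \to 0$ control the trimming-boundary effect and the smoothing bias of order $\tilde h^{2q}$ respectively; this step parallels Lemma~\ref{lem4th6:5} and Lemma~\ref{lem4th6:8} but with one extra index to project out.

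Third, and most delicate, is $\delta_{22}$, where the factor $(M_j/N_j - \pi)$ is a functional of the random graph and of the $W_\ell$'s not appearing elsewhere in the summand, so standard i.i.d.\ U-statistic projection does not apply. Here I would invoke the projection technique developed for Lemma~\ref{lem4th6:9}: first condition on the graph and apply the Hoeffding projection over $(\mathbf{z}_i,\mathbf{z}_k,W_k)$ and $(\mathbf{z}_j,W_j)$; then use Lemma~\ref{lem:11} and the graphon representation to replace $\sum_{\ell}E_{j\ell}W_\ell/N_j$ by $\pi + (M_j/N_j - \pi)$ cleanly, and argue that mixed terms contribute only $O_p(n^{-1/2})$ residuals analogous to $\tilde r_{n1},\tilde r_{n2},\tilde r_{n3}$ in the proof of Lemma~\ref{lem4th6:6}. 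The surviving leading part, after integrating the kernel, is exactly the graph-interference term $(W_i-\pi)\sum_{j\neq i}(E_{ij}/N_j)E\big(f_j^{(1)}(W_j,\pi)(W_j/\pi - (1-W_j)/(1-\pi))\big)$, matching the corresponding piece for $\delta_{12}$. Combining the three contributions and centering gives the claimed expansion. The main obstacle is the combinatorial bookkeeping in the triple sum together with disentangling graph randomness from kernel smoothing in $\delta_{22}$; the projection-over-i.i.d.-then-graphon device from Lemma~\ref{lem4th6:9} is exactly what makes this tractable.
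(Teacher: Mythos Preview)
Your proposal is correct and follows essentially the same route as the paper: Taylor-expand $Y_j$ to split $\delta_2=\delta_{21}+\delta_{22}+\delta_{23}$, bound $E|\delta_{23}|=o(n^{-1/2})$ directly (the paper stops here and applies Markov, so your additional second-moment argument is unnecessary), treat $\delta_{21}$ via a third-order U-statistic projection exactly as in Lemma~\ref{lem4th6:8}, and treat $\delta_{22}$ via the graph-aware projection of Lemma~\ref{lem4th6:9}. One minor correction: Lemma~\ref{lem:11} is not invoked in the $\delta_{22}$ step; the paper instead controls the graph-dependent cross terms through moment bounds on $E(E_{jl}E_{j_1l_1}/(N_jN_{j_1}))$ borrowed from \cite{li2022random}, and the final rewriting from $\sum_j(M_j/N_j-\pi)$ to $\sum_i(W_i-\pi)\sum_{j\neq i}E_{ij}/N_j$ is just the index swap $M_j-\pi N_j=\sum_{k\neq j}E_{jk}(W_k-\pi)$.
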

\begin{proof}
	Let   $I_j=\mathbf{1}\left(	p(\mathbf{z}_j)>\tilde b\right)$ for $1\le j\le n$. 
	Recall that  
	\begin{align*} 
		\delta_2&=\frac{1}{n^2 \tilde h^p}\sum_{i=1}^n\sum_{j=1}^n \frac{1}{p^2(\mathbf{z}_i)} K\left(\frac{\mathbf{z}_i-\mathbf{z}_j}{\tilde h}\right)Y_j\left(\hat p_1(\mathbf{z}_i)\frac{W_j}{\pi}-\hat p_2(\mathbf{z}_i)\frac{1-W_j}{1-\pi}\right)	I\left(	p(\mathbf{z}_i)>\tilde b\right)\\
		&=\frac{1}{n^3\tilde h^{2p}}\sum_{i=1}^n\sum_{j=1}^n
		\sum_{k=1}^n 
		\frac{I_i}{p^2(\mathbf{z}_i)} K\left(\frac{\mathbf{z}_i-\mathbf{z}_j}{\tilde h}\right)
		K\left(\frac{\mathbf{z}_i-\mathbf{z}_k}{\tilde h}\right)
		Y_j\left(\frac{W_jW_k}{\pi^2}-
		\frac{(1-W_j)(1-W_k)}{(1-\pi)^2}\right).
	\end{align*}
	By taylor expansion,  similar to Lemma~\ref{lem4th6:2}, we decompose $ \delta_2$ as
	\begin{align*} 
		\delta_2=\delta_{21}+\delta_{22}+\delta_{23}
	\end{align*}
	where 
\begin{align*}
	\delta_{21}&=\frac{1}{n^3\tilde h^{2p}}\sum_{i=1}^n\sum_{j=1}^n
	\sum_{k=1}^n 
	\frac{I_i}{p^2(\mathbf{z}_i)} K\left(\frac{\mathbf{z}_i-\mathbf{z}_j}{\tilde h}\right)
	K\left(\frac{\mathbf{z}_i-\mathbf{z}_k}{\tilde h}\right)
	f_j\left(W_j,\pi\right)
	\left(\frac{W_jW_k}{\pi^2}-
	\frac{(1-W_j)(1-W_k)}{(1-\pi)^2}\right),\\
	\delta_{22}&=\frac{1}{n^3\tilde h^{2p}}\sum_{i=1}^n\sum_{j=1}^n
	\sum_{k=1}^n 
	\frac{I_i}{p^2(\mathbf{z}_i)} K\left(\frac{\mathbf{z}_i-\mathbf{z}_j}{\tilde h}\right)
	K\left(\frac{\mathbf{z}_i-\mathbf{z}_k}{\tilde h}\right)
	f_j^{(1)}\left(W_j,\pi\right)\left(\frac{M_j}{N_j}-\pi\right)\\&
	\left(\frac{W_jW_k}{\pi^2}-
	\frac{(1-W_j)(1-W_k)}{(1-\pi)^2}\right),\\
	\delta_{23}&=\frac{1}{n^3\tilde h^{2p}}\sum_{i=1}^n\sum_{j=1}^n
	\sum_{k=1}^n 
	\frac{I_i}{p^2(\mathbf{z}_i)} K\left(\frac{\mathbf{z}_i-\mathbf{z}_j}{\tilde h}\right)
	K\left(\frac{\mathbf{z}_i-\mathbf{z}_k}{\tilde h}\right)
	\frac{1}{2}f_j^{(2)}\left(W_j,\pi_j^*\right)\left(\frac{M_j}{N_j}-\pi\right)^2\\&
	\left(\frac{W_jW_k}{\pi^2}-
	\frac{(1-W_j)(1-W_k)}{(1-\pi)^2}\right).
\end{align*}

	The proof is divided into three parts. In Lemma~\ref{lem4th6:8}, we prove that  
	\begin{align}
		\label{lem4th6:delta21}  
			\sqrt{n}(\delta_{21}-E\delta_{21})=&
			\frac{1}{\sqrt{n}}\sum_{i=1}^n \left(
			E(f_i(1,\pi)-f_i(0,\pi)|\mathbf{z}_i)+
			f_i(W_i,\pi)\left(\frac{W_i}{\pi}-\frac{1-W_i}{1-\pi}\right)\right.\notag\\+&\left.
			E(f_i(1,\pi)|\mathbf{z}_i)\frac{W_i}{\pi}-E(f_i(0,\pi)|\mathbf{z}_i)\frac{1-W_i}{1-\pi}-3\tau
			\right)+o_p(1).
	\end{align}  
	
	In Lemma~\ref{lem4th6:9}, we prove that 
	\begin{align}
		\label{lem4th6:delta22} 
		\sqrt{n}(\delta_{22}-E\delta_{22})=
		\frac{1}{\sqrt{n}}\sum_{i=1}^n 
		\left(\frac{M_i}{N_i}-\pi\right)
		E\left(
		f_j^{(1)}\left(W_j,\pi\right)
		\left(\frac{W_j }{\pi}-\frac{1-W_j}{1-\pi }\right)
		\right)+o_p(1).
	\end{align}
	In the following part of this proof, we prove that 
	\begin{align}
		\label{lem4th6:delta23} 
		\delta_{23}=o_p(n^{-1/2}).
	\end{align}
	Then the results follows by combining (\ref{lem4th6:delta21}), (\ref{lem4th6:delta22}), and (\ref{lem4th6:delta23}).
 
	\tb{Proof of (\ref{lem4th6:delta23}).} Note that   \begin{enumerate}[label=(\alph*)]
		\item By Assumption~\ref{ass:main3}, $	\max_{w=0,1}\sup_{\mathbf{z}_1}E(\sup_{y\in [0,1]}|f_1^{(2)}(w,y)|^2|w,\mathbf{z}_1)
		<\infty$. As a result, \[	\max_{w=0,1}\sup_{\mathbf{z}_1} E(\sup_{y\in [0,1]}|f_1^{(2)}(w,y)||w,\mathbf{z}_1)
		<\infty.\]
		\item The terms  $\left(\frac{W_jW_k}{\pi^2}-\frac{(1-W_j)(1-W_k)}{(1-\pi)^2}\right)$ are bounded.
		\item $\max_j\left(\frac{M_j}{N_j}-\pi\right)^2=O_p(n^{-1}\rho_n^{-1}), E\left(\frac{M_j}{N_j}-\pi\right)^2\lesssim (n\rho_n)^{-1}$ by Lemma 15 in \cite{li2022random}.
	\end{enumerate}
	Then we derive that
	\begin{align*}
		E|\delta_{23}|&\lesssim \frac{1}{n\rho_n}
		\frac{1}{n^3 \tilde h^{2p}}\sum_{i=1}^n\sum_{j=1}^n \sum_{k=1}^n 
		E\left[
		\frac{I_i}{p^2(\mathbf{z}_i)} |K\left(\frac{\mathbf{z}_i-\mathbf{z}_j}{\tilde h}\right)
		K\left(\frac{\mathbf{z}_i-\mathbf{z}_k}{\tilde h}\right)
		|\frac{1}{2}E(\sup_{v\in[0,1]}|f_j^{(2)}\left(W_j,v\right)||\mathbf{z}_j)
		\right]\\ 
		&	\lesssim \frac{1}{n\rho_n}
		\frac{1}{n^3 \tilde h^{2p}}\sum_{i=1}^n\sum_{j=1}^n \sum_{k=1}^n
		\iiint |K\left(\frac{\mathbf{z}_i-\mathbf{z}_j}{\tilde h}\right)	K\left(\frac{\mathbf{z}_i-\mathbf{z}_k}{\tilde h}\right)|
		 \frac{p(\mathbf{z}_j)p(\mathbf{z}_k)}{p(\mathbf{z}_i)}I_i d\mathbf{z}_id\mathbf{z}_jd\mathbf{z}_k \\
		&\lesssim \frac{1}{n\rho_n}	\frac{1}{n^3  }\sum_{i=1}^n\sum_{j=1}^n \sum_{k=1}^n
		\iiint
		|K\left(\mathbf{u}_1\right)K\left(\mathbf{u}_2\right)|p(\mathbf{z}_i+\tilde h\mathbf{u}_1)\frac{p(\mathbf{z}_i+\tilde h\mathbf{u}_2)}{p(\mathbf{z}_i)}I_id\mathbf{u}_1d\mathbf{u}_2d\mathbf{z}_i.
	\end{align*}
	Note that since $\tilde h=o(\tilde b)$, $\nabla p(z)$ is bounded, we have
	\begin{align*}
	I(|\mathbf{u}_2|_\infty\le 1)	\left|\frac{p(\mathbf{z}_i+\tilde h\mathbf{u}_2)}{p(\mathbf{z}_i)}I_i-I_i\right|
		\lesssim	I(|\mathbf{u}_2|_\infty\le 1) \left|
		\frac{\tilde h\mathbf{u}_2^\top \mathbf{1}}{\tilde b}
		\right|\le \frac{\tilde h}{\tilde b}\lesssim 1. 
	\end{align*}
	Therefore, since $I_i\in[0,1]$, \[
		E|\delta_{23}|\lesssim
		 \frac{1}{n\rho_n}	\frac{1}{n^3  }\sum_{i=1}^n\sum_{j=1}^n \sum_{k=1}^n
		\iiint
		|K\left(\mathbf{u}_1\right)K\left(\mathbf{u}_2\right)|p(\mathbf{z}_i+\tilde h\mathbf{u}_1) d\mathbf{u}_1d\mathbf{u}_2d\mathbf{z}_i.
	\]
	Then by the fact that $K(u)$ is bounded,  $\int p(\mathbf{z}_i)d\mathbf{z}_i=1, \sqrt{n}\rho_n\to\infty$, and  $K(u)=0$ for all $u\in\{u:|u|_\infty\ge 1\}$, we have that  $
			E|\delta_{23}|=o(n^{-1/2}).$ Then by Markov inequality, (\ref{lem4th6:delta23}) holds.
\end{proof}

\begin{lemma}
	\label{lem4th6:4}
	Under the assumptions for Theorem~\ref{th:5}, we have $
	\delta_3=o_p(n^{-\frac{1}{2}})$ 
	where $\delta_3$ is defined in (\ref{pf6:def:delta3}).  
\end{lemma}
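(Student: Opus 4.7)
The plan is to bound $\delta_3$ pointwise using the uniform convergence of the kernel density estimators together with the trimming, and then to exploit an algebraic cancellation that tightens the crude bound by one factor of $1/\tilde b$. First I would invoke Lemma~\ref{lem4th6:1} to obtain
\[
\sup_{\mathbf{z}} |\hat p_k(\mathbf{z}) - p(\mathbf{z})| = O_p(c_N), \qquad k = 1, 2,
\]
with $c_N = \sqrt{\log n/(n\tilde h^p)} + \tilde h^q$. Assumption~\ref{ass:main3}(4) forces $c_N = o(\tilde b)$, so with probability tending to one we have $\hat p_k(\mathbf{z}_i) \geq p(\mathbf{z}_i)/2 \geq \tilde b/2$ uniformly on the trimming set $\{p(\mathbf{z}_i) > \tilde b\}$.

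The decisive observation is the identity
\[
\frac{1}{n\tilde h^p}\sum_{j=1}^n K\!\left(\frac{\mathbf{z}_i - \mathbf{z}_j}{\tilde h}\right) Y_j \frac{W_j}{\pi} = \hat p_1(\mathbf{z}_i)\hat m_1(\mathbf{z}_i),
\]
and symmetrically for $k = 2$, which causes the generic summand of $\delta_3$ to simplify to
\[
\frac{(\hat p_1(\mathbf{z}_i) - p(\mathbf{z}_i))^2 \hat m_1(\mathbf{z}_i)}{p^2(\mathbf{z}_i)}\, I(p(\mathbf{z}_i) > \tilde b).
\]
One factor of $\hat p_1$ cancels against the $\hat p_1$ in the denominator, which is what removes the extra $1/p$ and buys the tight rate. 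Combining this with the uniform density bound yields
\[
|\delta_3| \leq \frac{c_N^2}{\tilde b^2}(1 + o_p(1))\cdot \frac{1}{n}\sum_{i=1}^n (|\hat m_1(\mathbf{z}_i)| + |\hat m_0(\mathbf{z}_i)|)\, I(p(\mathbf{z}_i) > \tilde b).
\]
I would then argue the sample average on the right is $O_p(1)$: the moment condition in Assumption~\ref{ass:main3}(5) gives $E|m_k(\mathbf{z}_1)| < \infty$, and standard local-constant regression consistency yields $\sup_{\mathbf{z}:p(\mathbf{z}) > \tilde b/2}|\hat m_k(\mathbf{z}) - m_k(\mathbf{z})| = o_p(1)$. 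Assumption~\ref{ass:main3}(4) then closes the argument via
\[
\sqrt n\cdot \frac{c_N^2}{\tilde b^2} = \frac{\log n}{\sqrt n\, \tilde h^p \tilde b^2} + \sqrt n\, \frac{\tilde h^{2q}}{\tilde b^2} \to 0,
\]
delivering $\delta_3 = o_p(n^{-1/2})$. The expectation statement $E\delta_3 = o(n^{-1/2})$ that is quietly invoked in Step 2 of the proof of Theorem~\ref{th:5} would be recovered from the same deterministic domination by dominated convergence, using the $L_2$ bound on $\sup_y |f_1(w, y)|$ from Assumption~\ref{ass:main3}(5) to dominate the absolute values of $\hat m_k$ in integrable fashion.

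The main obstacle — indeed the only delicate point — is avoiding the naive bound $(\hat p_k - p)^2/(\hat p_k p^2) \lesssim c_N^2/\tilde b^3$. Coupled with a crude $O_p(1)$ control of the inner sum, this would yield only $|\delta_3| = O_p(c_N^2/\tilde b^3)$, and that rate is \emph{not} $o(n^{-1/2})$ under Assumption~\ref{ass:main3}(4): both $\log n/(\sqrt n\, \tilde h^p \tilde b^3)$ and $\sqrt n\, \tilde h^{2q}/\tilde b^3$ carry an extra $1/\tilde b \to \infty$ relative to the quantities controlled by the bandwidth assumptions. Rewriting the inner sum as $\hat p_k \hat m_k$ so that one $1/p$ explicitly cancels is therefore not cosmetic but essential, and it is this bookkeeping — rather than any hard probabilistic ingredient — that makes the proof go through.
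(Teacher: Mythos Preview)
Your strategy is sound and lands on the correct rate $c_N^2/\tilde b^2$; your diagnosis that the crude bound $c_N^2/\tilde b^3$ is not controlled by Assumption~\ref{ass:main3}(4) is exactly right. But the route differs from the paper's, and one step of yours is under-justified.

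The paper never introduces $\hat m_k$. It bounds $E|\cdot|$ for each half of $\delta_3$ directly: on the trimming set, write $(\hat p_1-p)^2/(\hat p_1 p^2)=\big[(\hat p_1-p)^2/(\hat p_1 p)\big]\cdot(1/p)$, bound the bracket by $c_N^2/\big(\tilde b(\tilde b-c_N)\big)\asymp c_N^2/\tilde b^2$, and keep the surviving $1/p(\mathbf{z}_i)$ to cancel against $p(\mathbf{z}_i)\,d\mathbf{z}_i$ in the expectation
\[
E\!\left[\tfrac{1}{p(\mathbf{z}_i)}\,\big|K\big(\tfrac{\mathbf{z}_i-\mathbf{z}_j}{\tilde h}\big)\big|\cdot E(|Y_j|\mid \mathbf{z}_j)\right]=O(\tilde h^p).
\]
This yields $E|\delta_3|=O(c_N^2/\tilde b^2)=o(n^{-1/2})$ in one stroke, delivering both the $o_p$ claim and the expectation bound quoted in Step~2 of the proof of Theorem~\ref{th:5}.

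Your $\hat p_1\hat m_1$ cancellation reaches the same $c_N^2/\tilde b^2$, but you then need $\tfrac{1}{n}\sum_i|\hat m_1(\mathbf{z}_i)|I_i=O_p(1)$, and the appeal to ``standard local-constant regression consistency'' is not licensed here: the responses $Y_j=f_j(W_j,M_j/N_j)$ are \emph{not} i.i.d.\ --- each depends on the full treatment vector and the graph through $M_j/N_j$ --- so off-the-shelf uniform convergence results for kernel regression do not apply. The claim is still true, but the clean fix is, on the high-probability event $\hat p_1\ge p/2$, to bound
\[
|\hat m_1(\mathbf{z}_i)|\,I_i\;\le\;\frac{2}{p(\mathbf{z}_i)}\cdot\frac{1}{n\tilde h^p\pi}\sum_j\big|K\big(\tfrac{\mathbf{z}_i-\mathbf{z}_j}{\tilde h}\big)\big|\,|Y_j|W_j
\]
and take expectations --- which is precisely the paper's computation. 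At that point the $\hat m_1$ rewriting has added a step without saving one; the paper's direct expectation bound is shorter and sidesteps the network-dependence issue entirely.
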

\begin{proof}
	For ease of reading, we recall the definition of $\delta_3$:
	\begin{align*}
		\delta_3&=\frac{1}{n^2 \tilde h^p}\sum_{i=1}^n \left(
		\frac{(\hat p_1(\mathbf{z}_i)-p(\mathbf{z}_i))^2}{\hat p_1(\mathbf{z}_i)p^2(\mathbf{z}_i)}
		\sum_{j=1}^n K\left(\frac{\mathbf{z}_i-\mathbf{z}_j}{\tilde h}\right)Y_j\frac{W_j}{\pi}\right.\notag\\ 
		&\left.-
		\frac{(\hat p_2(\mathbf{z}_i)-p(\mathbf{z}_i))^2}{\hat p_2(\mathbf{z}_i)p^2(\mathbf{z}_i)}\sum_{j=1}^n K\left(\frac{\mathbf{z}_i-\mathbf{z}_j}{\tilde h}\right)Y_j\frac{1-W_j}{1-\pi}
		\right)
		I\left(	p(\mathbf{z}_i)>\tilde b\right).
	\end{align*}
	To prove $	\delta_3=o_p(n^{-\frac{1}{2}})$, it suffices to show that
	\begin{align}
		\label{lem4th6:4:eqn1}
		&	\frac{1}{n^2 \tilde h^p}E\left(\sum_{i=1}^n 	\sum_{j=1}^n
		\frac{(\hat p_1(\mathbf{z}_i)-p(\mathbf{z}_i))^2}{\hat p_1(\mathbf{z}_i)p^2(\mathbf{z}_i)}
		| K\left(\frac{\mathbf{z}_i-\mathbf{z}_j}{\tilde h}\right)Y_j|\frac{W_j}{\pi}	I\left(	p(\mathbf{z}_i)>\tilde b\right)\right)=o(n^{-1/2}),\\
		\label{lem4th6:4:eqn2}
		&	\frac{1}{n^2 \tilde h^p}E\left(\sum_{i=1}^n 	\sum_{j=1}^n
		\frac{(\hat p_2(\mathbf{z}_i)-p(\mathbf{z}_i))^2}{\hat p_2(\mathbf{z}_i)p^2(\mathbf{z}_i)}
		| K\left(\frac{\mathbf{z}_i-\mathbf{z}_j}{\tilde h}\right)Y_j|\frac{1-W_j}{1-\pi}	I\left(	p(\mathbf{z}_i)>\tilde b\right)\right)=o(n^{-1/2}),
	\end{align}
	since the result then follows by Markov's inequality. We shall only prove (\ref{lem4th6:4:eqn1}), as  (\ref{lem4th6:4:eqn2}) follows by similar arguments.
	
	Under Assumption~\ref{ass:main3}, by Lemma~\ref{lem4th6:1}, we have
	\begin{align*} 
		\sup_{\mathbf{z}\in\bb{R}^p}\left| 	\hat p_1(\mathbf{z})-p(\mathbf{z})\right|&=O_{a.s.}\left(\sqrt{\frac{\log(n)}{n \tilde h^p}}+\tilde h^q\right).
	\end{align*}
	Moreover, Assumption~\ref{ass:main3} implies that  $\sqrt{\frac{\log(n)}{n \tilde h^p}}+\tilde h^q=o(\tilde b n^{-\frac{1}{4}})$. Then we derive that  \begin{align*} 
		&	\frac{1}{n^2 \tilde h^p}E\left(\sum_{i=1}^n 	\sum_{j=1}^n
		\frac{(\hat p_1(\mathbf{z}_i)-p(\mathbf{z}_i))^2}{\hat p_1(\mathbf{z}_i)p^2(\mathbf{z}_i)}
		| K\left(\frac{\mathbf{z}_i-\mathbf{z}_j}{\tilde h}\right)Y_j|\frac{W_j}{\pi}	I\left(	p(\mathbf{z}_i)>\tilde b\right)\right)\\
		\lesssim& \frac{1}{n^2 \tilde h^p} \frac{\left(\sqrt{\frac{\log(n)}{n \tilde h^p}}+\tilde h^q\right)^2}{b|b-\sqrt{\frac{\log(n)}{n \tilde h^p}}-\tilde h^q|} \sum_{i=1}^n 	\sum_{j=1}^n E\left|\frac{1}{p(\mathbf{z}_i)}K\left(\frac{\mathbf{z}_i-\mathbf{z}_j}{\tilde h}\right)f_j\left(1,\mni\right)\right|\\
		= & o\left(n^{-\frac{1}{2}}
		\frac{1}{n^2\tilde h^p}\sum_{i=1}^n\sum_{j=1}^n 
		E\left[\left|\frac{1}{p(\mathbf{z}_i)}K\left(\frac{\mathbf{z}_i-\mathbf{z}_j}{\tilde h}\right)\right|E\left(\left|f_j\left(1,\mni\right)\right|\bigg|\mathbf{z}_j\right)\right]\right).\\
		= & o\left(n^{-\frac{1}{2}}
		\frac{1}{n^2\tilde h^p}\sum_{i=1}^n\sum_{j=1}^n 
		E \left|\frac{1}{p(\mathbf{z}_i)}K\left(\frac{\mathbf{z}_i-\mathbf{z}_j}{\tilde h}\right)\right| \right)\\
		=& o\left(n^{-\frac{1}{2}}
		\int_{\bb{R}^p}\int_{\bb{R}^p}|K(\mathbf{u})| p(\mathbf{z}_i+\tilde h\mathbf{u})d\mathbf{z}_id\mathbf{u}\right)
		=o (n^{-\frac{1}{2}}),
	\end{align*}
	where the second equality follows from the fact that $f_j$ is independent of $\mathbf{z}_i$ when $i\neq j$, the assumption that  $\sup_{\mathbf{z}_1}\sup_{y\in[0,1]}E(|f_1(1,y)||\mathbf{z}_1)<\infty$ and the Cauchy-Schwarz inequality. The last equality holds because both \( p({z}) \) and \( K(u) \) are bounded, and \( K(u) = 0 \) for all \( u \) such that \( |u|_\infty \geq 1 \). 

Thus, (\ref{lem4th6:4:eqn1}) holds, and consequently, the proof is complete.
\end{proof}

\begin{lemma}
	\label{lem4th6:5}
	Under the assumptions for Theorem~\ref{th:5}, we have 
		\begin{align*}
		\sqrt{n}(\delta_{11}-E\delta_{11})=
		\frac{2}{\sqrt{n}}\sum_{i=1}^n\left(
		f_i\left(W_i,\pi\right)\left(\frac{W_i}{\pi}-\frac{1-W_i}{1-\pi}\right)+E\left(f_i\left(1,\pi\right)-f_i\left(0,\pi\right) |\mathbf{z}_i\right)-2\tau
		\right)+o_p(1),
	\end{align*} 
	where 
	\begin{align*}
			\delta_{11}=\frac{2}{n^2 \tilde h^p}\sum_{i=1}^n\sum_{j=1}^n \frac{1}{p(\mathbf{z}_i)} K\left(\frac{\mathbf{z}_i-\mathbf{z}_j}{\tilde h}\right)f_j\left(W_j,\pi\right)\left(\frac{W_j}{\pi}-\frac{1-W_j}{1-\pi}\right)	I\left(	p(\mathbf{z}_i)>\tilde b\right).
	\end{align*}
\end{lemma}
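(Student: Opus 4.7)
My plan is to view $\delta_{11}$ as a V-statistic of degree two with a random (bandwidth-dependent) kernel, and apply the Hoeffding/Hájek projection combined with standard kernel-smoothing expansions. Introduce the shorthand $\xi_j = f_j(W_j,\pi)(W_j/\pi - (1-W_j)/(1-\pi))$, $g(\mathbf{z}) = I(p(\mathbf{z})>\tilde b)/p(\mathbf{z})$, and $K_{ij} = \tilde h^{-p} K((\mathbf{z}_i-\mathbf{z}_j)/\tilde h)$, so that $\delta_{11} = (2/n^2)\sum_{i,j} g(\mathbf{z}_i) K_{ij}\xi_j$. The crucial structural observations are that $\xi_j$ is i.i.d.\ across $j$, is independent of the graph, and satisfies $E(\xi_j\mid\mathbf{z}_j) = \mu(\mathbf{z}_j) := E(f_1(1,\pi)-f_1(0,\pi)\mid\mathbf{z}_1=\mathbf{z}_j)$ with $E\xi_j = \tau$.

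First I would peel off the diagonal $i=j$ contribution $2K(0)\tilde h^{-p} n^{-2}\sum_i g(\mathbf{z}_i)\xi_i$: after centering and multiplying by $\sqrt{n}$, a second-moment bound using $g(\mathbf{z}_i)\le 1/\tilde b$ yields order $(n\tilde h^{2p}\tilde b^2)^{-1/2}$, which is $o_p(1)$ under Assumption~\ref{ass:main3}(4). Next I would symmetrize the off-diagonal sum into a U-statistic with kernel $\psi(\cdot,\cdot) = \tfrac{1}{2}\bigl(g(\mathbf{z}_1)K_{12}\xi_2 + g(\mathbf{z}_2)K_{21}\xi_1\bigr)$ (using $K_{ij}=K_{ji}$), and compute its Hájek projection. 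The change of variable $\mathbf{u}=(\mathbf{z}_j-\mathbf{z}_i)/\tilde h$ together with Taylor expansion and the $q$-th order symmetry of $K$ give
\begin{align*}
E(K_{12}\xi_2\mid W_1,\mathbf{z}_1) &= \mu(\mathbf{z}_1)p(\mathbf{z}_1) + O(\tilde h^q), \\
E(K_{12}g(\mathbf{z}_2)\mid W_1,\mathbf{z}_1) &= I(p(\mathbf{z}_1)>\tilde b) + O(\tilde h/\tilde b).
\end{align*}
Multiplying by $g(\mathbf{z}_1)$ and $\xi_1$ respectively, and using $g(\mathbf{z}_1)p(\mathbf{z}_1)=I(p(\mathbf{z}_1)>\tilde b)$, the projection reduces to $\tfrac{1}{2}(\mu(\mathbf{z}_i)+\xi_i)I(p(\mathbf{z}_i)>\tilde b)$ plus small bias.

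Then the Hoeffding decomposition writes the U-statistic as its projection plus a degenerate second-order remainder. The projection, after centering and multiplying by $\sqrt{n}$, delivers the target $(2/\sqrt{n})\sum_i(\xi_i+\mu(\mathbf{z}_i)-2\tau)$ once the trimming indicator $I(p(\mathbf{z}_i)>\tilde b)$ is replaced by $1$; this replacement costs at most $\sqrt{n}\, E|S_1(\mathbf{z})|I(p(\mathbf{z})<\tilde b) = o(1)$ by Assumption~\ref{ass:main3}(6). The degenerate piece has second moment of order $n^{-2}\tilde h^{-p}\tilde b^{-2}$ after accounting for the $g$-inflation, giving a $\sqrt{n}$-scaled contribution of order $(n\tilde h^p\tilde b^2)^{-1/2}$ which vanishes by Assumption~\ref{ass:main3}(4). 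The main obstacle will be the boundary behavior near $p(\mathbf{z})\approx\tilde b$, where $g(\mathbf{z}_i)$ blows up like $1/\tilde b$: controlling both the kernel bias term $\sqrt{n}(\tilde h^q/\tilde b)$ in the mean, and the $\tilde b^{-2}$ variance inflation in the degenerate part, requires using the trimming indicator together with Assumption~\ref{ass:main3}(6), and the rate conditions $n\tilde h^{2p}\tilde b^4/\log^2 n\to\infty$ and $n\tilde h^{4q}/\tilde b^4\to 0$ are precisely calibrated so that every such remainder is $o_p(1)$ after multiplication by $\sqrt{n}$.
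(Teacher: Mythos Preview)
Your overall strategy is essentially the paper's: strip the diagonal, view the remainder as a U-statistic with bandwidth-dependent kernel, project (the paper invokes Powell's Lemma~3.1, which is the same Hoeffding/H\'ajek device you describe), and then show the projection simplifies to $\xi_i+\mu(\mathbf z_i)-2\tau$.

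Two technical points in your write-up would not survive as stated. First, the claimed expansion $E(K_{12}\xi_2\mid \mathbf z_1)=\mu(\mathbf z_1)p(\mathbf z_1)+O(\tilde h^q)$ needs $\mu(\cdot)p(\cdot)$ to be $q$-times differentiable, but Assumption~\ref{ass:main3}(3) only gives continuity of $\mu$. Likewise the indicator in $g(\mathbf z_2)$ is discontinuous, so $E(K_{12}g(\mathbf z_2)\mid\mathbf z_1)=I(p(\mathbf z_1)>\tilde b)+O(\tilde h/\tilde b)$ fails pointwise near the trimming boundary. The paper sidesteps both by working in variance: since the lemma concerns the \emph{centered} quantity $\sqrt n(\delta_{11}-E\delta_{11})$, all deterministic bias is already subtracted, and one only needs $\operatorname{var}\bigl[\,\hat P_i-(\xi_i+\mu(\mathbf z_i))\bigr]\to 0$, which follows from dominated convergence (their bound \eqref{eqn:26}) using only continuity and $\tilde h=o(\tilde b)$.

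Second, and for the same reason, your appeal to Assumption~\ref{ass:main3}(6) to ``replace $I(p(\mathbf z_i)>\tilde b)$ by $1$'' is the wrong tool here. That assumption controls $\sqrt n\,E|S_1(\mathbf z)|I(p(\mathbf z)<1.02\tilde b)$, an $L^1$-bias quantity; but after centering, removing the indicator from an i.i.d.\ sum requires a variance bound, namely $E\bigl[(\xi_i+\mu(\mathbf z_i))^2I(p(\mathbf z_i)\le\tilde b)\bigr]\to 0$, which holds directly from Assumption~\ref{ass:main3}(5) and $P(p(\mathbf z)\le\tilde b)\to 0$. In the paper, Assumption~\ref{ass:main3}(6) is used elsewhere (Step~3 of the proof of Theorem~\ref{th:5}, and the mean computation in Lemma~\ref{lem:new1}), not in this lemma. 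With these two corrections your argument goes through and matches the paper's.
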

\begin{proof}
	The proof is divided into three parts. In \tb{Step 1}, we relate $\delta_{11}$ to a U-statistic $U_{11}$. In \tb{Step 2}, we approximate \(U_{11}\) using conditional expectation. In \tb{Step 3}, we further approximate the expression to derive the final result.  
	
	\tb{Step 1.}  Let $R_j=\frac{W_j}{\pi}-\frac{1-W_j}{1-\pi}, I_j=I\left(	p(\mathbf{z}_j)>\tilde b\right)$ for $1\le j\le n$, and
	 \begin{align*}
	 	U_{11}&=\frac{2}{n(n-1) \tilde h^p}\sum_{i<j}   K\left(\frac{\mathbf{z}_i-\mathbf{z}_j}{\tilde h}\right)\left(
	 	\frac{	f_j\left(W_j,\pi\right)R_jI_i }{p(\mathbf{z}_i)}
	 	+
	 	\frac{	f_i\left(W_i,\pi\right)R_iI_j }{p(\mathbf{z}_j)}
	 	\right).
	 \end{align*}
	 Then by the symmetry of $K(\cdot)$,
	 \begin{align*}
	 	\delta_{11}-\frac{n-1}{n}U_{11}-E(\delta_{11}-\frac{n-1}{n}U_{11})&=\frac{2K(\mathbf{0})}{n}\sum_{i=1}^n
	 	\frac{1}{n \tilde h^p}
	 	\left(\frac{f_i\left(W_i,\pi\right)R_iI_i}{p(\mathbf{z}_i)}-
	 	E\frac{f_i\left(W_i,\pi\right)R_iI_i}{p(\mathbf{z}_i)}\right).
	 \end{align*}
	 Note that the right-hand-side of the above equation is an average of i.i.d. samples, and its variance
	 \begin{align*}
	 	&var(	\delta_{11}-\frac{n-1}{n}U_{11}-E(\delta_{11}-\frac{n-1}{n}U_{11}))
	 	=\frac{4 K^2(\mathbf{0})}{n}var\left(
	 	\frac{1}{n \tilde h^p}
	  \frac{f_i\left(W_i,\pi\right)R_iI_i}{p(\mathbf{z}_i)} 
	 	\right)\\
	 	&\le \frac{4 K^2(\mathbf{0})}{n^3\tilde h^{2p}}E\left(
 \frac{f_i\left(W_i,\pi\right)R_iI_i}{p(\mathbf{z}_i)} 
	 	\right)^2\le \frac{4 K^2(\mathbf{0})}{n^3\tilde h^{2p}}E\left(
	 	\frac{f_i\left(W_i,\pi\right)R_i }{\tilde b} 
	 	\right)^2\\
	 	&\le \frac{4 K^2(\mathbf{0})}{\pi^2(1-\pi)^2n^3\tilde h^{2p}\tilde b^2}E\left(|
	 	 {f_i\left(W_i,\pi\right)  } |
	 	\right)^2
	 	=
	 	O\left(\frac{1}{n^3h^{2p}\tilde b^2}\right).
	 \end{align*}
	 By Assumption~\ref{ass:main3} and the Markov inequality, we have that \begin{align}
	 	\label{lem4th6:5 eqn0}
	 	\delta_{11}-\frac{n-1}{n}U_{11}-E(\delta_{11}-\frac{n-1}{n}U_{11})=o_p(n^{-\frac{1}{2}}).
	 \end{align}	 
	 That is, $\sqrt{n}(\delta_{11}-E\delta_{11})$ is well-approximated by $\sqrt{n}(U_{11}-EU_{11})$.
	 
	 \tb{Step 2.}  Note that \( U_{11} \) is a U-statistic constructed from i.i.d. samples of the random vector \( (\mathbf{z}_i^\top, f_i, W_i)^\top \). However, it has a varying kernel due to the fact that the bandwidth \( \tilde{h} \) depends on \( n \).  By Lemma 3.1 in \cite{powell1986semiparametric}, to show 
	 \begin{align}
	 	\label{eqn:7}
	 	\sqrt{n}(U_{11}-EU_{11})&=
	 	\frac{1}{\sqrt{n}}\sum_{i=1}^n\left[
	 	2E\left(\frac{1}{ \tilde h^p}
	 	K\left(\frac{\mathbf{z}_i-\mathbf{z}_j}{\tilde h}\right)\left(
	 	\frac{	f_j\left(W_j,\pi\right)R_jI_i }{p(\mathbf{z}_i)}
	 	+
	 	\frac{	f_i\left(W_i,\pi\right)R_iI_j }{p(\mathbf{z}_j)}
	 	\right)
	 	\bigg |\mathbf{z}_i,W_i,f_i \right)\right.\notag\\&\left.-
	 	2E\left(\frac{1}{ \tilde h^p}
	 	K\left(\frac{\mathbf{z}_i-\mathbf{z}_j}{\tilde h}\right)\left(
	 	\frac{	f_j\left(W_j,\pi\right)R_jI_i }{p(\mathbf{z}_i)}
	 	+
	 	\frac{	f_i\left(W_i,\pi\right)R_iI_j }{p(\mathbf{z}_j)}
	 	\right)
	 	\right)
	 	\right]+o_p(1),
	 \end{align}
	 it suffices to prove that
	 \begin{align}
	 	\label{eqn:6}
	 	E\left(\frac{1}{ \tilde h^p}
	 	K\left(\frac{\mathbf{z}_i-\mathbf{z}_j}{\tilde h}\right)\left(
	 	\frac{	f_j\left(W_j,\pi\right)R_jI_i }{p(\mathbf{z}_i)}
	 	+
	 	\frac{	f_i\left(W_i,\pi\right)R_iI_j }{p(\mathbf{z}_j)}
	 	\right)
	 	\right)^2=o(n).
	 \end{align}
	 To see (\ref{eqn:6}), notice that  
	 \begin{align*}
	 	& \ E\left(\frac{1}{ \tilde h^p}
	 	K\left(\frac{\mathbf{z}_i-\mathbf{z}_j}{\tilde h}\right)\left(
	 	\frac{	f_j\left(W_j,\pi\right)R_jI_i }{p(\mathbf{z}_i)}
	 	+
	 	\frac{	f_i\left(W_i,\pi\right)R_iI_j }{p(\mathbf{z}_j)}
	 	\right)
	 	\right)^2\\
	 	&\le \frac{1}{\tilde h^{2p}\tilde b^2} E\left[
	 	K^2\left(\frac{\mathbf{z}_i-\mathbf{z}_j}{\tilde h}\right) 
	 	(f_j\left(W_j,\pi\right)R_jI_i+
	 	f_i\left(W_i,\pi\right)R_iI_j)^2
	 	\right]\\
	 	&\le \frac{1}{\tilde h^{2p}\tilde b^2} E\left[
	 	K^2\left(\frac{\mathbf{z}_i-\mathbf{z}_j}{\tilde h}\right) [
	 	2(f_j\left(W_j,\pi\right)R_jI_i)^2+
	 	2(f_i\left(W_i,\pi\right)R_iI_j)^2]
	 	\right]\\
	 	&\lesssim 
	 	\frac{1}{\tilde h^{2p}\tilde b^2} E\left[
	 	K^2\left(\frac{\mathbf{z}_i-\mathbf{z}_j}{\tilde h}\right) [
	 	E(f_j^2\left(W_j,\pi\right)|\mathbf{z}_j ) +
	 	E(f_i^2\left(W_i,\pi\right)|\mathbf{z}_i ) ]
	 	\right]\\
	 	&\lesssim \frac{1}{\tilde h^{2p}\tilde b^2} {\int K^2\left(\frac{\mathbf{z}_i-\mathbf{z}_j}{\tilde h}\right) 
	 		p(\mathbf{z}_i)p(\mathbf{z}_j)d\mathbf{z}_id\mathbf{z}_j}\\
	 	&\lesssim\frac{1}{ \tilde h^p\tilde b^2}\int K^2(\mathbf{u})p(\mathbf{z}_j+\tilde h\mathbf{u} )p(\mathbf{z}_j)d\mathbf{u} d\mathbf{z}_j\\
	 	&\lesssim\frac{1}{ \tilde h^p\tilde b^2}\int K^2(\mathbf{u}) p(\mathbf{z}_j)d\mathbf{u} d\mathbf{z}_j \le \frac{1}{ \tilde h^p\tilde b^2}\sup_{\mathbf{u}\in \bb{R}^p} K^2(\mathbf{u}) =o(n),
	 \end{align*}
	 where we use the boundedness of $R_i, E(f_i^2(W_i,\pi)|\mathbf{z}_i), K(\mathbf{u}), p(\mathbf{z})$, and $n \tilde h^p\tilde b^2\to \infty$ which is implied by  Assumption~\ref{ass:main3}. 
	 
	 \tb{Step 3.}  We now refine (\ref{eqn:7}) to show that the right-hand side can be approximated by \[\frac{1}{\sqrt{n}}\sum_{i=1}^n[2E(f_i(W_i,\pi)R_i|\mathbf{z}_i)+2f_i(W_i,\pi)R_i-4\tau],\] which is an average whose components do not vary with $n$.  Consider the difference term
	 \begin{align*}
	 	J=&\frac{1}{\sqrt{n}}\sum_{i=1}^n\left[
	 	2E\left(\frac{1}{ \tilde h^p}
	 	K\left(\frac{\mathbf{z}_i-\mathbf{z}_j}{\tilde h}\right)\left(
	 	\frac{	f_j\left(W_j,\pi\right)R_jI_i }{p(\mathbf{z}_i)}
	 	+
	 	\frac{	f_i\left(W_i,\pi\right)R_iI_j }{p(\mathbf{z}_j)}
	 	\right)
	 	\bigg |\mathbf{z}_i,W_i,f_i \right)\right.\notag\\&\left.-
	 	2E\left(\frac{1}{ \tilde h^p}
	 	K\left(\frac{\mathbf{z}_i-\mathbf{z}_j}{\tilde h}\right)\left(
	 	\frac{	f_j\left(W_j,\pi\right)R_jI_i }{p(\mathbf{z}_i)}
	 	+
	 	\frac{	f_i\left(W_i,\pi\right)R_iI_j }{p(\mathbf{z}_j)}
	 	\right)
	 	\right)
	 	\right.\notag\\&\left.-
	 	 2E(f_i(W_i,\pi)R_i|\mathbf{z}_i)-2f_i(W_i,\pi)R_i+4\tau\right].
	 \end{align*}
	 Note that $J/\sqrt{n}$ is an average of i.i.d. samples of random vector $(\mathbf{z}_i^\top, f_i, W_i)^\top$, and that $\tau=E(f_i(W_i,\pi)R_i)=E(f_i(1,\pi)-f_i(0,\pi))$. Therefore, to show $J=o_p(1)$, it suffices to show that
	 \begin{align}
	 	\label{lem4th6:5 J1}
	 	J_1=var&\left[
	 	2E\left(\frac{1}{ \tilde h^p}
	 	K\left(\frac{\mathbf{z}_i-\mathbf{z}_j}{\tilde h}\right)\left(
	 	\frac{	f_j\left(W_j,\pi\right)R_jI_i }{p(\mathbf{z}_i)}
	 	+
	 	\frac{	f_i\left(W_i,\pi\right)R_iI_j }{p(\mathbf{z}_j)}
	 	\right)
	 	\bigg |\mathbf{z}_i,W_i,f_i \right)\right.\notag\\&\left.-
	 	2E\left(\frac{1}{ \tilde h^p}
	 	K\left(\frac{\mathbf{z}_i-\mathbf{z}_j}{\tilde h}\right)\left(
	 	\frac{	f_j\left(W_j,\pi\right)R_jI_i }{p(\mathbf{z}_i)}
	 	+
	 	\frac{	f_i\left(W_i,\pi\right)R_iI_j }{p(\mathbf{z}_j)}
	 	\right)
	 	\right)
	 	\right.\notag\\&\left.-
	 	2E(f_i(W_i,\pi)R_i|\mathbf{z}_i)-2f_i(W_i,\pi)R_i+4\tau\right]
	 	=o(1).
	 \end{align}
	 By letting $S_1(u)=E(f_i(W_i,\pi)R_i|\mathbf{z}_i=u)$, and using the law of total expectation and integration by substitution, we derive that
	 \begin{align*}
	 	%\label{eqn:16}
	 	&2E\left(\frac{1}{ \tilde h^p}
	 	K\left(\frac{\mathbf{z}_i-\mathbf{z}_j}{\tilde h}\right)\left(
	 	\frac{	f_j\left(W_j,\pi\right)R_jI_i }{p(\mathbf{z}_i)}
	 	+
	 	\frac{	f_i\left(W_i,\pi\right)R_iI_j }{p(\mathbf{z}_j)}
	 	\right)
	 	\bigg |\mathbf{z}_i,W_i,f_i \right)\\
	 	&=2\int  \frac{1}{ \tilde h^p}
	 	K\left(\frac{\mathbf{z}_i-\mathbf{z}_j}{\tilde h}\right)\left(
	 	\frac{
	 		S_1(\mathbf{z}_j) I_i }{p(\mathbf{z}_i)}	 
	 +
	 \frac{	f_i\left(W_i,\pi\right)R_iI_j }{p(\mathbf{z}_j)}
	 \right) p(\mathbf{z}_j)d\mathbf{z}_j  \\
	 &=2\int 
	 K\left(\mathbf{u}\right)\left(
	 \frac{
	 	S_1(\mathbf{z}_i+\tilde h\mathbf{u})  I_i }{p(\mathbf{z}_i)}	 
	 		+
	 	\frac{	f_i\left(W_i,\pi\right)R_iI(p(\mathbf{z}_i+\tilde h\mathbf{u})>\tilde b) }{p(\mathbf{z}_i+\tilde h\mathbf{u})}
	 	\right) p(\mathbf{z}_i+\tilde h\mathbf{u})d\mathbf{u}\\
	 	&=2\int 
	 	K\left(\mathbf{u}\right)
	 	\frac{p(\mathbf{z}_i+\tilde h\mathbf{u})
	 S_1(\mathbf{z}_i+\tilde h\mathbf{u})  I_i }{p(\mathbf{z}_i)}d\mathbf{u}	 
	 +
2\int 
K(\mathbf{u})	f_i\left(W_i,\pi\right)R_iI(p(\mathbf{z}_i+hu)>\tilde b) d\mathbf{u}.
\end{align*}
Therefore, 
\begin{align*}
	J_1&=var\left(
	2\int 
	K\left(\mathbf{u}\right)
	\frac{p(\mathbf{z}_i+\tilde h\mathbf{u})
		S_1(\mathbf{z}_i+\tilde h\mathbf{u})  I_i }{p(\mathbf{z}_i)}d\mathbf{u}
		-	 2S_1(\mathbf{z}_i)
		\right.\\&\left.
	+
	2\int 
	K(\mathbf{u})	f_i\left(W_i,\pi\right)R_iI(p(\mathbf{z}_i+hu)>\tilde b) d\mathbf{u}
	-2f_i(W_i,\pi)R_i
	\right)\\
	&\le  2var\left(
	2\int 
	K\left(\mathbf{u}\right)
	\frac{p(\mathbf{z}_i+\tilde h\mathbf{u})
		S_1(\mathbf{z}_i+\tilde h\mathbf{u})  I_i }{p(\mathbf{z}_i)}d\mathbf{u}
	-	 2S_1(\mathbf{z}_i)\right)\\
	&+ 2var\left(	2\int 
	K(\mathbf{u})	f_i\left(W_i,\pi\right)R_iI(p(\mathbf{z}_i+\tilde h\mathbf{u})>\tilde b) d\mathbf{u}
	-2f_i(W_i,\pi)R_i
	\right)\\
	&\lesssim 8\iint 	\left|
	K\left(\mathbf{u}\right)
	\frac{p(\mathbf{z}_i+\tilde h\mathbf{u})	 I_i
		S_1(\mathbf{z}_i+\tilde h\mathbf{u})
		-p(\mathbf{z}_i)S_1(\mathbf{z}_i)}{p(\mathbf{z}_i)}\right|^2p(\mathbf{z}_i) dud\mathbf{z}_i \\
		&+8\iint 	\left|
		K\left(u\right) (1-I(p(\mathbf{z}_i+\tilde h\mathbf{u})>\tilde b))\right|^2 p(\mathbf{z}_i) d\mathbf{u}d \mathbf{z}_i,
\end{align*}
where in the last inequality we apply Jensen's inequality to exchange the order of integration and squaring, and we also use the boundedness of $E(f_i^2(W_i,\pi)R_i|\mathbf{z}_i)$.   Since $p(\mathbf{z}), \nabla p(\mathbf{z}) , K(\mathbf{u}), S_1(\mathbf{z}_i) $, the support of $K(\mathbf{u})$  are bounded and $\tilde h=o(\tilde b)$, we have
 \begin{align}
 	\label{eqn:26}
 &	\left|
 	K\left(\mathbf{u}\right)
 	\frac{p(\mathbf{z}_i+\tilde h\mathbf{u})	 I_i
 		S_1(\mathbf{z}_i+\tilde h\mathbf{u})
 		-p(\mathbf{z}_i)S_1(\mathbf{z}_i)}{p(\mathbf{z}_i)}\right|^2p(\mathbf{z}_i)\notag\\
 		\lesssim & K^2(\mathbf{u})p(\mathbf{z}_i)\left(1+ 
 		I_i\frac{p(\mathbf{z}_i+\tilde h\mathbf{u})}{p(\mathbf{z}_i)}\right)^2
 		\lesssim K^2(\mathbf{u})p(\mathbf{z}_i)\left(1+ 
 		 \frac{\tilde h^2 (\mathbf{1}^\top \mathbf{u})^2}{\tilde b^2}\right)\notag\\
\lesssim& K^2(\mathbf{u})p(\mathbf{z}_i)\left(1+
 { (\mathbf{1}^\top \mathbf{u})^2} \right),
 \end{align}
 which is integrable. Moreover, 
 \begin{align*}
	\left|
K\left(\mathbf{u}\right) (1-I(p(\mathbf{z}_i+\tilde h\mathbf{u})>\tilde b))\right|^2 p(\mathbf{z}_i)
\lesssim K^2(\mathbf{u})p(\mathbf{z}_i)
 \end{align*} which is also  integrable.  As a result, we can apply the  dominated convergence theorem to obtain that $J_1=o(1)$, and consequently, $J=o_p(1)$. 
 Therefore,   (\ref{eqn:7}) is equlaivent to 
\begin{align}\label{lem4th6:5 eqn2} 
\sqrt{n}(U_{11}-EU_{11})&=
\frac{2}{\sqrt{n}}\sum_{i=1}^n\left(
f_i\left(W_i,\pi\right)R_i+E(f_i\left(W_i,\pi\right)R_i|\mathbf{z}_i)-2\tau
\right)+o_p(1).
\end{align}
By (\ref{lem4th6:5 eqn0}), (\ref{lem4th6:5 eqn2}), and the Slutsky's theorem, the final result follows.
\end{proof}

\begin{lemma}
		\label{lem4th6:6}
			Under the assumptions for Theorem~\ref{th:5}, we have 
	\begin{align*} 
		\sqrt{n}(\delta_{12}-E\delta_{12})=	\frac{2}{\sqrt n}\sum_{j=1}^n
		\left(\frac{M_j}{N_j}-\pi\right)
		E\left(
		f_j^{(1)}\left(W_j,\pi\right)\left(\frac{W_j}{\pi}-\frac{1-W_j}{1-\pi}\right)
		\right)+o_p(1),
	\end{align*}
		where 
		\begin{align*}
				\delta_{12}&=\frac{2}{n^2 \tilde h^p}\sum_{i=1}^n\sum_{j=1}^n \frac{1}{p(\mathbf{z}_i)} K\left(\frac{\mathbf{z}_i-\mathbf{z}_j}{\tilde h}\right)f_j^{(1)}\left(W_j,\pi\right)\left(\frac{M_j}{N_j}-\pi\right)\left(\frac{W_j}{\pi}-\frac{1-W_j}{1-\pi}\right)	I\left(	p(\mathbf{z}_i)>\tilde b\right).
		\end{align*}
\end{lemma}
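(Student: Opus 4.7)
The strategy mirrors the proof of Lemma~\ref{lem4th6:5}, but the network-dependent factor $\Delta_j:=M_j/N_j-\pi$ in $\delta_{12}$ makes the classical U-statistic projection inapplicable; a ``network-conditional projection'' is needed instead. I would (i) reduce the $i$-sum to a density-estimator-type weight that is close to $1$, (ii) verify $\mathbb E\delta_{12}=0$ by conditioning, so only $(2/n)\sum_j L_j\Delta_j$ needs to be approximated, and (iii) swap the order of summation in $\Delta_j$ to isolate $W_i$ from a network-dependent piece that no longer contains it.

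Set $A_j:=\tfrac{1}{n\tilde h^p}\sum_i \tfrac{I_i}{p(\mathbf z_i)}K\bigl(\tfrac{\mathbf z_i-\mathbf z_j}{\tilde h}\bigr)$ so that $\delta_{12}=(2/n)\sum_j A_j L_j\Delta_j$, where $L_j:=f_j^{(1)}(W_j,\pi)\bigl(W_j/\pi-(1-W_j)/(1-\pi)\bigr)$. In analogy to Lemma~\ref{lem4th6:1}, $\sup_j|A_j-1|=O_p\bigl(\sqrt{\log n/(n\tilde h^p)}+\tilde h^q\bigr)$. Combined with $\max_j|\Delta_j|=O_p(1/\sqrt{n\rho_n})$ (Lemma~15 of \citet{li2022random}), the $\mathcal L_2$-bound on $L_j$ from Assumption~\ref{ass:main3}(5), and the bandwidth rates of Assumption~\ref{ass:main3}(4) together with $\sqrt n\rho_n\to\infty$, this yields $\sqrt n\cdot(2/n)\sum_j(A_j-1)L_j\Delta_j=o_p(1)$. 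For the mean correction, conditioning on $\mathcal F_{-W}:=\sigma(\{U_i,\mathbf z_i,f_i,E_{ij}\})$ and using $E_{jj}=0$ give $\mathbb E[\Delta_j\mid\mathcal F_{-W}]=0$; since $L_j\perp\Delta_j$ given $\mathcal F_{-W}$, it follows that $\mathbb E\delta_{12}=0$.

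It therefore suffices to show $\sqrt n\cdot(2/n)\sum_j(L_j-\ell)\Delta_j=o_p(1)$ with $\ell:=\mathbb E L_1$. Writing $\Delta_j=\sum_{i\ne j}(E_{ji}/N_j)(W_i-\pi)$ and swapping sums,
\begin{equation*}
\tfrac{1}{n}\sum_j(L_j-\ell)\Delta_j=\tfrac{1}{n}\sum_i(W_i-\pi)\tilde T_i,\qquad \tilde T_i:=\sum_{j\ne i}\tfrac{E_{ji}}{N_j}(L_j-\ell).
\end{equation*}
Because $E_{ii}=0$, $\tilde T_i$ does not depend on $W_i$; this is the projection step that keeps the network randomness inside while separating out $W_i$. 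Decompose further $L_j-\ell=(L_j-m_j)+(m_j-\ell)$ with $m_j:=f_j^{(1)}(1,\pi)-f_j^{(1)}(0,\pi)=\mathbb E[L_j\mid f_j]$: the first piece is $W_j$-centred given $\mathcal F_{-W}$, and the second is $\mathcal F_{-W}$-measurable, mean zero, and i.i.d.\ across $j$. Computing $\mathrm{Var}(\tfrac{1}{\sqrt n}\sum_i(W_i-\pi)\tilde T_i)$, the diagonal $i=k$ contribution is $O(1/(n\rho_n))$ via $\sum_{j\ne i}(E_{ji}/N_j)^2\le\max_j(E_{ji}/N_j)\lesssim 1/(n\rho_n)$, which vanishes after scaling.

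The main obstacle will be the off-diagonal contribution: $\tilde T_i$ depends on $\{W_k\}_{k\ne i}$ through the $L_k-\ell$ summands, so it does not factor cleanly out of $(W_i-\pi)(W_k-\pi)\tilde T_i\tilde T_k$. Controlling this requires a case analysis on overlaps in the quadruple expansion of $\Delta_j\Delta_k$, bounding sums of edge-ratio products of the form $E_{i'j}E_{i'k}/(N_jN_k)$ using the uniform degree bounds $\min_iN_i\gtrsim n\rho_n$, $\max_iN_i\lesssim n\rho_n$, and the graphon bounds of Assumption~\ref{assump:random_graph}. These are precisely the residual terms $\tilde r_{n1},\tilde r_{n2},\tilde r_{n3}$ flagged in the introduction, and the analysis parallels the treatment of $\delta_{22}$ in Lemma~\ref{lem4th6:9}.
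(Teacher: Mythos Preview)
Your step (i) has a genuine gap: the claimed bound $\sup_j|A_j-1|=O_p\bigl(\sqrt{\log n/(n\tilde h^p)}+\tilde h^q\bigr)$ is false. Writing $g(\mathbf z_j):=\int K(u)\,I\bigl(p(\mathbf z_j+\tilde h u)>\tilde b\bigr)\,du$, one has $\mathbb E[A_j\mid\mathbf z_j]=g(\mathbf z_j)+O\bigl((n\tilde h^p\tilde b)^{-1}\bigr)$; but $g(\mathbf z_j)=0$ whenever $p(\mathbf z_j)<\tilde b-C\tilde h$ (with $C=\sup|\nabla p|$), so $|g(\mathbf z_j)-1|=1$ there. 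The analogy to Lemma~\ref{lem4th6:1} breaks down because the trimming indicator $I_i$ makes $A_j$ estimate $g(\mathbf z_j)$, not $1$, and nothing in the assumptions forces $p(\mathbf z_j)$ to stay above $\tilde b$. With $\sup_j|A_j-1|$ only $O_p(1)$, your product bound gives $\sqrt n\cdot\tfrac{2}{n}\sum_j|A_j-1|\,|L_j|\,|\Delta_j|=O_p(1/\sqrt{\rho_n})$, which is not $o_p(1)$ since $\rho_n$ is bounded.

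The paper's projection avoids this by conditioning on $\mathbf z_i$ and integrating out $(\mathbf z_j,W_j,f_j)$, which yields a function $\phi(\mathbf z_i)$ \emph{independent of $j$}. The remainder $r_n-\text{target}$ then factors as $\bigl(\tfrac{2}{n}\sum_j\Delta_j\bigr)\times(\text{a quantity free of }j)$; the first factor is $O_p(n^{-1/2})$ by the cancellation argument~(\ref{eqn:11}), so the trimming bias in the second factor need only be $o(1)$---handled by dominated convergence in the $\tilde r_{n3}$ step---rather than uniformly small. Your collapse of the $i$-sum keeps the $j$-dependence in $A_j$ and destroys this product structure: the trimming loss is attached to individual $\Delta_j$'s (rate $1/\sqrt{n\rho_n}$) rather than their average (rate $1/\sqrt n$). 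Your route could in principle be repaired by computing $\mathbb E\bigl[\tfrac{2}{n}\sum_j(g(\mathbf z_j)-1)L_j\Delta_j\bigr]^2$ directly, exploiting $\mathbb E[L_j\Delta_j\mid\mathcal F_{-W}]=0$ and $P(p(\mathbf z)\le 2\tilde b)\to 0$; but this needs precisely the edge-overlap case analysis you deferred to step (iii), now required at an additional place in the argument.
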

\begin{proof}
	 Let $R_j=\frac{W_j}{\pi}-\frac{1-W_j}{1-\pi}, I_j=I\left(	p(\mathbf{z}_j)>\tilde b\right)$ for $1\le j\le n$, 
	\begin{align*}
		%\label{eqn:U12}
		U_{12}&=
		\frac{2}{n(n-1) \tilde h^p}\sum_{i=1}^n\sum_{j\neq i}
		\left(\frac{M_j}{N_j}-\pi\right)
		\frac{I_i}{p(\mathbf{z}_i)}
		K\left(\frac{\mathbf{z}_i-\mathbf{z}_j}{\tilde h}\right)
		f_j^{(1)}\left(W_j,\pi\right)R_j.
	\end{align*}
	Then by  the symmetry of $K(\cdot)$, 
	\begin{align*}%	\label{eqn:17}
		&	\delta_{12}-\frac{n-1}{n}U_{12}-E(\delta_{12}-\frac{n-1}{n}U_{12})\\&=\frac{2K(0)}{n}\sum_{i=1}^n
		\frac{1}{n \tilde h^p}
		\left(\frac{f_i^{(1)}\left(W_i,\pi\right)\left(\frac{M_i}{N_i}-\pi\right)R_iI_i}{p(\mathbf{z}_i)}-
		E\frac{f_i^{(1)}\left(W_i,\pi\right)\left(\frac{M_i}{N_i}-\pi\right)R_iI_i}{p(\mathbf{z}_i)}\right),
	\end{align*}
	and its variance 
	\begin{align*}
			&var(\delta_{12}-\frac{n-1}{n}U_{12})\\
			&=\frac{4K^2(\mathbf{0})}{n^4\tilde h^{2p}}\sum_{i=1}^n\sum_{j=1}^n E\left[
			\left(\frac{f_i^{(1)}\left(W_i,\pi\right)\left(\frac{M_i}{N_i}-\pi\right)R_iI_i}{p(\mathbf{z}_i)}-
			E\frac{f_i^{(1)}\left(W_i,\pi\right)\left(\frac{M_i}{N_i}-\pi\right)R_iI_i}{p(\mathbf{z}_i)}\right)
			\right.\\&\left.
			\left(\frac{f_j^{(1)}\left(W_j,\pi\right)\left(\frac{M_j}{N_j}-\pi\right)R_jI_j}{p(\mathbf{z}_j)}-
			E\frac{f_j^{(1)}\left(W_j,\pi\right)\left(\frac{M_j}{N_j}-\pi\right)R_jI_j}{p(\mathbf{z}_j)}\right)
			\right]\\
			&=\frac{4K^2(\mathbf{0})}{n^4\tilde h^{2p}}\sum_{i=1}^n E\left[
			\left(\frac{f_i^{(1)}\left(W_i,\pi\right)\left(\frac{M_i}{N_i}-\pi\right)R_iI_i}{p(\mathbf{z}_i)}-
			E\frac{f_i^{(1)}\left(W_i,\pi\right)\left(\frac{M_i}{N_i}-\pi\right)R_iI_i}{p(\mathbf{z}_i)}\right)\right]^2\\
			&+
		 \frac{4K^2(\mathbf{0})}{n^4\tilde h^{2p}}\sum_{i\neq j}E\left[
		 	\left(\frac{f_i^{(1)}\left(W_i,\pi\right)\left(\frac{M_i}{N_i}-\pi\right)R_iI_i}{p(\mathbf{z}_i)}-
		 E\frac{f_i^{(1)}\left(W_i,\pi\right)\left(\frac{M_i}{N_i}-\pi\right)R_iI_i}{p(\mathbf{z}_i)}\right)
		 	\right.\\&\left.
			\left(\frac{f_j^{(1)}\left(W_j,\pi\right)\left(\frac{M_j}{N_j}-\pi\right)R_jI_j}{p(\mathbf{z}_j)}-
			E\frac{f_j^{(1)}\left(W_j,\pi\right)\left(\frac{M_j}{N_j}-\pi\right)R_jI_j}{p(\mathbf{z}_j)}\right)
			\right]\\
			&\le \frac{4K^2(\mathbf{0})}{n^4\tilde h^{2p}}\sum_{i=1}^n E 
			\left(\frac{f_i^{(1)}\left(W_i,\pi\right)\left(\frac{M_i}{N_i}-\pi\right)R_iI_i}{p(\mathbf{z}_i)} \right) ^2\\
			&+
			\frac{4K^2(\mathbf{0})}{n^4\tilde h^{2p}}\sum_{i\neq j}E\left[\left|
			\left(\frac{f_i^{(1)}\left(W_i,\pi\right)\left(\frac{M_i}{N_i}-\pi\right)R_iI_i}{p(\mathbf{z}_i)}\right)\left(\frac{f_j^{(1)}\left(W_j,\pi\right)\left(\frac{M_j}{N_j}-\pi\right)R_jI_j}{p(\mathbf{z}_j)}\right)\right|\right]\\
			&+
				\frac{4K^2(\mathbf{0})}{n^4\tilde h^{2p}}\sum_{i\neq j}
				E\left|\frac{f_i^{(1)}\left(W_i,\pi\right)\left(\frac{M_i}{N_i}-\pi\right)R_iI_i}{p(\mathbf{z}_i)}\right|
				E\left|\frac{f_j^{(1)}\left(W_j,\pi\right)\left(\frac{M_j}{N_j}-\pi\right)R_jI_j}{p(\mathbf{z}_j)}\right|  \\
					&\lesssim \frac{1}{n^4\tilde h^{2p}\tilde b^2}\sum_{i=1}^n E 
				\left( f_i^{(1)}\left(W_i,\pi\right)  \right) ^2\\
				&+
				\frac{1}{n^4\tilde h^{2p}\tilde b^2}\sum_{i\neq j}E\left[\left|
				 E(|f_i^{(1)}\left(W_i,\pi\right)||W_i)\left(\frac{M_i}{N_i}-\pi\right)   E(|f_j^{(1)}\left(W_j,\pi\right)||W_j)\left(\frac{M_j}{N_j}-\pi\right) \right|\right]\\
				&+
				\frac{1}{n^4\tilde h^{2p}\tilde b^2}\sum_{i\neq j}
				E\left|E(|f_i^{(1)}\left(W_i,\pi\right)||W_i)\left(\frac{M_i}{N_i}-\pi\right) \right|
				E\left|E(|f_j^{(1)}\left(W_j,\pi\right)||W_j)\left(\frac{M_j}{N_j}-\pi\right) \right|  
				\\
				&\lesssim 
		\frac{1}{n^3\tilde h^{2p}\tilde b^2}+\frac{1}{n^4\tilde h^{2p}\tilde b^2}\sum_{i<j}\left(E\left|\left(
		\frac{M_i}{N_i}-\pi
		\right)\left(
		\frac{M_j}{N_j}-\pi
		\right)\right|+E\left|
		\frac{M_i}{N_i}-\pi
		\right|E\left|
		\frac{M_j}{N_j}-\pi
		\right|
		\right)\\
		&\lesssim 	\frac{1}{n^3\tilde h^{2p}\tilde b^2}+\frac{1}{n^4\tilde h^{2p}\tilde b^2}\sum_{i<j}
		\left(\sqrt{E \left(
		\frac{M_i}{N_i}-\pi
		\right)^2}
		\sqrt{E \left(
			\frac{M_j}{N_j}-\pi
			\right)^2}
		\right)\\
		&\lesssim  \frac{1}{n^3\tilde h^{2p}\tilde b^2}+
		\frac{1}{n^4\tilde h^{2p}\tilde b^2}\frac{n^2}{n\rho_n} 
		\lesssim \frac{1}{n^3\tilde h^{2p}\tilde b^2\rho_n},
	\end{align*}
	where we use (i) $\max_{w=0,1}E[|f_i^{(1)}(w,\pi)|^2|w]<\infty$, (ii) $I_i/p(\mathbf{z}_i)\le 1/\tilde b$ for $1\le i\le n$, (iii) the boundedness of $R_i,I_i,M_i/N_i$ for $1\le i\le n$, (iv) Lemma 15(c) in \citet{li2022random}.  By Assumption~\ref{ass:main3} and the Markov inequality, we have $\delta_{12}-\frac{n-1}{n}U_{12}-E(\delta_{12}-\frac{n-1}{n}U_{12})=o_p(n^{-\frac{1}{2}})$.  As a result,
	\begin{align}
	\label{lem4th6:7:eqn1}
	\sqrt{n}(\delta_{12}-E\delta_{12})-\sqrt{n}(U_{12}-EU_{12})=o_p(1).
	\end{align} 
	
	We now use the  projection technique to approximate $U_{12}$.  Let
	\begin{equation}
		\label{eqn:rn}
		r_{n}=
		\frac{2}{n(n-1) \tilde h^p}\sum_{i=1}^n\sum_{j\neq i}
		\left(\frac{M_j}{N_j}-\pi\right)
		\frac{I_i}{p(\mathbf{z}_i)}
		E\left(
		K\left(\frac{\mathbf{z}_i-\mathbf{z}_j}{\tilde h}\right)
		f_j^{(1)}\left(W_j,\pi\right)R_j
		\bigg|\mathbf{z}_i\right).
	\end{equation}
	We prove that $U_{12}-r_n$ is negligible by showing that $E\left(U_{12}-r_n\right)^2=o(n^{-1})$.
	\begin{align}
		&E(U_{12}-r_n)^2\notag\\&=
		E\left[
		\frac{2}{n(n-1)}\sum_{i=1}^n\sum_{j:j\neq i}
		\frac{I_i}{p(\mathbf{z}_i)}
		\left(\frac{M_j}{N_j}-\pi\right)
		\left(
		\frac{1}{ \tilde h^p} K\left(\frac{\mathbf{z}_i-\mathbf{z}_j}{\tilde h}\right)
		f_j^{(1)}\left(W_j,\pi\right)R_j \right.\right.\notag\\&-\left.\left.
		E\left(
		\frac{1}{ \tilde h^p}K\left(\frac{\mathbf{z}_i-\mathbf{z}_j}{\tilde h}\right)
		f_j^{(1)}\left(W_j,\pi\right)R_j
		\bigg|\mathbf{z}_i\right)\right)
		\right]^2\notag\\
		&=
		\frac{4}{n^2(n-1)^2}\sum_{i=1}^n\sum_{j:j\neq i}\sum_{l=1}^n\sum_{m:m\neq l}
		E
        \frac{\sum_{k:k\neq j}E_{kj}(W_k-\pi)
			\sum_{q:q\neq m}E_{qm}(W_q-\pi)
		}{N_jN_m}
	\notag	\\
		&
		\left(
		\frac{I_i}{p(\mathbf{z}_i) \tilde h^p} K\left(\frac{\mathbf{z}_i-\mathbf{z}_j}{\tilde h}\right)
		f_j^{(1)}\left(W_j,\pi\right)R_j -
		E\left(
		\frac{I_i}{p(\mathbf{z}_i) \tilde h^p}K\left(\frac{\mathbf{z}_i-\mathbf{z}_j}{\tilde h}\right)
		f_j^{(1)}\left(W_j,\pi\right)R_j
		\bigg|\mathbf{z}_i\right)\right)\notag\\
		&
		\left(
		\frac{I_l}{p(\mathbf{z}_l) \tilde h^p} K\left(\frac{\mathbf{z}_l-\mathbf{z}_m}{\tilde h}\right)
		f_m^{(1)}\left(W_m,\pi\right)R_m -
		E\left(
		\frac{I_l}{p(\mathbf{z}_l) \tilde h^p}K\left(\frac{\mathbf{z}_l-\mathbf{z}_m}{\tilde h}\right)
		f_m^{(1)}\left(W_m,\pi\right)R_m
		\bigg|\mathbf{z}_l\right)\right).	\label{eqn:8}
	\end{align}
	We note that most of the terms in (\ref{eqn:8}) are zero because $EW_k=\pi$ for all $1\le k\le n$,  $\{W_i\}$ are i.i.d., $\{\mathbf{z}_i\}$ are i.i.d., and $\{W_i\}$ are independent of $\{\mathbf{z}_i\}$. Specifically, the summation term is nonzero only if both of the following conditions occur:
	\begin{enumerate}
		\item $m=j$, or $j=l, m=i$;
		\item $k=m,q=j$, or $k=q$.
	\end{enumerate}
	Moreover, by Assumption~\ref{ass:main3} we have
	\begin{align}\label{eqn:9}
		&\max_{w=0,1}E	\left(\left|
		\frac{I_i}{p(\mathbf{z}_i) \tilde h^p} K\left(\frac{\mathbf{z}_i-\mathbf{z}_j}{\tilde h}\right)
		f_j^{(1)}\left(W_j,\pi\right)R_j\right|\bigg|W_j=w\right) \notag\\
		&\lesssim 
		\int 	\frac{I_i}{p(\mathbf{z}_i) \tilde h^p} |K\left(\frac{\mathbf{z}_i-\mathbf{z}_j}{\tilde h}\right)| p(\mathbf{z}_i)p(\mathbf{z}_j)d\mathbf{z}_id\mathbf{z}_j
		\le
		\int |K(\mathbf{u})|p(\mathbf{z}_i+\tilde h\mathbf{u})d\mathbf{u}d\mathbf{z}_i,
	\end{align}
	which is bounded. Then (\ref{eqn:8}) reduces to
	\begin{align}
		\label{eqn20}
		E(U_{12}-r_n)^2&\lesssim	\frac{1}{n^4}\sum_{i=1}^n\sum_{j:j\neq i}\sum_{l=1}^n\sum_{m:m\neq l}
		E\frac{\sum_{k:k\neq j}E_{kj}
			\sum_{q:q\neq m}E_{qm}
		}{N_jN_m}\notag\\
		&\quad I(\{m=j\}\cup \{j=l,m=i\} )I(\{k=q\}\cup \{k=m,q=j\})\notag\\
		&\lesssim 
		\frac{1}{n^4}E\sum_{i,j,l:i\neq j,l\neq j}\frac{\sum_{k:k\neq j}E_{kj}^2}{N_j^2}+ 
		\frac{1}{n^4}E \sum_{i,j:i\neq j}\frac{\sum_{k:k\neq i, j}E_{kj}E_{ki}+E_{ij}}{N_jN_i}.
	\end{align}
	By Lemma 15 and the proof of Lemma 17 in \cite{li2022random}, \[
P(N_i<cn\rho_n)<\exp(-Cn\rho_n), E(1/N_i)\le C/(n\rho_n), E\sum_{i,j:i\neq j}\sum_{k:k\neq i, j}E_{kj}E_{ki}=O(n^3\rho_n^2)\] for some constants $c,C>0$. Therefore, $E(U_{12}-r_n)^2=O(n^{-2}\rho_n^{-1}+n^{-3})=o(n^{-1})$, which indicates that 
		\begin{align}
		\label{lem4th6:7:eqn2}
		\sqrt{n}(U_{12}-EU_{12})-\sqrt{n}(r_n-Er_n)=o_p(1).
	\end{align} 
	
	 We now refine (\ref{eqn:rn}) to simplify it. Note that
	\begin{align*}
		r_n&=\frac{2}{n}\sum_{j=1}^n
		\left(\frac{M_j}{N_j}-\pi\right)
		\frac{1}{n-1}
		\sum_{i\neq j}
		E\left(
		\frac{I_i}{ \tilde h^p p(\mathbf{z}_i)}
		K\left(\frac{\mathbf{z}_i-\mathbf{z}_j}{\tilde h}\right)
		f_j^{(1)}\left(W_j,\pi\right)R_j
		\bigg|\mathbf{z}_i\right)\\
		&=\frac{2}{n}\sum_{j=1}^n
		\left(\frac{M_j}{N_j}-\pi\right)
		E\left(
		f_j^{(1)}\left(W_j,\pi\right)R_j
		\right)+\tilde r_{n1}+\tilde r_{n2}+\tilde r_{n3},
	\end{align*}
	where 
	\begin{align*}
		\tilde r_{n1}&=\frac{2}{n}\sum_{j=1}^n
		\left(\frac{M_j}{N_j}-\pi\right)
		\left(
		\frac{1}{n(n-1)}
		\sum_{i:i\neq j}
		E\left(
		\frac{I_i}{ \tilde h^p p(\mathbf{z}_i)}
		K\left(\frac{\mathbf{z}_i-\mathbf{z}_j}{\tilde h}\right)
		f_j^{(1)}\left(W_j,\pi\right)R_j
		\bigg|\mathbf{z}_i\right)\right.\\
		&-\left.
		\frac{1}{n}
		E\left(
		\frac{I_j}{ \tilde h^p p(\mathbf{z}_j)}
		K\left(\mathbf{0}\right)
		f_j^{(1)}\left(W_j,\pi\right)R_j
		\bigg|\mathbf{z}_j\right)	
		\right),\\
		\tilde r_{n2}&=\frac{2}{n}\sum_{j=1}^n
		\left(\frac{M_j}{N_j}-\pi\right)
		\left(
		\frac{1}{n}
		\sum_{i=1}^n
		E\left(
		\frac{I_i}{ \tilde h^p p(\mathbf{z}_i)}
		K\left(\frac{\mathbf{z}_i-\mathbf{z}_j}{\tilde h}\right)
		f_j^{(1)}\left(W_j,\pi\right)R_j
		\bigg|\mathbf{z}_i\right)\right.\\
		&-\left.
		E\left(
		\frac{I_j}{ \tilde h^p p(\mathbf{z}_j)}
		K\left(\frac{\mathbf{z}_i-\mathbf{z}_j}{\tilde h}\right)
		f_j^{(1)}\left(W_j,\pi\right)R_j
		\right)	
		\right),\\
		\tilde r_{n3}&=\frac{2}{n}\sum_{j=1}^n
		\left(\frac{M_j}{N_j}-\pi\right)
		\left(
		E\left(
		\frac{I_i}{ \tilde h^p p(\mathbf{z}_i)}
		K\left(\frac{\mathbf{z}_i-\mathbf{z}_j}{\tilde h}\right)
		f_j^{(1)}\left(W_j,\pi\right)R_j
		\right)-
		E\left(
		f_j^{(1)}\left(W_j,\pi\right)R_j
		\right)	\right).
	\end{align*}
	
	We now prove that $\tilde r_{n1}=o_p(n^{-1/2}), \tilde r_{n2}=o_p(n^{-1/2}), \tilde r_{n3}=o_p(n^{-1/2})$, so that $r_n$ can be well-approximated by $\frac{2}{n}\sum_{j=1}^n
	\left(\frac{M_j}{N_j}-\pi\right)
	E\left(
	f_j^{(1)}\left(W_j,\pi\right)R_j
	\right)$. 
	
	\tb{For $\tilde r_{n1}$:}  By (\ref{eqn:9}) and $|\frac{M_i}{N_i}-\pi|<1$, we have $E|\tilde r_{n1}|\lesssim O(n^{-1})$. Therefore,  $\tilde r_{n1}=O_p(n^{-1})$. 
	
	\tb{For $\tilde r_{n2}$:} By Cauchy-Schwarz inequality, 
	\begin{align*}
	E|	\tilde r_{n2}|&\le 
		\sqrt{E\left(\frac{2}{n}\sum_{j=1}^n\left(\frac{M_j}{N_j}-\pi\right)\right)^2}
		\sqrt{var\left(
		\frac{1}{n}
		\sum_{i=1}^n
		E\left(
		\frac{I_i}{ \tilde h^p p(\mathbf{z}_i)}
		K\left(\frac{\mathbf{z}_i-\mathbf{z}_j}{\tilde h}\right)
		f_j^{(1)}\left(W_j,\pi\right)R_j
		\bigg|\mathbf{z}_i\right)\right) }.
	\end{align*} 
	For the first term in the right-hand-side, we have
	\begin{align}\label{eqn:11}
		E\left(\frac{1}{n}\sum_{j=1}^n
		\left(\frac{M_j}{N_j}-\pi\right)\right)^2&=\frac{1}{n^2}\sum_{i,j} E\left(	\left(\frac{M_i}{N_i}-\pi\right)	\left(\frac{M_j}{N_j}-\pi\right)\right)\notag\\
		&=O(\frac{1}{n^2\rho_n})+\frac{1}{n^2}\sum_{i\neq j} E\left(
		\frac{
			\sum_{k\neq i}E_{ik}(W_k-\pi)
			\sum_{l\neq j}E_{lj}(W_l-\pi)
		}{N_iN_j}\right)\notag\\
		&=O(\frac{1}{n^2\rho_n})+\frac{1}{n^2}\sum_{i\neq j}E\left(
		\frac{
			\sum_{k\neq i,j} E_{ik}E_{kj}
		}{N_iN_j}\pi(1-\pi)
		\right)\notag\\
		&=O(\frac{1}{n^2\rho_n}+\frac{1}{n})=O(\frac{1}{n}),
	\end{align}
	where we apply Lemmas 15 and the proof of Lemma 17 in \citet{li2022random}, as similarly done in the derivations in (\ref{eqn20}). 
	
	For the second term in the right-hand-side, we have
	\begin{align*}
	&var\left(	\frac{1}{n}
		\sum_{i=1}^n
		E\left(
		\frac{I_i}{ \tilde h^p p(\mathbf{z}_i)}
		K\left(\frac{\mathbf{z}_i-\mathbf{z}_j}{\tilde h}\right)
		f_j^{(1)}\left(W_j,\pi\right)R_j
		\bigg|\mathbf{z}_i\right)\right)\\
		&\le \frac{1}{n}
		var\left(	 
		E\left(
		\frac{I_i}{ \tilde h^p p(\mathbf{z}_i)}
		K\left(\frac{\mathbf{z}_i-\mathbf{z}_j}{\tilde h}\right)
		f_j^{(1)}\left(W_j,\pi\right)R_j
		\bigg|\mathbf{z}_i\right) \right)\\
			&\le \frac{1}{n}
		E\left(
		\frac{I_i}{ \tilde h^p p(\mathbf{z}_i)}
		K\left(\frac{\mathbf{z}_i-\mathbf{z}_j}{\tilde h}\right)
		f_j^{(1)}\left(W_j,\pi\right)R_j
		 \right)^2\\
			&\lesssim \frac{1}{n}
		E\left(
		\left(
		\frac{I_i}{ \tilde h^p p(\mathbf{z}_i)}
		K\left(\frac{\mathbf{z}_i-\mathbf{z}_j}{\tilde h}\right)
		\right)^2
	E(	(f_j^{(1)}\left(W_j,\pi\right))^2|\mathbf{z}_j)
		\right) \\
			&\lesssim \frac{1}{n\tilde h^p\tilde b}
		\int 
	 {  p(\mathbf{z}_j)}
		K^2\left(\frac{\mathbf{z}_i-\mathbf{z}_j}{\tilde h}\right) d\mathbf{z}_i d\mathbf{z}_j\\
			&\lesssim \frac{1}{n\tilde h^p\tilde b}
		\int 
		 {  p(\mathbf{z}_j)}
		K^2\left(\mathbf{u}\right) d\mathbf{u} d\mathbf{z}_j \lesssim \frac{1}{n\tilde h^p\tilde b}=o(1).
	\end{align*} 
	Therefore, $E|\tilde r_{n2}|=o(n^{-1/2})$. 
	
	\tb{For $\tilde r_{n3}$:}  similar to the previous derivations, it suffices to notice that 
	\begin{align*}
		&E\left(
		\frac{I_i}{ \tilde h^p p(\mathbf{z}_i)}
		K\left(\frac{\mathbf{z}_i-\mathbf{z}_j}{\tilde h}\right)
		f_j^{(1)}\left(W_j,\pi\right)R_j
		\right)-E\left(
		f_j^{(1)}\left(W_j,\pi\right)R_j
		\right)\\
		&=E\left(
		\int \frac{I_i}{ \tilde h^p }
		K\left(\frac{\mathbf{z}_i-\mathbf{z}_j}{\tilde h}\right)d\mathbf{z}_i
		f_j^{(1)}\left(W_j,\pi\right)R_j
		-	f_j^{(1)}\left(W_j,\pi\right)R_j
		\right)\\
		&=E\left(\left(
		\int I(p(\mathbf{z}_j+\tilde h\mathbf{u})>\tilde b)
		K\left(\mathbf{u}\right)d\mathbf{u}-1\right)
	E(	f_j^{(1)}\left(W_j,\pi\right)R_j|\mathbf{z}_j)
		\right)\to 0,
	\end{align*}
	where we use the boundedness of $E(	f_j^{(1)}\left(W_j,\pi\right)R_j|\mathbf{z}_j), K(\mathbf{u})$, the fact that $\tilde h,\tilde b\to 0$ as $n\to\infty$, and the dominated convergence theorem. Then we have $\tilde r_{n3}=o_p(n^{-\frac{1}{2}})$.
	
	Therefore, we have 
	\begin{align}
		\label{eqn:210}
			\sqrt{n}(r_n-Er_n)=\frac{2}{\sqrt n}\sum_{j=1}^n
		\left(\frac{M_j}{N_j}-\pi\right)
		E\left(
		f_j^{(1)}\left(W_j,\pi\right)R_j
		\right).
	\end{align} 
	By combining (\ref{eqn:210}) with (\ref{lem4th6:7:eqn1}), (\ref{lem4th6:7:eqn2}), the final result follows.
\end{proof}

\begin{lemma}
	\label{lem4th6:8}
	Under the assumptions for Theorem~\ref{th:5}, we have  
	\begin{align*} 
		\sqrt{n}(\delta_{21}-E\delta_{21})=&
		\frac{1}{\sqrt{n}}\sum_{i=1}^n \left(
		E(f_i(1,\pi)-f_i(0,\pi)|\mathbf{z}_i)+
		f_i(W_i,\pi)\left(\frac{W_i}{\pi}-\frac{1-W_i}{1-\pi}\right)\right.\notag\\+&\left.
		E(f_i(1,\pi)|\mathbf{z}_i)\frac{W_i}{\pi}-E(f_i(0,\pi)|\mathbf{z}_i)\frac{1-W_i}{1-\pi}-3\tau
		\right)+o_p(1),
	\end{align*} 
	where 
	\begin{align*}
		\delta_{21}&=\frac{1}{n^3\tilde h^{2p}}\sum_{i=1}^n\sum_{j=1}^n
		\sum_{k=1}^n 
		\frac{I(p(\mathbf{z}_i)>\tilde b)}{p^2(\mathbf{z}_i)} K\left(\frac{\mathbf{z}_i-\mathbf{z}_j}{\tilde h}\right)
		K\left(\frac{\mathbf{z}_i-\mathbf{z}_k}{\tilde h}\right)
		f_j\left(W_j,\pi\right)
		\left(\frac{W_jW_k}{\pi^2}-
		\frac{(1-W_j)(1-W_k)}{(1-\pi)^2}\right).
	\end{align*} 
\end{lemma}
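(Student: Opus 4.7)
The plan is to mirror the argument of Lemma~\ref{lem4th6:5}, but now treating $\delta_{21}$ as a third-order U-statistic rather than a second-order one. Write $R_i=W_i/\pi-(1-W_i)/(1-\pi)$ and $I_i=I(p(\mathbf{z}_i)>\tilde b)$, and let
\[
T_{ijk}=\frac{I_i}{p^2(\mathbf{z}_i)\tilde h^{2p}}\,K\!\left(\tfrac{\mathbf{z}_i-\mathbf{z}_j}{\tilde h}\right)K\!\left(\tfrac{\mathbf{z}_i-\mathbf{z}_k}{\tilde h}\right)f_j(W_j,\pi)\!\left(\tfrac{W_jW_k}{\pi^2}-\tfrac{(1-W_j)(1-W_k)}{(1-\pi)^2}\right),
\]
so that $\delta_{21}=n^{-3}\sum_{i,j,k}T_{ijk}$. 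Let $T^{\mathrm{sym}}$ denote the symmetrization of $T_{ijk}$ over the six permutations of $(i,j,k)$ and define the standard third-order U-statistic $U_{21}=\binom{n}{3}^{-1}\sum_{a<b<c}T^{\mathrm{sym}}(a,b,c)$.

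First I would separate the diagonal indices ($i=j$, $i=k$, or $j=k$) from the off-diagonal ones; the diagonal contribution can be controlled by a second-moment calculation of order $O_p\bigl((n\tilde h^p)^{-1}+(n^2\tilde h^{2p})^{-1}\bigr)$, which is $o_p(n^{-1/2})$ under the bandwidth condition $n\tilde h^{2p}\tilde b^4/\log^2 n\to\infty$ from Assumption~\ref{ass:main3}. The off-diagonal part equals $(1-1/n)(1-2/n)U_{21}$, so it remains to analyze $\sqrt n(U_{21}-EU_{21})$. By the Powell-type projection argument of Lemma~3.1 in \citet{powell1986semiparametric}, adapted to a third-order kernel with varying bandwidth,
\[
\sqrt n\bigl(U_{21}-EU_{21}\bigr)=\frac{3}{\sqrt n}\sum_{i=1}^n\bigl[\mu_1(\mathbf{z}_i,W_i,f_i)-EU_{21}\bigr]+o_p(1),
\]
where $\mu_1(x)=E[T^{\mathrm{sym}}(X_1,X_2,X_3)\mid X_1=x]$. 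The error bound requires $E(T^{\mathrm{sym}})^2=o(n^3)$, which reduces to $(n\tilde h^{2p}\tilde b^4)^{-1}=o(1)$ and is supplied by Assumption~\ref{ass:main3}.

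By the symmetrization, $3\mu_1(x)$ is the sum of three conditional expectations corresponding to the three distinct roles the fixed observation $x=(\mathbf{z},W,f)$ can play in $T_{ijk}$: (i) the \emph{anchor} position carrying the $I_i/p^2(\mathbf{z}_i)$ weight and both kernel factors; (ii) the \emph{$f_j$-position}; and (iii) the \emph{passive} $W_k$-position. For each I would compute the limit by substituting $\mathbf{u}=(\mathbf{z}'-\mathbf{z})/\tilde h$, Taylor-expanding up to order $q$, and invoking the $q$-th order symmetry of $K$ together with dominated convergence, justified exactly as in Step~3 of Lemma~\ref{lem4th6:5} using the moment bounds from Assumption~\ref{ass:main3}(5) and the smoothness of $p$. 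The cancellations between $1/p^2(\mathbf{z}_i)$ and the two kernel-density integrals give the three limits
\[
P_1=E(f_i(1,\pi)-f_i(0,\pi)\mid\mathbf{z}_i),\quad P_2=f_i(W_i,\pi)R_i,\quad P_3=E(f_i(1,\pi)\mid\mathbf{z}_i)\tfrac{W_i}{\pi}-E(f_i(0,\pi)\mid\mathbf{z}_i)\tfrac{1-W_i}{1-\pi},
\]
each having unconditional mean $\tau$. Therefore $3EU_{21}\to 3\tau$ and the Hájek projection equals $n^{-1/2}\sum_i(P_1+P_2+P_3-3\tau)+o_p(1)$, which is exactly the claimed expression.

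The main obstacle is the variance bound for the Hoeffding residual of the third-order U-statistic with both a trimmed, unbounded weight $I_i/p^2(\mathbf{z}_i)$ and a doubly-kernel-smoothed summand: the two kernel factors couple the centering argument in a way not fully covered by the second-order analysis of Lemma~\ref{lem4th6:5}, and one must check the dominated convergence for cross products of the form $K(\mathbf{u}_1)K(\mathbf{u}_2)p(\mathbf{z}+\tilde h\mathbf{u}_1)p(\mathbf{z}+\tilde h\mathbf{u}_2)/p(\mathbf{z})$ on the set $\{p(\mathbf{z})>\tilde b\}$, analogously to~\eqref{eqn:26} in Lemma~\ref{lem4th6:5}. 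A secondary subtlety is bookkeeping the factor $3$ from the Hájek projection against the six-fold symmetrization so that the three role-specific limits combine with the correct coefficients $1,\,1,\,1/\pi,\,-1/(1-\pi)$ of the statement.
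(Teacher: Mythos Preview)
Your three-step outline --- symmetrize $\delta_{21}$ to a third-order U-statistic, apply the H\'ajek projection, then use dominated convergence to identify the three role-specific limits $P_1,P_2,P_3$ --- is exactly the paper's route, and your identification of the three limits from the anchor, $f_j$-, and $W_k$-positions is correct.

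One genuine slip: the condition $E(T^{\mathrm{sym}})^2=o(n^3)$ is \emph{not} sufficient for the H\'ajek remainder to be $o_p(n^{-1/2})$. For a third-order U-statistic the remainder variance is of order $\zeta_2/n^2+\zeta_3/n^3$, so one needs $\zeta_2=o(n)$ and $\zeta_3=o(n^2)$; the simplest sufficient condition is $E(T^{\mathrm{sym}})^2=o(n)$, which is what the paper verifies (via Theorem~4.2.1 of Korolyuk--Borovskikh and the bound $1/(\tilde h^{2p}\tilde b^3)=o(n)$). Fortunately your own computation $(n\tilde h^{2p}\tilde b^4)^{-1}=o(1)$ already delivers $E(T^{\mathrm{sym}})^2=o(n)$ under Assumption~\ref{ass:main3}, so the argument survives once you correct the stated exponent. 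A smaller bookkeeping point: the diagonal bounds pick up factors of $1/\tilde b$ from $I_i/p^2(\mathbf{z}_i)\le 1/(\tilde b\, p(\mathbf{z}_i))$, so the correct orders are $O\bigl((n\tilde h^{p}\tilde b)^{-1}\bigr)$ and $O\bigl((n^{2}\tilde h^{2p}\tilde b^{2})^{-1}\bigr)$ (the paper controls these in $L^1$ rather than $L^2$); both are still $o(n^{-1/2})$ under Assumption~\ref{ass:main3}.
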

\begin{proof} 
	The proof is divided into three parts. In \tb{Step 1}, we associate $\delta_{21}$ with a U-statistic $U_{21}$. In \tb{Step 2}, we approximate \(U_{21}\) using conditional expectation. In \tb{Step 3}, we further approximate the expression to derive the final result.  
	
	\tb{Step 1.}  Let $R_{j}=\frac{W_j }{\pi}-\frac{1-W_j}{1-\pi}, I_j=I(p(\mathbf{z}_j)>\tilde b),  R_{jk}=\frac{W_jW_k}{\pi^2}-\frac{(1-W_j)(1-W_k)}{(1-\pi)^2}$ for $1\le j,k\le n$, 
	\begin{align*}
		U_{21}=\frac{6}{n(n-1)(n-2)}\sum_{i<j<k}&\frac{1}{6}\left(
		K\left(\frac{\mathbf{z}_i-\mathbf{z}_j}{\tilde h}\right)
		K\left(\frac{\mathbf{z}_i-\mathbf{z}_k}{\tilde h}\right)
		\frac{I_iR_{jk}}{\tilde h^{2p}p^2(\mathbf{z}_i)} (
		f_j\left(W_j,\pi\right)+	f_k\left(W_k,\pi\right)
		)\right.\\&\left.+
		K\left(\frac{\mathbf{z}_j-\mathbf{z}_k}{\tilde h}\right)
		K\left(\frac{\mathbf{z}_j-\mathbf{z}_i}{\tilde h}\right)
		\frac{I_jR_{ik}}{\tilde h^{2p}p^2(\mathbf{z}_j)} (
		f_k\left(W_k,\pi\right)+	f_i\left(W_i,\pi\right)
		)\right.\\&\left.+
		K\left(\frac{\mathbf{z}_k-\mathbf{z}_i}{\tilde h}\right)
		K\left(\frac{\mathbf{z}_k-\mathbf{z}_j}{\tilde h}\right)
		\frac{I_kR_{ij}}{\tilde h^{2p}p^2(\mathbf{z}_k)} (
		f_i\left(W_i,\pi\right)+
		f_j\left(W_j,\pi\right)
		)
		\right).
	\end{align*}
	Then by the symmetry of $K(\cdot)$,
	\begin{align*} 
		\delta_{21}-\frac{(n-1)(n-2)}{n^2}U_{21}\lesssim &
		\frac{1}{n^3\tilde h^{2p}}\sum_{i,j: i\neq j}\frac{I_i}{p^2(\mathbf{z}_i)}	K^2\left(\frac{\mathbf{z}_i-\mathbf{z}_j}{\tilde h}\right)f_j(W_j,\pi)\left(\frac{W_j^2}{\pi^2}-\frac{(1-W_j)^2}{(1-\pi)^2}\right)\\
		+& 	\frac{1}{n^3\tilde h^{2p}}\sum_{i,j: i\neq j}\frac{I_i}{p^2(\mathbf{z}_i)}	K(\mathbf{0})K\left(\frac{\mathbf{z}_i-\mathbf{z}_j}{\tilde h}\right)(f_j(W_j,\pi)+f_i(W_i,\pi))R_{ij}\\
		+&\frac{1}{n^3\tilde h^{2p}}\sum_{i=1}^n \frac{I_i}{p^2(\mathbf{z}_i)}	K^2(\mathbf{0})f_i(W_i,\pi)\left(\frac{W_i^2}{\pi^2}-\frac{(1-W_i)^2}{(1-\pi)^2}\right).
	\end{align*} 
	
	To show that $\sqrt{n}(\delta_{21}-E\delta_{21})-\sqrt{n}(U_{21}-EU_{21})=o_p(1)$, it suffices to show that $E|\delta_{21}-\frac{(n-1)(n-2)}{n^2}U_{21}|=o(1/\sqrt n)$.
	
	First, consider the term $	\frac{1}{n^3\tilde h^{2p}}\sum_{i\neq j}\frac{I_i}{p^2(\mathbf{z}_i)}	K^2\left(\frac{\mathbf{z}_i-\mathbf{z}_j}{\tilde h}\right)f_j(W_j,\pi)\left(\frac{W_j^2}{\pi^2}-\frac{(1-W_j)^2}{(1-\pi)^2}\right)$, we derive that
	\begin{align}
		\label{eqn:25}
		&E\left|\frac{1}{n^3\tilde h^{2p}}\sum_{i\neq j}\frac{I_i}{p^2(\mathbf{z}_i)}	K^2\left(\frac{\mathbf{z}_i-\mathbf{z}_j}{\tilde h}\right)f_j(W_j,\pi)\left(\frac{W_j^2}{\pi^2}-\frac{(1-W_j)^2}{(1-\pi)^2}\right)\right|\notag\\
		&\le 
		\frac{1}{n^3\tilde h^{2p} \pi^2(1-\pi)^2}
		E\left|\sum_{i\neq j}\frac{I_i}{p^2(\mathbf{z}_i)}	K^2\left(\frac{\mathbf{z}_i-\mathbf{z}_j}{\tilde h}\right)E(f_j(W_j,\pi)|\mathbf{z}_j) \right|\notag\\
		&\lesssim 
		\frac{1}{n^3\tilde h^{2p}\tilde b}
		\sum_{i\neq j}
		E\left|\frac{1}{p(\mathbf{z}_i)}	K^2\left(\frac{\mathbf{z}_i-\mathbf{z}_j}{\tilde h}\right) \right|\notag\\
		&\le 	\frac{1}{n \tilde h^{p}\tilde b}
		\iint
		p(\mathbf{z}_j)	K^2\left(\mathbf{u}\right) d\mathbf{u}d\mathbf{z}_j\lesssim \frac{1}{n \tilde h^{p}\tilde b}=o(n^{-1/2}),
	\end{align}
	where the last equality is due to Assumption~\ref{ass:main3}. 
	
	Second, consider the term $\frac{1}{n^3\tilde h^{2p}}\sum_{i\neq j}\frac{I_i}{p^2(\mathbf{z}_i)}	K(\mathbf{0})K\left(\frac{\mathbf{z}_i-\mathbf{z}_j}{\tilde h}\right)(f_j(W_j,\pi)+f_i(W_i,\pi))R_{ij}$. Similar to the derivations in (\ref{eqn:25}), we have  
	\begin{align*}
		&E\left|\frac{1}{n^3\tilde h^{2p}}\sum_{i\neq j}\frac{I_i}{p^2(\mathbf{z}_i)}	K(\mathbf{0})K\left(\frac{\mathbf{z}_i-\mathbf{z}_j}{\tilde h}\right)(f_j(W_j,\pi)+f_i(W_i,\pi))R_{ij}\right|\\
		&\lesssim
		\frac{1}{n^3\tilde h^{2p}}
		E\left|\sum_{i\neq j}\frac{I_i}{p^2(\mathbf{z}_i)}	K\left(\frac{\mathbf{z}_i-\mathbf{z}_j}{\tilde h}\right)\right|\\
		&\le \frac{1}{n\tilde h^{p}\tilde b}
		\iint p(\mathbf{z}_j)|K(\mathbf{u})|d\mathbf{u}d\mathbf{z}_j =o(1/\sqrt n).
	\end{align*}
	
	Finally, for the term $\frac{1}{n^3\tilde h^{2p}}\sum_{i=1}^n \frac{I_i}{p^2(\mathbf{z}_i)}	K^2(\mathbf{0})f_i(W_i,\pi)\left(\frac{W_i^2}{\pi^2}-\frac{(1-W_i)^2}{(1-\pi)^2}\right)$, we have
	\begin{align*}
		&\frac{1}{n^3\tilde h^{2p}}E\left|\sum_{i=1}^n \frac{I_i}{p^2(\mathbf{z}_i)}	K^2(\mathbf{0})f_i(W_i,\pi)\left(\frac{W_i^2}{\pi^2}-\frac{(1-W_i)^2}{(1-\pi)^2}\right)\right|\\
		&\lesssim
		\frac{1}{n^2\tilde h^{2p}\tilde b^2}=o(1/\sqrt n).
	\end{align*}
	Therefore, we have
	\begin{align}
		\label{lem4th6:5 eqn0 new}
		\sqrt{n}(\delta_{21}-E\delta_{21})-\sqrt{n}(U_{21}-EU_{21})=o_p(1).
	\end{align}

	\tb{Step 2.}   Similar to  {Step 2} in Lemma~\ref{lem4th6:5}, by the proof of Theorem 4.2.1 in \cite{korolyuk2013theory} and the $c_r$-inequality for expectation, to show 
	\begin{align}
		\label{eqn:18}
		\sqrt{n}(U_{21}-EU_{21})=\frac{1}{\sqrt{n}}\sum_{i=1}^n \left(
		g_{21}(\mathbf{z}_i,W_i,f_i)-Eg_{21}(\mathbf{z}_i,W_i,f_i)
		\right)+o_p(1),
	\end{align}
	where 
	\begin{align*}
		g_{21}(\mathbf{z}_i,W_i,f_i)&=
		\frac{1}{2}E\left(
		K\left(\frac{\mathbf{z}_i-\mathbf{z}_j}{\tilde h}\right)
		K\left(\frac{\mathbf{z}_i-\mathbf{z}_k}{\tilde h}\right)
		\frac{I_iR_{jk}}{\tilde h^{2p}p^2(\mathbf{z}_i)} (
		f_j\left(W_j,\pi\right)+	f_k\left(W_k,\pi\right)
		)\right.\\&\left.+
		K\left(\frac{\mathbf{z}_j-\mathbf{z}_k}{\tilde h}\right)
		K\left(\frac{\mathbf{z}_j-\mathbf{z}_i}{\tilde h}\right)
		\frac{I_jR_{ik}}{\tilde h^{2p}p^2(\mathbf{z}_j)} (
		f_k\left(W_k,\pi\right)+	f_i\left(W_i,\pi\right)
		)\right.\\&\left.+
		K\left(\frac{\mathbf{z}_k-\mathbf{z}_i}{\tilde h}\right)
		K\left(\frac{\mathbf{z}_k-\mathbf{z}_j}{\tilde h}\right)
		\frac{I_kR_{ij}}{\tilde h^{2p}p^2(\mathbf{z}_k)} (
		f_i\left(W_i,\pi\right)+
		f_j\left(W_j,\pi\right)
		)
		\bigg|\mathbf{z}_i,W_i,f_i
		\right),
	\end{align*}
	it suffices to prove that
	\begin{align}
		\label{eqn:15}
		&E\left(K\left(\frac{\mathbf{z}_i-\mathbf{z}_j}{\tilde h}\right)
		K\left(\frac{\mathbf{z}_i-\mathbf{z}_k}{\tilde h}\right)
		\frac{I_iR_{jk}}{\tilde h^{2p}p^2(\mathbf{z}_i)} 
		f_j\left(W_j,\pi\right)
		\right)	^2=o(n).
	\end{align}
	
	To see (\ref{eqn:15}), notice that  by the boundedness of $R_{jk}, E(|f_j(W_j,\pi)||\mathbf{z}_j)$, we have
	\begin{align*}
		&E\left(K\left(\frac{\mathbf{z}_i-\mathbf{z}_j}{\tilde h}\right)
		K\left(\frac{\mathbf{z}_i-\mathbf{z}_k}{\tilde h}\right)
		\frac{I_iR_{jk}}{\tilde h^{2p}p^2(\mathbf{z}_i)} 
		f_j\left(W_j,\pi\right)
		\right)	^2\\
		&\lesssim
		\frac{1}{h^{4p}}
		\int K^2\left(\frac{\mathbf{z}_i-\mathbf{z}_j}{\tilde h}\right)
		K^2\left(\frac{\mathbf{z}_i-\mathbf{z}_k}{\tilde h}\right)
		\frac{I_i}{p^3(\mathbf{z}_i)}p(\mathbf{z}_j)p(\mathbf{z}_k)d\mathbf{z}_id\mathbf{z}_jd\mathbf{z}_k\\
		&	\lesssim \frac{1}{h^{2p}b^3}	\int K^2\left(\mathbf{u}_1\right)
		K^2\left(\mathbf{u}_2\right)
		p(\mathbf{z}_i+\tilde h\mathbf{u}_1)p(\mathbf{z}_i+\tilde h\mathbf{u}_2)d\mathbf{u}_1d\mathbf{u}_2d\mathbf{z}_i\\&
		\lesssim \frac{1}{h^{2p}b^3}
		=o(n).
	\end{align*}
	
	\tb{Step 3.}  We now refine (\ref{eqn:18}) to show that the summands on the  right-hand side can be approximated by $	E(f_i(1,\pi)-f_i(0,\pi)|\mathbf{z}_i)+
	f_i(W_i,\pi)\left(\frac{W_i}{\pi}-\frac{1-W_i}{1-\pi}\right)
	+E(f_i(1,\pi)|\mathbf{z}_i)\frac{W_i}{\pi}-E(f_i(0,\pi)|\mathbf{z}_i)\frac{1-W_i}{1-\pi}
	-3\tau$, which do not vary with $n$. Let \[S_{2}(u)=	E(	f_j\left(W_j,\pi\right)R_{jk}|\mathbf{z}_j)\bigg|_{\mathbf{z}_j=u}, S_{3}(v,u)=	E(	f_j\left(W_j,\pi\right)R_{jk}|W_j,\mathbf{z}_j)\bigg|_{W_j=v,\mathbf{z}_j=u}.\]  Consider the difference term
	\begin{align*}
		J=&\frac{1}{\sqrt{n}}\sum_{i=1}^n\left[g_{21}(\mathbf{z}_i,W_i,f_i)-Eg_{21}(\mathbf{z}_i,W_i,f_i)-  E(f_i(1,\pi)-f_i(0,\pi)|\mathbf{z}_i)
		\right.\\&\left.
		-
		f_i(W_i,\pi)\left(\frac{W_i}{\pi}-\frac{1-W_i}{1-\pi}\right)
		-E(f_i(1,\pi)|\mathbf{z}_i)\frac{W_i}{\pi}+E(f_i(0,\pi)|\mathbf{z}_i)\frac{1-W_i}{1-\pi}
		+3\tau\right].
	\end{align*}
	Note that $J/\sqrt{n}$ is an average of functions of i.i.d. samples of random vector $(\mathbf{z}_i^\top, f_i, W_i)^\top$. By the definition of \(\tau\) in Theorem~\ref{th: Consistency of hat tau}, each summand of \(J\) has a mean of zero. Therefore, to show $J=o_p(1)$, it suffices to show that
	\begin{align}
		\label{lem4th6:5 J2}
		J_1=var&\left[g_{21}(\mathbf{z}_i,W_i,f_i)-Eg_{21}(\mathbf{z}_i,W_i,f_i)-  E(f_i(1,\pi)-f_i(0,\pi)|\mathbf{z}_i)
		\right.\notag\\&\left.
		-
		f_i(W_i,\pi)\left(\frac{W_i}{\pi}+\frac{1-W_i}{1-\pi}\right)
		-E(f_i(1,\pi)|\mathbf{z}_i)\frac{W_i}{\pi}+E(f_i(0,\pi)|\mathbf{z}_i)\frac{1-W_i}{1-\pi}
		+3\tau\right]
		=o(1).
	\end{align}
	By the law of total expectation and integration by substitution, we derive that
	\begin{align*}
		g_{21}(\mathbf{z}_i,W_i,f_i)&=
		\frac{1}{2}E\left(
		K\left(\frac{\mathbf{z}_i-\mathbf{z}_j}{\tilde h}\right)
		K\left(\frac{\mathbf{z}_i-\mathbf{z}_k}{\tilde h}\right)
		\frac{I_iR_{jk}}{\tilde h^{2p}p^2(\mathbf{z}_i)} (
		f_j\left(W_j,\pi\right)+	f_k\left(W_k,\pi\right)
		)\right.\\&\left.+
		K\left(\frac{\mathbf{z}_j-\mathbf{z}_k}{\tilde h}\right)
		K\left(\frac{\mathbf{z}_j-\mathbf{z}_i}{\tilde h}\right)
		\frac{I_jR_{ik}}{\tilde h^{2p}p^2(\mathbf{z}_j)} (
		f_k\left(W_k,\pi\right)+	f_i\left(W_i,\pi\right)
		)\right.\\&\left.+
		K\left(\frac{\mathbf{z}_k-\mathbf{z}_i}{\tilde h}\right)
		K\left(\frac{\mathbf{z}_k-\mathbf{z}_j}{\tilde h}\right)
		\frac{I_kR_{ij}}{\tilde h^{2p}p^2(\mathbf{z}_k)} (
		f_i\left(W_i,\pi\right)+
		f_j\left(W_j,\pi\right)
		)
		\bigg|\mathbf{z}_i,W_i,f_i
		\right),\\
		&=
		\frac{1}{2} \int 
		K\left(\mathbf{u}_1\right)
		K\left(\mathbf{u}_2\right)
		\frac{I_i}{p^2(\mathbf{z}_i)} [
		S_{2}(\mathbf{z}_i+\tilde h\mathbf{u}_1)+ 
		S_{2}(\mathbf{z}_i+\tilde h\mathbf{u}_2)
		] p(\mathbf{z}_i+\tilde h\mathbf{u}_1)
		p(\mathbf{z}_i+\tilde h\mathbf{u}_2)d\mathbf{u}_1d\mathbf{u}_2\\
		&+	 \frac{1}{2} \int 
		K\left(\mathbf{u}_1\right)
		K\left(\mathbf{u}_2\right)
		\frac{I(p(\mathbf{z}_i+\tilde h\mathbf{u}_1)>\tilde b) }{ p^2(\mathbf{z}_i+\tilde h\mathbf{u}_1)} \\&[
		S_{3}(W_i,\mathbf{z}_i+\tilde h\mathbf{u}_1+\tilde h\mathbf{u}_2)+ 
		f_i(W_i,\pi)R_i
		]
		p(\mathbf{z}_i+\tilde h\mathbf{u}_1)
		p(\mathbf{z}_i+\tilde h\mathbf{u}_1+\tilde h\mathbf{u}_2)
		d\mathbf{u}_1d\mathbf{u}_2\\
		&+	\frac{1}{2} \int 
		K\left(\mathbf{u}_1\right)
		K\left(\mathbf{u}_2\right)
		\frac{I(p(\mathbf{z}_i+\tilde h\mathbf{u}_1)>\tilde b)}{ p^2(\mathbf{z}_i+\tilde h\mathbf{u}_1)} \\&[
		f_i(W_i,\pi)R_i+ 
		S_{3}(W_i,\mathbf{z}_i+\tilde h\mathbf{u}_1+\tilde h\mathbf{u}_2)
		]
		p(\mathbf{z}_i+\tilde h\mathbf{u}_1)
		p(\mathbf{z}_i+\tilde h\mathbf{u}_1+\tilde h\mathbf{u}_2)
		d\mathbf{u}_1d\mathbf{u}_2.
	\end{align*}
	Recall that $S_{2}(u)=	E(	f_j\left(W_j,\pi\right)R_{jk}|\mathbf{z}_j)\bigg|_{\mathbf{z}_j=u}, S_{3}(v,u)=	E(	f_j\left(W_j,\pi\right)R_{jk}|W_j,\mathbf{z}_j)\bigg|_{W_j=v,\mathbf{z}_j=u}$. We have
	\begin{align*}
		&S_2(\mathbf{z}_i)+f_i(W_i,\pi)R_i+S_3(W_i,\mathbf{z}_i))\\&=  E(f_i(1,\pi)-f_i(0,\pi)|\mathbf{z}_i)
		+
		f_i(W_i,\pi)\left(\frac{W_i}{\pi}-\frac{1-W_i}{1-\pi}\right)
		+E(f_i(1,\pi)|\mathbf{z}_i)\frac{W_i}{\pi}-E(f_i(0,\pi)|\mathbf{z}_i)\frac{1-W_i}{1-\pi}.
	\end{align*}
	Then
	\begin{align*}
		&	J_1 \\
		&\lesssim E\left(
		\int 
		K\left(\mathbf{u}_1\right)
		K\left(\mathbf{u}_2\right)
		\frac{I_i}{p^2(\mathbf{z}_i)} [
		S_{2}(\mathbf{z}_i+\tilde h\mathbf{u}_1)+ 
		S_{2}(\mathbf{z}_i+\tilde h\mathbf{u}_2)
		] p(\mathbf{z}_i+\tilde h\mathbf{u}_1)
		p(\mathbf{z}_i+\tilde h\mathbf{u}_2)
		d\mathbf{u}_1d\mathbf{u}_2
		- 
		2S_{2}(\mathbf{z}_i ) 
		\right)^2\\
		&+
		E\left(
		\int 
		K\left(\mathbf{u}_1\right)
		K\left(\mathbf{u}_2\right)
		\frac{I(p(\mathbf{z}_i+\tilde h\mathbf{u}_1)>\tilde b) }{ p^2(\mathbf{z}_i+\tilde h\mathbf{u}_1)}
		[
		S_{3}(W_i,\mathbf{z}_i+\tilde h\mathbf{u}_1+\tilde h\mathbf{u}_2)+ 
		f_i(W_i,\pi)R_i
		]	\right. \\&\left.
		p(\mathbf{z}_i+\tilde h\mathbf{u}_1)
		p(\mathbf{z}_i+\tilde h\mathbf{u}_1+\tilde h\mathbf{u}_2)
		d\mathbf{u}_1d\mathbf{u}_2
		%	\right.\\&\left.
		-
		[S_3(W_i,\mathbf{z}_i)+f_i(W_i,\pi)R_i]
		\right)^2 \\
		&\lesssim 
		\int \left(
		K\left(\mathbf{u}_1\right)
		K\left(\mathbf{u}_2\right)
		\left[
		\frac{I_i}{p^2(\mathbf{z}_i)} [
		S_{2}(\mathbf{z}_i+\tilde h\mathbf{u}_1)+ 
		S_{2}(\mathbf{z}_i+\tilde h\mathbf{u}_2)
		] p(\mathbf{z}_i+\tilde h\mathbf{u}_1)
		p(\mathbf{z}_i+\tilde h\mathbf{u}_2)
		-
		2S_{2}(\mathbf{z}_i ) 
		\right]\right)^2
		\\&
		p(\mathbf{z}_i)
		d\mathbf{u}_1d\mathbf{u}_2d\mathbf{z}_i
		\\&+
		\int\left(
		K\left(\mathbf{u}_1\right)
		K\left(\mathbf{u}_2\right)
		\left[
		\frac{I(p(\mathbf{z}_i+\tilde h\mathbf{u}_1)>\tilde b) }{ p^2(\mathbf{z}_i+\tilde h\mathbf{u}_1)}
		\right.\right. \\&\left.\left.[
		S_{3}(W_i,\mathbf{z}_i+\tilde h\mathbf{u}_1+\tilde h\mathbf{u}_2)+ 
		f_i(W_i,\pi)R_i
		]
		p(\mathbf{z}_i+\tilde h\mathbf{u}_1)
		p(\mathbf{z}_i+\tilde h\mathbf{u}_1+\tilde h\mathbf{u}_2)
		-
		[S_3(W_i,\mathbf{z}_i)+f_i(W_i,\pi)R_i]
		\right]
		\right)^2 
		\\&
		p(\mathbf{z}_i) d\mathbf{u}_1d\mathbf{u}_2d\mathbf{z}_i,
	\end{align*}
	where  in the last inequality we apply Jensen's inequality to exchange the order of integration and squaring. Similar to the arguments in (\ref{eqn:26}), using the fact that $\tilde h=o(\tilde b)$, we can find  integrable dominating functions for the two terms on the right-hand side of the above equation, allowing us to apply the dominated convergence theorem. By the dominated convergence theorem, we have $J_1=o(1)$. Then the proof is complete by referring (\ref{lem4th6:5 J2}), (\ref{eqn:18}) and (\ref{lem4th6:5 eqn0 new}).
	
\end{proof}

\begin{lemma}
		\label{lem4th6:9}
			Under the assumptions for Theorem~\ref{th:5}, we have  
	\begin{align*} 
		\sqrt{n}(\delta_{22}-E\delta_{22})=
		\frac{1}{\sqrt{n}}\sum_{i=1}^n 
		\left(\frac{M_i}{N_i}-\pi\right)
		E\left(
		f_j^{(1)}\left(W_j,\pi\right)
		\left(\frac{W_j}{\pi}-\frac{1-W_j }{1-\pi }\right)
		\right)+o_p(1),
	\end{align*}
	where 
	\begin{align*}
			\delta_{22}&=\frac{1}{n^3\tilde h^{2p}}\sum_{i=1}^n\sum_{j=1}^n
		\sum_{k=1}^n 
		\frac{I(p(\mathbf{z}_i)>\tilde b)}{p^2(\mathbf{z}_i)} K\left(\frac{\mathbf{z}_i-\mathbf{z}_j}{\tilde h}\right)
		K\left(\frac{\mathbf{z}_i-\mathbf{z}_k}{\tilde h}\right)
		f_j^{(1)}\left(W_j,\pi\right)\left(\frac{M_j}{N_j}-\pi\right)\\&
		\left(\frac{W_jW_k}{\pi^2}-
		\frac{(1-W_j)(1-W_k)}{(1-\pi)^2}\right).
	\end{align*}
\end{lemma}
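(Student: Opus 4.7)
The plan is to follow the same three-stage strategy used in Lemma~\ref{lem4th6:6} (the double-sum analogue), adapting it to the triple-sum U-statistic structure of $\delta_{22}$. Writing $R_j=W_j/\pi-(1-W_j)/(1-\pi)$, $R_{jk}=W_jW_k/\pi^{2}-(1-W_j)(1-W_k)/(1-\pi)^{2}$ and $I_j=I(p(\mathbf{z}_j)>\tilde b)$, observe that $\delta_{22}$ is, up to diagonal terms, the symmetrized third-order U-statistic
\[
U_{22}=\frac{1}{\binom{n}{3}}\sum_{i<j<k}\Phi_n(\mathbf{z}_i,W_i,f_i,M_i/N_i;\mathbf{z}_j,W_j,f_j,M_j/N_j;\mathbf{z}_k,W_k,f_k,M_k/N_k),
\]
where $\Phi_n$ is the symmetrization of $\tilde h^{-2p} I_i p^{-2}(\mathbf{z}_i) K((\mathbf{z}_i-\mathbf{z}_j)/\tilde h)K((\mathbf{z}_i-\mathbf{z}_k)/\tilde h) f_j^{(1)}(W_j,\pi)(M_j/N_j-\pi) R_{jk}$. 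First I would show $\sqrt{n}(\delta_{22}-E\delta_{22})-\sqrt{n}(U_{22}-EU_{22})=o_p(1)$; the diagonal contributions (where two of the three indices coincide) can be bounded in $L^1$ using the boundedness of $R_{jk}$ and $I_i/p(\mathbf{z}_i)\le 1/\tilde b$, the moment bound $\max_w E[(f^{(1)}(w,\pi))^2\mid w]<\infty$, Cauchy--Schwarz combined with $E(M_j/N_j-\pi)^2\lesssim (n\rho_n)^{-1}$ from Lemma~15 of \cite{li2022random}, and the scaling $n\tilde h^{p}\tilde b^{2}\to\infty$ in Assumption~\ref{ass:main3}(4). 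These give an error of order $o(n^{-1/2})$ exactly as in equation~(\ref{eqn:25}).

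Next I would introduce the projection onto the $\mathbf{z}_j$ coordinate, analogous to (\ref{eqn:rn}),
\[
r_n=\frac{1}{\binom{n}{3}}\sum_{i<j<k}\bigl(M_j/N_j-\pi\bigr) f_j^{(1)}(W_j,\pi) E\!\left[\tilde h^{-2p}I_i p^{-2}(\mathbf{z}_i)K\!\left(\tfrac{\mathbf{z}_i-\mathbf{z}_j}{\tilde h}\right)K\!\left(\tfrac{\mathbf{z}_i-\mathbf{z}_k}{\tilde h}\right) R_{jk}\,\Big|\,\mathbf{z}_j,W_j\right]+\text{(sym.)}.
\]
The novel ingredient, as in Lemma~\ref{lem4th6:6}, is that I deliberately do \emph{not} project onto the graph variable $M_j/N_j-\pi$; rather I condition only on $(\mathbf{z}_j,W_j)$ so that the residual inherits the independence of $W_k$'s from $\mathbf{z}$'s. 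To show $E(U_{22}-r_n)^{2}=o(n^{-1})$ I would expand $M_j/N_j-\pi=N_j^{-1}\sum_{\ell\ne j}E_{\ell j}(W_\ell-\pi)$ and use independence of $\{W_\ell\}$ from $\{\mathbf{z}_\ell\}$: only index patterns forcing the $W$-factors to pair up survive, yielding a combinatorial count of graph terms that can be handled by $E\sum_{i\ne j}\sum_{k\ne i,j}E_{kj}E_{ki}=O(n^3\rho_n^{2})$, $E(1/N_j)=O(1/(n\rho_n))$, and a tail bound $P(N_j<cn\rho_n)<\exp(-Cn\rho_n)$, exactly as in (\ref{eqn20}).

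Finally I would reduce $r_n$ to $\frac{1}{n}\sum_{j=1}^{n}(M_j/N_j-\pi)\,E(f_j^{(1)}(W_j,\pi)R_j)$ by splitting it as
\[
r_n=\frac{1}{n}\sum_{j=1}^{n}\bigl(M_j/N_j-\pi\bigr)E\bigl(f_j^{(1)}(W_j,\pi)R_j\bigr)+\tilde r_{n1}+\tilde r_{n2}+\tilde r_{n3},
\]
mimicking the decomposition in the proof of Lemma~\ref{lem4th6:6}: $\tilde r_{n1}$ collects diagonal $i=j$ or $k=j$ terms, $\tilde r_{n2}$ is the centered sample-mean fluctuation of the inner conditional kernel expectation around its expectation, and $\tilde r_{n3}$ captures the bias of the kernel expectation against $E(f_j^{(1)}(W_j,\pi)R_j)$. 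The key identity driving the simplification is $E[\tilde h^{-p}K((\mathbf{z}-\mathbf{z}_j)/\tilde h)R_{jk}\mid \mathbf{z}_j,W_j]\to p(\mathbf{z}_j)R_j\cdot \int K$ after one integration (the second kernel integrates $R_{jk}$ against $W_k$ to produce the $R_j$ factor via $E(W_k)=\pi$), combined with $\int K=1$. Each $\tilde r_{n\ell}$ is controlled by Cauchy--Schwarz against $E\bigl(\frac{1}{n}\sum_j(M_j/N_j-\pi)\bigr)^{2}=O(1/n)$, derived as in (\ref{eqn:11}), and by the dominated convergence theorem for the bias term $\tilde r_{n3}$, using $\tilde h=o(\tilde b)$ to bound the integrand by an integrable majorant as in (\ref{eqn:26}).

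The main obstacle will be Step 3 (controlling $E(U_{22}-r_n)^{2}$): the triple-sum structure produces many more index configurations than the double sum in Lemma~\ref{lem4th6:6}, and the random graph contributes non-trivial cross-moments of the form $E[E_{\ell_1 j}E_{\ell_2 m}\cdot(\text{kernel product})/(N_jN_m)]$ that must be pattern-matched and bounded by the degree/subgraph-count estimates of \citet{li2022random}. The innovative projection onto $\mathbf{z}_j$ alone, which preserves the graph-induced dependence so it can be handled by graphon moment bounds rather than classical U-statistic projections, is precisely what makes this bookkeeping tractable.
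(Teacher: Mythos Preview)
Your three-stage strategy (strip diagonals, project, reduce) matches the paper's, but your choice of projection variable differs from the paper's in a way that creates a gap in your final reduction. The paper's $r_n$ (equation~\eqref{eqn:rn U22}) conditions on $\mathbf{z}_i$ alone, keeping $(M_j/N_j-\pi)\,I_i/p^2(\mathbf{z}_i)$ outside and averaging the entire factor $K(\cdot)K(\cdot)f_j^{(1)}(W_j,\pi)R_{jk}$ over $(\mathbf{z}_j,\mathbf{z}_k,W_j,W_k,f_j)$. This already integrates out $f_j^{(1)}(W_j,\pi)R_j$ and yields, after the kernel limit, a function of $\mathbf{z}_i$ only, so that $r_n$ factorizes as $\bigl[\frac{1}{n}\sum_j(M_j/N_j-\pi)\bigr]\times\bigl[\text{sample mean over }i\bigr]$; the Cauchy--Schwarz bound against $E\bigl(\frac{1}{n}\sum_j(M_j/N_j-\pi)\bigr)^2=O(1/n)$ from \eqref{eqn:11} then applies directly, exactly as in Lemma~\ref{lem4th6:6}.

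Your projection instead conditions on $(\mathbf{z}_j,W_j)$ and keeps $f_j^{(1)}(W_j,\pi)$ outside, so after the kernel limit your $r_n$ is approximately $\frac{1}{n}\sum_j(M_j/N_j-\pi)\,f_j^{(1)}(W_j,\pi)R_j$, not the target $\frac{1}{n}\sum_j(M_j/N_j-\pi)\,E[f_j^{(1)}(W_j,\pi)R_j]$. The discrepancy $\frac{1}{n}\sum_j(M_j/N_j-\pi)\,\xi_j$ with $\xi_j=f_j^{(1)}(W_j,\pi)R_j-E[f_j^{(1)}(W_j,\pi)R_j]$ is a genuine random fluctuation, not a deterministic bias, so it cannot be handled by dominated convergence as you propose for $\tilde r_{n3}$; and since both factors depend on $j$, it does not factorize, so the Cauchy--Schwarz trick you invoke does not apply either. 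This extra term \emph{can} be shown to be $o_p(n^{-1/2})$ by a direct variance computation exploiting that $(M_j/N_j-\pi)$ depends only on $\{W_\ell:\ell\neq j\}$ and the graph while $\xi_j$ depends on $(W_j,f_j)$, together with the degree/edge moment bounds from \citet{li2022random}, but that is a separate argument you would need to add. The cleanest fix is simply to project onto $\mathbf{z}_i$ as the paper does.
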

\begin{proof}
	Let $R_{j}=\frac{W_j }{\pi}-\frac{1-W_j}{1-\pi}, I_j=I(p(\mathbf{z}_j)>\tilde b),  R_{jk}=\frac{W_jW_k}{\pi^2}-\frac{(1-W_j)(1-W_k)}{(1-\pi)^2}$ for $1\le j,k\le n$, 
	\begin{align*}
	U_{22}=\frac{1}{n(n-1)(n-2)}\sum_{i,j,k \text{ are distinct }}
	\frac{I_i}{\tilde h^{2p}p^2(\mathbf{z}_i)} K\left(\frac{\mathbf{z}_i-\mathbf{z}_j}{\tilde h}\right)
	K\left(\frac{\mathbf{z}_i-\mathbf{z}_k}{\tilde h}\right)
	f_j^{(1)}\left(W_j,\pi\right)\left(\frac{M_j}{N_j}-\pi\right)R_{jk}.
	\end{align*}
	Then
	\begin{align*} 
			&\delta_{22}-\frac{(n-1)(n-2)}{n^2}U_{22}\\
			&\lesssim \frac{1}{n^3}
			\sum_{i\neq j}	\frac{I_i}{\tilde h^{2p}p^2(\mathbf{z}_i)}
			K^2\left(\frac{\mathbf{z}_i-\mathbf{z}_j}{\tilde h}\right) f_j^{(1)}(W_j,\pi)
			\left(\frac{M_j}{N_j}-\pi\right)\left(\frac{W_j}{\pi^2}-\frac{1-W_j}{(1-\pi)^2}\right)\\
			&+\frac{1}{n^3}
			\sum_{i\neq j}	\frac{I_i}{\tilde h^{2p}p^2(\mathbf{z}_i)}K(\mathbf{0})
			K\left(\frac{\mathbf{z}_i-\mathbf{z}_j}{\tilde h}\right)\left[
			f_j^{(1)}(W_j,\pi)
			\left(\frac{M_j}{N_j}-\pi\right)+
			f_i^{(1)}(W_i,\pi)
			\left(\frac{M_i}{N_i}-\pi\right)
			\right]R_{ij}\\
			&+
			\frac{1}{n^3}
			\sum_{i=1}^n	\frac{I_i}{\tilde h^{2p}p^2(\mathbf{z}_i)}K^2(\mathbf{0})\left(\frac{M_i}{N_i}-\pi\right)
			\left(\frac{W_i}{\pi^2}-\frac{1-W_i}{(1-\pi)^2}\right).
	\end{align*}
	Similar to the proofs in Lemma~\ref{lem4th6:6}, we have that 
	\begin{align*}
		&E\left|\delta_{22}-\frac{(n-1)(n-2)}{n^2}U_{22}\right|
			\\&
			\lesssim
	\frac{1}{n^3\tilde h^p\tilde b}\sum_{i,j:i\neq j} E\frac{	K^2\left(\frac{\mathbf{z}_i-\mathbf{z}_j}{\tilde h}\right)}{\tilde h^p p(\mathbf{z}_i)}E\left|\frac{M_j}{N_j}-\pi\right|
	+
	\frac{1}{n^3\tilde h^p\tilde b}\sum_{i,j:i\neq j} E\frac{	|K\left(\frac{\mathbf{z}_i-\mathbf{z}_j}{\tilde h}\right)|}{\tilde h^p p(\mathbf{z}_i)}E\left|\frac{M_j}{N_j}-\pi\right|
	\\&
	+	\frac{1}{n^2} 	\frac{1}{\tilde h^{2p}\tilde b^2}E\left|\mni-\pi\right|\\
	&\lesssim \frac{1}{n\tilde h^p\tilde b\sqrt{n\rho_n}} \int K^2(\mathbf{u})p(\mathbf{z}_j)d\mathbf{u}d\mathbf{z}_j 
	+ \frac{1}{n\tilde h^p\tilde b\sqrt{n\rho_n}} \int |K(\mathbf{u})|p(\mathbf{z}_j)d\mathbf{u}d\mathbf{z}_j 
	+	 
	 	\frac{1}{n^2\tilde h^{2p}\tilde b^2\sqrt{n\rho_n}}\\
	 	&\lesssim
	 	 \frac{1}{n\tilde h^p\tilde b\sqrt{n\rho_n}} +	\frac{1}{n^2\tilde h^{2p}\tilde b^2\sqrt{n\rho_n}}=o(n^{-1/2}).
	\end{align*}
	Therefore, \[\sqrt{n}(\delta_{22}-E\delta_{22})-\sqrt{n}(U_{22}-EU_{22})=o_p(1).\]
	
	We now use the  projection technique to approximate $U_{22}$.  Let
	\begin{equation}
		\label{eqn:rn U22}
		r_{n}=
		\frac{1}{n(n-1)(n-2) \tilde h^{2p} }\sum_{i,j,k \text{ are distinct }}
		\left(\frac{M_j}{N_j}-\pi\right)
		\frac{I_i}{p^2(\mathbf{z}_i)}
		E\left(
		K\left(\frac{\mathbf{z}_i-\mathbf{z}_j}{\tilde h}\right)
	K\left(\frac{\mathbf{z}_i-\mathbf{z}_k}{\tilde h}\right)
		f_j^{(1)}\left(W_j,\pi\right)R_{jk}
		\bigg|\mathbf{z}_i\right).
	\end{equation}
	We prove that $U_{22}-r_n$ is negligible by showing that $E\left(U_{22}-r_n\right)^2=o(n^{-1})$.
	\begin{align}
    \label{eqn:128}
		&E\left(U_{22}-r_n\right)^2\notag\\&=
		E\left(\frac{1}{n(n-1)(n-2)}\sum_{i,j,k \text{ are distinct }}
		\frac{I_i}{\tilde h^{2p}p^2(\mathbf{z}_i)}\left(\frac{M_j}{N_j}-\pi\right)\right.\notag\\&\left.
		\left\{
		K\left(\frac{\mathbf{z}_i-\mathbf{z}_j}{\tilde h}\right)
		K\left(\frac{\mathbf{z}_i-\mathbf{z}_k}{\tilde h}\right)
		f_j^{(1)}\left(W_j,\pi\right)R_{jk}
		-E\left[K\left(\frac{\mathbf{z}_i-\mathbf{z}_j}{\tilde h}\right)
		K\left(\frac{\mathbf{z}_i-\mathbf{z}_k}{\tilde h}\right)
		f_j^{(1)}\left(W_j,\pi\right)R_{jk}\bigg|\mathbf{z}_i\right]
		\right\}\right)^2\notag\\
		&\lesssim
		\frac{1}{n^6}\sum_{i,j,k \text{ are distinct }}\sum_{i_1,j_1,k_1 \text{ are distinct }} E\left(
		\frac{I_i}{\tilde h^{2p}p^2(\mathbf{z}_i)}	\frac{I_{i_1}}{\tilde h^{2p}p^2(\mathbf{z}_{i_1})}
	\frac{
	\sum_{l:l\neq j} (W_l-\pi)E_{jl}	\sum_{l_1:l_1\neq j_1} (W_{l_1}-\pi)E_{j_1l_1}
		}{N_jN_{j_1}}
		\right.\notag\\&\left.
				\left\{
		K\left(\frac{\mathbf{z}_i-\mathbf{z}_j}{\tilde h}\right)
		K\left(\frac{\mathbf{z}_i-\mathbf{z}_k}{\tilde h}\right)
		f_j^{(1)}\left(W_j,\pi\right)R_{jk}
		-E\left[K\left(\frac{\mathbf{z}_i-\mathbf{z}_j}{\tilde h}\right)
		K\left(\frac{\mathbf{z}_i-\mathbf{z}_k}{\tilde h}\right)
		f_j^{(1)}\left(W_j,\pi\right)R_{jk}\bigg|\mathbf{z}_i\right]
		\right\}
		\right.\notag\\&\left.
				\left\{
		K\left(\frac{\mathbf{z}_{i_1}-\mathbf{z}_{j_1}}{\tilde h}\right)
		K\left(\frac{\mathbf{z}_{i_1}-\mathbf{z}_{k_1}}{\tilde h}\right)
		f_{j_1}^{(1)}\left(W_{j_1},\pi\right)R_{j_1k_1}
		\right.	\right.\notag\\&\left.\left.
		-E\left[K\left(\frac{\mathbf{z}_{i_1}-\mathbf{z}_{j_1}}{\tilde h}\right)
		K\left(\frac{\mathbf{z}_{i_1}-\mathbf{z}_{k_1}}{\tilde h}\right)
		f_{j_1}^{(1)}\left(W_{j_1},\pi\right)R_{j_1k_1}\bigg|\mathbf{z}_{i_1}\right]
		\right\}
		\right).
	\end{align}
	Similar to the arguments after (\ref{eqn:8}), most of the summands are zero. We notice that for the summand to be nonzero, the eight indices \(i, j, k, i_1, j_1, k_1, l, l_1\) can have at most six distinct values. Then (\ref{eqn:128}) reduces to
    \begin{align*}
        &E\left(U_{22}-r_n\right)^2 \lesssim \frac{1}{n^6 }
                (A+B+C+D)
    \end{align*}
    where 
    \begin{align*}
        A=&\sum_{\mathcal{M}
				}  
                E\left(
		\frac{I_i}{\tilde h^{2p}p^2(\mathbf{z}_i)}	\frac{I_{i_1}}{\tilde h^{2p}p^2(\mathbf{z}_{i_1})}
	\frac{
	 (W_l-\pi)E_{jl}	
     (W_{l_1}-\pi)E_{j_1l_1}
		}{N_jN_{j_1}}
		\right.\\&\left.
		K\left(\frac{\mathbf{z}_i-\mathbf{z}_j}{\tilde h}\right)
		K\left(\frac{\mathbf{z}_i-\mathbf{z}_k}{\tilde h}\right)
		f_j^{(1)}\left(W_j,\pi\right)R_{jk}
K\left(\frac{\mathbf{z}_{i_1}-\mathbf{z}_{j_1}}{\tilde h}\right)
		K\left(\frac{\mathbf{z}_{i_1}-\mathbf{z}_{k_1}}{\tilde h}\right)
		f_{j_1}^{(1)}\left(W_{j_1},\pi\right)R_{j_1k_1}
		\right),\\
        \mathcal{M}&=\{i, j, k, i_1, j_1, k_1, l, l_1 \text{   have at most six distinct values } , i,j,k,\text{ distinct}, i_1,j_1,k_1\text{ distinct}\},
    \end{align*}
    and $B,C,D$ are defined in the obvious way. For $A$, since \[\sup_{\mathbf{z}_j}\max_{w=0,1}E(|f_j^{(1)}(W_j,\pi)|^2|W_j=w,\mathbf{z}_j), R_{jk}, W_l-\pi \text{ are bounded },\] we note that
    \begin{align*}
          A\lesssim&\sum_{\mathcal{M}} 
                E\left(
		\frac{I_i}{\tilde h^{2p}p^2(\mathbf{z}_i)}	\frac{I_{i_1}}{\tilde h^{2p}p^2(\mathbf{z}_{i_1})}
	\frac{E_{jl}E_{j_1l_1}
		}{N_jN_{j_1}}
		%\right.\\&\left.
		K\left(\frac{\mathbf{z}_i-\mathbf{z}_j}{\tilde h}\right)
		K\left(\frac{\mathbf{z}_i-\mathbf{z}_k}{\tilde h}\right)
K\left(\frac{\mathbf{z}_{i_1}-\mathbf{z}_{j_1}}{\tilde h}\right)
		K\left(\frac{\mathbf{z}_{i_1}-\mathbf{z}_{k_1}}{\tilde h}\right)
		\right)\\
        &=\frac{1}{\tilde b^2}\sum_{\mathcal{M}} 
                E\left(\frac{E_{jl}E_{j_1l_1}
		}{N_jN_{j_1}}\right)
                E\left(
		\frac{I_i}{\tilde h^{2p}p(\mathbf{z}_i)}	\frac{I_{i_1}}{\tilde h^{2p}p(\mathbf{z}_{i_1})}
		%\right.\\&\left.
		K\left(\frac{\mathbf{z}_i-\mathbf{z}_j}{\tilde h}\right)
		K\left(\frac{\mathbf{z}_i-\mathbf{z}_k}{\tilde h}\right)
        \right.\\&\left.
K\left(\frac{\mathbf{z}_{i_1}-\mathbf{z}_{j_1}}{\tilde h}\right)
		K\left(\frac{\mathbf{z}_{i_1}-\mathbf{z}_{k_1}}{\tilde h}\right)
		\right).
    \end{align*}
    Case 1: $i,j,k,i_1,j_1,k_1$ are distinct. Since
    \begin{align}\label{eqn:129}
         &E\left(
		\frac{I_i}{\tilde h^{2p}p(\mathbf{z}_i)}	
		K\left(\frac{\mathbf{z}_i-\mathbf{z}_j}{\tilde h}\right)
		K\left(\frac{\mathbf{z}_i-\mathbf{z}_k}{\tilde h}\right)\right)\le 
        \int K(\mathbf{u}_1)K(\mathbf{u}_2)p(\mathbf{z}_i+\tilde h\mathbf{u}_1)p(\mathbf{z}_i+\tilde h\mathbf{u}_2)d\mathbf{u}_1d\mathbf{u}_2d\mathbf{z}_i
    \end{align}
    is bounded, we have
    \begin{align*}
        A\lesssim \frac{1}{\tilde b^2}\sum_{\mathcal{M}} 
                E\left(\frac{E_{jl}E_{j_1l_1}
		}{N_jN_{j_1}}\right).
    \end{align*}
   Case 2: At least one of \(i_1, j_1, k_1\) is equal to one of \(i, j, k\). Since $n\tilde h^{2p}\tilde b\to \infty$ by Assumption~\ref{ass:main3}, and $K(\cdot)$ is bounded, we have 
 \begin{align*}
      A&\lesssim  \frac{n^2}{\tilde h^{2p}\tilde b^3}\sum_{
				i,j,k,\text{ distinct}} 
                E\left(\frac{E_{jl}E_{j_1l_1}
		}{N_jN_{j_1}}\right)
                E\left(
		\frac{I_i}{\tilde h^{2p}p(\mathbf{z}_i)}	 
		K\left(\frac{\mathbf{z}_i-\mathbf{z}_j}{\tilde h}\right)
		K\left(\frac{\mathbf{z}_i-\mathbf{z}_k}{\tilde h}\right)\right)\\&
       \lesssim \frac{1}{\tilde b^2}\sum_{\mathcal{M}}  E\left(\frac{E_{jl}E_{j_1l_1}
		}{N_jN_{j_1}}\right),
    \end{align*}
    where in the last inequality we use (\ref{eqn:129}). The analysis for \(B\), \(C\), and \(D\) is similar. Finally, we conclude that
    \begin{align*}
        E\left(U_{22}-r_n\right)^2 \lesssim \frac{1}{n^6\tilde b^2}\sum_{\mathcal{M}} 	 E\left(
			\frac{
			  E_{jl}E_{j_1l_1}
			}{N_jN_{j_1}}
			\right).
    \end{align*}
 By Lemma 15 in \cite{li2022random},  \[ E(E_{ij})\lesssim \rho_n,  P(N_i<cn\rho_n)<\exp(-Cn\rho_n)\] for some constants $c,C>0$. 
	We then derive that when $n$ is sufficiently large, 
		\begin{align}
			\label{eqn:new3}
			&E\left(U_{22}-r_n\right)^2\notag\\
			&\lesssim \frac{1}{n^6\tilde b^2}\sum_{\mathcal{M}} 	 E\left(
			\frac{
			  E_{jl}E_{j_1l_1}
			}{N_jN_{j_1}}
			\right)\notag\\
			&\le \frac{1}{n^6\tilde b^2}\sum_{\mathcal{M}} 	 E\left(
			\frac{
				E_{jl}
			}{N_jN_{j_1}}
			\right)\notag\\
			&\lesssim \frac{1}{n^6\tilde b^2}\sum_{\mathcal{M}} 	
				[ \frac{\rho_n}{n^2\rho_n^2}+\rho_n\exp(-Cn\rho_n)]\notag\\
			&\lesssim 
			 \frac{1}{n^2\rho_n\tilde b^2}=o(n^{-1}),
		\end{align}  
		where in the last equality we use the fact that $n\rho_n\tilde b^2\to\infty$ which is implied by Assumption~\ref{assump:random_graph} and Assumption~\ref{ass:main3}.  Therefore, 
			\begin{align*}
			\sqrt{n}(U_{22}-EU_{22})-\sqrt{n}(r_n-Er_n)=o_p(1).
		\end{align*} 
		By following the procedure from (\ref{lem4th6:7:eqn2}) to (\ref{eqn:210}), we have that
		\begin{align*} 
				\sqrt{n}(r_n-Er_n)&=\frac{1}{\sqrt n}\sum_{j=1}^n
				\left(\frac{M_j}{N_j}-\pi\right)
				E\left(
				f_j^{(1)}\left(W_j,\pi\right)R_{jk}
				\right)+o_p(1)\\
				&=\frac{1}{\sqrt n}\sum_{j=1}^n
				\left(\frac{M_j}{N_j}-\pi\right)
				E\left(
				f_j^{(1)}\left(W_j,\pi\right)R_{j}
				\right)+o_p(1).
		\end{align*}
		Then the proof is complete. 
\end{proof}

\begin{lemma}
	\label{lem4th6:1}
	Under the assumptions for Theorem~\ref{th:5}, we have 
	\begin{align*}
		\sup_{\mathbf{z}\in\bb{R}^p}\left| 	\hat p(\mathbf{z})-p(\mathbf{z})\right|&=O_{a.s.}\left(\sqrt{\frac{\log(n)}{n \tilde h^p}}+\tilde h^q\right),\\
		\sup_{\mathbf{z}\in\bb{R}^p}\left| 	\hat p_1(\mathbf{z})-p(\mathbf{z})\right|&=O_{a.s.}\left(\sqrt{\frac{\log(n)}{n \tilde h^p}}+\tilde h^q\right),\\
		\sup_{\mathbf{z}\in\bb{R}^p}\left| 	\hat p_2(\mathbf{z})-p(\mathbf{z})\right|&=O_{a.s.}\left(\sqrt{\frac{\log(n)}{n \tilde h^p}}+\tilde h^q\right),
	\end{align*}
	where 
	\begin{align*}
		\hat p(\mathbf{z})&= \frac{1}{n   \tilde h^p}\sum_{j=1}^n	K\left(\frac{\mathbf{z}-\mathbf{z}_j}{\tilde h}\right),\\
		\hat p_1(\mathbf{z})&= \frac{1}{n   \tilde h^p\pi}\sum_{j=1}^n	K\left(\frac{\mathbf{z}-\mathbf{z}_j}{\tilde h}\right)W_j, \\
		\hat p_2(\mathbf{z})&= \frac{1}{n \tilde h^p(1-\pi)}\sum_{j=1}^n	K\left(\frac{\mathbf{z}-\mathbf{z}_j}{\tilde h}\right)(1-W_j).
	\end{align*}
\end{lemma}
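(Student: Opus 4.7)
The plan is to prove all three uniform rates simultaneously by the standard bias-variance decomposition
$\hat{p}(\mathbf{z}) - p(\mathbf{z}) = \{\hat{p}(\mathbf{z}) - E\hat{p}(\mathbf{z})\} + \{E\hat{p}(\mathbf{z}) - p(\mathbf{z})\}$,
and analogously for $\hat{p}_1, \hat{p}_2$. For the bias, a change of variables gives
$E\hat{p}(\mathbf{z}) = \int K(\mathbf{u}) p(\mathbf{z} + \tilde{h}\mathbf{u})\,d\mathbf{u}$. I would Taylor-expand $p$ to order $q$ about $\mathbf{z}$; the terms of order $< q$ vanish because $K$ is symmetric and of order $q$ (Assumption~\ref{ass:main3}(1)), while the $q$-th order Taylor remainder is uniformly $O(\tilde{h}^q)$ thanks to the uniform continuity of the $q$-th derivative of $p$. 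Because $W_j$ is independent of $\mathbf{z}_j$ with $EW_j = \pi$, we have $E\hat{p}_1(\mathbf{z}) = E\hat{p}(\mathbf{z})$ and likewise for $\hat{p}_2$, so the bias bound $\sup_\mathbf{z}|E\hat{p}_k(\mathbf{z}) - p(\mathbf{z})| = O(\tilde{h}^q)$ is identical in all three cases.

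For the stochastic part, I would first restrict attention to a ball $\{|\mathbf{z}|_\infty \le M_n\}$ with $M_n$ a slowly growing sequence (e.g., $M_n = n^{1/p}$). On this ball, I would apply Bernstein's inequality pointwise: writing $\hat{p}(\mathbf{z}) = n^{-1}\sum_j \xi_j(\mathbf{z})$ with $|\xi_j(\mathbf{z})| \lesssim \tilde{h}^{-p}$ and $\mathrm{Var}(\xi_j(\mathbf{z})) \lesssim \tilde{h}^{-p}\sup_\mathbf{z} p(\mathbf{z})$, one obtains $P(|\hat{p}(\mathbf{z}) - E\hat{p}(\mathbf{z})| > t) \le 2\exp(-c n\tilde{h}^p t^2)$ for $t$ of the target order. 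Then I would discretize $[-M_n, M_n]^p$ with a grid of spacing $\delta_n$ chosen so that $L\delta_n/\tilde{h}^{p+1}$ (the Lipschitz oscillation of $\mathbf{z}\mapsto\hat{p}(\mathbf{z})$ within a cell) is smaller than $\sqrt{\log n/(n\tilde{h}^p)}$. The grid has polynomially many points, and a union bound with the Bernstein tail applied at $t = C\sqrt{\log n/(n\tilde{h}^p)}$ together with Borel--Cantelli yields the almost-sure uniform rate on the ball. The same argument applies verbatim to $\hat{p}_1$ and $\hat{p}_2$ because multiplying by $W_j/\pi$ or $(1-W_j)/(1-\pi)$ only inflates constants.

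For the tail $\{|\mathbf{z}|_\infty > M_n\}$, I would control $p(\mathbf{z})$ and $\hat{p}_k(\mathbf{z})$ separately. By Assumption~\ref{ass:main3}(2), $\sup_{|\mathbf{z}|_\infty > M_n} p(\mathbf{z}) \lesssim M_n^{-p}$. For the estimator, $\hat{p}_k(\mathbf{z}) = 0$ whenever $|\mathbf{z}|_\infty > \max_j |\mathbf{z}_j|_\infty + \tilde{h}$, and the moment bound $E|\mathbf{z}_1|_\infty^{2p} < \infty$ together with Markov/Borel--Cantelli give $\max_j |\mathbf{z}_j|_\infty = O_{a.s.}(n^{1/(2p)})$; choosing $M_n$ a bit larger than $n^{1/(2p)}$ makes both quantities eventually smaller than $\sqrt{\log n/(n\tilde{h}^p)} + \tilde{h}^q$ on the tail.

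The main obstacle is balancing the grid spacing $\delta_n$, the truncation radius $M_n$, and the Bernstein deviation level so that the union bound, Lipschitz approximation error, and tail estimate all combine into the single target rate $\sqrt{\log n/(n\tilde{h}^p)} + \tilde{h}^q$; besides this bookkeeping, the argument is standard. Once established for $\hat{p}$, the claims for $\hat{p}_1$ and $\hat{p}_2$ follow by repeating the proof with the summands $K((\mathbf{z}-\mathbf{z}_j)/\tilde{h})W_j/\pi$ and $K((\mathbf{z}-\mathbf{z}_j)/\tilde{h})(1-W_j)/(1-\pi)$, which have the same support and Lipschitz constant and the same conditional mean as $K((\mathbf{z}-\mathbf{z}_j)/\tilde{h})$.
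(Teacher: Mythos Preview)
Your proposal is correct and follows the same overall bias--variance decomposition as the paper, with an identical treatment of the bias term via Taylor expansion and the $q$-th order kernel property. The only difference is in the stochastic term: the paper simply invokes Theorem~5 of \cite{hansen2008uniform} to obtain $\sup_{\mathbf{z}}|\hat p_1(\mathbf{z})-E\hat p_1(\mathbf{z})|=O_{a.s.}(\sqrt{\log n/(n\tilde h^p)})$ in one line, whereas you sketch the standard Bernstein-plus-grid-plus-tail argument that underlies such uniform-rate results. Your route is more self-contained but requires the bookkeeping you flag; the paper's route is shorter and avoids it entirely by verifying that Assumption~\ref{ass:main3} matches Hansen's conditions. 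Either way the conclusion and the rate are the same.
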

\begin{proof}
	We only show that \[	\sup_{\mathbf{z}\in\bb{R}^p}\left| 	\hat p_1(\mathbf{z})-p(\mathbf{z})\right|=O_{a.s.}\left(\sqrt{\frac{\log(n)}{n \tilde h^p}}+\tilde h^q\right),\] as the other two results can be derived similarly. First, the conditions for Theorem 5 in \cite{hansen2008uniform} are satisfied under our Assumption~\ref{ass:main3}. Consequently, we have \[
    	\sup_{\mathbf{z}\in\bb{R}^p}\left| 	\hat p_1(\mathbf{z})-E\hat p_1(\mathbf{z})\right|=O_{a.s.}\left(\sqrt{\frac{\log(n)}{n \tilde h^p}}\right).
    \]
    Second, by integration by parts and a change of variables, we derive that
    \begin{align*}
        E\hat p_1(\mathbf{z})-p(\mathbf{z})&=\frac{1}{n   \tilde h^p\pi}\sum_{j=1}^n	E\left[K\left(\frac{\mathbf{z}-\mathbf{z}_j}{\tilde h}\right)W_j\right]\\
        &=\frac{1}{n   \tilde h^p}\sum_{j=1}^n	E\left[K\left(\frac{\mathbf{z}-\mathbf{z}_j}{\tilde h}\right)\right]\\
        &=\frac{1}{  \tilde h^p} 	\int K\left(\frac{\mathbf{z}-\mathbf{z}_j}{\tilde h}\right)p(\mathbf{z}_j)d\mathbf{z}_j\\
        &=\int K(\mathbf{u})p(\mathbf{z}+\tilde h\mathbf{u})d\mathbf{u}=p(\mathbf{z})+O(\tilde h^q),
    \end{align*}
    where the final equality is by a $q$-{th} order Taylor series expansion and using the assumed properties of the kernel and $p(\mathbf{z})$. 
	Then the results follows. 
\end{proof}

\begin{lemma}
	\label{lem:new3}
 {	Let \[
		\delta_{12}=\frac{2}{n^2 \tilde h^p}\sum_{i=1}^n\sum_{j=1}^n \frac{1}{p(\mathbf{z}_i)} K\left(\frac{\mathbf{z}_i-\mathbf{z}_j}{\tilde h}\right)f_j^{(1)}\left(W_j,\pi\right)\left(\frac{M_j}{N_j}-\pi\right)\left(\frac{W_j}{\pi}-\frac{1-W_j}{1-\pi}\right)	I\left(	p(\mathbf{z}_i)>\tilde b\right),\]
			\begin{align*}
			\delta_{22}&=\frac{1}{n^3\tilde h^{2p}}\sum_{i=1}^n\sum_{j=1}^n
			\sum_{k=1}^n 
			\frac{I(p(\mathbf{z}_i)>\tilde b)}{p^2(\mathbf{z}_i)} K\left(\frac{\mathbf{z}_i-\mathbf{z}_j}{\tilde h}\right)
			K\left(\frac{\mathbf{z}_i-\mathbf{z}_k}{\tilde h}\right)
			f_j^{(1)}\left(W_j,\pi\right)\left(\frac{M_j}{N_j}-\pi\right)\\&
			\left(\frac{W_jW_k}{\pi^2}-
			\frac{(1-W_j)(1-W_k)}{(1-\pi)^2}\right).
		\end{align*}
		 Under the assumptions for Theorem~\ref{th:5}, we have $E	\delta_{12}=0, E	\delta_{12}=o(n^{-1/2})$.
		}
\end{lemma}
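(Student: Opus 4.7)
The plan is to exploit the independence of the random graph $E$ from $(\mathbf{z},f,W)$ together with the mutual independence of the Bernoulli treatments $\{W_l\}$. Expanding $\frac{M_j}{N_j}-\pi = \sum_{l\neq j}\frac{E_{jl}(W_l-\pi)}{N_j}$ introduces an extra index $l\neq j$ inside each expectation, and since every term then carries deviations of independent Bernoullis an index-matching argument kills almost all contributions.

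For $E\delta_{12}=0$, observe first that $\frac{W_j}{\pi}-\frac{1-W_j}{1-\pi} = \frac{W_j-\pi}{\pi(1-\pi)}$. After substituting the expansion above and conditioning on $E$, the inner expectation over $(\mathbf{z},f,W)$ is a sum over $l\neq j$ of terms proportional to $E\bigl[I_i K_{ij} p^{-1}(\mathbf{z}_i)\, f_j^{(1)}(W_j,\pi)(W_j-\pi)(W_l-\pi)\bigm| E\bigr]$. Because $l\neq j$, $W_l$ is independent of $(\mathbf{z}_i,\mathbf{z}_j,W_j,f_j)$, so the expectation factorizes and $E[W_l-\pi]=0$ forces every summand to vanish identically; hence $E\delta_{12}=0$ exactly, with no remainder.

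For $E\delta_{22}=o(n^{-1/2})$, the same substitution generates triples $(j,k,l)$ with $l\neq j$, and I would enumerate the coincidence patterns. Writing $R_{jk}=W_jW_k/\pi^2-(1-W_j)(1-W_k)/(1-\pi)^2$, two subcases reduce to the vanishing argument above: (i) when $j=k$, the identity $W_j^2=W_j$ collapses $R_{jj}$ to a function of $W_j$ alone, so independence of $W_l$ (since $l\neq j=k$) kills the expectation; (ii) when $j,k,l$ are pairwise distinct, the isolated deviation $(W_l-\pi)$ again integrates out. Only the diagonal case $l=k\neq j$ survives, and a direct calculation using $E[W_k(W_k-\pi)]=\pi(1-\pi)$ and $E[(1-W_k)(W_k-\pi)]=-\pi(1-\pi)$ gives $E\bigl[f_j^{(1)}(W_j,\pi) R_{jk}(W_k-\pi)\bigm|\mathbf{z}_j\bigr] = (1-\pi)E[f_j^{(1)}(1,\pi)\mid\mathbf{z}_j]+\pi E[f_j^{(1)}(0,\pi)\mid\mathbf{z}_j]=:G(\mathbf{z}_j)$, a bounded function of $\mathbf{z}_j$ by Assumption~\ref{ass:main3}. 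To bound the surviving sum I would combine three ingredients: (a) the trimming estimate $I_i/p^2(\mathbf{z}_i)\le I_i/(\tilde b\, p(\mathbf{z}_i))$; (b) the standard kernel Jacobian bound $\int K\bigl(\tfrac{\mathbf{z}_i-\mathbf{z}_j}{\tilde h}\bigr)K\bigl(\tfrac{\mathbf{z}_i-\mathbf{z}_k}{\tilde h}\bigr)p(\mathbf{z}_j)p(\mathbf{z}_k)\,d\mathbf{z}_id\mathbf{z}_jd\mathbf{z}_k=O(\tilde h^{2p})$ via change of variables; and (c) the degree estimate $E[E_{jk}/N_j]\lesssim 1/n$, which follows from $E[E_{jk}]\lesssim\rho_n$ combined with the concentration $P(N_j<cn\rho_n)\le\exp(-Cn\rho_n)$ of Lemma~15 in \citet{li2022random} (using $E_{jk}/N_j\le 1$ on the bad event). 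These give $|E\delta_{22}|\lesssim 1/(n\tilde b)$, and Assumption~\ref{ass:main3} forces $n\tilde h^{2p}\tilde b^4\to\infty$ together with $\tilde h=o(\tilde b)$, hence $n\tilde b^{2p+4}\to\infty$ and in particular $\tilde b\gg n^{-1/2}$, delivering the required $o(n^{-1/2})$.

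The main obstacle is not conceptual but combinatorial bookkeeping: one must track every coincidence pattern among $(i,j,k,l)$---including $i=j$, $i=k$, $j=k$, plus the excluded $l=j$---verify that after cancellation the trimming factor enters at order $1/\tilde b$ rather than $1/\tilde b^2$, and check that the boundary subcases where $K(\mathbf{0})$ replaces one kernel (giving only $O(\tilde h^p)$ per triple) are dominated by the generic $O(\tilde h^{2p})$ case once the $\sum_i$ summation is performed.
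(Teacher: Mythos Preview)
Your proposal is correct and follows essentially the same route as the paper: both expand $\frac{M_j}{N_j}-\pi=\sum_{l\neq j}E_{jl}(W_l-\pi)/N_j$, use independence of $W_l$ to get $E\delta_{12}=0$ and to reduce $E\delta_{22}$ to the single surviving pattern $l=k\neq j$, and then bound that term via $E[E_{jk}/N_j]=O(1/n)$ together with the kernel change-of-variables and the trimming estimate $I_i/p^2(\mathbf{z}_i)\le 1/(\tilde b\,p(\mathbf{z}_i))$ to obtain $|E\delta_{22}|\lesssim 1/(n\tilde b)=o(n^{-1/2})$. The only cosmetic difference is that the paper bounds the $(\mathbf{z},W,f)$-expectation in absolute value immediately rather than first computing your $G(\mathbf{z}_j)$, and it records the coincidence subcases among $(i,j,k)$ explicitly as the lower-order terms $O\bigl(\tfrac{1}{n^2\tilde h^p\tilde b}\bigr)$ and $O\bigl(\tfrac{1}{n^3\tilde h^{2p}\tilde b^2}\bigr)$.
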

\begin{proof}
	For $\delta_{12}$, note that we let $G_n$ denote the random graph, and $M_j-\pi N_j=\sum_{k:k\neq j}(W_k-\pi)E_{kj}$ given $G_n$. Since $W_k$'s are sampled independently, we have $E\delta_{12}=0$. 
	For $\delta_{22}$, by noting the above argument for $\delta_{12}$, we derive that 
	\begin{align*}
			E\delta_{22}&=\frac{1}{n^3\tilde h^{2p}}\sum_{i=1}^n\sum_{j=1}^n
		\sum_{k=1}^n E\left\{
		\frac{I(p(\mathbf{z}_i)>\tilde b)}{p^2(\mathbf{z}_i)} K\left(\frac{\mathbf{z}_i-\mathbf{z}_j}{\tilde h}\right)
		K\left(\frac{\mathbf{z}_i-\mathbf{z}_k}{\tilde h}\right)
		f_j^{(1)}\left(W_j,\pi\right)\left(\frac{M_j}{N_j}-\pi\right)\right.\\&\left.
		\left(\frac{W_jW_k}{\pi^2}-
		\frac{(1-W_j)(1-W_k)}{(1-\pi)^2}\right)\right\}\\
&=\frac{1}{n^3\tilde h^{2p}}\sum_{i=1}^n\sum_{j=1}^n
\sum_{k=1}^n E\left\{
\frac{I(p(\mathbf{z}_i)>\tilde b)}{p^2(\mathbf{z}_i)} K\left(\frac{\mathbf{z}_i-\mathbf{z}_j}{\tilde h}\right)
K\left(\frac{\mathbf{z}_i-\mathbf{z}_k}{\tilde h}\right)
f_j^{(1)}\left(W_j,\pi\right)E_{kj}\frac{W_k-\pi}{N_j}\right.\\&\left.
\left(\frac{W_jW_k}{\pi^2}-
\frac{(1-W_j)(1-W_k)}{(1-\pi)^2}\right) \right\}.
	\end{align*}
Noting that \(E(E_{kj}/N_j) = O(1/n)\)  which can be derived using similar arguments as those in equation (\ref{eqn:new3}), and that $E_{kj}/N_j$ is determined by graph $G_n$, we then have
\begin{align*}
		|E\delta_{22}|&\lesssim \frac{1}{n} 
		\frac{1}{n^3\tilde h^{2p}}\sum_{i=1}^n\sum_{j=1}^n
		\sum_{k=1}^n E\left|
		\frac{I(p(\mathbf{z}_i)>\tilde b)}{p^2(\mathbf{z}_i)} K\left(\frac{\mathbf{z}_i-\mathbf{z}_j}{\tilde h}\right)
		K\left(\frac{\mathbf{z}_i-\mathbf{z}_k}{\tilde h}\right)
		f_j^{(1)}\left(W_j,\pi\right)\right|\\
&\lesssim \frac{1}{n} 
\frac{1}{n^3\tilde h^{2p}}\sum_{i=1}^n\sum_{j=1}^n
\sum_{k=1}^n E\left|
\frac{I(p(\mathbf{z}_i)>\tilde b)}{p^2(\mathbf{z}_i)} K\left(\frac{\mathbf{z}_i-\mathbf{z}_j}{\tilde h}\right)
K\left(\frac{\mathbf{z}_i-\mathbf{z}_k}{\tilde h}\right)
\right|\\
	&	\lesssim \frac{1}{n} 
		E\left|
		\frac{1}{ \tilde h^{2p}}	\frac{1}{p^2(\mathbf{z}_1)} K\left(\frac{\mathbf{z}_1-\mathbf{z}_2}{\tilde h}\right)
		K\left(\frac{\mathbf{z}_1-\mathbf{z}_3}{\tilde h}\right)
	 \right|\\
	 &+
	 \frac{1}{n^2} 
	 E\left|
	 \frac{1}{ \tilde h^{2p}}	\frac{I(p(\mathbf{z}_i)>\tilde b)}{p^2(\mathbf{z}_1)} K^2\left(\frac{\mathbf{z}_1-\mathbf{z}_2}{\tilde h}\right) 
	 \right|
	 +
	 \frac{1}{n^2} 
	 E\left|
	 \frac{1}{ \tilde h^{2p}}	\frac{I(p(\mathbf{z}_i)>\tilde b)}{p^2(\mathbf{z}_1)} K\left(\frac{\mathbf{z}_1-\mathbf{z}_2}{\tilde h}\right)
	 K\left(0\right)
	 \right|\\
	& +
	 \frac{1}{n^3} 
	 \frac{1}{ \tilde h^{2p}}	\frac{1}{\tilde b^2}
	 K^2\left(0\right)
	\\
	 &=O\left(\frac{1}{n\tilde b}+\frac{1}{n^2\tilde h^p\tilde b}+\frac{1}{n^3\tilde h^{2p}\tilde b^2}\right)=O(1/\sqrt{n}).
\end{align*}
\end{proof}

\begin{lemma}
	\label{lem:new1}
	 {
	Let \begin{align*}
		\delta_{11}=\frac{2}{n^2 \tilde h^p}\sum_{i=1}^n\sum_{j=1}^n \frac{1}{p(\mathbf{z}_i)} K\left(\frac{\mathbf{z}_i-\mathbf{z}_j}{\tilde h}\right)f_j\left(W_j,\pi\right)\left(\frac{W_j}{\pi}-\frac{1-W_j}{1-\pi}\right)	I\left(	p(\mathbf{z}_i)>\tilde b\right).
	\end{align*}
	Under the assumptions of Theorem~\ref{th:5}, we have $E\delta_{11}=2\tau+o(n^{-\frac{1}{2}})$.
}
\end{lemma}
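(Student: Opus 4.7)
The plan is to split the double sum defining $\delta_{11}$ into the diagonal ($i=j$) and off-diagonal ($i\neq j$) contributions, control the diagonal part crudely, and carry out a change of variables plus a boundary-trimming argument on the off-diagonal part. Write $R_j=W_j/\pi-(1-W_j)/(1-\pi)$ and $I_i=I(p(\mathbf{z}_i)>\tilde b)$, so that $E\delta_{11}=T_{\mathrm{diag}}+T_{\mathrm{off}}$ with
\[
T_{\mathrm{diag}}=\frac{2}{n^2\tilde h^p}\sum_{i=1}^n E\!\left[\frac{I_i}{p(\mathbf{z}_i)}K(\mathbf{0}) f_i(W_i,\pi)R_i\right],\qquad
T_{\mathrm{off}}=\frac{2}{n^2\tilde h^p}\sum_{i\neq j} E\!\left[\frac{I_i}{p(\mathbf{z}_i)} K\!\left(\tfrac{\mathbf{z}_i-\mathbf{z}_j}{\tilde h}\right) f_j(W_j,\pi) R_j\right].
\]

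For the diagonal part, since $|R_i|\lesssim 1$, $I_i/p(\mathbf{z}_i)\le 1/\tilde b$, $K(\mathbf{0})$ is bounded, and $E(\sup_{u}|f_1(W_1,u)|)<\infty$ by Assumption~\ref{assumption:moment_conditions}, we obtain $T_{\mathrm{diag}}=O\!\left(1/(n\tilde h^p\tilde b)\right)$. Assumption~\ref{ass:main3}(4) gives $n\tilde h^{2p}\tilde b^4\to\infty$, hence $\sqrt n\,\tilde h^p\tilde b^2\to\infty$, and since $\tilde b\to 0$ implies $\tilde b^2\le\tilde b$, we get $\sqrt n\,\tilde h^p\tilde b\to\infty$. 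Therefore $T_{\mathrm{diag}}=o(n^{-1/2})$.

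For the off-diagonal part, use that $\mathbf{z}_i$ is independent of $(\mathbf{z}_j,W_j,f_j)$ for $i\neq j$. Conditioning on $(\mathbf{z}_j,W_j,f_j)$ and applying the change of variables $\mathbf{u}=(\mathbf{z}_i-\mathbf{z}_j)/\tilde h$,
\[
E\!\left[\frac{I_i}{p(\mathbf{z}_i)}K\!\left(\tfrac{\mathbf{z}_i-\mathbf{z}_j}{\tilde h}\right)\bigg|\mathbf{z}_j\right]
=\tilde h^p\!\int K(\mathbf{u})\,I\!\left(p(\mathbf{z}_j+\tilde h\mathbf{u})>\tilde b\right) d\mathbf{u} =:\tilde h^p A(\mathbf{z}_j).
\]
Since $W_j$ is independent of $(\mathbf{z}_j,f_j)$, iterated conditioning gives $E\!\left[f_j(W_j,\pi)R_j\mid\mathbf{z}_j\right]=S_1(\mathbf{z}_j)$, where $S_1(u)=E(f_1(1,\pi)-f_1(0,\pi)\mid\mathbf{z}_1=u)$. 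Consequently
\[
T_{\mathrm{off}}=\frac{2(n-1)}{n}\,E\!\left[A(\mathbf{z}_1)S_1(\mathbf{z}_1)\right]=2E[S_1(\mathbf{z}_1)]+\frac{2(n-1)}{n}E\!\left[(A(\mathbf{z}_1)-1)S_1(\mathbf{z}_1)\right]-\frac{2}{n}E[S_1(\mathbf{z}_1)],
\]
where $E[S_1(\mathbf{z}_1)]=\tau$ by the law of total expectation and Theorem~\ref{th: Consistency of hat tau}. The $-(2/n)E[S_1(\mathbf{z}_1)]$ term is $O(1/n)=o(n^{-1/2})$.

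The main step is to show $E[(A(\mathbf{z}_1)-1)S_1(\mathbf{z}_1)]=o(n^{-1/2})$, which is where Assumption~\ref{ass:main3}(6) enters. Since $\int K(\mathbf{u})d\mathbf{u}=1$,
\[
A(\mathbf{z}_1)-1=-\!\int K(\mathbf{u}) I\!\left(p(\mathbf{z}_1+\tilde h\mathbf{u})\le\tilde b\right) d\mathbf{u},
\]
which is supported on $|\mathbf{u}|_\infty\le 1$. Using the boundedness of $\nabla p$, there is a constant $C$ so that $|p(\mathbf{z}_1+\tilde h\mathbf{u})-p(\mathbf{z}_1)|\le C\tilde h$ for such $\mathbf{u}$. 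As $\tilde h=o(\tilde b)$, for all sufficiently large $n$ we have $\tilde b+C\tilde h<1.02\tilde b$, so $A(\mathbf{z}_1)\neq 1$ forces $p(\mathbf{z}_1)<1.02\tilde b$. Combined with the uniform bound $|A(\mathbf{z}_1)-1|\le 1+\int|K|$, this gives
\[
|E[(A(\mathbf{z}_1)-1)S_1(\mathbf{z}_1)]|\;\lesssim\; E\!\left[|S_1(\mathbf{z}_1)|\,I(p(\mathbf{z}_1)<1.02\tilde b)\right]=o(n^{-1/2}),
\]
by Assumption~\ref{ass:main3}(6). Combining the three estimates yields $E\delta_{11}=2\tau+o(n^{-1/2})$.

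The only genuinely delicate piece is the control of $E[(A(\mathbf{z}_1)-1)S_1(\mathbf{z}_1)]$: everything else is direct bookkeeping. This step is handled entirely by the compact support of $K$, the Lipschitz regularity of $p$, and the tailored trimming condition in Assumption~\ref{ass:main3}(6), which was designed precisely so that the boundary layer $\{p(\mathbf{z})\lesssim\tilde b\}$ contributes negligibly at the parametric rate.
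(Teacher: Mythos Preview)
Your proof is correct and follows essentially the same route as the paper's own argument: the diagonal/off-diagonal split, the $O(1/(n\tilde h^p\tilde b))$ bound on the diagonal, the change of variables $\mathbf{u}=(\mathbf{z}_i-\mathbf{z}_j)/\tilde h$ on the off-diagonal, and the key boundary step showing that $A(\mathbf{z}_1)\neq 1$ forces $p(\mathbf{z}_1)$ into the trimming region so that Assumption~\ref{ass:main3}(6) applies. The only cosmetic difference is that the paper pushes the boundary inclusion to $p(\mathbf{z}_2)<1.01\tilde b$ (then uses that this is contained in $\{p<1.02\tilde b\}$), whereas you go directly to $1.02\tilde b$; both are fine.
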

\begin{proof}
	Let $R_j=\frac{W_j}{\pi}-\frac{1-W_j}{1-\pi}, I_j=I\left(	p(\mathbf{z}_j)>\tilde b\right), S_2(u)=E(f_i(W_i,\pi)R_i|\mathbf{z}_i)\bigg|_{\mathbf{z}_i=u}=E(f_i(1,\pi)-f_i(0,\pi)|\mathbf{z}_i)\bigg|_{\mathbf{z}_i=u}$. 
	\begin{align*}
		E\delta_{11}&=\frac{2}{n \tilde h^p}E\left\{\frac{1}{p(\mathbf{z}_i)} K\left(0\right)f_i\left(W_i,\pi\right)\left(\frac{W_i}{\pi}-\frac{1-W_i}{1-\pi}\right)	I\left(	p(\mathbf{z}_i)>\tilde b\right)\right\}\\
		&+\frac{2n(n-1)}{n^2 \tilde h^p} E\left\{\frac{1}{p(\mathbf{z}_1)} K\left(\frac{\mathbf{z}_1-\mathbf{z}_2}{\tilde h}\right)f_2\left(W_2,\pi\right)\left(\frac{W_2}{\pi}-\frac{1-W_2}{1-\pi}\right)	I\left(	p(\mathbf{z}_1)>\tilde b\right)\right\}\\
		&=O\left(\frac{1}{n\tilde h^p\tilde b}\right)+
		\left(2-\frac{2}{n}\right)E\left[
		\frac{I_1}{\tilde h^p p(\mathbf{z}_1)} K\left(\frac{\mathbf{z}_1-\mathbf{z}_2}{\tilde h}\right) S_2(\mathbf{z}_2)
		\right]\\
	&=	O\left(\frac{1}{n\tilde h^p\tilde b}\right)
		+
		\left(2-\frac{2}{n}\right)\iint
	 {I(p(\mathbf{z}_2+\tilde h\mathbf{u})>\tilde{b})}  K\left(\mathbf{u}\right) S_2(\mathbf{z}_2)p(\mathbf{z}_2)
	 d\mathbf{z}_1d\mathbf{z}_2.
	\end{align*}
	By Taylor expansion, we have $p(\mathbf{z}_2+\tilde h\mathbf{u})=p(\mathbf{z}_2)+\tilde h\mathbf{u}^\top \nabla p(\xi)$, where $\xi$ is a vector in the neighbourhood of $\mathbf{z}_2$. By Assumption~\ref{ass:main3}, $\nabla p(\xi)$ is bounded and $K(\mathbf{u})=0$ for $u\in \{u: |u|_\infty\ge 1\}$,  therefore, $|\mathbf{u}^\top \nabla p(\xi)|$ is bounded. Moreover, $\tilde h=o(\tilde b)$, therefore, 
	\begin{align}
		\label{eqn:new2}
		&	\iint_{p(\mathbf{z}_2+\tilde h\mathbf{u})<\tilde{b}}  K\left(\mathbf{u}\right) S_2(\mathbf{z}_2)p(\mathbf{z}_2)
			d\mathbf{u}d\mathbf{z}_2  \lesssim
				\iint_{p(\mathbf{z}_2)<1.01\tilde{b}}  K\left(\mathbf{u}\right) |S_2(\mathbf{z}_2)|p(\mathbf{z}_2)
			d\mathbf{u}d\mathbf{z}_2\notag\\
&=	\int_{p(\mathbf{z}_2)<1.01\tilde{b}}   |S_2(\mathbf{z}_2)|p(\mathbf{z}_2) d\mathbf{z}_2
			=o(n^{-1/2}),
	\end{align}
	where in the last equality we use Assumption~\ref{ass:main3}. 
Note that $\int  S_2(\mathbf{z}_2)p(\mathbf{z}_2)d\mathbf{z}_2=\tau$. By combining all the equations above, we have  \begin{align*}
			E\delta_{11}&=O\left(\frac{1}{n\tilde h^p\tilde b}\right)
			+
			\left(2-\frac{2}{n}\right)\iint K\left(\mathbf{u}\right) S_2(\mathbf{z}_2)p(\mathbf{z}_2)
			d\mathbf{u}d\mathbf{z}_2+o(n^{-1/2})\\
			&=o(n^{-1/2})+2\tau.
	\end{align*}
	
	\end{proof}
	
	\begin{lemma}
		\label{lem:new2}
	 {
		Let 
		\begin{align*}
			\delta_{21}&=\frac{1}{n^3\tilde h^{2p}}\sum_{i=1}^n\sum_{j=1}^n
			\sum_{k=1}^n 
			\frac{I(p(\mathbf{z}_i)>\tilde b)}{p^2(\mathbf{z}_i)} K\left(\frac{\mathbf{z}_i-\mathbf{z}_j}{\tilde h}\right)
			K\left(\frac{\mathbf{z}_i-\mathbf{z}_k}{\tilde h}\right)
			f_j\left(W_j,\pi\right)
			\left(\frac{W_jW_k}{\pi^2}-
			\frac{(1-W_j)(1-W_k)}{(1-\pi)^2}\right).
		\end{align*} 
		Under the assumptions of Theorem~\ref{th:5}, we have $E\delta_{21}=\tau+o(n^{-\frac{1}{2}})$. 
	}
	\end{lemma}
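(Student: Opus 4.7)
The strategy is to follow the template of Lemma~\ref{lem:new1}, adapting it to the triple-sum structure of $\delta_{21}$ by first using conditional expectation to collapse one of the three stochastic dimensions. First I would split $\sum_{i,j,k}$ into the block of fully distinct triples ($i\neq j\neq k\neq i$) and the block of triples with at least one coincidence. For any coincidence pattern the combinatorial count drops by a factor of $n$, and by repeating the kernel estimates used in Lemma~\ref{lem:new1} (namely $I_i/p(\mathbf{z}_i)\le 1/\tilde b$, boundedness of $K$ and of $E(|f_j(W_j,\pi)|\mid\mathbf{z}_j)$, plus a change of variables to absorb the extra $\tilde h^p$), each coincident contribution is bounded in expectation by a quantity of order $1/(n\tilde h^p\tilde b)+1/(n^2\tilde h^{2p}\tilde b^2)$, which is $o(n^{-1/2})$ under Assumption~\ref{ass:main3}(4).

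For distinct triples the decisive observation is that $W_k$ is Bernoulli$(\pi)$ and independent of $(\mathbf{z}_i,\mathbf{z}_j,\mathbf{z}_k,W_j,f_j)$, so
\[
E\bigl(f_j(W_j,\pi)R_{jk}\,\big|\,\mathbf{z}_i,\mathbf{z}_j,\mathbf{z}_k\bigr)=\frac{1}{\pi}E\bigl(W_j f_j(W_j,\pi)\mid\mathbf{z}_j\bigr)-\frac{1}{1-\pi}E\bigl((1-W_j) f_j(W_j,\pi)\mid\mathbf{z}_j\bigr)=S_2(\mathbf{z}_j).
\]
After substituting this identity I would convert the remaining expectation to an integral and change variables taking $\mathbf{z}_i$ as the free outer variable, with $\mathbf{u}=(\mathbf{z}_i-\mathbf{z}_j)/\tilde h$ and $\mathbf{v}=(\mathbf{z}_i-\mathbf{z}_k)/\tilde h$ (Jacobian $\tilde h^{2p}$); one factor of $p(\mathbf{z}_i)$ from the joint density cancels a single $1/p(\mathbf{z}_i)$, leaving
\[
\iiint\frac{I(p(\mathbf{z}_i)>\tilde b)}{p(\mathbf{z}_i)}K(\mathbf{u})K(\mathbf{v})\,S_2(\mathbf{z}_i-\tilde h\mathbf{u})\,p(\mathbf{z}_i-\tilde h\mathbf{u})\,p(\mathbf{z}_i-\tilde h\mathbf{v})\,d\mathbf{u}\,d\mathbf{v}\,d\mathbf{z}_i.
\]
Integrating $\mathbf{v}$ out first and invoking the $q$-th order kernel property together with uniform continuity of $\nabla^q p$ from Assumption~\ref{ass:main3}(2), the identity $\int K(\mathbf{v})p(\mathbf{z}_i-\tilde h\mathbf{v})\,d\mathbf{v}=p(\mathbf{z}_i)+O(\tilde h^q)$ cancels the remaining $1/p(\mathbf{z}_i)$ in the leading term.

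What survives is precisely $\iint I(p(\mathbf{z}_i)>\tilde b)K(\mathbf{u})\,S_2(\mathbf{z}_i-\tilde h\mathbf{u})\,p(\mathbf{z}_i-\tilde h\mathbf{u})\,d\mathbf{u}\,d\mathbf{z}_i$, which after a final substitution $\mathbf{z}_j=\mathbf{z}_i-\tilde h\mathbf{u}$ matches the boundary-trimmed integral treated at the end of Lemma~\ref{lem:new1}. Since $\tilde h=o(\tilde b)$, the excluded event $\{p(\mathbf{z}_j+\tilde h\mathbf{u})\le\tilde b,|\mathbf{u}|_\infty\le 1\}$ is contained in $\{p(\mathbf{z}_j)<1.02\tilde b\}$ for large $n$, so Assumption~\ref{ass:main3}(6) shows the trimming gap contributes $o(n^{-1/2})$, and the main piece integrates to $\int S_2(\mathbf{z})p(\mathbf{z})\,d\mathbf{z}=\tau$. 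Combining this with the combinatorial prefactor $(n-1)(n-2)/n^2=1+O(1/n)$ yields $E\delta_{21}=\tau+o(n^{-1/2})$.

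The main obstacle will be the careful control of the $O(\tilde h^q)$ bias produced by integrating out $\mathbf{v}$: unlike in Lemma~\ref{lem:new1}, the double kernel blocks the clean one-step $p/p$ cancellation and forces a $q$-th order Taylor expansion of $p$. The delicate point in bounding the remainder is that although $I/p$ alone is $O(1/\tilde b)$, the extra factor $p(\mathbf{z}_i-\tilde h\mathbf{u})/p(\mathbf{z}_i)$ that appears in the integrand is uniformly $O(1)$ on the trimmed region once $\tilde h=o(\tilde b)$, so the bias contribution collapses to $O(\tilde h^q)\,E|S_2(\mathbf{z}_1)|$; combined with the rate condition $n\tilde h^{4q}/\tilde b^4\to 0$ in Assumption~\ref{ass:main3}(4), this remainder is $o(n^{-1/2})$. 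Once this rate bookkeeping is accepted, the remaining steps are direct analogues of those for $\delta_{11}$.
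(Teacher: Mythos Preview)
Your structural plan—separating coincident triples, collapsing $W_k$ via $E[R_{jk}\mid W_j]=R_j$ to obtain $S_2(\mathbf{z}_j)$, and then changing variables in the double kernel integral—tracks the paper's proof. The paper anchors its substitution at $\mathbf{z}_2$ rather than $\mathbf{z}_i$ (so that the factor $S_1(\mathbf{z}_2)p(\mathbf{z}_2)$ remains the integrating measure throughout) and treats the resulting ratio $p(\mathbf{z}_2+\tilde h\mathbf{u}_1+\tilde h\mathbf{u}_2)/p(\mathbf{z}_2+\tilde h\mathbf{u}_1)$ by a first-order Taylor step instead of your $q$-th order kernel expansion, but otherwise the routes agree.

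The gap is in your bias bookkeeping. After $\int K(\mathbf{v})p(\mathbf{z}_i-\tilde h\mathbf{v})\,d\mathbf{v}=p(\mathbf{z}_i)+O(\tilde h^q)$, the remainder you must control is
\[
\iint \frac{I(p(\mathbf{z}_i)>\tilde b)}{p(\mathbf{z}_i)}\,|K(\mathbf{u})|\,|S_2(\mathbf{z}_i-\tilde h\mathbf{u})|\,p(\mathbf{z}_i-\tilde h\mathbf{u})\cdot O(\tilde h^q)\,d\mathbf{u}\,d\mathbf{z}_i.
\]
You invoke $I\cdot p(\mathbf{z}_i-\tilde h\mathbf{u})/p(\mathbf{z}_i)=O(1)$ and then assert the residual integrates to $E|S_2(\mathbf{z}_1)|$. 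But that ratio bound has already absorbed the only density factor present; what survives is the \emph{Lebesgue} integral $\int|K(\mathbf{u})|\,d\mathbf{u}\int|S_2(\mathbf{z}')|\,d\mathbf{z}'$, which Assumption~\ref{ass:main3} does not render finite (the level-set estimate $|\{p>\tilde b\}|\le C/\tilde b$ only recovers $O(\tilde h^q/\tilde b)$). If instead you keep $p(\mathbf{z}_i-\tilde h\mathbf{u})$ as the integrating density and bound $I/p(\mathbf{z}_i)\le 1/\tilde b$ crudely, the honest bound is again $O(\tilde h^q/\tilde b)\,E|S_2(\mathbf{z}_1)|$. Either route yields at best $O(\tilde h^q/\tilde b)$, and Assumption~\ref{ass:main3}(4) gives only $\tilde h^q/\tilde b=o(n^{-1/4})$; even the unjustified bound $O(\tilde h^q)$ is merely $o(n^{-1/4})$, since $n\tilde h^{4q}\to 0$ implies $\tilde h^{2q}=o(n^{-1/2})$, not $\tilde h^q=o(n^{-1/2})$. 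Your conclusion that the bias contribution is $o(n^{-1/2})$ therefore does not follow from the rate condition you cite.
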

	\begin{proof}
		Let $R_j=\frac{W_j}{\pi}-\frac{1-W_j}{1-\pi}, I_j=I\left(	p(\mathbf{z}_j)>\tilde b\right), S_1(u)=E(f_i(W_i,\pi)R_i|\mathbf{z}_i)\bigg|_{\mathbf{z}_i=u}=E(f_i(1,\pi)-f_i(0,\pi)|\mathbf{z}_i)\bigg|_{\mathbf{z}_i=u}$. We derive that 
		\begin{align*}
			 E\delta_{21}&=\frac{1}{n^3\tilde h^{2p}}\sum_{i=1}^n\sum_{j=1}^n
			 \sum_{k=1}^n 
			E\left\{ \frac{I_i}{p^2(\mathbf{z}_i)} K\left(\frac{\mathbf{z}_i-\mathbf{z}_j}{\tilde h}\right)
			 K\left(\frac{\mathbf{z}_i-\mathbf{z}_k}{\tilde h}\right)
			 f_j\left(W_j,\pi\right)
			R_j\right\}\\
			&= \frac{n(n-1)(n-2)}{n^3\tilde h^{2p}}E\left\{ 
			\frac{I_1}{p^2(\mathbf{z}_1)} K\left(\frac{\mathbf{z}_1-\mathbf{z}_2}{\tilde h}\right)
			K\left(\frac{\mathbf{z}_1-\mathbf{z}_3}{\tilde h}\right)
			f_2\left(W_2,\pi\right)
			R_2\right\}\\
			&+
			\frac{n(n-1)}{n^3\tilde h^{2p}}E
		\left\{ 	\frac{I_1}{p^2(\mathbf{z}_1)} K\left(0\right)
			K\left(\frac{\mathbf{z}_1-\mathbf{z}_3}{\tilde h}\right)
			f_1\left(W_1,\pi\right)
			R_1\right\}\\
			&+
			\frac{n(n-1)}{n^3\tilde h^{2p}}E
		\left\{	\frac{I_1}{p^2(\mathbf{z}_1)} K\left(\frac{\mathbf{z}_1-\mathbf{z}_2}{\tilde h}\right)
			K\left(0\right)
			f_2\left(W_2,\pi\right)
			R_2\right\}\\
			&+\frac{n(n-1)}{n^3\tilde h^{2p}}E\left\{ 
			\frac{I_1}{p^2(\mathbf{z}_1)} K^2\left(\frac{\mathbf{z}_1-\mathbf{z}_2}{\tilde h}\right)
			f_2\left(W_2,\pi\right)
			R_2\right\}\\
		&+	\frac{n}{n^3\tilde h^{2p}}E\left\{
			\frac{I_1}{p^2(\mathbf{z}_1)} K^2\left(0\right)
			f_1\left(W_1,\pi\right)
			R_1\right\}\\
			&=A+B+C+D+E.
		\end{align*}
		For $B,C,D,E$, similar to the arguments in Lemma~\ref{lem:new3}, we have \[B=O\left(\frac{1}{n\tilde h^p\tilde b}\right),
		C=O\left(\frac{1}{n\tilde h^p\tilde b}\right),
		D=O\left(\frac{1}{n\tilde h^p\tilde b}\right),
		E=O\left(\frac{1}{n^2\tilde h^{2p}\tilde b^2}\right).
		\] By Assumption~\ref{ass:main3}, $\sqrt{n}=o(n\tilde h^p\tilde b)$, therefore, \[B=o\left(\frac{1}{\sqrt{n}}\right),
		C=o\left(\frac{1}{\sqrt{n}}\right),
		D=o\left(\frac{1}{\sqrt{n}}\right),
E=o\left(\frac{1}{\sqrt{n}}\right).
		\]
		For $A$, we derive that 
		\begin{align*}
			A&=\left(1-O\left(\frac{1}{n}\right)\right)
			\frac{1}{ \tilde h^{2p}}E\left\{
			\frac{I_1}{p^2(\mathbf{z}_1)} K\left(\frac{\mathbf{z}_1-\mathbf{z}_2}{\tilde h}\right)
			K\left(\frac{\mathbf{z}_1-\mathbf{z}_3}{\tilde h}\right)
		S_1(\mathbf{z}_2)\right\}\\
		&=\left(1-O\left(\frac{1}{n}\right)\right)
		\iiint
		\frac{p(\mathbf{z}_2+\tilde h\mathbf{u}_1+\tilde h\mathbf{u}_2)}{p(\mathbf{z}_2+\tilde h\mathbf{u}_1)}I(p(\mathbf{z}_2+\tilde h\mathbf{u}_1)>\tilde b)K(\mathbf{u}_1)K(\mathbf{u}_2)p(\mathbf{z}_2)S_1(\mathbf{z}_2) d\mathbf{z}_2d\mathbf{u}_1d\mathbf{u}_2.
		\end{align*}
		By  (\ref{eqn:new2}), we have
		\begin{align}
			\label{eqn:new4}
				\iiint
		 I(p(\mathbf{z}_2+\tilde h\mathbf{u}_1)>\tilde b)K(\mathbf{u}_1)K(\mathbf{u}_2)p(\mathbf{z}_2)|S_1(\mathbf{z}_2)| d\mathbf{z}_2d\mathbf{u}_1d\mathbf{u}_2=o(1/\sqrt{n}).
				\end{align}
				By (\ref{eqn:new4}) and the fact that $\tilde h=o(\tilde b)$, we have
					\begin{align}
					\label{eqn:new5}
					\iiint
				\frac{\tilde h}{p(\mathbf{z}_2+\tilde h\mathbf{u}_1)}	I(p(\mathbf{z}_2+\tilde h\mathbf{u}_1)>\tilde b)K(\mathbf{u}_1)K(\mathbf{u}_2)p(\mathbf{z}_2)|S_1(\mathbf{z}_2)| d\mathbf{z}_2d\mathbf{u}_1d\mathbf{u}_2=o(1/\sqrt{n}).
				\end{align}
				Since $p(\mathbf{z}_2+\tilde h\mathbf{u}_1+\tilde h\mathbf{u}_2)-p(\mathbf{z}_2+\tilde h\mathbf{u}_1)=\tilde h\mathbf{u}_2^\top \nabla p(\xi)$, and $|\mathbf{u}^\top \nabla p(\xi)|$ is bounded, by (\ref{eqn:new5}), we have 
					\begin{align}
							\label{eqn:new6}
					\iiint
					\frac{|p(\mathbf{z}_2+\tilde h\mathbf{u}_1+h\mathbf{u}_2)-
						p(\mathbf{z}_2+\tilde h\mathbf{u}_1)|
						}{p(\mathbf{z}_2+\tilde h\mathbf{u}_1)}	I(p(\mathbf{z}_2+\tilde h\mathbf{u}_1)>\tilde b)K(\mathbf{u}_1)K(\mathbf{u}_2)p(\mathbf{z}_2)|S_1(\mathbf{z}_2)| d\mathbf{z}_2d\mathbf{u}_1d\mathbf{u}_2=o(1/\sqrt{n}).
				\end{align}
				By (\ref{eqn:new6}) and the Jensen's inequality, we have
				\begin{align*}
					A&=\left(1-O\left(\frac{1}{n}\right)\right)
				\left[	\iiint
				I(p(\mathbf{z}_2+\tilde h\mathbf{u}_1)>\tilde b)K(\mathbf{u}_1)K(\mathbf{u}_2)p(\mathbf{z}_2)S_1(\mathbf{z}_2) d\mathbf{z}_2d\mathbf{u}_1d\mathbf{u}_2+o(1/\sqrt{n})\right]\\
				&=
				\left(1-O\left(\frac{1}{n}\right)\right)
				\left[	\iint
				I(p(\mathbf{z}_2+\tilde h\mathbf{u}_1)>\tilde b)K(\mathbf{u}_1)p(\mathbf{z}_2)S_1(\mathbf{z}_2) d\mathbf{z}_2d\mathbf{u}_1+o(1/\sqrt{n})\right]\\
				&=
				\left(1-O\left(\frac{1}{n}\right)\right)
				\left[\tau-	\iint
				I(p(\mathbf{z}_2+\tilde h\mathbf{u}_1)\le \tilde b)K(\mathbf{u}_1)p(\mathbf{z}_2)S_1(\mathbf{z}_2) d\mathbf{z}_2d\mathbf{u}_1+o(1/\sqrt{n})\right]\\&=\tau+o(1/\sqrt{n}),
				\end{align*}
				where in the last equality we use (\ref{eqn:new2}). Then the result follows.
	\end{proof}
\end{document}